\definecolor{Darkblue}{rgb}{0,0,0.4}
\definecolor{Brown}{cmyk}{0,0.61,1.,0.60}
\definecolor{Purple}{cmyk}{0.45,0.86,0,0}
\definecolor{Darkgreen}{rgb}{0.133,0.543,0.133}
\newif\ifdraft 
\newcommand{\namedref}[2]{\hyperref[#2]{#1~\ref*{#2}}}
\newcommand{\propref}[1]{\hyperref[#1]{property~(\ref*{#1})}}
\newtheorem{theorem}{Theorem}
\newtheorem{lemma}{Lemma}
\newtheorem{definition}{Definition}
\newtheorem{claim}{Claim}
\newtheorem{observation}{Observation}
\newtheorem{corollary}{Corollary}
\newtheorem{property}{Property}
\newtheorem{question}{Question}
\newtheorem{remark}{Remark}
\DeclareMathOperator{\MST}{\mathrm{MST}}
\newcommand{\diam}{\mathrm{diam}}
\newcommand{\nd}{\delta}
\newcommand{\md}{\mathrm{maxdeg}}
\newcommand{\lvl}{\mathrm{lvl}}
\newcommand{\weight}{\text{\bf w}}
\newcommand{\dist}{\mbox{\bf dist}}
\newcommand{\ball}{\mathbf{B}}
\newcommand{\eps}{\epsilon}
\newcommand{\Cr}{\mathrm{Cross}}
\newcommand{\Aug}{\mathrm{Aug}}
\newcommand{\Pa}{\mathrm{Pa}}
\newcommand{\LCA}{\mathrm{LCA}}
\newcommand{\rad}{\mathrm{rad}}
\newcommand{\rep}{\mathrm{rep}}
\definecolor{forestgreen}{rgb}{0.13, 0.55, 0.13}
\DeclareMathAlphabet{\mathpzc}{OT1}{pzc}{m}{it}
\newlength{\dhatheight}
\newcommand {\ignore} [1] {}
\newcommand{\initOneLiners}{%
	\setlength{\itemsep}{0pt}
	\setlength{\parsep }{0pt}
	\setlength{\topsep }{0pt}
}
\title{Optimal Fault-Tolerant Spanners in Euclidean and Doubling Metrics: Breaking the $\Omega(\log n)$ Lightness Barrier}
\author{Hung Le}
\affil{University of Massachusetts at Amherst}
\author{Shay Solomon}
\affil{Tel Aviv University}
\author{Cuong Than}
\affil{University of Massachusetts at Amherst}
\date{}
\begin{document}
	\maketitle
	\begin{abstract}

An essential requirement of spanners in many applications is to be fault-tolerant: a $(1+\epsilon)$-spanner of a metric space is called (vertex) $f$-fault-tolerant ($f$-FT) if it remains a $(1+\epsilon)$-spanner (for the non-faulty points) when up to $f$ faulty points are removed from the spanner. Fault-tolerant (FT) spanners for Euclidean and doubling metrics have been extensively studied since the 90s.

For low-dimensional Euclidean metrics, Czumaj and Zhao in SoCG'03 [CZ03] showed that the optimal guarantees $O(f n)$, $O(f)$ and $O(f^2)$ on the size, degree and lightness of $f$-FT spanners can be achieved via a greedy algorithm, which na\"{\i}vely runs in $O(n^3) \cdot 2^{O(f)}$ time. The question of whether the optimal bounds of [CZ03] can be achieved via a fast construction has remained elusive, with the lightness parameter being the bottleneck. Moreover, in the wider family of doubling metrics, it is not even clear whether there exists an $f$-FT spanner with lightness that depends solely on $f$ (even exponentially): all existing constructions have lightness $\Omega(\log n)$ since they are built on the net-tree spanner, which is induced by a hierarchical net-tree of lightness  $\Omega(\log n)$.

In this paper we settle in the affirmative these longstanding open questions. Specifically, we design a construction of $f$-FT spanners that is optimal with respect to all the involved parameters (size, degree, lightness and running time): For any $n$-point doubling metric, any $\epsilon > 0$, and any integer   $1 \le f \le n-2$,
our construction provides, within time $O(n  \log n + f n)$, an $f$-FT $(1+\epsilon)$-spanner with size $O(f n)$, degree $O(f)$ and lightness $O(f^2)$.

To break the $\Omega(\log n)$ lightness barrier,  we introduce a new geometric object --- the \textit{light net-forest}.
Like the net-tree, the light net-forest is induced by a hierarchy of nets.
However, to ensure small lightness, the light net-forest is inherently less ``well-connected'' than the net-tree, which, in turn, makes the task of achieving fault-tolerance significantly more challenging. Further, to achieve the optimal degree (and size) \textit{together} with optimal lightness, and to do so within the optimal running time --- we overcome several highly nontrivial technical challenges. 
	\end{abstract}
	 \thispagestyle{empty}
	\clearpage\maketitle
	
	
	
	\setcounter{tocdepth}{2} 
	\tableofcontents
\thispagestyle{empty}
	\clearpage
	\pagenumbering{arabic}
	\ignore{
An essential requirement of spanners in many applications is to be {\em fault-tolerant}: a $(1+\eps)$-spanner of a metric space is called {\em (vertex) $f$-fault-tolerant ($f$-FT)} if it remains a $(1+\eps)$-spanner (for the non-faulty points) when up to $f$ faulty points are removed from the spanner. Fault-tolerant (FT) spanners for Euclidean and doubling metrics have been extensively studied since the 90s.

For low-dimensional Euclidean metrics, Czumaj and Zhao in SoCG'03 \cite{CZ03} showed that the optimal guarantees $O(f n)$, $O(f)$ and $O(f^2)$ on the size, degree and lightness of $f$-FT spanners can be achieved via a greedy algorithm, which na\"{\i}vely runs in $O(n^3) \cdot 2^{O(f)}$ time.\footnote{The {\em degree} of a spanner is the maximum degree of any vertex, and the spanner {\em lightness} is the ratio between the spanner {\em weight} (i.e., the sum of edge weights) and the MST weight.} An earlier construction, by Levcopoulos et al.\ \cite{LNS98} from STOC'98, has a faster running time of  $O(n \log n) + n 2^{O(f)}$, but has   a slack of $2^{\Omega(f)}$ in all the three involved parameters. The question of whether the optimal bounds of \cite{CZ03} can be achieved via a fast construction has remained elusive, with the {\em lightness} parameter being the bottleneck:
Any construction (other than \cite{CZ03}) has lightness either $2^{\Omega(f)}$ or $\Omega(\log n)$. Moreover, in the wider family of doubling metrics, it is not even clear whether {\em there exists} an $f$-FT spanner with lightness that depends solely on $f$ (even exponentially): all existing constructions have lightness $\Omega(\log n)$ since they are built on the {\em net-tree spanner}, which is induced by a hierarchical {\em net-tree} of lightness  $\Omega(\log n)$.

In this paper we settle in the affirmative these longstanding open questions. Specifically, we design a construction of $f$-FT spanners {\bf that is optimal with respect to all the involved parameters} (size, degree, lightness and running time): For any $n$-point doubling metric, any $\eps > 0$, and any integer   $1 \le f \le n-2$,
our construction provides, within time $O(n  \log n + f n)$, an $f$-FT $(1+\eps)$-spanner with size $O(f n)$, degree $O(f)$ and lightness $O(f^2)$.

To break the $\Omega(\log n)$ lightness barrier,  we introduce a new geometric object --- the {\em light net-forest}.
Like the net-tree, the light net-forest is induced by a hierarchy of nets.
However, to ensure small lightness, the light net-forest is inherently less ``well-connected'' than the net-tree, which, in turn, makes the task of achieving fault-tolerance significantly more challenging. Further, to achieve the optimal degree (and size) {\em together} with optimal lightness, and to do so within the optimal running time --- we overcome several highly nontrivial technical challenges. 

In this paper, we construct an $f$ Vertex-Fault-Tolerance $(1 + \eps)$-spanner of an $n$-point set in metric space with doubling dimension $d$ in time $O_{\eps, d}(n\log{n} + nf)$. Our construction achieves sparsity $O_{\eps, d}(f)$ and lightness $O_{\eps, d}(f^2)$, which is asymptotically optimal.  
}

\section{Introduction} \label{sec1}

\vspace{4pt}
\subsection{Euclidean Spanners} \label{euc}
Let $P$ be a set of $n$ points in $\mathbb R^d$ and let $\eps > 0$ be any parameter.
A spanning subgraph  $H = (P,E,\|\cdot \|)$  of the complete Euclidean graph induced by $P$
is called a (Euclidean) \emph{$(1+\eps)$-spanner} 
for the point set $P$ if $\forall p,q \in P$, there is 
a \emph{$(1+\eps)$-spanner path} in $H$ between $p$ and $q$, i.e., a path of {\em weight} at most $(1+\eps) \cdot \|p-q\|$,
where the weight of a path is the sum of all edge weights in it and $\|p-q\|$ denotes the Euclidean distance between $p$ and $q$.
Euclidean spanners find applications in various areas, including in geometric approximation algorithms, network topology design and distributed systems, and they have been studied extensively since the 80s \cite{Chew86,Clark87,KG92,ADDJS93,ADMSS95,AWY05};
see also the book by Narasimhan and Smid \cite{NS07}
titled ``Geometric Spanner Networks'', which is devoted to Euclidean spanners and their applications.

A natural requirement from a spanner, which is essential for 
real-life applications, is to be {\em robust against failures}, so that even when part of the network fails, 
we still have a good spanner for the functioning part of the network.
Formally, a Euclidean spanner $H$ for point set $P$ is called a \emph{(vertex) $f$-FT $(1+\eps)$-spanner}, for $1 \le f \le n-2$, if for any
$F \subseteq P$ with $|F| \le f$, the graph $H \setminus F$ (obtained by removing from $H$ the vertices of $F$ and their incident edges) is a $(1+\eps)$-spanner for $P \setminus F$.\footnote{We shall restrict the attention to vertex faults, but that does not lose generality:
Any FT $(1+\eps)$-spanner that is resilient to $f$ vertex faults
is also resilient to $f$ edge faults (see \cite{LNS98,NS07}), while the lower bounds discussed below --- of $\Omega(f)$ on sparsity and degree and $\Omega(f^2)$ on lightness of $f$-FT spanners --- apply also to edge faults.}
(The basic (non-FT) setting corresponds to the case $f = 0$.) 
To perform efficiently in these applications, we would like the underlying spanner to be ``sparse''.
The {\em size} (number of edges) of the spanner is perhaps the most basic sparsity measure; the spanner {\em sparsity} is defined as the ratio of the spanner size to the minimum size $n-1$ of a connected spanning subgraph. 
The {\em weight} (sum of edge weights) of the spanner is a natural generalization of the size, and in many applications (such as for the metric TSP) we need to have small weight rather than small size; the spanner {\em lightness} is defined as the ratio of the spanner weight to the 
minimum spanning tree (MST) weight.
The spanner size corresponds to the {\em average} degree of a vertex,  
yet the stronger property of a small {\em (maximum) degree} (over all vertices) is important
in various applications in Computational Geometry,
as well as for reducing the space usage in compact routing schemes and in distributed systems.

A construction of Euclidean $(1+\eps$)-spanners with constant degree (and sparsity) and lightness can be built in time $O(n \log n)$ \cite{AS94,DN94,GLN02}.
In their pioneering work, Levcopoulos, Narasimhan and Smid \cite{LNS98} 
introduced the notion of FT spanners and generalized the basic construction of \cite{AS94,DN94,GLN02}
to obtain an $f$-FT $(1+\eps)$-spanner with degree (and sparsity) and lightness bounded by $2^{O(f)}$, within a running time of $O(n \log n) + n 2^{O(f)}$.
Clearly, the degree of any vertex in any $f$-FT spanner (for any stretch) must be at least $f+1$, and thus any $f$-FT spanner must have $\Omega(f n)$ edges. There are also simple point sets (even in 1 dimension), for which any $f$-FT spanner must have lightness $\Omega(f^2)$  \cite{CZ03}. 
Finally, the time needed to compute a $k$-FT spanner is $\Omega(n \log n + f n)$: The term $\Omega(n \log n)$
is the time lower bound for computing a basic (non-FT) spanner in the \emph{algebraic computation tree model} \cite{CDS01},
and the term $\Omega(f n)$ is the aforementioned lower bound on the size of any $f$-FT spanner.

There are also bunch of other constructions of Euclidean FT spanners (see Table \ref{tab1} for a summary of FT constructions),
and they can be grouped into two categories. 

In the first category, which contains almost all known constructions, the lightness parameter is either ignored or bounded from below by $\Omega(\log n)$; this line of work was culminated with the construction of Solomon \cite{Sol14} from STOC'14, which achieves the optimal degree $O(f)$ and running time $O(n \log n + n f)$, and it applies to the wider family of {\em doubling metrics} (see Section \ref{dob}). 
In addition to optimal degree and running time, the construction of \cite{Sol14} also achieves a near-optimal tradeoff 
of $O(f^2 + f \log n)$ versus $O(\log n)$
between the lightness and another property called the {\em (hop-)diameter};\footnote{A spanner for point set $P$ is said to have a {\em (hop-) diameter} of $k$ if it provides a $(1+\eps)$-spanner path with at most $k$ edges, for every $p,q \in P$.}
 importantly, any construction of diameter $O(\log n)$
must have lightness $\Omega(\log n)$, and more precisely this tradeoff between lightness and diameter is optimal up to a factor $\log f$ slack
on the diameter and a factor $\min\{\log f, \frac{\log n}{f}\} = O(\log \log n)$ slack on the lightness. (For a detailed discussion on the lower bound tradeoff between the lightness and diameter, we refer to \cite{Sol14}.)

The second category concerns ``light'' spanners, and there are only two such constructions to date. The first is the one by \cite{LNS98} mentioned above, and the second is a {\em greedy} construction due to Czumaj and Zhao from SoCG'03 \cite{CZ03},
and it achieves the optimal lightness guarantee of $O(f^2)$ together with the optimal degree (and sparsity) of $O(f)$. 
However, a na\"{\i}ve implementation of the greedy FT spanner construction requires time $\Omega(n^3) \cdot 2^{\Omega(f)}$
or more precisely $\Omega(n^2 \cdot \mathtt{Paths}(n,f+1,1+\eps))$,
where $\mathtt{Paths}(n,f+1,1+\eps)$ is the time needed to check whether an $n$-vertex
 Euclidean graph with $O(f n)$ edges contains $f+1$ vertex-disjoint $(1+\eps)$-spanner paths between an arbitrary pair of vertices.
Indeed, in the greedy FT construction, the ${n \choose 2}$ edges of the underlying Euclidean metric are traversed by nondecreasing weights, and each edge $(x,y)$ is added to the current spanner iff it does not contain $f+1$ vertex-disjoint $(1+\eps)$-spanner paths between $x$ and $y$. We note that a more sophisticated implementation of the basic (non-FT) greedy algorithm takes time $O(n^2 \log n)$ in Euclidean and doubling metrics \cite{BCFMS10}, but it is unclear if this implementation can be extended to the FT greedy algorithm; moreover, even if such an extension is possible
and even if we completely ignore the dependence on $f$, it would still lead to a super-quadratic in $n$ runtime.

\begin{table*}
\begin{center}
\resizebox {\textwidth }{!}{
\footnotesize
\begin{tabular}{|c|c|c|c|c|c|}
\hline  Reference & Sparsity & Degree &  Lightness & Runtime & Metric  \\
\hline \cite{LNS98}  & $2^{O(f)}$ & $2^{O(f)}$    & $2^{O(f)}$ & $O(n  \log n) + n 2^{O(f)}$ & Euclidean \\
\hline \cite{LNS98}   & $O(f^2)$ & unspecified   & unspecified & $O(n  \log n + f^2 n)$ & Euclidean \\
\hline \cite{LNS98}   & $O(f \log n)$ & unspecified   & unspecified & $O(f n \log n)$ & Euclidean \\
\hline \cite{Luk99}   & $O(f)$ & unspecified   & unspecified & $O(n \log^{d-1} n + f n  \log \log n)$ & Euclidean \\
\hline \cite{CZ03}   & $O(f)$ & $O(f)$   & $O(f^2)$ & $\Omega(n^3) \cdot 2^{O(f)})$ & Euclidean \\
\hline \cite{CZ03}   & $O(f)$ & $O(f)$   & $O(f^2 \log n)$ & $O(f n \log^{d} n + n f^2 \log f)$ & Euclidean \\
\hline \cite{CLN12a}   & $O(f)$ & unspecified  & unspecified & unspecified & doubling \\
\hline \cite{CLN12a}   & $O(f^2)$ & $O(f^2)$  & unspecified & unspecified & doubling \\
\hline \cite{CLNS13}   & $O(f^2)$ & $O(f^2)$ &  $O(f^2 \log n)$ & $O(n \log n + f^2 n)$ & doubling \\
\hline \cite{Sol14} &  $O(f)$ &  $O(f)$ &    $O(f^2 + f \log n)$ &    $O(n \log n + f n)$ & {\bf doubling} \\
\hline \hline {\bf New} &   {\boldmath $O(f)$} & {\boldmath $O(f)$} & {\boldmath $O(f^2)$} &  {\boldmath $O(n \log n + f n)$} &   {\bf doubling} \\
\hline
\end{tabular}
}
\end{center}
\caption[]{\label{tab1} \footnotesize A  comparison between previous  and our  
constructions of FT spanners with small size, degree, lightness and runtime, for low-dimensional Euclidean and doubling metrics.
The $O$-notation ignores dependencies on $\eps$ and  $d$.}
\end{table*}

Up to this date no construction of Euclidean $f$-FT spanners with runtime better than the $\Omega(n^3) \cdot 2^{\Omega(f)}$ bound of \cite{CZ03}, let alone the optimal $O(n \log n + fn)$ runtime bound,
could achieve lightness $\min\{o(\log n), 2^{o(f)}\}$, let alone the optimal lightness of $O(f^2)$.
In particular, the following question, stated in the book of Narasimhan and Smid \cite{NS07}, has remained open even for 2-dimensional point sets since the STOC'98 work of \cite{LNS98}.
\begin{question} [Open Problem 28 in \cite{NS07}] \label{Q3}
Is there an algorithm that constructs,  within $O(n \log n + fn)$ time, an $f$-FT $(1+\eps)$-spanner with lightness $O(f^2)$?
\end{question}

Further, is it possible to construct, ideally still in time  $O(n \log n + fn)$, a single construction
that combines the optimal lightness $O(f^2)$ with the optimal degree (and sparsity) $O(f)$?
\begin{question} \label{Qmain}
Is there an algorithm that constructs, within $O(n \log n + fn)$ time, an $f$-FT $(1+\eps)$-spanner with lightness $O(f^2)$ and degree $O(f)$?
\end{question}

\subsection{Doubling Metrics} \label{dob}
A metric is called \emph{doubling} if its {\em doubling dimension} is constant, where the latter is the smallest value $d$
such that every ball $B$ in the metric can be covered by at most
$2^{d}$ balls of half the radius of $B$. 
We note that the doubling dimension generalizes the standard Euclidean dimension, since the doubling dimension
of the Euclidean space $\mathbb R^d$ is $\Theta(d)$.
Spanners for doubling metrics have been intensively studied; see  \cite{GGN04,CGMZ05,CG06,HPM06,GR081,GR082,Smid09,CLN12a,ES15,CLNS13,Sol14,BLW19,LT22, KLMS22}, and the references therein.
Many of these works share a common theme, namely, to devise spanners for doubling metrics that are just as good as the analog Euclidean spanner constructions.

Some of the constructions mentioned in Section \ref{euc} apply to doubling metrics; see Table \ref{tab1}.
Much weaker variants
of Questions \ref{Q3} and \ref{Qmain} from Section \ref{euc} can be asked for the wider family of doubling metrics.
Indeed, in such metrics, it is not even clear whether {\em there exists} an $f$-FT spanner with lightness that depends solely on $f$ (even exponentially): All existing constructions have lightness $\Omega(\log n)$ since they are built on the {\em net-tree spanner}, which is induced by a hierarchical {\em net-tree} of lightness  $\Omega(\log n)$. The net-tree incurs a lightness of $\Omega(\log n)$ even for line metrics!

\begin{question} \label{Qultimate}
\begin{itemize}
\item
Does there exist,  for any doubling metric,  an $f$-FT $(1+\eps)$-spanner with lightness $O(f^2)$?
Does there exist such a spanner that also achieves degree $O(f)$?
\item Further, is there an algorithm that constructs, for any doubling metric,  within $O(n \log n + fn)$ time, 
an $f$-FT $(1+\eps)$-spanner with lightness $O(f^2)$ and degree  $O(f)$?
\end{itemize}
\end{question}

\subsection{Our Contribution}
The main result of this work is the following theorem. 

\begin{theorem} \label{tm1}
Let $(X,\delta)$ be an $n$-point doubling metric, with an arbitrary doubling dimension $d$.
For any $0 < \eps < \frac{1}{2}$ and any integer $1 \le f \le n-2$,
an $f$-FT $(1+\eps)$-spanner with lightness $\eps^{-O(d)}  \cdot f^2$ and degree $\eps^{-O(d)} \cdot f$ 
can be built within $\eps^{-O(d)}  (n \log n + fn)$ time.
\end{theorem}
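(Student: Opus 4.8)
The plan is to build the spanner on top of a hierarchy of nets and then establish size, degree, stretch-under-faults and lightness in turn. First I fix a scale factor (say $2$) and construct a hierarchical sequence of \emph{fault-tolerant nets} $X=N_0\supseteq N_1\supseteq\cdots\supseteq N_\ell$: each $N_i$ covers $X$ within $2^i$, but rather than being $2^i$-separated it merely has bounded local cardinality --- every ball of radius $2^i$ holds at most $\epsilon^{-O(d)}(f+1)$ points of $N_i$ --- so that each cluster of points that an ordinary net would collapse to a single net-point is instead represented by up to $f+1$ of its members. Using the compressed representation, only $\epsilon^{-O(d)}(f+1)\cdot n$ point--level pairs are ever created, and the associated point-location structures are computable in $\epsilon^{-O(d)}\,n\log n$ time. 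On top of this hierarchy I build the \emph{light net-forest} $\mathcal{F}$: whereas an ordinary net-tree joins every point to its (unique, possibly distant) parent at every level, $\mathcal{F}$ retains at each point only a bounded number of ``spine'' links to the next level --- enough to route with stretch $1+\epsilon$ and to supply $f+1$ vertex-disjoint options for fault-tolerance, yet few enough that only $\epsilon^{-O(d)}$ of them lie ``above'' any given point of the MST. I complement $\mathcal{F}$ with \emph{fault-tolerant cross edges}: for each level $i$ and each $u\in N_i$, partition the ball of radius $\Theta(2^i/\epsilon)$ around $u$ into $\epsilon^{-O(d)}$ cones of small aspect ratio and, inside each cone, join $u$ to up to $f+1$ nearest points of $N_i$. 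A symmetrisation and pruning step (in the spirit of the bounded-degree net-tree spanners of \cite{Sol14}) caps the number of forest and cross edges at any vertex by $\epsilon^{-O(d)}(f+1)$; this already gives degree $\epsilon^{-O(d)}f$ and size $\epsilon^{-O(d)}fn$, and using the compressed hierarchy all edges are produced in a further $\epsilon^{-O(d)}fn$ time, meeting the running-time bound.

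\textbf{Stretch and fault-tolerance.} Fix a fault set $F$ with $|F|\le f$ and two non-faulty points $p,q$, and let $i$ be the scale with $2^i\approx\epsilon\,\delta(p,q)$. I exhibit $f+1$ pairwise vertex-disjoint $(1+O(\epsilon))$-spanner paths between $p$ and $q$; since at most $f$ of them can meet $F$, one survives, and rescaling $\epsilon$ finishes the stretch claim. Each such path climbs from $p$ through $O(\log(1/\epsilon))$ scales to one of the $f+1$ level-$i$ representatives of $p$, crosses via cone cross edges to a matching representative of $q$ (the $f+1$ nearest points kept in each cone supply the reroutes around faults, and the small cone aspect ratio costs only a $1+\epsilon$ factor), and descends symmetrically to $q$; the two climbs contribute total length $O(2^i)=O(\epsilon\,\delta(p,q))$ and the crossing step $(1+O(\epsilon))\,\delta(p,q)$. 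The delicate point --- and, as the introduction signals, the main obstacle --- is that $\mathcal{F}$, having shed the spine links that a net-tree would keep, is genuinely less connected, so the $f+1$ disjoint climbs are not handed to us for free. I expect the core of the argument to be a structural lemma showing that at every point and scale where $\mathcal{F}$'s spine links are absent, the surviving cross edges at the neighbouring scales, restricted to $H\setminus F$, still provide $f+1$ vertex-disjoint routes upward to the level-$i$ representatives --- intuitively, the scales at which spine links are dropped are exactly those that are dense enough one level below to carry the required redundancy through the cross edges.

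\textbf{Lightness.} Finally I bound the total weight by a charging scheme that assigns each spanner edge to a sub-path of the MST. A cross edge linking two distinct level-$i$ clusters has weight $\epsilon^{-O(1)}2^i$ and endpoints at distance at least $2^i$, so the MST path between them has weight at least $2^i$; I charge the edge, at rate $\epsilon^{-O(1)}$, to a length-$\Theta(2^i)$ initial portion of that path near one of its endpoints, and I treat the retained spine edges of $\mathcal{F}$ (and the low-weight, finer-scale within-cluster edges) analogously. A packing argument bounds the number of level-$i$ net points within distance $O(2^i)$ of a fixed location of the MST by $\epsilon^{-O(d)}(f+1)$, and each carries $\epsilon^{-O(d)}(f+1)$ cross edges, so that location receives a charge of $\epsilon^{-O(d)}(f+1)^2$ per level --- the factor $f^2$ being (the $f+1$-fold multiplicity of the fault-tolerant net) times (the $f+1$ cross edges per net point), which matches the $\Omega(f^2)$ lower bound of \cite{CZ03}. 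The decisive gain over the ordinary net-tree, whose spine alone forces lightness $\Omega(\log n)$ even on a line, is that because $\mathcal{F}$ is a sparse forest and the charging is to MST pieces of weight comparable to the charging edge, every fixed location of the MST is reached at only $\epsilon^{-O(d)}$ scales instead of $\Omega(\log n)$ of them; summing the per-location charges over the MST therefore gives total weight $\epsilon^{-O(d)}f^2\cdot w(\mst)$, i.e.\ lightness $\epsilon^{-O(d)}f^2$, which completes the proof.
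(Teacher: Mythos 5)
Your proposal takes a genuinely different route from the paper, but it leaves the two central difficulties --- the ones the paper's entire technical machinery is built to resolve --- unaddressed, so as it stands it is not a proof. The paper does \emph{not} use a ``fault-tolerant net'' carrying $f+1$ representatives per cluster, nor cone/Yao-style cross edges at every scale and every net point. Instead it keeps a \emph{standard} hierarchical net-tree $T$ and attaches to each node $x$ a \emph{surrogate set} $S(x)\subseteq\ball(x,16r_i)$ of up to $f+1$ points; crucially, it first computes a (non-FT) light $(1+\eps)$-spanner $G$ and adds cross edges \emph{only} along the edges of $G$ (plus a bounded halo of ``augmented'' edges around them), together with extra cross edges around the so-called incomplete nodes, which constitute the ``light net-forest.'' The LNF in the paper is thus an \emph{analytic} object --- a union of node-disjoint subtrees whose roots form an antichain and whose total radii are $O(\weight(\MST))$ --- used to identify where surrogate sets are deficient, not a sparsified replacement for the net-tree as your $\mathcal{F}$ seems to be.

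The first genuine gap is in your lightness argument. You add cone cross edges for \emph{every} $u\in N_i$ at \emph{every} level $i$, and then claim ``every fixed location of the MST is reached at only $\eps^{-O(d)}$ scales.'' That claim is exactly what fails for net-tree spanners and is why they have $\Omega(\log n)$ lightness (even on the line); the pruning step you invoke, in the spirit of~\cite{Sol14}, caps \emph{degree}, not weight --- and indeed~\cite{Sol14} achieves degree $O(f)$ but lightness $\Omega(\log n)$. You need a mechanism that \emph{selects} which cross edges to retain by weight considerations, which is precisely what the paper's guidance by $G$ and its LNF charging (via the antichain lemma bounding $\sum_x r_{\lvl(x)}$ by $O(\weight(\MST))$) supply; your proposal has no analogue and therefore does not escape the $\log n$ barrier. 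The second gap is that your fault-tolerance argument is explicitly speculative: the sentence ``I expect the core of the argument to be a structural lemma showing\ldots'' is an admission that the key step is missing. It is also the wrong shape of argument for this construction: once the hierarchy is sparsified to be light, $f+1$ vertex-disjoint $p$--$q$ paths through the hierarchy need not exist, because intermediate hops pass through nodes whose surrogate sets need not contain $p$ or $q$ and hence can be entirely faulty. The paper instead routes a \emph{single} non-faulty path by following the shortest $p$--$q$ path in $G$ and replacing edges (or maximal ``irreplaceable'' prefixes) by matching edges between surrogate sets of ``complete'' nodes --- and proving that such a replacement always exists requires the counterintuitive surrogate-selection rule (prioritizing high-degree points) and the inductive clean-ball bookkeeping of \Cref{lm:complete-node-complete}, none of which appears in your outline. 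Until both of these are filled in concretely, the proposal does not establish \Cref{tm1}.
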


The construction provided by Theorem \ref{tm1} is {\bf optimal with respect to all the involved parameters}
and it settles all the aforementioned questions, Questions \ref{Q3}-\ref{Qultimate}, in the affirmative.
We note that our construction improves the previous state-of-the-art constructions of FT spanners (with sub-cubic runtime) not only for doubling metrics, but also for Euclidean ones.

A central challenge that we faced on the way to proving Theorem \ref{tm1} is breaking the $\Omega(\log n)$ lightness barrier. 
To this end we introduce a new geometric object --- the {\em light net-forest}.
Like the net-tree, the light net-forest is induced by a hierarchy of nets.
However, to ensure small lightness, the light net-forest is inherently less ``well-connected'' than the net-tree, which, in turn, makes the task of achieving fault-tolerance significantly more challenging. 
We demonstrate the power of the light net-forest in achieving the optimal lightness, but our construction does not stop there.
Further, to achieve the optimal degree (and size) {\em together} with optimal lightness, and to do so within the optimal running time --- we overcome several highly nontrivial technical challenges.
In the following section we describe the technical and conceptual contributions of this work.

\subsection{Technical Overview and Conceptual Highlights} \label{techhighlight}
The previous constructions of Euclidean FT spanners \cite{LNS98,Luk99,CZ03,NS07} rely on geometric properties of low-dimensional Euclidean metrics, such as the \emph{gap property} \cite{AS94} and the \emph{leapfrog property} \cite{DHN93}.
In particular, achieving small lightness crucially relies on the leapfrog property, which is not known to extend to arbitrary doubling metrics.
On the other hand, the previous constructions of doubling FT spanners \cite{CLN12a,CLNS13,Sol14} use standard packing arguments of doubling metrics, but they all rely on the standard {\em net-tree spanner}  of \cite{GGN04,CGMZ05}, which is induced by a hierarchical {\em net-tree}
 $T = T(X)$ that corresponds to a \emph{hierarchical partition} of the metric $(X,\delta)$.
The lightness of the net-tree alone is $\Omega(\log n)$, even in 1-dimensional Euclidean spaces; as such, all the  constructions of \cite{CLN12a,CLNS13,Sol14} incur a lightness of $\Omega(\log n)$ even when ignoring the dependencies on $f$.

As in all previous constructions that apply to arbitrary doubling metrics, the starting point of our construction is the {net-tree spanner} $T = T(X)$. To break the lightness barrier of $\Omega(\log n)$, we will not be able to use the entire net-tree, and consequently we will not be able to use the entire net-tree spanner that is derived from it. 
We start with a brief overview of the net-tree spanner and the previous constructions of \cite{CLN12a,CLNS13,Sol14}.

Any tree node $x$ is associated with a single point $\rep(x)$ that belongs to the point set $L(x)$ of its descendant leaves. 
For any pair $x,y$ of level-$i$ tree nodes that are close together with respect to the \emph{distance scale} (or \emph{radius}) $2^i$ at that level, a \emph{cross edge} $(x,y)$ is added (edge $(x,y)$ translates to edge  $(\rep(x),\rep(y))$; for brevity we sometimes write $(x,y)$ as a shortcut for $(\rep(x),\rep(y))$); specifically, the weight of any level-$i$ cross edge $(x,y)$ is at most $\lambda 2^i$, where $\lambda = \Theta(\frac{1}{\eps})$ and $\rad(x) = \rad(y) = 2^i$. 
The net-tree spanner is the union of the tree edges (i.e., the edges of $T$) and the cross edges.
For every pair $u,v$ of points, a $(1+\eps)$-spanner path, denoted by $\Pi_T(u,v)$, goes up in the tree $T$ from a leaf $x$ corresponding to $u$ (i.e., $\rep(x) = u$) to some ancestor $x'$ of $u$, then takes a cross edge $(\rep(x'),\rep(y'))$ from $x'$ to an ancestor $y'$ of $y$ in the net-tree, and finally goes down in $T$ from $y'$ to $y$, where $\rep(y) = v$.
The reason  $\Pi_T(u,v)$ is a $(1+\eps)$-spanner path is due to the following key observation of the net-tree, which implies that the weight of the cross edge constitutes almost the entire weight of $\Pi_T(u,v)$:
\begin{observation} \label{ob:net}
Both $\delta(\rep(x),\rep(x'))$ and $\delta(\rep(y),\rep(y'))$ are at most $\eps \delta(u,v)$. 
Thus $(1-\eps) \delta(u,v) \le \delta(\rep(x'),\rep(y')) \le (1+\eps) \delta(u,v)$.
\end{observation}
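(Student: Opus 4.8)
The plan is to read off Observation~\ref{ob:net} from the two structural properties of the hierarchical net-tree $T$, together with a judicious choice of the level at which the canonical path $\Pi_T(u,v)$ crosses. Recall the relevant net-tree facts: for a level-$i$ node $w$ we have $\rad(w)=2^i$ and (packing/covering) there is an absolute constant $c$, built into the net-tree parameters, with $L(w)\subseteq B\bigl(\rep(w),\,c\cdot 2^i\bigr)$; moreover the spanner contains a cross edge between two level-$i$ nodes whenever their representatives are within $\lambda 2^i$, where $\lambda=\Theta(\tfrac1\eps)$; and each of the leaves $x$ (with $\rep(x)=u$) and $y$ (with $\rep(y)=v$) has, at each level $i$ of $T$, a unique ancestor.

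First I fix the crossing level $i^\ast$ of $\Pi_T(u,v)$. The path is free to cross at any level $i$ at which the level-$i$ ancestors $x',y'$ of $x,y$ are cross-adjacent, and I claim one may choose
\[
\frac{(1+\eps)\,\delta(u,v)}{\lambda}\ \le\ 2^{i^\ast}\ \le\ \frac{\eps}{2c}\,\delta(u,v).
\]
Such a level exists: the ratio of the upper endpoint to the lower endpoint equals $\tfrac{\lambda\eps}{2c(1+\eps)}$, which is $\ge 2$ as soon as $\lambda\ge \tfrac{4c(1+\eps)}{\eps}=\Theta(\tfrac1\eps)$, so the interval contains a power of two; and this power of two is a genuine level of $T$ (the upper bound forces $i^\ast$ strictly below the root, while the lower bound, with the usual handling of the bottom of the tree — where $u,v$ would share a leaf cluster and the claim is immediate — keeps $i^\ast$ at or above the leaves). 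It remains to verify that $x',y'$ really are cross-adjacent at level $i^\ast$: since $u\in L(x')$ and $v\in L(y')$, the covering property gives $\delta(u,\rep(x'))\le c\,2^{i^\ast}$ and $\delta(v,\rep(y'))\le c\,2^{i^\ast}$, so by the triangle inequality
\[
\delta(\rep(x'),\rep(y'))\ \le\ \delta(u,v)+2c\,2^{i^\ast}\ \le\ (1+\eps)\,\delta(u,v)\ \le\ \lambda\,2^{i^\ast},
\]
where the middle inequality uses the upper bound on $2^{i^\ast}$ and the last the lower bound; this is exactly the cross-edge condition.

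With $i^\ast$ fixed, the observation follows. Because $x$ is a descendant leaf of $x'$ we have $\rep(x)=u\in L(x')$, so the covering property yields $\delta(\rep(x),\rep(x'))\le c\,2^{i^\ast}\le \tfrac\eps2\,\delta(u,v)\le \eps\,\delta(u,v)$, and symmetrically $\delta(\rep(y),\rep(y'))\le \tfrac\eps2\,\delta(u,v)\le \eps\,\delta(u,v)$. Two further applications of the triangle inequality then give
\[
\bigl|\,\delta(\rep(x'),\rep(y'))-\delta(u,v)\,\bigr|\ \le\ \delta(u,\rep(x'))+\delta(v,\rep(y'))\ \le\ \eps\,\delta(u,v),
\]
hence $(1-\eps)\,\delta(u,v)\le \delta(\rep(x'),\rep(y'))\le (1+\eps)\,\delta(u,v)$, as claimed. (The ``Thus'' as written loses a harmless factor of $2$ in $\eps$; this is absorbed either by rescaling $\eps\leftarrow\eps/2$ at the outset or, as above, by keeping the sharper bound $\tfrac\eps2\delta(u,v)$ on each of the two detours.)

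The only genuinely delicate point — the one I expect to require the most care to phrase cleanly — is the existence of an admissible crossing level $i^\ast$: we must make $2^{i^\ast}$ small enough that the detours $\delta(u,\rep(x'))$ and $\delta(v,\rep(y'))$ are only an $\eps$-fraction of $\delta(u,v)$, yet large enough that the cross-edge budget $\lambda 2^{i^\ast}$ still covers $\delta(\rep(x'),\rep(y'))\approx\delta(u,v)$. Reconciling these two demands is exactly what forces $\lambda=\Theta(1/\eps)$, and it is the quantitative core of the argument; everything else is the triangle inequality applied to the covering property of the net-tree.
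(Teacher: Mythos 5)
The paper never proves Observation~\ref{ob:net}; it appears in the technical overview as a background property of the standard net-tree spanner, so there is no paper proof to compare line-by-line. The closest thing the paper does prove is \Cref{it:approx-weight} of \Cref{prop:lightness-properties}, which bounds $\nd(\hat u,\hat v)\le(1+\eps)\nd(u,v)$ for the \emph{original} cross edge---the one at the \emph{lowest} cross-adjacent level---by exploiting the minimality of that level to get $\nd(\hat u,\hat v)\ge\lambda r_i/6$ (\Cref{it:original-weight}) and then observing that the detours are an $O(1/\lambda)$-fraction of the cross-edge length. Your proof takes a genuinely different (equally valid) route: rather than pin the crossing level to be the lowest admissible one and lean on minimality, you exhibit a window $\bigl[\frac{(1+\eps)\,\delta(u,v)}{\lambda},\ \frac{\eps}{2c}\,\delta(u,v)\bigr]$ for $2^{i^\ast}$, show it contains a level precisely because $\lambda=\Theta(1/\eps)$ makes the window span more than a factor of two, and then verify directly that any level in the window is simultaneously cross-adjacent and has negligible detours. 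The paper's route pins down a canonical choice of $x',y'$ and feeds directly into the rest of its machinery; yours cleanly isolates the single quantitative tension (level small enough for small detours, large enough for cross-adjacency) that forces $\lambda=\Theta(1/\eps)$, which you correctly identify as the crux. You are also right that the ``Thus'' sentence, as printed, loses a benign factor of two: two detours each at most $\eps\,\delta(u,v)$ yield only $(1\pm 2\eps)\,\delta(u,v)$ by the triangle inequality, and your sharper per-detour bound of $\tfrac{\eps}{2}\,\delta(u,v)$ repairs this.
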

To achieve fault-tolerance, the general idea in \cite{CLNS13} was to associate each tree node $x$ with
a {\em surrogate set} $S(x)$ of (up to) $f+1$ points from $L(x)$  rather than a single point; the FT spanner is obtained by replacing each edge $(x,y)$
of the basic net-tree spanner by a bipartite clique between the corresponding sets $S(x)$ and $S(y)$.
To achieve a degree of $O(f^2)$, \cite{CLNS13} used a ``rerouting'' technique from \cite{GR082} that assigns representative points for the tree nodes so as to minimize the maximum degree, but
achieving degree $o(f^2)$ using this approaches is doomed for  two reasons.
\\
{\bf  ``Global'' reason}: If each edge $(x,y)$ of the basic net-tree spanner is replaced by a bipartite clique between  $S(x)$ and $S(y)$,
then since $S(x)$ and $S(y)$ may contain $\Omega(f)$ points each, the size of this bipartite clique may be $\Omega(f^2)$.
As the basic net-tree spanner has $\Omega(n)$ edges, 
the FT spanner obtained in this way will contain $\Omega(f^2 n)$ edges (and will have degree $\Omega(f^2)$).
\\
{\bf ``Local'' reason}: As a node $x$ is associated with a set $S(x)$ of (up to) $f+1$ points from $L(x)$,
the same leaf point $p$ may belong to $\Omega(f)$ different sets $S(x)$ of internal nodes $x$.
For each  edge $(x,y)$ of the basic net-tree spanner that is incident on any of these $\Omega(f)$ nodes, 
$p \in S(x)$ is connected via edges to all $\Omega(f)$ points of $S(y)$, and so the degree of $p$ will be $\Omega(f^2)$.

\paragraph{The key idea of \cite{Sol14}.}
The idea of associating nodes of net-trees and other hierarchical tree structures, such as split trees and dumbbell trees,
with points from their descendant leaves has been widely used in the geometric spanner literature; see \cite{ADMSS95,GGN04,CGMZ05,CG06,NS07,GR082,CLN12a},
and the references therein.
Indeed, by Observation \ref{ob:net}, any point in $L(x)$ is close to the original net-point $\rep(x)$ of $x$, with respect to the distance scale $\rad(x)$ of $x$.
Instead of associating nodes $x$ with points chosen exclusively from $L(x)$,
the key idea of \cite{Sol14} is to consider a wider set $B(x)$ of all points
in the ball of radius $O(\rad(x))$ centered at $\rep(x)$.  
By associating nodes $x$ with points from $B(x)$, 
one obtains a \emph{hierarchical cover} of the metric (rather than a hierarchical partition). As the doubling dimension is constant,
this cover has a constant \emph{degree}, i.e., every point belongs to $O_{\eps,d}(1)$ sets $B(\cdot)$ at each level of the tree $T$.

Similarly to \cite{CLNS13}, \cite{Sol14} associates each tree node $x$ with a set $S(x)$ of (up to) $f+1$ points called \emph{surrogates}, but as mentioned the surrogates in \cite{Sol14} are chosen from the superset $B(x)$ of $L(x)$.
Moreover, \cite{Sol14} doesn't naively replace each edge
$(x,y)$ of the basic net-tree spanner by a bipartite clique between 
$S(x)$ and $S(y)$ as in \cite{CLNS13}, since (due to the ``Global'' reason above)
that would lead to $\Omega(f^2 n)$ edges. 
Instead, whenever the number of surrogates in $S(x)$ and $S(y)$ is $f+1$, a bipartite matching suffices for achieving fault-tolerance.
\cite{Sol14} assigns the surrogates bottom-up (first for level-0 nodes (leaves) in the net-tree $T$, then for level-1 nodes, etc.) via a complex procedure that guarantees that, for any level $i$ and any level-$i$ node $x$ in $T$, 
there are enough points of small degree in $B(x)$ to choose surrogates from, which ultimately leads to the desired degree bound of $O(f)$.
However, as mentioned, the lightness of the construction of \cite{Sol14}, as well as any other construction that applies to doubling metrics, is lower bounded by the lightness $\Omega(\log n)$ of the underlying net-tree; more precisely, the state-of-the-art lightness of any construction of FT spanners in doubling metrics prior to this work, due to \cite{Sol14}, is $\Theta(f^2 + f \log n)$.

\subsubsection{Our approach}
To breach the $\Omega(\log n)$ lightness barrier incurred by the net-tree, our first insight is that a ``light'' $(1 + \eps)$-spanner $G$ of $X$,
which is given as input, can be used for computing a light subtree of the net-tree --- which we name the {\em light net-forest} and abbreviate as LNF.
Equipped with the LNF, a natural approach would be to (i) apply the standard net-tree spanner construction on top of the LNF (instead of the net-tree) in order to get a light net-tree spanner, and (ii) transform the light net-tree spanner into a light FT spanner by replacing each edge $(x,y)$ of the spanner with a bipartite clique / matching between $S(x)$ and $S(y)$ similarly to \cite{Sol14}.
Alas, since the LNF is not a tree but rather a forest, it is inherently less ``well-connected'' than the net-tree, which renders step (ii) of achieving fault-tolerance highly challenging, as we next describe. 

In the standard net-tree spanner, as mentioned, for every pair $u,v$ of points, there is a path $\Pi_T(u,v)$ that uses a {\em single cross edge}, which we denote here by $(u',v')$, where $u'$ and $v'$ are ancestors of $u$ and $v$ in the net-tree $T$, respectively. 
A non-faulty $(1+\eps)$-spanner path in the FT spanner constructions of \cite{CLNS13,Sol14}, for a {\bf non-faulty pair $u,v$} of points, 
consists of going up in $T$ from $u$ and from $v$ to some non-faulty surrogates of $S(u')$ and $S(v')$, respectively, and it includes a single cross edge between those surrogates. There are non-faulty surrogates in $S(u')$ and $S(v')$ due to two reasons: (1) $u$ and $v$ are non-faulty descendant leaves of $u'$ and $v'$, respectively, and (2) there is only one cross edge in the path. Indeed, if $|S(u')| < f+1$ then,
since $u$ is a descendant leaf of $u'$, one can guarantee that $u \in S(u')$; otherwise $u$ may not belong to $S(u')$, but in that case too $S(u')$ contains a non-faulty surrogate; the same goes for $S(v')$. 
However, when using the LNS, we no longer have such a path $\Pi_T(u,v)$ for any pair $u,v$ of points.
Instead, we can afford to use such a path $\Pi_T(u,v)$ only for edges $(u,v)$ that belong to the light $(1+\eps)$-spanner $G$ (with constant lightness) that we receive as input; it can be shown that the union of all those paths over the edges of $G$ has constant lightness. 
Consider now a pair $u,v$ of points that are not incident in $G$, and let $\Pi_G(u,v) = (u_1 = u, u_2, \ldots, u_k = v)$ be a shortest path (which is a $(1+\eps)$-spanner path) between $u$ and
$v$ in $G$. Naturally, we would like to translate this path $\Pi_G = \Pi_G(u,v)$ into a union of non-faulty subpaths of the net-tree plus cross edges,
as that union should provide a non-faulty $(1+\eps)$-spanner path between $u$ and $v$. 
However, we cannot argue that ancestors $u'_i$ of intermediate nodes $u_i, u_i \ne u,v$ on the path contain non-faulty surrogates, since $u$ and $v$ are not necessarily descendant leaves of $u'_i$!
{\bf This is the crux in achieving a light FT-spanner from the LNS,} and it entails multiple challenges.

When translating $\Pi_G$ into a non-faulty path, the basic idea is to replace every cross edge by a bipartite matching.
First note that the cross edges corresponding to the edges of $\Pi_G$ may be located at different levels.
Indeed, for an edge $e_i = (u_i,u_{i+1})$ in $G$, the respective cross edge, denoted by $\mathtt{cross}(e_i) = (\hat{u}^i_i, \hat{u}^i_{i+1})$, lies at level roughly 
$\log \delta(u_i,u_{i+1})$, and edge weights in the path $\Pi_G$ may be very different one from another;
thus $\hat{u}^i_{i+1}$ and $\hat{u}^{i+1}_{i+1}$, which are the ``second'' endpoint of $\mathtt{cross}(e_i)$ and the ``first'' endpoint of 
$\mathtt{cross}(e_{i+1})$, respectively, may lie at very different levels of the net-tree.
Next, we stress that replacing a cross edge by a bipartite matching is possible only in the case that, for each cross edge $\mathtt{cross}(e_i) = (\hat{u}^i_i, \hat{u}^i_{i+1})$ along the path, there are $f+1$ 
``nearby'' points around the two endpoints $\hat{u}^i_i$ and $\hat{u}^i_{i+1}$ of the edge. 
By nearby we mean within distance that is an $O(\eps)$-fraction of the distance between $\hat{u}^i_i$ and $\hat{u}^i_{i+1}$,
and thus these nearby points can serve as part of the surrogate sets of $S(\hat{u}^i_i)$ and $S(\hat{u}^i_{i+1})$ of $\hat{u}^i_i$ and $\hat{u}^i_{i+1}$, respectively.
Let us consider this simpler case first. Then the distance between a pair of surrogates of $\hat{u}^i_i$ and $\hat{u}^i_{i+1}$  is the same, up to a factor of $1+O(\eps)$, as the distance between $\hat{u}^i_i$ and $\hat{u}^i_{i+1}$. 
It thus suffices to take a perfect matching between $S(\hat{u}^i_i)$ and $S(\hat{u}^i_{i+1})$ for every $i$, and then at least one of the matching edges must function for every $i$ (following at most $f$ vertex faults). 
One technicality is that when going from $\hat{u}^{i - 1}_{i-1}$ to $\hat{u}^{i-1}_{i}$ through a matching edge we
end up at a surrogate of $\hat{u}^{i - 1}_{i}$, say $s^{i - 1}_{i}$, and then when continuing from $\hat{u}^i_{i}$ to $\hat{u}^i_{i+1}$ through a matching edge we start at a surrogate of $\hat{u}^i_{i}$ (rather than $\hat{u}^{i - 1}_{i}$), say $s^i_{i}$, and these two surrogates 
$s^{i - 1}_{i}$ and $s^i_{i}$
might be different, so in general the union of the matched edges along the path does not form a valid path; however, as the distance between $s^{i - 1}_{i}$ and $s^i_{i}$ is negligible with respect to the weight of $e_i$,
our construction will provide a non-faulty path between  $s^{i - 1}_{i}$ and $s^i_{i}$ by induction.
The challenging case is when for some cross edges along the path, there are less than $f+1$ 
``nearby'' points around the two endpoints;
we refer to such edges as \textit{irreplaceable}.
Dealing with irreplaceable edges requires special care; we will get back to this issue towards the end of this section, when we discuss how to construct a non-faulty spanner path. For now we focus on the following fundamental issue. 

A major issue is that there could be nodes with less than $f+1$ nearby points.  For an edge $e_i$, there might be no functioning vertex near the endpoints of  $\mathtt{cross}(e_i)$, which means that there might be no functioning edge which is a good approximation of $e_i$ in $H - F$ (in terms of its weight).  We resolve this issue by adding more cross edges to guarantee fault tolerance.  Specifically, for every node $x$ for which we cannot find $f + 1$ surrogates, we add the bipartite clique  between $S(x)$ and $S(y)$ for all $y$ within distance $O(\rad(x) / \eps)$ from $x$.  There are two possible cases when we cannot find $f + 1$ surrogates for a node $x$. The first (and obvious) case is when $x$ is at a low level and there are not enough (less than $f + 1$) points near $\rep(x)$. The second (and more challenging) case is that the choices of surrogates must be subjected to other constraints such as bounded degree; if we choose the surrogates carelessly, there might be fewer usable vertices in the vicinity of $\rep(x)$ since all vertices close to $\rep(x)$ were overused as surrogates by nodes at levels lower than $x$.
In both cases, we add the bipartite clique between $S(x)$ to the surrogate set of every  nearby node.
However, each case contributes differently to the lightness.
While adding many edges at low levels does not affect the lightness significantly, adding those in higher levels could blow up the lightness by a factor of $\log{n}$; recall that we want to avoid using all the cross edges since they add a $\log{n}$ factor to the lightness. 
Thus, we must choose the surrogates carefully to ensure that the second case essentially does not happen. This is where our aforementioned LNF (light net forest) comes to the rescue.  Our key insight is that all nodes in the net-tree with less than $f +1$ nearby surrogates form an LNF, which consists of node-disjoint subtrees of $T$ that span all those nodes, such that the total radii associated with those nodes exceeds the weight of $\MST(X)$ by at most a constant factor, implying that the LNF is light. The structure of the LNF depends heavily on the input spanner $G$ and on our strategy to select surrogates.

The strategy of selecting surrogates directly affects all important parameters of our spanner: stretch, lightness and degree.
First, the surrogates of a node $x$ must be close to $\rep(x)$ to guarantee the stretch. 
Second, to guarantee the bounded degree property, we must avoid overusing any point as a  surrogate (i.e., using it as a surrogate for too many nodes in the tree). 
One idea is to use the leaves of a node as surrogates since the distance between (the representative of) a node $x$ to its leaves is small compared to any cross edge incident to $x$. However, using only leaves for surrogates might increase the degree of a single vertex to $\Omega(n)$; for example,  there might be a long branch of $T$ with only one leaf, say $z$, where we have to add edges to all the nodes in the branch to guarantee a good stretch, leaving $z$ with many incident edges. To overcome this issue, we choose the surrogates of $x$ in a ball centered at $\rep(x)$ with radius $c \cdot \rad(x)$, for an appropriate constant $c$. Once the degree (in the spanner) of a vertex reaches a certain threshold, it is forbidden for that vertex to serve as a surrogate ever again.

A new problem arises: choosing surrogates from a ball as suggested above does not guarantee the bounded lightness property.  Balls centered at net points at different levels might interact in a complex manner, and as a result, we might not be able to find enough (at least $f + 1$) surrogates for some nodes at high levels, since overused vertices are forbidden to be used again. If we follow the suggestion outlined in the previous paragraph, we will have to add bipartite cliques from the surrogates of these nodes to other nodes's surrogates. Unfortunately, this will break the property of the LNF and, in particular, it will be harder to control the lightness. To resolve this problem, we will guarantee the following property: if a node has at least $f + 1$ surrogates, any of its ancestors will also have $f + 1$ surrogates.  To this end, our key idea is to {\em prioritize} the choice of high-degree vertices in a large ball $\ball(x, 16\rad(x))$ over vertices of lower degree in $\ball(x, 4\rad(x))$.   This is somewhat counter-intuitive as one might expect to prioritize low degree vertices (to have a better chance of bounding the degree of the spanner) over high degree vertices. However, the intuition is that by prioritizing high degree vertices, we actually ``save'' the low degree ones to be used as ``fresh'' surrogates of other nodes at higher levels. 
Furthermore, when a high degree vertex becomes overused, many of its neighbors of low degree are closer to its ancestors, relative to the radii of the ancestors, leaving the ancestors more room to choose surrogates. 

We stress that even if we only try to control the lightness bound (regardless of the degree and even the size of the spanner), and even if we are aiming for a suboptimal dependence on $f$ in the lightness bound, it is still highly challenging to get an  FT-spanner with lightness $o(\log n)$. As discussed above, using cross edges restricted to the input light spanner is not enough to guarantee fault-tolerance. The challenge is to identify (or even just prove the existence of) a set of cross edges that has small lightness on the one hand and that can guarantee fault-tolerance on the other; achieving these two contradictory requirements simultaneously is highly non-trivial.

\begin{figure}[htb]
	\center{\includegraphics[width=0.8\textwidth]{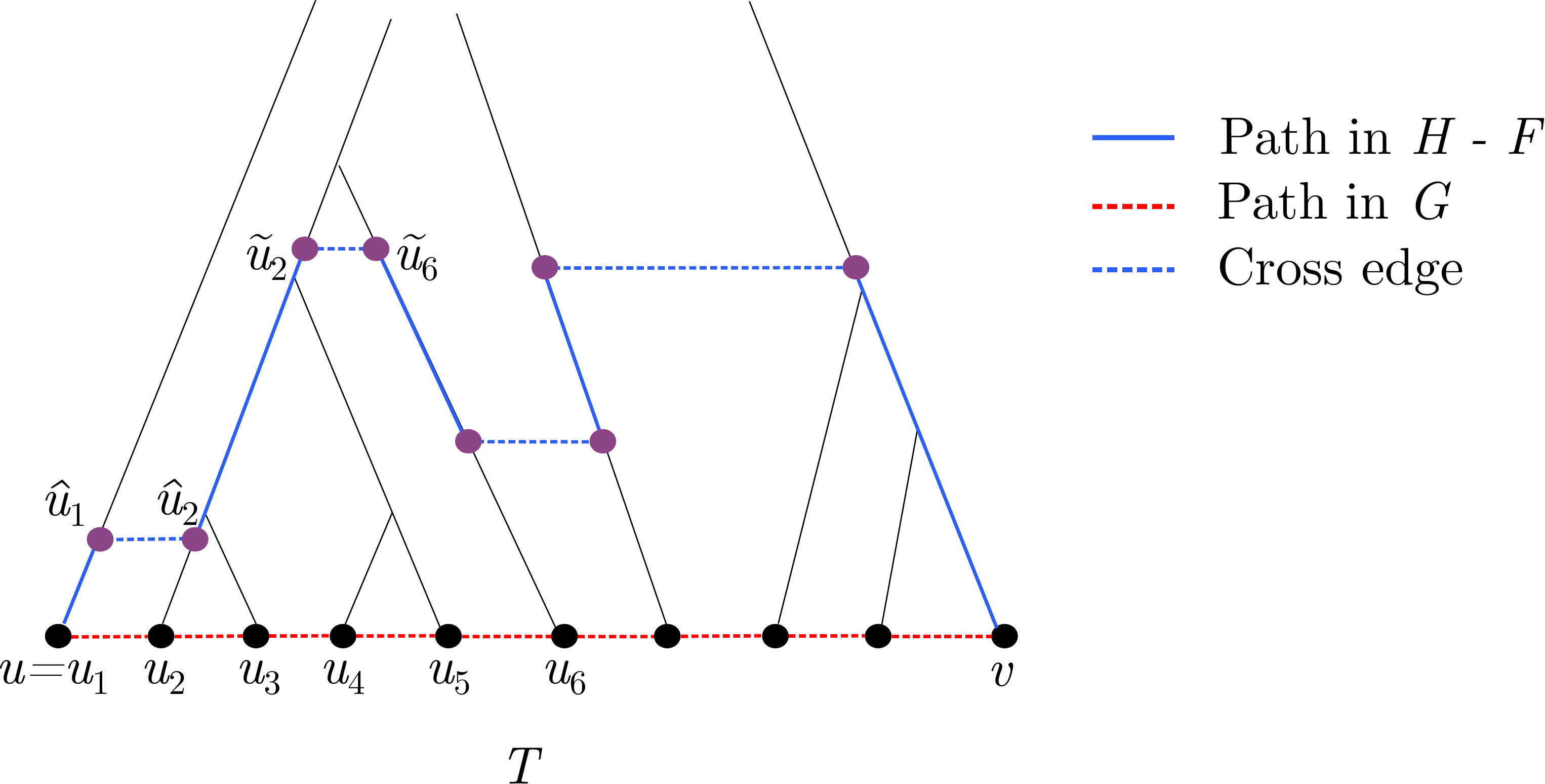}}
	\caption{A path from $u$ to $v$ in $H \setminus F$. Every purple node has $f + 1$ surrogates. We replace some edges and some subpaths of the shortest path from $u$ to $v$ in $G$ by the cross edges between nodes having $f + 1$ surrogates. It might be misleading that $\nd(\tilde{u}_2, \tilde{u}_6)$ is much smaller than $\nd(u_2, u_6)$. Indeed, $\nd(\tilde{u_2}, \tilde{u_6})$ is a good approximation of the length of $(u_2, u_3, u_4, u_5, u_6)$.}
	\label{fig:cross_edge_path}
\end{figure}
Finally, we return to the issue of finding a non-faulty spanner path between any two points. Recall that our idea is to carefully choose some (but not all) cross edges of $T$ and add the edges between the respective surrogates of each chosen cross edge to the VFT spanner. For any two points $u$ and $v$ in $X$, if the lowest good approximation (in terms of its  weight) cross edge $(\ddot{u}, \ddot{v})$ of $(u, v)$  is at a ``low" level, meaning that there are not enough (less than $f + 1$) surrogates for either $\ddot{u}$ or $\ddot{v}$, 
our construction guarantees that $(\ddot{u}, \ddot{v})$ is chosen, and there is an non-faulty edge between $S(\ddot{u})$ and $S(\ddot{v})$. If $\ddot{u}$ and $\ddot{v}$ are at a high enough level, then $(\ddot{u}, \ddot{v})$ might not be chosen. In that case, we follow the 
shortest path $P = \{u= u_1, u_2, u_3, \ldots u_l = v\}$ between $u$ and $v$ in the light spanner $G$ (if there are multiple shortest paths, we fix one arbitrarily).
If $(u_1, u_2)$ is {\em replaceable}, meaning that there is at least one non-faulty edge in the bipartite matching between $S(\hat{u}_1)$ and $S(\hat{u}_2)$, where $\hat{u}_1$ and $\hat{u}_2$ are ancestors of $u_1$ and $u_2$, respectively, at an appropriate level, we replace $(u_1,u_2)$ by a non-faulty edge in the matching. Otherwise, the edge $(u_1, u_2)$ is {\em irreplaceable}, and we find a good approximation cross edge of $(u_1, u_i)$ for some $i > 2$.
Our construction guarantees that we choose a good approximation cross edge $(\tilde{u}_1, \tilde{u}_i)$ of $\dist_P(u_1, u_i)$ for some $i > 2$; {\bf proving this is highly challenging and is the key in our argument for finding a non-faulty spanner path}.    
Then, we replace the prefix $(u_1, u_2, \ldots u_i)$ of $P$ by a non-faulty edge in the bipartite matching between $S(\tilde{u}_1)$ and $S(\tilde{u}_i)$ and continue the process recursively with the remaining suffix of $P$. Finally, the replaced edges can be ``glued together'' by induction via subpaths of total length negligible with respect to the weight of $P$; in this way, we have provided a spanner path from $u = u_1$ to $u_l = v$.
See \Cref{fig:cross_edge_path} for an illustration.
	\section{Preliminaries}

Given a graph $G$, we denote by $V(G)$ and $E(G)$ the vertex and edge set of $G$, respectively.  Given a metric  $(X, \nd)$, a graph $G = (V, E, \weight)$ is a \emph{geometric graph} in $(X, \nd)$ if $V \subseteq X$ and $\weight(u,v) = \nd(u,v)$ for every edge $(u,v)\in E$.  In this paper, we only consider geometric graphs, so we will refer to points and vertices interchangeably. The weight of a graph $G$, denoted by $\weight(G)$, is the total weight of all edges in $G$. For any set of edges $E'$, $\weight(E') = \sum_{(u, v) \in E'}\nd(u, v)$. For any two vertices $u$ and $v$ in $G$, the distance between $u$ and $v$ in $G$ is denoted by $\dist_G(u, v)$. The minimum spanning tree of $X$, denoted by $\MST(X)$, is the minimum spanning tree of $(X, \binom{X}{2}, \weight)$ with $\binom{X}{2}$ is the set containing all pairs of points in $X$.

We say that $G$ is a \emph{$t$-spanner} of $(X,\nd)$ if $G$ is a geometric graph in $(X, \nd)$ with $X$ being its vertex set and for any two points $u, v \in X$, $\dist_G(u, v) \leq t\nd(u, v)$.  A path between $u$ and $v$ in a geometric graph $G$ in $(X, \nd)$, which might not be a spanner of $(X,\nd)$, is a $t$-spanner path if the total weight of edges of the path is at most $t\nd(u, v)$. A $t$-spanner $G$ of $X$ is an \emph{$f$-vertex-fault-tolerant} ($f$-VFT) if for any set of points $F \subseteq X$ such that $|F| \leq f$, the graph $G - F$ obtained by removing $F$ from $G$ is still a $t$-spanner of $X \setminus F$.  

Let $(X, \nd)$ be a metric of doubling dimension $d$. We denote by $\ball(p, r)$ the set of points in $X$ with distance at most $r$ from $p$. The \emph{spread} of $P$ is the ratio between the maximum distance and minimum distance between points in $P$. We say that $P$  is \emph{$r$-separated} if the distance between every two distinct points in $P$ is at least $r$. The following lemma is well known:
\begin{lemma}[Packing bound]
	\label{lm:packing-bound}
	Let $R \geq r > 0$ and $Y$ be an $r$-separated set contained in a ball of radius $R$. Then, $|Y|\leq \left(\frac{4R}{r}\right)^{d}$.
\end{lemma}

\paragraph{Net.} An $r$-net of a subset of points $P\in X$ is an $r$-separated subset $N$ of $P$ such that for each point $v$ in $P$, there exists $u \in N$ such that $\nd(u, v) \leq r$.
	\section{VFT Spanner Construction Algorithm}

In this section, we present a construction  of an $(f + 1)$-VFT $(1 + \eps)$-spanner with degree $O_{\eps, d}(f)$ and lightness $O_{\eps, d}(f^2)$ as described in \Cref{tm1}.  We focus on presenting the ideas of the construction; a fast implementation will be delayed to \Cref{sec:fast}.

\subsection{Net tree, surrogate sets, and bipartite connections}
\label{ssec:net-tree}
By scaling, we assume that the minimum and maximum distance in $X$ is $64$ and $\Delta$, respectively. We set the minimum distance to $64$ to handle some corner cases more gratefully. Throughout this paper, for every positive integer $a$, we use the notation $\log{a} = \log_5{a}$ to avoid the subscript.

\begin{definition}[Greedy Net Tree] \label{def:net-tree} Let $r_0 = 1, N_0 = X$, $\zeta = \lceil \log{\Delta}\rceil$, and $r_i = 5^ir_0$ for every  $i \in [1,\zeta]$. Let  $N_0\supsetneq N_1 \supsetneq \ldots \supsetneq N_{\zeta}$ be a hierachy of nets where  $N_i$ is an $r_i$-net of $N_{i - 1}$. The hierarchy of nets induces a hierarchical tree $T$ where:
	
	\begin{enumerate}[noitemsep]
		\item For every level $i \in [0,\zeta]$, there is one-to-one correspondence between nodes in at level $i$ in  $T$ and points in $N_i$.
		
		\item The parent in $T$ of a point in $N_{i-1}$ is the closest point in $N_i$ for $1\leq i \leq \zeta$, breaking tie arbitrarily. 
	\end{enumerate}
\end{definition}

We refer to points in the nets and nodes in the tree $T$ interchangeably. As a point $u$ can belong to multiple nets, to avoid confusion, we sometimes write $(u,i) \in N_i$ to indicate the copy of $u$ in $N_i$. For each node $(u, i)$, the value $r_i$ is called the radius of $(u, i)$. In our net-tree, we use the radius $r_i = 5^i$ instead of $2^i$ as other works, e.g., \cite{CGMZ16}, because of a specific property that we need: the sum of radii associated with any set of nodes such that no two of them are ancestors of each other is bounded by a constant time the weight of the minimum spanning tree; see \Cref{lm:sum-disjoint-nodes}.

For a node $x = (u, i)$, we sometimes write $x$ in place of $u$ in, e.g., the distance function. For example $\delta(x,y)$ for two nodes $x = (u, i)$ and $y = (v, j)$ refers to $\delta(u, v)$. 
A point $w$ is a leaf of $x$ if $(u, 0)$ is a leaf of the subtree with root $x$. We use the notation $\lvl(x)$ for the level of $x$. (If $x = (u, i)$ then $\lvl(x) = i$.)   

For each node $x = (u, i)$, let $L(x)$ or $L(u, i)$ be the set of leaves of $x$. The distance between any node to any of its descendants is at most a constant times the node's radius. 

\begin{claim}
	\label{clm:leaf-dist}
	Let $(u, i)$ be a node in $T$ and $(v, j)$ be a descendant of $(u, i)$ ($j < j$), then $\nd(u, v) \leq 5r_i/4$, implying that $\nd(u, v) \leq 2r_i$.
\end{claim}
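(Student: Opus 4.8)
The plan is to exploit the telescoping structure of the ancestor chain from $(v,j)$ up to $(u,i)$, combined with the net property at each level. First I would set up the chain: since $(v,j)$ is a descendant of $(u,i)$, there is a sequence of nodes $(w_j,j)=(v,j),(w_{j+1},j+1),\dots,(w_i,i)=(u,i)$ in which each node is the parent in $T$ of the previous one. By item~2 of \Cref{def:net-tree}, the parent of a point $w_{k-1}\in N_{k-1}$ is its closest point in $N_k$; and since $N_k$ is an $r_k$-net of $N_{k-1}$, some point of $N_k$ lies within distance $r_k$ of $w_{k-1}$. Hence $\nd(w_{k-1},w_k)\le r_k$ for every $k\in[j+1,i]$.

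Next I would apply the triangle inequality along this chain, followed by a geometric-series estimate. We get
$\nd(u,v)\le\sum_{k=j+1}^{i}\nd(w_{k-1},w_k)\le\sum_{k=j+1}^{i}r_k=\sum_{k=j+1}^{i}5^k=\frac{5^{i+1}-5^{j+1}}{4}<\frac{5^{i+1}}{4}=\frac{5r_i}{4}$,
using $r_k=5^k$. Since $\tfrac{5}{4}r_i<2r_i$, this yields both bounds stated in the claim. (The edge case $j=i$ — or $v=u$ — is trivial, as then $\nd(u,v)=0$.)

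I do not expect any genuine obstacle here: this is a routine warm-up claim. The only points requiring a moment of care are (i) that the parent-distance bound $\nd(w_{k-1},w_k)\le r_k$ comes from the \emph{net-covering} property of $N_k$ together with the "closest point" tie rule, not from separation, and (ii) that the choice of base $5$ for the radii is what keeps the geometric sum below $\tfrac54 r_i$ — with base $2$ one would only obtain $\sum 2^k<2^{i+1}=2r_i$, a weaker conclusion. This is exactly why the authors take $r_i=5^i$, as remarked just before the claim, and the sharper constant $\tfrac54$ will presumably matter when these chains are glued together in later stretch arguments.
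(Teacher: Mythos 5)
Your proof is correct and follows essentially the same approach as the paper's: bound each parent–child distance by $r_k$ via the net-covering property, telescope along the ancestor chain with the triangle inequality, and sum the geometric series $\sum_{k\le i} 5^k < \tfrac{5}{4}\cdot 5^i$. The only cosmetic difference is that you keep the exact telescoping sum $\tfrac{5^{i+1}-5^{j+1}}{4}$ while the paper just upper-bounds by $\sum_{k=0}^{i} r_k$, but both yield the same $\tfrac{5}{4}r_i$ bound.
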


The second bound $\nd(u, v) \leq 2r_i$ is usually used when we want to work with integers and the distance between a node to one of its leaves does not contribute significantly to the total distance. 

\begin{proof}[Proof of \Cref{clm:leaf-dist}]
	By \Cref{def:net-tree}, the distance between a node at level $i$ to its child is at most $r_i$. Hence, $\nd(u, v) \leq \sum_{k = j + 1}^ir_k$. Using geometric sum,
	\begin{equation}
		\nd(u, v) \leq \sum_{k = j + 1}^ir_k \leq \sum_{k = 0}^ir_k = r_i(1 + 1/5 + 1/5^2 + \ldots) \leq 5r_i/4,
	\end{equation}
	as claimed.
\end{proof}

A net tree is an important tool in almost all spanner constructions in doubling metric, e.g., see \cite{CGMZ16}. In the construction of $f$-VT $(1+\eps)$-spanner, Solomon~\cite{Sol14} introduced the notion of \emph{surrogate sets}, which was used on top of the net-tree spanners. 

\begin{definition}[Surrogate Sets] \label{def:sorrogate} Each node $x\in T$ at level $i\in [0,\zeta]$  is associated with a set of points $S(x)\in X\subseteq \ball(x, 16\cdot r_i)$ of size at most $f+1$ called a \emph{surrogate set}.
\end{definition}

An useful operation that we will use on top of the surrogate sets is forming a bipartite connection.  

\begin{definition}[Bipartite Connection]\label{def:bipartite-conn}Let $S(x)$ and $S(y)$ be the surrogate sets of two differrent nodes $x$ and $y$ in $T$, a  \emph{bipartite connection} between them is a set of edges, denoted by $M(S(x), S(y))$,  that is defined as folllows. If  $\min\{|S(x)|, |S(y)||\} < f + 1$ then $M(S(x), S(y))$ includes all edges between $S(x)$  and $S(y)$, i.e, $M(S(x), S(y))= S(x) \times S(y)$.  Otherwise, by \Cref{def:sorrogate}, $|S(x)| = S(y) = f+1$ and in this case, $M(S(x), S(y))$ is an (arbitrary) perfect matching between $S(x)$ and $S(y)$. 
\end{definition}

\subsection{The construction algorithm}

Our construction is fairly simple compared to existing fault-tolerant spanner constructions. Let  $\lambda = 5^{20}(1+1/\eps)$.   Let $x,y$ be two nodes of $T$ at the same level $i_{xy}$. We say that $(x,y)$ is a \emph{cross edge} if $\delta(x,y)\leq \lambda r_{i_{uv}}$. The notion of cross edges is central in all spanner constructions based on the net tree. If $(x,y)$ is a cross edge, we say that $x$ and $y$ are \emph{cross neighbors} of each other. Let $NC(x)$ be the set of cross neighbors of $x$, and $NC[x] = NC(x)\cup\{x\}$. 

Let $A$ be a set of nodes at some level $i$ of $T$, we denote by $\Cr(A)$ be the set of cross edges between nodes in $A$. For every node $x\in T$, let $T(x)$ be the subtree rooted at $x$ of $T$. Let $i_x$ be the level of $x$ and $j \in [0,\zeta]$, we define  $\Aug_j(x)$ as follows: if $j < i_{x}$, then $\Aug_j(x)$ is the union of $\Cr(NC[y])$ for all \emph{descendant nodes} $y$ at level $j$ of $x$; otherwise, $\Aug_j(x) = \Cr(NC[y])$ where $y$ is the ancestor at level $j$ of $x$.  We view $\Aug_j(x)$ as a set of \emph{augmented cross edges} at level $j$ of $x$. Let $\Aug(x, l, h) = \bigcup_{j = i_x + l}^{i_x + h}\Aug_j(x)$.

Let $(u,0)$ and $(v,0)$ be two nodes at level $0$ of $T$. We say that $(\hat{u},\hat{v})$ is an \emph{original cross edge} of $(u,0)$ and $(v,0)$ if $\hat{u}$ and $\hat{v}$ are two ancestors at \emph{lowest level}, say $i_{uv}$,  of $(u,0)$ and $(v,0)$ respectively such that  $\delta(\hat{u},\hat{v}) \leq  \lambda r_{uv}$. It could be that $\hat{u} = (u,0)$ and $\hat{v} = (v,0)$.

\begin{algorithm}[!htb]
	\caption{Spanner Construction}\label{alg:VFT-spanner}
	\KwIn{Doubling metric $(X,\nd)$ of dimension $d$}
	\KwOut{An $(f + 1)$-FT $(1 + 5\eps)$-spanner $H$ with degree $O_{\eps, d}(f)$ and lightness $O_{\eps, d}(f^2)$}
	Construct a  light $(1 + \eps)$-spanner $G$ of $X$ and a net-tree $T$  \label{line:net-tree}\;
	$E^* \gets \emptyset, H \gets (X, \emptyset), \lambda = 5^{20}(1+1/\eps)$\;
	\For{$e = (u, v) \in E(G)$}{
		\label{line:begin-aug-original}
		Let $(\hat{u},\hat{v})$ be the \emph{original cross edge }of $(u,0)$ and $(v,0)$\;
		$E^* \gets (\hat{u},\hat{v})$ \label{line:add-origin-CE}\;
		\For{$(x, y) \in \Aug(\hat{u}, 0, 5\log{\lambda}) \cup \Aug(\hat{v}, 0, 5\log{\lambda})$ \label{line:augm}}{
			Let $i_{xy}$ be the level of $x$ (and $y$) in $T$\;
			\If{$\nd(x, y) \geq 64 \cdot r_{i_{xy}}$ \label{line:check-length}}{
				$E^* \gets E^* \cup \{(x, y)\}$\label{line:Estar-update}\;
			}
		}
	}
	\label{line:end-aug-original}
	$\xi = \eps^{-O(d)}, c_1 = 50\log{\lambda}\cdot \xi, c_2 = 51\log{\lambda}\cdot \xi, c_3 = 55\log{\lambda}\cdot \xi$ \label{line:costants-c}\; 
	\For{$i \in [0\ldots \zeta - 1]$ \label{line:level-iter}}{
		\For{each node $x$ at level $i$ of $T$\label{line:begin-add-incomplete}}{
			Mark $x$ as \emph{small} if every leaf of $x$ has degree at most $c_1 \cdot f$ in $H$ \label{line:mark-x-small}\;
			$D_x \leftarrow$ set of all descendants of $x$ at levels from $i - 5\log{\lambda}$ to $i$\label{line:Dx-def}\;
			\If{there exists $w \in D_x$ s.t $w$ is small and $T(w)$ has at most $f$ leaves\label{line:check-Dx}}{
				$E^* \gets E^* \cup \{(y, z)~|~y, z \in NC[x] ~\text{and}~ 64r_i \leq \nd(y, z) \leq \lambda r_i\}$ \label{line:add-NCx}\;
			}
		}
		\label{line:end-add-incomplete}
		Mark all vertices with degree in $H$ larger than $c_3f$ as \emph{saturated}\label{line:mark-saturated}\;
		$E^*_i \leftarrow$ set of edges $(x,y)\in E^*$ s.t both $x$ and $y$ are at level $i$\label{line:Eistar}\;
		\For { each edge $(x,y)\in E^*_i$ \label{line:edge-process}}{
			$S(x) \leftarrow \text{SelectSurrogate}(x)$ \label{line:Sxdef}\;	
			$S(y) \leftarrow \text{SelectSurrogate}(y)$ \label{line:Sydef}\;
			$E(H) \leftarrow E(H)\cup M(S(x), S(y))$ \label{line:H-update}\;
		}
	}
	\Return $H$\;
\end{algorithm}

\Cref{alg:VFT-spanner} describes the construction: it can be divided into two phases: Phase 1 is from \crefrange{line:net-tree}{line:Estar-update} and Phase 2 from \crefrange{line:costants-c}{line:H-update}.   

\begin{algorithm}[!htb]
	\caption{SelectSurrogate}\label{alg:Surrogate}
	\KwIn{A node $x \in V(T)$}
	\KwOut{A surrogate set $S(x)$}
	\eIf{$x$ is small \label{line:check-small}}{	
		$S(x)\leftarrow$ arbitrary $f+1$ leaves of $T(x)$ if $T(x)$ has at least $f+1$ leaves; otherwise $S(x)\leftarrow $ all leaves of $T(x)$\label{line:add-leaf-S}\;}
	{
		$S' \leftarrow \{u \in X\cap \ball(x,16r_i): \deg_H(u) \geq c_2 \cdot f \mbox{ and $u$ is not saturated}\}$\label{line:def-Sprime}\;
		$S'' \leftarrow \{u \in X\cap \ball(x,4r_i): \deg_H(u) <  c_2 \cdot f \}$\label{line:def-Sdprime}\;
		\eIf{$|S'| \geq f+1$}{
			$S(x)\leftarrow$  arbitrary $f+1$ vertices in $S'$\;
		}{
			$S(x)\leftarrow $ $S'\cup \{\mbox{arbitrary $f+1 - |S'|$ vertices in $S''$}\}$\label{line:general-surrogate}\;    
		}
	}
	\Return $S(x)$\;
\end{algorithm}

In the first phase, we start with a $(1+\eps)$-spanner $G$ of $(X,\nd)$ with lightness $O(\eps^{-O(d)})$ that can be constructed in $O(n\log n)$ time~\cite{FS16}.  The edges of $G$ will serve as guidance for our construction, as we will later bound the lightness of the output VFT spanner $H$ by charging to edges of $G$. The goal of this phase is to construct a set of cross edges $E^*$. Unlike other (both fault-tolerant and non-fault-tolerant) spanner constructions~\cite{Sol14,CGMZ16},  where one would add a cross edge for \emph{every pair of points in $X$}, we only add cross edges corresponding to edges of $G$. Specifically for each edge $(u,v)\in E(G)$, we first add to $E^*$ the original cross edge, say $(\hat{u},\hat{v})$ of $(u,0)$ and $(v,0)$ (\cref{line:add-origin-CE}).  It is not hard to see that $\delta(\hat{u},\hat{v})$ is approximately $\delta(u,v)$. Next, we add to $E^*$ the augmented cross edges from the ancestors that are within $O(\log\lambda)$ levels from $\hat{u}$ and $\hat{v}$. Note that: (i) we only add edges that are long enough (\cref{line:check-length}) and (ii)  by definition of $\Aug(\cdot,\cdot,\cdot)$,  $E^*$ includes not only cross edges incident to the ancestors of $\hat{u}$ and $\hat{v}$, but also those that are between the cross neighbors of the ancestors.  As agumented cross edges are not much longer than $(\hat{u},\hat{v})$, we can later show that $\weight(E^*) = \eps^{-O(d)}\weight(G)$, which is a part of the proof of \Cref{lm:weight_E_O}.

In the second phase, we use edges in $E^*$ found in the first phase to add edges to the spanner $H$ to guarantee the fault-tolerant property. Specifically, we visit the levels of $T$ from lower to higher, and for each edge $(x,y)$ in $E^*$ at level $i$ (set $E^*_i$ in \cref{line:Eistar}), we would add a bipartite connection (\Cref{def:bipartite-conn}) between two surrogate sets $S(x)$ and $S(y)$ as in \cref{line:H-update}. (Suppose for now that $S(x)$ and $S(y)$ are chosen arbitrarily following \Cref{def:sorrogate}). However, this is not enough. In particular, the nodes $x$ that are marked as small in \cref{line:mark-x-small} are problematic. That is if there exists a small descendant within $O(\log \lambda)$ levels of $x$ that is small and has at most $f$ leaves in its subtree (\cref{line:Dx-def} and \cref{line:check-Dx}), then we may not able to find $f + 1$ vertices for $S(x)$. (We later prove that the set of small nodes, each of which has at most $f$ leaves, can be partitioned into LNF.) To fix this issue, we add (long enough) cross edges between nodes in $NC[x]$ in \cref{line:add-NCx} to $E^*$. Using the argument outlined in \Cref{techhighlight}, which is the key technical contribution of our work, we are able to show that adding the bipartite connection between two surrogate sets $S(x)$ and $S(y)$ for every edge $(x,y)\in E^*$ at level $i$ suffices to guarantee $f$-VFT, and furthermore, the spanner will have lightness $O(f^2)$. However, the degree could be $\Omega(n)$ if  $S(x)$ and $S(y)$ are chosen arbitrarily.

\Cref{alg:Surrogate} shows how to choose $S(x)$ carefully to reduce the degree all the way down to $O_{\eps,d}(f)$ while keeping the fault tolerant property in check. If the degree in $H$ of a node is at least $c_3\cdot f$, the algorithm will mark it as \emph{saturated} in \cref{line:mark-saturated}, and it will not be used in any surrogate set of the nodes considered in future iterations. (Only edges incident to points in surrogate sets are added to $H$.)  If $x$ is small (\cref{line:check-small}), then one can show that every leaf of $T(x)$ is not saturated by definition and hence can safely be added to $S$ (\cref{line:add-leaf-S}). Otherwise, we consider two sets $S'\subseteq \ball(x,16r_i)$ (\cref{line:def-Sprime}) and $S'' \subseteq \ball(x,4r_i)$ (\cref{line:def-Sdprime}) containing unsaturated vertices . We prefer adding vertices of $S'$ to $S(x)$ over vertices of $S''$ as those in $S'$ are closer to being saturated. While it is not hard to see that our final spanner has degree $O_{\eps,d}(f)$ due to the choice of the surrogate sets,  showing that it remains $f$-VFT is extremely challenging. Indeed, this is another major technical contribution of our paper. 

The main result in our paper is that  \Cref{alg:VFT-spanner} returns a light and bounded degree VFT spanner in optimal time.

\begin{theorem}
	\label{thm:VFT-correctness}
	The output graph $H$ of \Cref{alg:VFT-spanner} is a $f$-VFT $(1 + \eps)$-spanner of $X$ with maximum degree $2c_3f$ and lightness $O_{\eps, d}(f^2)$. Furthermore, \Cref{alg:VFT-spanner} can be implemented to run in $O(n(\log n + f))$ times. 
\end{theorem}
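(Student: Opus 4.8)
The plan is to establish the four guarantees of \Cref{thm:VFT-correctness}---degree, size, lightness, and fault-tolerant stretch---in that order, treating the $O_{\eps,d}(n(\log n+f))$ running time separately (it is carried out in \Cref{sec:fast}). The \textbf{degree bound} $2c_3 f$ is the most self-contained. Every edge of $H$ is added in \cref{line:H-update} and is incident to a surrogate, and a point $p$ is used as a surrogate during iteration $i$ only if it was not saturated at the start of that iteration, i.e.\ only if $\deg_H(p)\le c_3 f$ at that time. By the packing bound (\Cref{lm:packing-bound}), $p$ lies in $\ball(x,16 r_i)$ for only $2^{O(d)}$ level-$i$ nodes $x$ (two such nodes are $32 r_i$-close while $N_i$ is $r_i$-separated) and is a leaf of a unique level-$i$ node; moreover each level-$i$ node has $|NC(x)|\le(4\lambda)^d=\eps^{-O(d)}$ cross neighbours, hence is incident to at most that many edges of $E^*_i$; since a bipartite connection adds at most $f+1$ edges at a vertex, $p$ gains at most $\eps^{-O(d)} f\le c_3 f$ edges in iteration $i$ (choosing the hidden constant in $\xi$ large enough). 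Thus in the first iteration in which $\deg_H(p)$ exceeds $c_3 f$ it is still below $2c_3 f$, $p$ is saturated forever after, and $\deg_H(p)\le 2c_3 f$ throughout; the \textbf{size} bound $|E(H)|\le c_3 f n=O_{\eps,d}(fn)$ follows by summing degrees.

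For \textbf{lightness} I would first note that every edge of $E^*$ is a cross edge, so for $(x,y)\in E^*_i$ we have $r_i=O(\delta(x,y))$ and each edge of $M(S(x),S(y))$---whose endpoints lie in $\ball(x,16 r_i)\cup\ball(y,16 r_i)$---has weight $O(\delta(x,y))$. Split $\weight(H)$ into matching connections and clique connections. A matching connection has $f+1$ edges, so the matching part is $O(f)\cdot\weight(E^*)$; I bound $\weight(E^*)$ using the weight accounting of the Phase-1 cross edges ($\eps^{-O(d)}\weight(G)=\eps^{-O(d)}\weight(\MST(X))$ since $G$ is a light $(1+\eps)$-spanner, cf.\ \Cref{lm:weight_E_O}) together with the \emph{light net-forest} structure for the edges of \cref{line:add-NCx}: the set of \emph{small} nodes whose subtree has at most $f$ leaves is closed under taking descendants (a descendant is processed earlier and has no more leaves), hence is a union of node-disjoint subtrees with pairwise non-ancestral roots $\{w\}$; \Cref{lm:sum-disjoint-nodes}---precisely where the choice $r_i=5^i$ is used---gives $\sum_w r_{\lvl(w)}=O(\weight(\MST(X)))$, geometric decay of radii down each subtree gives $\sum_{x\in\mathrm{LNF}}r_{\lvl(x)}=O(f\cdot\weight(\MST(X)))$, and \cref{line:add-NCx} fires only at nodes at or within $5\log\lambda$ levels above such a node (radius blow-up $\lambda^5$), so the Phase-2 edges also contribute $\eps^{-O(d)} f\cdot\weight(\MST(X))$. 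A clique connection occurs only when one endpoint $x$ is incomplete, which (by the surrogate-selection analysis) forces $x$ to be a small node with $|S(x)|=|L(T(x))|\le f$; summing $|NC[x]|\cdot|S(x)|\cdot(f+1)\cdot O(r_{\lvl(x)})$ over all such $x$ and using $\sum_{x\in\mathrm{LNF}}|L(T(x))|\,r_{\lvl(x)}\le\frac54 f\sum_w r_{\lvl(w)}=O(f\cdot\weight(\MST(X)))$ bounds the clique part by $\eps^{-O(d)} f^2\cdot\weight(\MST(X))$. Altogether $\weight(H)=\eps^{-O(d)} f^2\cdot\weight(\MST(X))$, i.e.\ lightness $O_{\eps,d}(f^2)$.

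The heart of the proof is \textbf{fault-tolerant stretch}: for $F\subseteq X$ with $|F|\le f$ and non-faulty $u,v$, construct a $(1+5\eps)$-spanner path in $H-F$ (rescaling $\eps$ then yields the stated $(1+\eps)$). I would induct on $\delta(u,v)$. Let $(\ddot u,\ddot v)$ be the original cross edge of $(u,0),(v,0)$. If $\ddot u$ or $\ddot v$ is incomplete, the construction guarantees $(\ddot u,\ddot v)\in E^*$ (this is the role of \cref{line:add-NCx}), that $u\in S(\ddot u)$ and $v\in S(\ddot v)$, and that the connection is a clique, so $(u,v)\in E(H-F)$ and we are done. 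In the hard case both $\ddot u,\ddot v$ are complete and at a high level; then I walk along a fixed shortest path $P=(u=u_1,\dots,u_l=v)$ of $G$, with $\weight(P)\le(1+\eps)\delta(u,v)$, and translate it as in \Cref{techhighlight}. For each $G$-edge $(u_j,u_{j+1})$, its original and all its $\Aug(\cdot,0,5\log\lambda)$ augmented cross edges lie in $E^*$. If $(u_j,u_{j+1})$ is \emph{replaceable}---the matching between the appropriate-level ancestors $\hat u_j,\hat u_{j+1}$ has a non-faulty edge, which is automatic when both are complete since a perfect matching of size $f+1$ cannot be entirely met by $F$---I replace it by that edge. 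Otherwise it is \emph{irreplaceable}, which forces $\hat u_j$ or $\hat u_{j+1}$ to be incomplete, and then---the crux---I exhibit a good-approximation augmented cross edge $(\tilde u_j,\tilde u_i)$ of $\dist_P(u_j,u_i)$ for some $i>j+1$ whose endpoints are complete, replace the prefix $(u_j,\dots,u_i)$ by the non-faulty matching edge, and recurse on the suffix. Finally the replacement edges are glued: where the path arrives at and departs from an intermediate $u_i$ it sits at two surrogates whose distance is an $O(\eps)$-fraction of $\weight(e_i)$, so a non-faulty connecting subpath exists by the induction hypothesis (their distance is $<\delta(u,v)$), and the total glue length is $O(\eps)\weight(P)$, keeping the overall stretch within $1+5\eps$.

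Two structural lemmas, proved before the induction, carry the weight of the stretch argument: (i) the \emph{completeness invariant}---if a node has $f+1$ surrogates then so does every ancestor---which is exactly what the rule in \Cref{alg:Surrogate} of preferring high-degree unsaturated vertices in $\ball(x,16 r_i)$ over low-degree ones in $\ball(x,4 r_i)$ is engineered to ensure; and (ii) the resulting fact that the incomplete nodes are contained in the light net-forest, so that \cref{line:add-NCx} has added, at or within $5\log\lambda$ levels above every incomplete node, the cross edges among its cross-neighbourhood needed for rerouting. \textbf{The main obstacle} is the irreplaceable case above together with its simultaneous reliance on (i) and (ii): one must find, for an irreplaceable $G$-edge, a strictly longer subpath of $P$ whose endpoints are joined by a complete-to-complete augmented cross edge that $(1\pm O(\eps))$-approximates the subpath's length---balancing control of where $P$ passes relative to the net points at the relevant level (geometry) against how far up incompleteness can propagate before the completeness invariant forces a complete node (combinatorics of the light net-forest). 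Everything else---the packing arguments, the weight accounting, the reduction to $G$-shortest paths, the gluing---is routine once these two lemmas are in hand. The \textbf{running time} $O_{\eps,d}(n(\log n+f))$, proved in \Cref{sec:fast}, follows because $G$ and a path-compressed net tree are computable in $O(n\log n)$ time, $E^*$ has $O_{\eps,d}(n)$ edges enumerable in $O_{\eps,d}(n)$ time via net-tree navigation, and the $O_{\eps,d}(fn)$ edges of $H$ are emitted in $O_{\eps,d}(fn)$ total time, using the saturation cap to skip vertices that can no longer serve as surrogates.
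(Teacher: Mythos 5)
Your proposal follows the paper's four-part decomposition (degree, lightness, fault-tolerant stretch, running time) and the degree and lightness pieces are sound. The degree bound via per-level surrogate multiplicity and packing is exactly \Cref{lm:bounded-surrogate}/\Cref{lm:bound-deg-H}; your lightness split into matching-connections vs.\ clique-connections differs in bookkeeping from the paper's $E_O$, $E_{com}$, $E_{inc}$ partition, but it relies on the same two load-bearing lemmas---the completeness invariant (\Cref{clm:parent-complete} and \Cref{lm:complete-node-complete}) and the radii bound over the light net-forest roots (\Cref{lm:sum-disjoint-nodes})---and arrives at the same $\eps^{-O(d)}f^2$ bound. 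You also correctly invoke \Cref{lm:complete-node-complete} as a prerequisite, which the paper proves with a substantial inductive argument (\Cref{lm:complete-ball}, \Cref{lm:grow-clean}) that you take as given.

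The fault-tolerant stretch part, however, has a genuine gap that your own text flags but does not close. In the irreplaceable case you write ``I exhibit a good-approximation augmented cross edge $(\tilde u_j, \tilde u_i)$ of $\dist_P(u_j, u_i)$ for some $i>j+1$ whose endpoints are complete'' --- this is precisely \Cref{lm:detour-exist}, and it is where nearly all the technical work of \Cref{sec:connectivity} is spent. To actually produce such a cross edge one must (a) iterate \Cref{lm:detour-edge} to build a $P$-stair-jump, (b) control the levels of consecutive jumps and show they are strictly increasing and comparable to the original cross-edge level via \Cref{clm:stair-weight} and \Cref{clm:lvl-highest}, (c) show the needed augmented cross edges propagate to $E^*$ via $\Pa^2(\Aug(\cdot,0,\log\lambda)) \subseteq E^*$ (\Cref{clm:augmented}), (d) handle termination when the remaining suffix is short relative to the current level (\Cref{clm:mid-small}, \Cref{clm:short-tail-cross}), and (e) convert the completed stair-jump into a single complete-to-complete good cross edge (\Cref{lm:shortcut-crosspath}). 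Absent this chain, the statement ``replace the prefix $(u_j,\dots,u_i)$ by the non-faulty matching edge and recurse'' has no edge to recurse with; and the gluing-length estimate you invoke, $O(\eps)\weight(P)$, is exactly the geometric-sum bound on $\sum_k r_{i_k}$ that only holds because the jump levels are strictly increasing and each is $\Theta(\log(\text{subpath length}))$. Your diagnosis that the ``completeness invariant plus LNF structure'' is the key pair of ingredients is right, but the nontrivial combination of the two---tracking how far incompleteness propagates up the tree while keeping a length-faithful jump cross edge in $E^*$---is the paper's central technical contribution and must be supplied, not asserted.
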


We show that $H$ has maximum degree $2c_3f$ in \Cref{sec:degree}, $O(f^2)$ lightness in \Cref{sec:light} and prove that $H$ is $f$-vertex fault tolerant in \Cref{sec:connectivity}. All together prove \Cref{thm:VFT-correctness}. We construct a fast implementation of our algorithm in \Cref{sec:fast}.
	\section{Degree Analysis}
\label{sec:degree}
Observe from \Cref{alg:VFT-spanner}'s for loop (\cref{line:level-iter}) that before iteration $i$, each non saturated point has a degree less than $c_3f$. After being saturated, the degree of a point does not increase. To show that $H$ has a bounded maximum degree, it remains to prove that at the last iteration before any point $u$ become saturated, the degree of $u$ increases by at most $O_{\eps, d}(f)$. Indeed, we prove a stronger result: the degree of each point increases by at most $O_{\eps, d}(f)$ after each iteration of the for loop in \cref{line:level-iter}.

Since after adding a complete bipartite connection to $H$, the degree of each point increases by at most $f$, we need to show that any point $u$ is in a constant number of surrogate sets (with multiplicity) at level $i$. Note that $u$ can belong to the surrogate set of a node $x$ multiple times. An edge in $H$ is a level-$i$ edge for a non-negative integer $i \leq \zeta$ if it is in $M(S(x), S(y))$ for some $x, y \in V(T)$. If an edge is added to $H$ at multiple levels, we choose the lowest one. 

\begin{lemma}
	\label{lm:bounded-surrogate}
	Every point $v \in X$ belongs to $O_{\eps, d}(1)$ surrogate sets (with multiplicity) at level $i$ for every $i \in [0\ldots \zeta]$. Furthermore, $v$ is incident to at most $\xi f$ level-$i$ edges in $H$ with $\xi = O(1/\eps)^d$ chosen in \cref{line:costants-c}.
\end{lemma}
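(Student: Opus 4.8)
The plan is to bound, for a fixed level $i$ and a fixed point $v$, the number of nodes $x$ at level $i$ such that $v \in S(x)$ (counted with multiplicity), and then multiply by the obvious bound that each bipartite connection raises $\deg_H(v)$ by at most $f$. I would split the analysis according to which branch of \Cref{alg:Surrogate} placed $v$ into $S(x)$, since the two branches behave very differently. If $v$ entered $S(x)$ via the ``small'' branch (\cref{line:add-leaf-S}), then $v$ is a leaf of $T(x)$, i.e.\ $v \in L(x)$; but a point has exactly one ancestor at level $i$, so there is at most one such $x$, contributing $O(1)$. If $v$ entered $S(x)$ via the general branch, then by \cref{line:def-Sprime,line:def-Sdprime} we have $v \in \ball(x, 16 r_i)$, which forces $\nd(x, v) \le 16 r_i$, i.e.\ the net point of $x$ lies in $\ball(v, 16 r_i)$. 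Since the level-$i$ nodes form an $r_i$-net (Definition~\ref{def:net-tree}), they are $r_i$-separated, so by the packing bound (\Cref{lm:packing-bound}) with $R = 16 r_i$ and $r = r_i$ there are at most $(4\cdot 16)^d = 64^d = \eps^{-O(d)}$ such nodes. Hence $v$ lies in at most $\eps^{-O(d)} + O(1) = \eps^{-O(d)}$ surrogate sets at level $i$; this is the ``with multiplicity'' count, and I should note that within a single node $x$, $v$ can appear in $S(x)$ only once (the sets $S', S''$, and the leaf set are genuine sets), so multiplicity across nodes is exactly what the packing argument bounds.

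For the second sentence of the lemma --- that $v$ is incident to at most $\xi f$ level-$i$ edges --- I would argue as follows. A level-$i$ edge incident to $v$ lies in some $M(S(x), S(y))$ where $v \in S(x) \cup S(y)$ and both $x,y$ are at level $i$. By the first part, $v$ belongs to at most $\eps^{-O(d)}$ surrogate sets $S(x)$ at level $i$ (with multiplicity). For each such occurrence of $v$ in $S(x)$, we need to count the level-$i$ edges of $M(S(x), S(y))$ incident to $v$ over all cross-edges $(x,y) \in E^*_i$: if $|S(x)| = |S(y)| = f+1$ the bipartite connection is a perfect matching so $v$ gets $\le 1$ edge per such $y$; if the connection is a complete bipartite graph then $v$ gets $\le f+1$ edges per such $y$. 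So I must also bound the number of cross-neighbours $y$ of $x$ at level $i$: by definition a cross edge $(x,y)$ has $\nd(x,y) \le \lambda r_i$, and level-$i$ nodes are $r_i$-separated, so the packing bound gives $|NC[x]| \le (4\lambda)^d = \eps^{-O(d)}$ (note $\lambda = 5^{20}(1+1/\eps) = \eps^{-O(1)}$, so $(4\lambda)^d = \eps^{-O(d)}$). However --- and this is the subtle point --- $E^*$ also contains, via \cref{line:add-NCx}, edges $(y,z)$ between \emph{two cross-neighbours} of $x$, neither of which need be $x$; so I should instead directly bound the number of level-$i$ edges of $E^*$ incident to a \emph{fixed node} $x$. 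Every edge of $E^*$ at level $i$ incident to $x$ has the form $(x,y)$ with $\nd(x,y) \le \lambda r_i$ (this holds for original cross edges, for augmented cross edges from \cref{line:augm}, and for the $NC[x]$-edges from \cref{line:add-NCx}), so again $\le (4\lambda)^d = \eps^{-O(d)}$ of them. Combining: the number of level-$i$ edges incident to $v$ is at most $(\text{\# surrogate occurrences of } v) \times (\text{\# } E^*_i\text{-edges per node}) \times (f+1) = \eps^{-O(d)} \cdot \eps^{-O(d)} \cdot (f+1) = \eps^{-O(d)} f$, which matches the claimed $\xi f$ with $\xi = O(1/\eps)^d$ after absorbing constants.

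The step I expect to be the main obstacle is the bookkeeping around multiplicity and the $E^*$-edges introduced in \cref{line:add-NCx}: one has to be careful that ``$v \in S(x)$ with multiplicity'' is really just ``over distinct nodes $x$'' (not within a node), and that the $NC[x]$-edges really do stay within a $\lambda r_i$-ball of each of their endpoints so that the single packing bound $(4\lambda)^d$ controls the degree of \emph{every} level-$i$ node in the graph $(X, E^*_i)$. A secondary point to verify is that $\lambda$ depends on $\eps$ only (not on $n$ or $f$), so that $(4\lambda)^d$ is genuinely $\eps^{-O(d)}$ and can be folded into $\xi$; this is immediate from $\lambda = 5^{20}(1+1/\eps)$ but should be stated. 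Once these are pinned down, the rest is a direct application of \Cref{lm:packing-bound} together with the fact that each point has a unique level-$i$ ancestor.
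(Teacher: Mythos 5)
Your argument for the degree bound is correct and follows essentially the same route as the paper: a packing bound on the number of level-$i$ nodes $x$ with $\nd(x,v)\le 16 r_i$, a packing bound on the number of $E^*$-edges at level $i$ incident to any fixed node (this is the paper's Observation~\ref{obs:bound-node-deg}), and a factor of $f+1$ per bipartite connection. The split into the ``small'' and ``general'' branches of \Cref{alg:Surrogate} is unnecessary but harmless: by \Cref{clm:leaf-dist} a leaf of $T(x)$ is within $2 r_i \le 16 r_i$ of $x$, so the single packing bound on $\ball(v,16r_i)$ handles both cases uniformly, which is what the paper does.

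There is, however, a genuine misreading of ``with multiplicity.'' You write that ``within a single node $x$, $v$ can appear in $S(x)$ only once,'' concluding that multiplicity is purely across nodes. That is not what the algorithm does: in the loop at \cref{line:edge-process}--\ref{line:H-update}, $S(x) \leftarrow \text{SelectSurrogate}(x)$ is \emph{recomputed once per edge} $(x,y)\in E^*_i$ incident to $x$ (the surrogates can change between calls since degrees in $H$ evolve). So $v$ can be selected into $S(x)$ once per such edge, i.e.\ up to $\eps^{-O(d)}$ times for a \emph{single} node $x$, and this per-node multiplicity is precisely what the word ``multiplicity'' refers to in the statement. Your packing argument on $\ball(v,16r_i)$ bounds only the number of \emph{distinct} nodes $x$; to finish the first claim you must also multiply by the per-node recomputation count, which is bounded by the same per-node $E^*_i$-degree bound $(4\lambda)^d$ you already use. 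The good news is that this is exactly the factor you do include in the second part of your proof, so your final degree bound $\eps^{-O(d)}\cdot\eps^{-O(d)}\cdot(f+1)$ is correct; you just need to move that factor into the justification of the first sentence of the lemma rather than claiming the packing bound alone gives the with-multiplicity count.
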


To prove \Cref{lm:bounded-surrogate}, we first show that any node in $V(T)$ is only incident to $O_{\eps, d}(1)$ edges in $E^*$.

\begin{observation}
	\label{obs:bound-node-deg}
	Every node $(u, i)$ is incident to $O(1/\eps)^d$ cross edges in $E^*$.
\end{observation}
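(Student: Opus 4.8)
The plan is to reduce \Cref{obs:bound-node-deg} to a single packing estimate, using the fact that every edge of $E^*$ is a \emph{cross edge} in the technical sense of the construction (both endpoints at a common level $\ell$, at $\nd$-distance at most $\lambda r_\ell$). First I would check this fact by inspecting the three places where \Cref{alg:VFT-spanner} enlarges $E^*$: the original cross edge inserted in \cref{line:add-origin-CE} is a cross edge by definition; every edge inserted in \cref{line:Estar-update} belongs to some $\Aug_j(\cdot)$, which is a union of sets $\Cr(NC[\cdot])$ and therefore consists only of cross edges; and every pair $(y,z)$ inserted in \cref{line:add-NCx} satisfies $64 r_i \le \nd(y,z) \le \lambda r_i$ with $y,z$ both at level $i$, hence is again a cross edge. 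Consequently $E^*$ is a set of cross edges, and in particular it is level-respecting.

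Next I would fix a node $(u,i)$ and argue that any cross edge of $E^*$ incident to it must be a \emph{level-$i$} cross edge, because the two endpoints of a cross edge share a level; hence its other endpoint is some level-$i$ node $(v,i)$ with $\nd(u,v) \le \lambda r_i$, i.e.\ a cross neighbour of $(u,i)$. Therefore the number of edges of $E^*$ incident to $(u,i)$ is at most $|NC((u,i))|$. Finally I would bound $|NC((u,i))|$ by the packing bound: the level-$i$ nodes are exactly the points of the net $N_i$, which is $r_i$-separated by \Cref{def:net-tree}, and all cross neighbours of $(u,i)$ lie inside $\ball(u,\lambda r_i)$, so \Cref{lm:packing-bound} gives $|NC((u,i))| \le (4\lambda)^d$. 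Since $\lambda = 5^{20}(1+1/\eps) = O(1/\eps)$, this is $O(1/\eps)^d$, which is the claimed bound.

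I do not expect a genuine obstacle here; the statement is essentially immediate once one notices that $E^*$ contains only cross edges. The only thing that needs care is the bookkeeping: verifying that \emph{no} insertion into $E^*$ (in particular the Phase-2 insertion in \cref{line:add-NCx}) sneaks in a non-cross edge, and using level-respecting-ness to ensure that a fixed node $(u,i)$ can be an endpoint only of level-$i$ cross edges. Everything else is the standard doubling-dimension packing argument, with $\lambda = \Theta(1/\eps)$ absorbing the $\eps$-dependence into the $O(1/\eps)^d$ factor.
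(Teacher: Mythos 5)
Your proof is correct and follows essentially the same route as the paper: bound the number of cross neighbours of $(u,i)$ by the packing bound, using that $N_i$ is $r_i$-separated and cross neighbours lie in $\ball(u,\lambda r_i)$, which gives $(4\lambda)^d = O(1/\eps)^d$. The additional bookkeeping you do (verifying that every insertion into $E^*$ produces only cross edges, and that cross edges are level-respecting) is implicit in the paper's one-paragraph argument but is the same underlying idea.
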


\begin{proof}
	Let $N(u, i) = \{v_1 , v_2, \ldots \}$ be the set of cross neighbors of $(u, i)$. By \Cref{alg:VFT-spanner}, for every $v_j \in N(u, i)$, $\nd(v_j, u) \leq \lambda r_i$, implying that $v_j \in \ball(u, \lambda r_i)$. Additionally, $N(u, i)$ is an $r_i$-separated set since it is a subset of an $r_i$-net. Hence, by packing bound (\Cref{lm:packing-bound}), $|N(u, i)| \leq \left(\frac{4\lambda r_i}{r_i}\right)^d = O(1/\eps)^d$. 
\end{proof} 

We now prove \Cref{lm:bounded-surrogate}.

\begin{proof}[Proof of \Cref{lm:bounded-surrogate}]
	For any node $(u, i)$ such that $v$ is in the surrogate set of $(u, i)$, $\nd(u, v) \leq 16r_i$, implying that $u \in \ball(v, 16r_i)$. By \Cref{obs:bound-node-deg}, for each $u$, $v$ is chosen to be a surrogate of $(u, i)$ for $O(1/\eps)^d$ times. Let $S_v = \{u_1, u_2, \ldots \}$ be the set of points in $X$ satisfying for each $u_k\in S_v$, $v$ is a surrogate of $(u_k, i)$. By \Cref{alg:Surrogate}, $\nd(v, u_k) \leq 16r_i$ for every $u_k \in S_v$. Since $S_v$ is a subset of an $r_i$-net, $|S_v| \leq \left(\frac{4 \times 16r_i}{r_i}\right)^d = 2^{6d}$ by \Cref{lm:packing-bound}. 
	
	Because each node $(u_k, i)$ ($u_k \in S_v$) is incident to $O(1/\eps)^d$ cross edges and adding the bipartite connection of each cross edge increases the degree of a point in $S(u_k, i)$ by at most $f$ in \cref{line:H-update}, there are $2^{6d}\cdot O(1 /\eps)^d \cdot f$ level-$i$ edges incident to $v$.
\end{proof}

We are now ready to bound the degree of $H$.

\begin{lemma}
	\label{lm:bound-deg-H}
	$H$ has a maximimum degree bounded by $2c_3f$.
\end{lemma}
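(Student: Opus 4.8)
The plan is to combine the per-level degree bound from Lemma~\ref{lm:bounded-surrogate} with the saturation mechanism built into Algorithm~\ref{alg:VFT-spanner}. First I would observe that the only edges ever added to $H$ are those in bipartite connections $M(S(x),S(y))$ (line~\ref{line:H-update}), and a point $v$ receives an incident edge at iteration $i$ only if $v\in S(x)$ for some node $x$ processed at level $i$. So the degree of $v$ can only grow during the levels where $v$ is selected into some surrogate set. The key structural fact is the one already noted in the text: at the start of iteration $i$, every point that has not yet been marked saturated has degree strictly less than $c_3 f$ in $H$ (line~\ref{line:mark-saturated} is what enforces this), and once a point is saturated its degree never increases, because \textsc{SelectSurrogate} (Algorithm~\ref{alg:Surrogate}, lines~\ref{line:def-Sprime}–\ref{line:def-Sdprime}) only ever picks \emph{unsaturated} vertices (and small nodes pick leaves, which by the $c_1 f < c_3 f$ slack are guaranteed unsaturated).

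Given that, the argument splits into two cases for any fixed point $v$. If $v$ is never saturated, then by definition its degree stays below $c_3 f \le 2c_3 f$ throughout and we are done. If $v$ becomes saturated, let $i^\star$ be the (unique) iteration during which its degree crosses the threshold $c_3 f$; at the \emph{start} of iteration $i^\star$ its degree was still below $c_3 f$, and after iteration $i^\star$ it is marked saturated and frozen forever. Hence $\deg_H(v) < c_3 f + (\text{increase during iteration } i^\star)$. By Lemma~\ref{lm:bounded-surrogate}, the increase during any single iteration $i$ is at most $\xi f$ level-$i$ edges, and since $\xi = \eps^{-O(d)}$ and the constants are chosen in line~\ref{line:costants-c} with $c_3 = 55\log\lambda\cdot\xi$, we have $\xi f \le c_3 f$. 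Therefore $\deg_H(v) \le c_3 f + \xi f \le 2c_3 f$, which is exactly the claimed bound.

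The routine bookkeeping I would still need to check carefully: (1) that a node $x$ marked \emph{small} indeed only selects unsaturated vertices into $S(x)$ — this follows because the leaves of a small node have degree at most $c_1 f < c_3 f$ by the definition in line~\ref{line:mark-x-small} and the choice $c_1 < c_3$; and (2) that the saturation check is done \emph{before} the surrogate selection within iteration $i$ (it is — line~\ref{line:mark-saturated} precedes the loop of line~\ref{line:edge-process}), so that no vertex already over $c_3 f$ at the start of iteration $i$ can be used as a surrogate in that iteration. The only mildly delicate point — and the part I would treat as the "main obstacle," though it is really just an ordering subtlety — is making sure that within a single iteration $i$ the degree cannot overshoot by more than one iteration's worth: this is precisely what Lemma~\ref{lm:bounded-surrogate} buys us, since it bounds the \emph{total} number of level-$i$ edges incident to $v$ by $\xi f$ regardless of the order in which the edges of $E^*_i$ are processed. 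Everything else is a direct substitution of the constants fixed in line~\ref{line:costants-c}.
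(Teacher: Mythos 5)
Your proof is correct and takes essentially the same approach as the paper's: you identify the last iteration in which a vertex $v$ is unsaturated, bound its degree at the start of that iteration by $c_3 f$ via the saturation check, and apply \Cref{lm:bounded-surrogate} to cap the single-iteration increase at $\xi f$, giving $(c_3+\xi)f \le 2c_3 f$. The extra bookkeeping you supply --- why small nodes only select unsaturated leaves (the $c_1 < c_3$ slack) and why points chosen from $S''$ with degree below $c_2 f$ cannot already have been saturated --- just makes explicit the observation the paper asserts without proof (``when a point is saturated, it will never be used later in the algorithm'').
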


\begin{proof}
	Observe that when a point is saturated, it will never be used later in the algorithm. Thus, it is sufficient to bound the degree of any point after the last iteration when it is not marked as saturated in \cref{line:mark-saturated} of \Cref{alg:VFT-spanner}. For each point $u$, let $i_u$ be the highest level such that $u$ is not saturated. By \Cref{lm:bounded-surrogate}, the degree of $u$ in $H$ increases by at most $\xi f$ after iteration $i_u$. Since $u$ is not marked as saturated in \cref{line:mark-saturated}, the degree of $u$ before we add the bipartite connection of cross edges at level $i_u$ is at most $c_3 f$. Hence, the degree of $u$ after we add level-$i_u$ edges to $H$ is at most $(c_3 + \xi)f \leq 2c_3f$ by the choice of $c_3$.
\end{proof}
	\section{Lightness}
\label{sec:light}
In this section, we analyze the lightness of $H$. Recall that for each edge $(u, v)$ in $G$, we add the original cross edge $(\hat{u}, \hat{v})$ of $(u, v)$ and the cross edges in $\Aug(\hat{u}, 0, 5\log{\lambda}) \cup \Aug(\hat{v}, 0, 5\log{\lambda})$ to $E^*$; let $E^*_O$ be the set contains all these cross edges for every $(u, v) \in E(G)$. Let $E_O$ be the set of edges added to $H$ from every bipartite connection of all cross edges in $E^*_O$. Formally,

\begin{equation}
	\label{def:E-O}
	E_O = \{M(S(x), S(y)) : (x, y) \in E^*_O\} \qquad .
\end{equation}

In \Cref{lm:weight_E_O} , we show that $\weight(E_O) = O(f^2)\weight(\MST(X))$. In \Cref{lm:weight_E-E_O}, we show that $\weight(E(H) \setminus E_O) = O(f^2)\weight(\MST(X))$. Given the two lemmas, we now prove: 

\begin{lemma}
	\label{lm:light}
	$H$ has lightness $\eps^{-O(d)}f^2$.
\end{lemma}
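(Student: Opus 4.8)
\textbf{Proof proposal for Lemma~\ref{lm:light}.}
The plan is to split the edge set of $H$ into two parts and bound each separately, exactly along the lines already set up in the section: $E(H) = E_O \cup (E(H) \setminus E_O)$, where $E_O$ collects all edges added via bipartite connections of the ``original'' cross edges $E^*_O$ (the original cross edge of each $(u,v)\in E(G)$ together with its augmented cross edges within $O(\log\lambda)$ levels), and $E(H)\setminus E_O$ collects everything else --- in particular the edges arising from the cross edges added in \cref{line:add-NCx} for problematic (small) nodes. Granting \Cref{lm:weight_E_O}, which gives $\weight(E_O) = O(f^2)\weight(\MST(X))$, and \Cref{lm:weight_E-E_O}, which gives $\weight(E(H)\setminus E_O) = O(f^2)\weight(\MST(X))$, the lemma follows immediately by the triangle inequality on weights:
\begin{equation}
\weight(H) \;=\; \weight(E_O) + \weight(E(H)\setminus E_O) \;\le\; O(f^2)\weight(\MST(X)),
\end{equation}
and since all the suppressed constants are of the form $\eps^{-O(d)}$ (the packing bounds contribute $\eps^{-O(d)}$ factors, as do the constants $\xi, c_1, c_2, c_3$ chosen in \cref{line:costants-c}), we get $\weight(H) = \eps^{-O(d)} f^2 \cdot \weight(\MST(X))$, i.e.\ lightness $\eps^{-O(d)} f^2$, as claimed.

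Concretely, I would organize the write-up as follows. First state that by definition $E(H) = E_O \sqcup (E(H)\setminus E_O)$ is a partition (every edge of $H$ is added in some $M(S(x),S(y))$, and we assign it to $E_O$ iff the corresponding cross edge lies in $E^*_O$; edges added from multiple cross edges are assigned consistently, say to $E_O$ if any witnessing cross edge is in $E^*_O$). Then invoke \Cref{lm:weight_E_O} and \Cref{lm:weight_E-E_O} as black boxes. Finally add the one-line summation above and fold the $\eps^{-O(d)}$ dependencies into the statement. Since this lemma is explicitly presented in the text as the easy consequence of the two weight lemmas (``Given the two lemmas, we now prove''), the proof here is genuinely short: the real content lives in \Cref{lm:weight_E_O} and \Cref{lm:weight_E-E_O}, not in \Cref{lm:light} itself.

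The only subtlety worth flagging --- and the ``main obstacle'' in the sense of requiring a sentence of care rather than a deep argument --- is the bookkeeping of how each edge of $H$ is attributed to exactly one of the two classes, so that one does not accidentally double-count weight. Because a single pair of points can be the image of several cross edges at different levels (recall the convention that an edge of $H$ added at multiple levels is assigned to the lowest one), I would make the attribution rule explicit: an edge $(p,q)\in E(H)$ belongs to $E_O$ precisely when it is realized by some $M(S(x),S(y))$ with $(x,y)\in E^*_O$; otherwise it belongs to $E(H)\setminus E_O$. Under this rule $E_O$ and $E(H)\setminus E_O$ are disjoint and cover $E(H)$, so $\weight(E(H)) = \weight(E_O) + \weight(E(H)\setminus E_O)$ with no overcounting, and the two cited lemmas close the argument. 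No new geometric input (packing bounds, properties of the LNF, surrogate-selection invariants) is needed at this level --- those are precisely what \Cref{lm:weight_E_O} and \Cref{lm:weight_E-E_O} consume.
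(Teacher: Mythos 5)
Your proof is correct and follows exactly the same route as the paper: decompose $E(H)$ into $E_O$ and $E(H)\setminus E_O$, apply \Cref{lm:weight_E_O} and \Cref{lm:weight_E-E_O} as black boxes, and sum. The only addition is your explicit attribution rule to avoid double-counting, which the paper sidesteps by writing the inequality $\weight(E(H)) \leq \weight(E_O) + \weight(E(H)\setminus E_O)$ rather than insisting on a strict partition; both are fine.
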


\begin{proof}
	Observe that $\weight(E(H)) \leq \weight(E_O) + \weight(E(H) \setminus E_O)$. Since  $\weight(E_O) = \eps^{-O(d)}\weight(\MST(X))$ by \Cref{lm:weight_E_O} and $\weight(E(H) \setminus E_O) = \eps^{-O(d)}\weight(\MST(X))$ by \Cref{lm:weight_E-E_O}, $\weight(E(H)) \leq \eps^{-O(d)}\weight(\MST(X))$, which implies the lemma.
\end{proof}

First, we list some properties that we need for the proofs of \Cref{lm:weight_E_O} and \Cref{lm:weight_E-E_O}.

\begin{property} 
	\label{prop:lightness-properties}
	We have the following properties: 
	\begin{enumerate}
		\item \label{it:connection-small} Let $(x, y)$ be a cross edge in $E^*$. For every $(u, v) \in M(S(x), S(y))$, $\nd(u, v) \leq 2\nd(x, y)$.
		\item \label{it:bounded-up-aug} Let $x$ be an arbitrary level-$i$ node in $T$. For every non-negative integer $\gamma$, $\weight(\Aug(x, 0, \gamma)) = \eps^{-O(d)} 5^\gamma r_i$. Recall that $\Aug(x, 0, \gamma) = \bigcup_{j = i}^{i + \gamma}\Aug_j(x)$ with $\Aug_j(x)$ ($j \geq 0$) is the set of cross edges between cross neighbors of the ancestor of $x$ at level $j$.
		\item \label{it:original-weight} For every original cross edge $(\hat{u}, \hat{v})$ at level $i$, $\nd(\hat{u}, \hat{v}) \geq \lambda r_i / 6$.
		\item \label{it:approx-weight} Let $e = (u, v)$ be a pair of points in $X$ and $(\hat{u}, \hat{v})$ be the original cross edge of $e$. Then, $\nd(\hat{u} , \hat{v}) \leq (1 + \eps)\nd(u, v)$.
	\end{enumerate}
\end{property}

\begin{proof}
	\textit{\Cref{it:connection-small}: } Let $i$ be the level of $x$ and $y$. Note that $\nd(x, u), \nd(y, v) \leq 16r_i$ by \Cref{def:sorrogate}. By triangle inequality, $\nd(u, v) \leq \nd(x, y) + \nd(x, u) + \nd(y, v) \leq \nd(x, y) + 32r_i \leq 2\nd(x, y)$ since $\nd(x, y) \geq 64r_i$ by the choice of cross edges in $E^*$.
	
	\textit{\Cref{it:bounded-up-aug}: } Let $k$ be a level such that $k \geq i$. The weight of a cross edge at level $k$ is at most $\lambda r_k = \lambda r_i \cdot 5^{k - i}$ by the definition of cross edges. Let $x_k$ be the ancestor of $x$ at level $k$. Recall that $\Aug_k(x)$ is the set of cross edges with both ends in $NC[x_k]$. Since $NC[x_k]$ is a subset of a $r_k$-net with diameter $2\lambda r_k$, $|NC[x_k]| \leq (4\lambda)^d = \eps^{-O(d)}$ by packing bound (\Cref{lm:packing-bound}). Then, $|\Aug_k(x)| \leq |NC[x_k]| ^2 = \eps^{-O(d)}$. Hence,
	
	\begin{equation*}
		\weight(\Aug(x, 0, \gamma)) = \sum_{k = i}^{i + \gamma} \weight(\Aug_k(x)) = \eps^{-O(d)} \lambda r_i \cdot \sum_{k = i}^{i + \gamma}5^{k - i} \leq 5^\gamma \eps^{-O(d)} r_i \qquad,
	\end{equation*}
	as claimed.
	
	\textit{\Cref{it:original-weight}: } Let $(u, v)$ be the pair of points in $X$ of which $(\hat{u}, \hat{v})$ is the original cross edge. Let $(u', i - 1)$ and $(v', i - 1)$ be the ancestors at level $i - 1$ of $(u, 0)$ and $(v, 0)$, respectively. Since $\hat{u}$ and $\hat{v}$ are parents of $(u', i - 1)$ and $(v', i - 1)$, $\nd(\hat{u}, u'), \nd(\hat{v}, v') \leq r_i$. By the minimality of $i$, $\nd(u', v') > \lambda r_{i - 1}$. Using the triangle inequality, we have: 
	\begin{equation*}
		\nd(\hat{u}, \hat{v}) \geq \nd(u', v') - (\nd(\hat{u}, u') + \nd(\hat{v}, v')) \geq \lambda r_{i - 1} - 2r_i \geq {(\lambda /  5 - 2)}r_i > \lambda r_i/6 \qquad,
	\end{equation*}
	as $\lambda = 5^{20}(1 + \eps^{-1})$.
	
	\textit{\Cref{it:approx-weight}: } Let $i$ be the level of $(\hat{u}, \hat{v})$. By triangle inequality, $\nd(\hat{u}, \hat{v}) \leq \nd(u, v) + \nd(\hat{u}, u) + \nd(\hat{v}, v)$. Recall from \Cref{clm:leaf-dist} that $\nd(\hat{u}, u), \nd(\hat{v}, v) \leq 5r_i/4$. Therefore,
	\begin{equation*}
		\frac{\nd(u, v)}{\nd(\hat{u}, \hat{v})} \geq \frac{\nd(\hat{u}, \hat{v}) - 5r_i/4 - 5r_i/4}{\nd(\hat{u}, \hat{v})} \geq 1 - \frac{5r_i}{2\nd(\hat{u}, \hat{v})} \geq 1 - \frac{15}{\lambda} \qquad,
	\end{equation*}
	since $\nd(\hat{u}, \hat{v}) \geq \lambda r_i / 6$ by \Cref{it:approx-weight}. Hence, $\nd(\hat{u}, \hat{v}) \leq \frac{\nd(u, v)}{1 - 15/\lambda} \leq (1 + \eps)\nd(u, v)$.
\end{proof}

\subsection{Weight of edges due to light spanner}
\begin{lemma}
	\label{lm:weight_E_O}
	$\weight(E_O) = \eps^{-O(d)}f^2\weight(\MST(X))$.
\end{lemma}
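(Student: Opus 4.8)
The plan is to bound the weight of $E_O$ by charging it to the weight of the input light spanner $G$, and then use the fact that $G$ has constant lightness. Recall that $E_O$ consists of the edges added to $H$ from bipartite connections over the cross edges in $E^*_O$, where $E^*_O = \bigcup_{(u,v)\in E(G)} \big(\{(\hat u,\hat v)\} \cup \Aug(\hat u, 0, 5\log\lambda) \cup \Aug(\hat v, 0, 5\log\lambda)\big)$. The natural strategy is: first bound $\weight(E^*_O)$ in terms of $\weight(G)$, and then observe that passing from a cross edge to its bipartite connection costs at most an extra factor of $O(f^2)$ on the weight of that single cross edge (since a bipartite connection has at most $(f+1)^2$ edges, each of weight at most twice the cross edge by \Cref{prop:lightness-properties}\ref{it:connection-small}).

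Concretely, I would fix an edge $e=(u,v)\in E(G)$ and let $(\hat u,\hat v)$ be its original cross edge, at some level $i$. By \Cref{prop:lightness-properties}\ref{it:approx-weight}, $\nd(\hat u,\hat v)\le (1+\eps)\nd(u,v)$, and by \Cref{prop:lightness-properties}\ref{it:original-weight}, $r_i \le 6\,\nd(\hat u,\hat v)/\lambda = O(\eps)\cdot \nd(u,v)$, so in particular $r_i = O(\nd(u,v))$. Then, applying \Cref{prop:lightness-properties}\ref{it:bounded-up-aug} with $\gamma = 5\log\lambda = 5\log_5\lambda$, we get $\weight(\Aug(\hat u, 0, 5\log\lambda)) = \eps^{-O(d)}\cdot 5^{5\log_5\lambda}\, r_i = \eps^{-O(d)}\lambda^5 r_i$. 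Since $\lambda = 5^{20}(1+1/\eps) = \eps^{-O(1)}$, this is $\eps^{-O(d)}\, r_i = \eps^{-O(d)}\,\nd(u,v)$, and similarly for $\Aug(\hat v, 0, 5\log\lambda)$; the original cross edge $(\hat u,\hat v)$ itself contributes at most $(1+\eps)\nd(u,v)$. Summing the contributions of all three pieces over every edge $e\in E(G)$ gives $\weight(E^*_O) = \eps^{-O(d)}\weight(G)$, and since $G$ has lightness $\eps^{-O(d)}$ (the light spanner constructed on \cref{line:net-tree} via \cite{FS16}), we obtain $\weight(E^*_O) = \eps^{-O(d)}\weight(\MST(X))$.

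Finally, I would upgrade from $\weight(E^*_O)$ to $\weight(E_O)$. For each cross edge $(x,y)\in E^*_O$, the bipartite connection $M(S(x),S(y))$ has at most $|S(x)|\cdot|S(y)| \le (f+1)^2 = O(f^2)$ edges, and by \Cref{prop:lightness-properties}\ref{it:connection-small} each such edge $(u',v')$ has weight $\nd(u',v')\le 2\,\nd(x,y)$. Hence $\weight(M(S(x),S(y))) \le 2(f+1)^2\,\nd(x,y)$, and summing over all cross edges in $E^*_O$ (accounting for the fact that the same pair of nodes could receive a bipartite connection from multiple cross edges, which only helps the bound since we can sum over cross-edge occurrences) yields $\weight(E_O) \le 2(f+1)^2\,\weight(E^*_O) = \eps^{-O(d)} f^2\,\weight(\MST(X))$, as claimed. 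The only mild subtlety I anticipate is making sure the ``$+32 r_i$'' type additive slack in \Cref{prop:lightness-properties}\ref{it:connection-small} is genuinely absorbed — but this is exactly why the algorithm only keeps cross edges of length at least $64 r_i$ (the check on \cref{line:check-length}), so every surviving cross edge in $E^*$ satisfies $\nd(x,y)\ge 64 r_i$ and the factor-$2$ bound holds cleanly; thus there is no real obstacle, just careful bookkeeping of the $\eps^{-O(d)}$ and $\lambda^{O(1)}$ factors.
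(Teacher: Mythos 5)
Your proof is correct and follows essentially the same route as the paper: you bound $\weight(E_O)\le 2(f+1)^2\weight(E^*_O)$ via Property~\ref{prop:lightness-properties}(\ref{it:connection-small}), then bound $\weight(E^*_O)$ by charging each original cross edge and its $\Aug(\cdot,0,5\log\lambda)$ sets to the corresponding edge of $G$ using items (\ref{it:bounded-up-aug}), (\ref{it:original-weight}), and (\ref{it:approx-weight}), and finally invoke the lightness of $G$. The only difference is cosmetic ordering (you bound $\weight(E^*_O)$ before applying the $O(f^2)$ upgrade, the paper does the reverse), and your remark about the $64r_i$ length check correctly resolves the one subtlety about the factor-$2$ bound.
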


\begin{proof}
	For each cross edge $(x, y) \in E^*_O$, the bipartite connection $M(S(x), S(y))$ contains at most $(f + 1)^2$ edges, each of those has weight at most $2\nd(x, y)$ by \Cref{it:connection-small} of \Cref{prop:lightness-properties}. Hence, by the definition of $E_O$ in \Cref{def:E-O}, $\weight(E_O) \leq 2(f + 1)^2\weight(E^*_O)$. It remains to bound $\weight(E^*_O)$.
	
	Let $O$ be the set of original cross edges of edges in $E(G)$. By \Cref{it:approx-weight} of \Cref{prop:lightness-properties}, $\weight(O) \leq (1 + \eps)\weight(G)$. Observe that:
	\begin{equation*}
		E^*_O \subseteq \bigcup_{(\hat{u}, \hat{v}) \in O}\left[\{(\hat{u}, \hat{v})\} \cup \Aug(\hat{u}, 0, 5\log{\lambda}) \cup \Aug(\hat{v}, 0, 5\log{\lambda})\right]
	\end{equation*}
	Recall that for each node $x$, $\lvl(x)$ is the level of $x$. We have:
	\begin{equation*}
		\begin{split}
			\weight(E^*_O) &\leq \sum_{(\hat{u}, \hat{v}) \in O}(\nd(\hat{u}, \hat{v}) + \weight(\Aug(\hat{u}, 0, 5\log{\lambda}) + \weight(\Aug(\hat{v}, 0, 5\log{\lambda}))))\\
			&\leq \sum_{(\hat{u}, \hat{v}) \in O}(\nd(\hat{u}, \hat{v}) + 2 \cdot 5^{5\log\lambda}\eps^{-O(d)}r_{\lvl(\hat{u})}) \qquad\text{(by \Cref{it:bounded-up-aug} of \Cref{prop:lightness-properties})}\\
			&\leq  \sum_{(\hat{u}, \hat{v}) \in O}\left(\nd(\hat{u}, \hat{v}) + 2 \cdot \lambda^{5}\eps^{-O(d)}\frac{\nd(\hat{u}, \hat{v})}{\lambda / 4}\right) \qquad\text{(by \Cref{it:original-weight} of \Cref{prop:lightness-properties})}\\
			&\leq \sum_{(\hat{u}, \hat{v}) \in O}\eps^{-O(d)}\nd(\hat{u}, \hat{v}) = \eps^{-O(d)}\weight(O)\\ 
			&\leq \eps^{-O(d)}\weight(G) \leq \eps^{-O(d)}\weight(\MST(X)) \qquad\text{(since $G$ is a light spanner of $X$)},
		\end{split}
	\end{equation*}
	as desired. 
\end{proof}

\subsection{Weight of remaining edges}

Throughout this section, we show that: 

\begin{lemma}
	\label{lm:weight_E-E_O}
	$\weight(E(H) \setminus E_O) = \eps^{-O(d)}f^2\weight(\MST(X))$.
\end{lemma}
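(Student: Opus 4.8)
The edges in $E(H) \setminus E_O$ come from two sources in \Cref{alg:VFT-spanner}: (a) bipartite connections of the augmented cross edges added in the first phase (\cref{line:Estar-update}) that are \emph{not} captured by $E^*_O$ --- but by the definition of $E^*_O$ these are already all in $E_O$, so the genuinely new edges are (b) the bipartite connections of the cross edges added inside $NC[x]$ at \cref{line:add-NCx}, namely for nodes $x$ possessing a small descendant $w \in D_x$ with $|L(w)| \le f$. So the plan is to write $E(H)\setminus E_O \subseteq \bigcup_x M(S(\cdot),S(\cdot))$ over the pairs in $\{(y,z) : y,z\in NC[x],\ 64 r_i \le \delta(y,z)\le \lambda r_i\}$, ranging over all such ``deficient'' nodes $x$. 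By \Cref{obs:bound-node-deg} and the packing bound, each such $x$ contributes $\eps^{-O(d)}$ cross edges of weight $\le \lambda r_{\lvl(x)}$, each blown up by a factor $\le (f+1)^2$ via \Cref{it:connection-small} of \Cref{prop:lightness-properties}; hence it suffices to show
\[
\sum_{x\ \text{deficient}} r_{\lvl(x)} \;=\; \eps^{-O(d)}\,\weight(\MST(X)).
\]

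The core of the argument is to bound this sum of radii, and this is where the light net-forest (LNF) enters. A node $x$ is deficient only if there is a small node $w\in D_x$ (within $5\log\lambda$ levels below $x$) with $|L(w)|\le f$. The key structural fact --- which the paper flags as forthcoming (``the set of small nodes, each of which has at most $f$ leaves, can be partitioned into LNF'') --- is that the small nodes with at most $f$ leaves can be organized into node-disjoint subtrees of $T$ (the LNF) whose roots $\rho_1,\rho_2,\dots$ satisfy $\sum_j r_{\lvl(\rho_j)} = \eps^{-O(d)}\weight(\MST(X))$; this is the analogue of \Cref{lm:sum-disjoint-nodes} and is exactly the place where using base-$5$ radii $r_i = 5^i$ (rather than $2^i$) is exploited, so that radii of an antichain of tree nodes sum to $O(\weight(\MST(X)))$. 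Given this, I would charge each deficient node $x$ to the LNF-root $\rho$ above (or equal to) the witness $w\in D_x$: since $\lvl(x) \le \lvl(w) + 5\log\lambda \le \lvl(\rho) + 5\log\lambda$, we get $r_{\lvl(x)} \le \lambda^5\, r_{\lvl(\rho)}$. It remains to bound how many deficient nodes $x$ charge to a single LNF-root $\rho$: all such $x$ lie within $5\log\lambda$ levels of $\rho$ and are ancestors of $\rho$'s subtree, and by the packing bound the number of level-$j$ nodes whose subtree meets a fixed small subtree of radius $\le 2 r_{\lvl(\rho)}$ is $\eps^{-O(d)}$ at each of the $O(\log\lambda)$ relevant levels; summing the geometric series in the radii still gives $\eps^{-O(d)}\lambda^5 r_{\lvl(\rho)}$ per root. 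Summing over all roots and absorbing $\lambda^5 = \eps^{-O(d)}$ yields $\sum_{x\ \text{deficient}} r_{\lvl(x)} = \eps^{-O(d)}\weight(\MST(X))$, and multiplying by the $(f+1)^2 \cdot \eps^{-O(d)}$ factor above completes the proof.

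The main obstacle is establishing --- and correctly invoking --- the LNF partition lemma: one must show that the small-and-$\le f$-leaves nodes, which a priori could be spread all over $T$, actually form node-disjoint subtrees whose total root-radius is $O(\weight(\MST(X)))$. The subtlety is that ``small'' is defined dynamically (all leaves have degree $\le c_1 f$ at the time $x$ is processed), so the set depends on the surrogate-selection strategy of \Cref{alg:Surrogate}; one has to argue that the prioritization of high-degree vertices in $\ball(x,16 r_i)$ ensures the ``monotonicity'' property (a node with $f+1$ surrogates has all ancestors with $f+1$ surrogates), which is what makes the deficient set downward-closed enough to be an LNF. A secondary, more routine obstacle is handling the ``glue'' between $E_O$ and the $NC[x]$-edges so the double-counting of edges added at multiple levels (resolved by taking the lowest level, as stipulated before \Cref{lm:bounded-surrogate}) does not inflate the weight --- this is handled by the same packing/geometric-sum bookkeeping.
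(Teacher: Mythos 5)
Your plan misses the central quantitative mechanism of the paper's proof and, as written, would give a lightness bound of order $f^3$ rather than $f^2$. There are two compounding problems.

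First, you treat every bipartite connection as a clique of size $(f+1)^2$, but the entire point of Lemma~\ref{lm:complete-node-complete} is that when \emph{both} endpoints of a cross edge are complete, $|S(x)|=|S(y)|=f+1$ and the bipartite connection is a perfect matching with only $f+1$ edges. The paper therefore splits $E(H)\setminus E_O$ into $E_{com}$ (both endpoints complete: a matching, so only a factor $f+1$) and $E_{inc}$ (at least one incomplete endpoint: possibly a clique, but handled via a different charging argument, not via $(f+1)^2$). Applying $(f+1)^2$ uniformly wastes a factor of $f$ on all the complete-complete cross edges, which are the bulk of $E^*\setminus E^*_O$.

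Second, the charging of deficient nodes to LNF roots is wrong. You claim ``all such $x$ lie within $5\log\lambda$ levels of $\rho$,'' but a deficient node $x$ can itself be the witness ($w=x\in D_x$), i.e., any incomplete node at any level of the subtree rooted at $\rho$ is deficient. Since an LNF subtree can have up to $f$ nodes at each level, the total radius of the deficient nodes charging to $\rho$ is $\Theta\bigl(f\cdot r_{\lvl(\rho)}\bigr)$ in the worst case, not $\eps^{-O(d)}\cdot r_{\lvl(\rho)}$. Consequently $\sum_{x\text{ deficient}}r_{\lvl(x)}=O\bigl(f\cdot\weight(\MST(X))\bigr)$, not $O\bigl(\weight(\MST(X))\bigr)$ --- this is exactly the paper's Equation~\ref{eq:down}, which carries an explicit factor of $f$. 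Multiplying by your $(f+1)^2$ then yields $O(f^3)\weight(\MST(X))$.

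The paper recovers the missing factor of $f$ for the incomplete-ended edges ($E_{inc}$) by a completely different charge: every such edge has an endpoint $v$ that is a leaf of an incomplete node, so the edge's level is at most $i_v$ (the level of $v$'s LNF root); the degree bound of Lemma~\ref{lm:bounded-surrogate} caps the weight of level-$j$ edges at $v$ by $\xi f\lambda r_j$, a geometric sum over $j\le i_v$ gives $O(\xi f\lambda r_{i_v})$, and summing over the at most $f$ leaves of each LNF subtree recovers $O(f^2)$. You would need this two-pronged decomposition --- matching bound for $E_{com}$, degree-based per-leaf charging for $E_{inc}$ --- to close the argument. Your observations about the LNF partition, the monotonicity of completeness under taking parents, and the role of base-$5$ radii in Lemma~\ref{lm:sum-disjoint-nodes} are all correct, but they are the easier half of the proof.
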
 

We partition $E(H) \setminus E_O$ into two sets $E_{inc}$ and $E_{com}$, whose formal definition will be given later. We then bound the total weight of each set in \Cref{lm:Einc-weight} and \Cref{lm:Ecom-weight}. Recall that each edge in $E(H) \setminus E_O$ is added to $H$ by some bipartite connection of cross edges in $E^* \setminus E^*_O$. Those cross edges are added to $E^*$ by \cref{line:check-Dx}--\ref{line:add-NCx}. 

To formally define $E_{inc}$ and $E_{com}$, we need more notation. A node is \emph{large} if it is not small, i.e., there exists a leaf in its subtree with degree at least $c_1f$.  A node is \emph{incomplete} if it is small and has at most $f$ leaves, otherwise it is \emph{complete}. That is a node is complete if it is either large or has at least $f + 1$ leaves. By \cref{line:check-Dx}--\ref{line:add-NCx}, if a node $x$ is incomplete or has an incomplete descendant within $5\log{\lambda}$ levels, we add all long enough cross edges from $\Cr(NC[x])$ to $E^*$. Recall that $\Cr(NC[x])$ is the set of cross edges with both ends in $NC[x]$. Hence, for each cross edge $(y, z)$ in $E^* \setminus E^*_O$, both $y$ and $z$ are cross neighbors of some node $x$ such that $x$ is either incomplete or is an ancestor within $5\log\lambda$ level of an incomplete node. Let $E^*_{com}$ be the set of cross edges in $E^* \setminus E^*_O$ having both complete end nodes. More formally, $E^*_{com} = \{(x, y) ~|~ (x, y) \in E^* \setminus E^*_O ~\text{and $x, y$ are complete}\}$. Let $E^*_{inc} = E^* \setminus (E^*_O \cup E^*_{com})$, meaning that $E^*_{inc}$ contains every cross edge in $E^* \setminus E^*_O$ that has at least one incomplete end node. We denote by $E_{com}$ and $E_{inc}$ the sets of edges added to $H$ by the bipartite connection of edges in $E^*_{com}$ and $E^*_{inc}$. Formally,

\begin{equation}
	\label{def:E-com}
	E_{com} = \{M(S(x), S(y)) : (x, y) \in E^*_{com}\} \quad \text{and}\quad E_{inc} = \{M(S(x), S(y)) : (x, y) \in E^*_{inc}\}.
\end{equation}

The key to our proof is the following lemma:

\begin{lemma}
	\label{lm:complete-node-complete}
	If a node $x$ in $T$ is complete, $S(x)$ always has size $f + 1$. 
\end{lemma}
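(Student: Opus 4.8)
The plan is to prove the contrapositive by induction up the tree: if $x$ is complete, then either $x$ already has $f+1$ leaves (and then \Cref{alg:Surrogate} trivially gives $|S(x)| = f+1$ via the "small" branch or via the general branch having enough candidates), or $x$ is large, in which case we must show the general branch of \Cref{alg:Surrogate} finds $f+1$ unsaturated surrogates in $\ball(x,16r_i)$. The key structural fact to establish first is monotonicity of completeness: if a node $x$ is complete, every ancestor of $x$ is complete. This follows almost immediately from the definitions --- having $\geq f+1$ leaves is monotone up the tree since leaf sets only grow, and being large (having a leaf of degree $\geq c_1 f$) is likewise monotone since a leaf of $x$ is also a leaf of every ancestor of $x$. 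So the set of incomplete nodes is "downward closed" and forms a forest of subtrees hanging off the bottom of $T$ (this is the LNF structure referenced in the overview).

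Next I would set up the induction on the level $i$ of $x$. For the base case $i=0$: a level-$0$ node is a single leaf, so it has exactly one leaf; it is complete only if that leaf has degree $\geq c_1 f$ (large). But wait --- at level $0$ no edges have been added yet in the for-loop of \cref{line:level-iter}, so a level-$0$ node cannot be large, hence cannot be complete, and the statement is vacuous at level $0$. For the inductive step, suppose $x$ at level $i$ is complete. Two cases. \textbf{Case 1:} $T(x)$ has at least $f+1$ leaves. If $x$ is small, the algorithm picks $f+1$ leaves directly, done. If $x$ is large (not small), we go to the general branch and must argue $|S'| + |S''| \geq f+1$; here I would use that $x$ has $\geq f+1$ distinct leaves, each at distance $\leq 5r_i/4 \leq 16 r_i$ from $x$ by \Cref{clm:leaf-dist}, so each leaf lies in $\ball(x, 16 r_i)$; each such leaf is either in $S'$ (if its degree is in $[c_2 f, c_3 f)$ and unsaturated) or in $S''$ (if its degree is $< c_2 f$) --- the only way a leaf fails to be in $S' \cup S''$ is if it is saturated (degree $\geq c_3 f$). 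This is precisely the delicate point, so I defer it. \textbf{Case 2:} $x$ is large (regardless of leaf count). Here the induction hypothesis must be invoked: since $x$ is large, it has a descendant leaf $z$ of degree $\geq c_1 f$; I would track back down and argue that the nodes along the path from $x$ to $z$ that became large "early enough" had their surrogate sets of size $f+1$, and that choosing surrogates from the large set $\ball(x, 16 r_i)$ rather than $\ball(x, 4 r_i)$ (the "prioritize high-degree vertices" idea from the overview) guarantees enough unsaturated candidates remain for $x$.

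The main obstacle --- and I expect the bulk of the real work here --- is controlling saturation: proving that when $x$ is complete, $\ball(x, 16 r_i)$ still contains at least $f+1$ unsaturated points, i.e., that we never "ran out" because too many nearby points got pushed past degree $c_3 f$ by lower-level surrogate assignments. The resolution should hinge on the gap between the constants $c_1 < c_2 < c_3$ (all scaled by $\xi = \eps^{-O(d)}$) together with \Cref{lm:bounded-surrogate}: each point accrues only $O_{\eps,d}(1) \cdot f \leq \xi f$ new edges per level, and a point is forbidden as a surrogate only once its degree exceeds $c_2 f$ (it goes into $S'$, which is "used up last") or $c_3 f$ (saturated). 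A packing argument (\Cref{lm:packing-bound}) bounds the number of nearby net-points competing for the same surrogate pool, and the constant $c_1$ is chosen large enough that, by the time a leaf of $x$ reaches degree $c_1 f$ making $x$ large, there is still a full $\ball(x,16r_i)$'s worth of points with degree below $c_3 f$; the "prioritize $S'$ over $S''$" rule ensures the low-degree reservoir in $\ball(x, 4r_i)$ is preserved for ancestors. I would formalize this by a potential/counting argument comparing the total degree injected into $\ball(x, 16 r_i)$ across all levels $\leq i$ against $|X \cap \ball(x, 16 r_i)| \cdot c_3 f$, deriving the needed slack from $c_3 \gg c_2 \gg c_1 \gg \xi$.
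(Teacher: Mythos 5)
Your high-level structure matches the paper's: separate the small-complete case (where the $\geq f+1$ leaves are automatically unsaturated because each has degree $\leq c_1 f < c_3 f$) from the large case, observe monotonicity of completeness, and induct on the level. That part is sound. The trouble is that you correctly flag where the real content lies and then, at exactly that point, propose a proof strategy that is not what the paper does and that I don't believe can be made to work as stated. The paper reduces \Cref{lm:complete-node-complete} to an intermediate claim (\Cref{lm:complete-ball}): for every large node $(u,i)$, the ball $\ball(u,4r_i)$ contains at least $4f+4$ points of degree $\leq c_2 f$ before iteration $i$. The proof of that claim is the bulk of the work and is not a ``potential/counting argument comparing the total degree injected into $\ball(x,16r_i)$ against $|X\cap\ball(x,16r_i)|\cdot c_3 f$.'' Such a degree-mass argument has two fatal problems: first, $|X\cap\ball(x,16r_i)|$ has no a-priori lower bound (it could be as small as $f+1$, or even smaller), so a global degree budget gives no lower bound on the number of low-degree points; second, the clean points that save the day do not sit in $\ball(x,16r_i)$ from the start --- they are the partner endpoints of the edges that made the witness leaf $v$ high-degree, located far from $v$ at the small scale $r_j$ but within $O(r_i)$ of $x$ once $i-j \gtrsim \log\lambda$, so the ball of candidates genuinely grows with level. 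A counting argument over a fixed ball cannot see this.

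What actually carries the paper's proof, and what your sketch is missing, is a constructive tracking of where the clean points come from and a disjointness argument showing they accumulate. Concretely: the paper descends to the level-$(i - 2\log\lambda)$ ancestor $x'$ of the high-degree leaf $v$, uses a pigeonhole over congruence classes of levels to find a subsequence $i_0 < i_1 < \cdots < i_s$ with large total degree increase, and applies \Cref{lm:grow-clean} at each $i_k$ to extract a fresh batch $A_{i_k}$ of $i_{k+1}$-clean points with $A_{i_k}\cap\ball(v_{i_k},4r_{i_k})=\emptyset$, which forces the $A_{i_k}$ to be pairwise disjoint and sum to $\geq 6f+6$ points. It then needs \Cref{clm:semi-first} (the semi-saturated-first selection rule) to invoke \Cref{cor:f+1-non-clean} and \Cref{lm:clean-ball}, which bound the loss at $f+1$ clean points per window; and it needs the induction hypothesis at level $i'$ inside Case 2 of \Cref{lm:grow-clean} (when the partner's branch is itself large). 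Your sketch names the prioritization idea but not the ``lose at most $f+1$ per window'' lemma nor the disjoint-batch accumulation, and without those there is no way to lower-bound the surviving clean points. So: right skeleton, but the load-bearing lemma is absent and the replacement you propose doesn't fill the hole.
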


While the statement of \Cref{lm:complete-node-complete} is simple and easy to understand, the proof of \Cref{lm:complete-node-complete} is intricate. Here, we sketch the ideas of our proof. The full proof is deferred to \Cref{appen:p1}. If $x$ is small, we know that the subtree at root $x$ has at least $f + 1$ leaves with a low degree. Therefore, there are always $f + 1$ points with a low degree to be added to $S(x)$. If $x$ is large, then one of its leaves, say $v$, is incident to many edges. We keep track of the degree change of points close to $v$. By \Cref{lm:bounded-surrogate}, the degree of $v$ in level $i - 2\log{\lambda}$ is still significantly large (at least $48\log{\lambda} \cdot \xi$), then for any edge $(v, w)$ incident to a point close to $v$, all the low degree points close to $w$ will be in $\ball(x, 4r_i)$. We prove after the iterations from $i - 2\log{\lambda}$ to $i - 1$, there are still $4f + 4$ points with degree less than $c_2f$ in $\ball(x, 4r_i)$. For this to hold, we have to choose the surogates carefully, which explains the choice of points with high and low degrees in \Cref{alg:Surrogate}. Hence, there are always enough candidates for $S(x)$, which will prove \Cref{lm:complete-node-complete}.

\Cref{lm:complete-node-complete} contains a key insight for our algorithm about which edges could be chosen in addition to the original cross edges of (edges in) $G$. We call those subtrees a light net-forest (LNF). 

\begin{definition}
	\label{def:lnf}
	A light net-forest of $T$ with respect to $G$, denoted by $\mathrm{LNF}_G(T)$, is a subgraph of $T$ whose vertex set containing all incomplete nodes in $T$ and edge set is the set of edges in $T$ between any two incomplete nodes. If the net-tree $T$ and the light spanner $G$ are clear in the context, we use the notation $\mathrm{LNF}$. 
\end{definition}

By the definition of $E^*_{inc}$, every cross edge in $E^*_{inc}$ is incident to at least one node in LNF. Then, we will bound $\weight(E^*_{inc})$ by $\weight(\mathrm{LNF})$. To do that, we first claim that LNF is a set of (rooted) subtrees of $T$.

\begin{observation}
	\label{obs:all-leaves}
	Every leaf node of $T$ is in $\mathrm{LNF}$.
\end{observation}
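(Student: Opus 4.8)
The plan is to reduce the statement to unwinding three definitions --- \emph{leaf node of $T$}, \emph{small}, and \emph{incomplete} --- and then to check that a level-$0$ node trivially satisfies the two conditions that make a node incomplete, so that it lies in $\mathrm{LNF}$ by \Cref{def:lnf}.

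First I would observe that a leaf node of $T$ is necessarily a level-$0$ node. Indeed, for $i \ge 1$ a node $x = (u,i)$ has $u \in N_i \subseteq N_{i-1}$, so $(u,i-1)$ is a node at level $i-1$; by \Cref{def:net-tree} its parent is the point of $N_i$ closest to $u$, which is $u$ itself at distance $0$, hence $(u,i-1)$ is a child of $x$ and $x$ is not a leaf. Fix then a leaf node $x = (v,0)$. Its subtree $T(x)$ consists of the single node $x$, so $T(x)$ has exactly one leaf; since $f \ge 1$ by the hypothesis of \Cref{tm1}, $T(x)$ has at most $f$ leaves, which is the first of the two conditions for $x$ to be incomplete.

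Next I would argue that $x$ is marked \emph{small} by \Cref{alg:VFT-spanner}. The node $x$ is processed during the iteration $i=0$ of the for loop of \cref{line:level-iter}, and the test marking $x$ small (\cref{line:mark-x-small}) is executed inside the inner loop of \crefrange{line:begin-add-incomplete}{line:end-add-incomplete}, which runs to completion before any edge is inserted into $H$: edges enter $H$ only via \cref{line:H-update}, which for level $0$ is reached only after that inner loop, and Phase~1 never modifies $H$ (it is initialized empty). Therefore, at the moment $x$ is tested, $H = (X,\emptyset)$, so the unique leaf $v$ of $x$ has $\deg_H(v) = 0 \le c_1 f$ and $x$ is marked small. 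Being small and having at most $f$ leaves, $x$ is incomplete by definition, hence $x$ belongs to $\mathrm{LNF}$.

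I do not expect a genuine obstacle here. The only point that needs a careful eye is the timing argument above --- that the ``small'' predicate for a level-$0$ node is evaluated against an edgeless $H$ --- which is exactly what guarantees that the degree threshold $c_1 f$ is never binding at the bottom level; everything else is immediate from the definitions. (This observation will later be combined with the descendant-closedness of the set of incomplete nodes to conclude that each connected component of $\mathrm{LNF}$ is a full rooted subtree $T(w)$ of $T$.)
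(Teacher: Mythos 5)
Your proof is correct and follows the same approach as the paper: a level-$0$ node has degree $0$ at the time the ``small'' test is evaluated (since $H$ is still empty), so it is small, and it has exactly one leaf $\le f$, so it is incomplete and hence in $\mathrm{LNF}$. Your version merely spells out two points the paper leaves implicit --- that every leaf of $T$ is a level-$0$ node, and the precise timing of when \cref{line:mark-x-small} is evaluated relative to \cref{line:H-update} --- but the substance is identical.
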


\begin{proof}
	By definition of a small node, every leaf node $(u, 0)$ is small since the degree of every $u$ before the first iteration is $0 < c_1f$. Since $(u, 0)$ has only one leaf in its subtree (which is itself), $(u, 0)$ is incomplete. 
\end{proof}

The following claim implies that the ancestor of every almost complete node is complete.

\begin{claim}
	\label{clm:parent-complete}
	The parent of a complete node is complete. 
\end{claim}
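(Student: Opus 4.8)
I want to show that if a node $x$ is complete, then its parent $p$ in $T$ is complete. Recall a node is complete iff it is large (some leaf of its subtree has degree $\ge c_1 f$) or has at least $f+1$ leaves. I will split on which of the two conditions makes $x$ complete, and in each case verify the corresponding condition is inherited by $p$.

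\textbf{Step 1 (subtree containment).} First I would record the elementary structural fact that the leaf set of a child is contained in the leaf set of its parent: $L(x) \subseteq L(p)$ whenever $p$ is the parent of $x$ in $T$. This is immediate from the definition of the net-tree, where leaves of $x$ are exactly the level-$0$ descendants of $x$, and every descendant of $x$ is a descendant of $p$.

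\textbf{Step 2 (the ``many leaves'' case).} If $x$ is complete because $|L(x)| \ge f+1$, then by Step 1, $|L(p)| \ge |L(x)| \ge f+1$, so $p$ is complete by definition.

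\textbf{Step 3 (the ``large'' case).} If $x$ is complete because it is large, then there is a leaf $v \in L(x)$ with $\deg_H(v) \ge c_1 f$. By Step 1, $v \in L(p)$ as well, so $p$ has a leaf of degree at least $c_1 f$ in its subtree; hence $p$ is not small, i.e., $p$ is large, and therefore complete. Combining Steps 2 and 3 exhausts all ways $x$ can be complete, so $p$ is complete in every case.

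\textbf{Main obstacle.} There is essentially no difficulty here beyond getting the definitions lined up — the claim is a direct consequence of monotonicity of the leaf-set along tree edges and of the (degree/cardinality) thresholds being defined via ``there exists a leaf in the subtree''. The one subtlety worth a sentence is that ``large'' and ``small'' are each other's negations by definition (Section~\ref{sec:light}: a node is large iff it is not small), so showing $p$ large is the same as showing $p$ not small, which is exactly what Step 3 does. I would also note in passing that the degree $\deg_H(v)$ referenced here is the final degree in $H$ (equivalently, the degrees are monotone nondecreasing through the algorithm), so ``$v$ has degree $\ge c_1 f$'' is a statement that, once true, stays true; but since we only need it at a single fixed point in the analysis, even this is not strictly necessary for the proof of the claim.
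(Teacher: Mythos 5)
Your argument follows essentially the same route as the paper: split on whether $x$ is complete because it is large or because it has $\geq f+1$ leaves, and use $L(x) \subseteq L(p)$ in both branches. The structure is correct. However, your closing remark that the degree-monotonicity observation ``is not strictly necessary for the proof'' is wrong, and this is precisely where the only real content of the claim lives. The small/large marking in \cref{line:mark-x-small} is done \emph{per level}: a level-$i$ node $x$ is marked small/large during iteration $i$ based on the degrees in $H$ at that moment, and its parent $p$ (at level $i+1$) is marked during iteration $i+1$ based on a \emph{later} snapshot of the degrees. The quantity $\deg_H(v)$ you invoke is therefore not ``the final degree in $H$'' nor evaluated ``at a single fixed point''; it is $\deg_{<i}(v)$ for $x$ and $\deg_{<i+1}(v)$ for $p$. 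To conclude that $p$'s subtree has a high-degree leaf you must pass from the earlier snapshot to the later one, which requires that $\deg_{<\cdot}(v)$ is nondecreasing — exactly the step the paper's proof makes explicit. Your proof does contain this observation, but by downgrading it to an optional aside you make it look like a trivial bookkeeping remark rather than the crux; a reader (or grader) could reasonably flag the proof as incomplete in the large case if that sentence were deleted.
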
 

\begin{proof}
	For each point $v \in X$ and each level $i \in [0, \zeta]$, let $\deg_{<i}(v)$ be the degree of $v$ before $i^{th}$ iteration. Let $x$ be a complete node and $p$ be the parent of $x$. If $x$ is large, then by definition of a large node, there exists a leaf $v$ of $T(x)$ such that $\deg_{< i}(v) \geq c_1 f$. Therefore, since $\deg_{< i}(v)$ is non-decreasing as level $i$ gets larger, $p$ also has a leaf $v$ of degree at least $c_1f$, implying that $p$ is large and thus complete. 
	
	If $x$ is small, then $T(x)$ has at least $f + 1$ leaves. If $p$ is small, $T(p)$ also has at least $f + 1$ leaves since $p$ is the parent of $x$, implying that $p$ is complete by definition. If $p$ is large then $p$ is also complete. 
\end{proof}

From \Cref{clm:parent-complete}, we obtain that the LNF contains a set of node-disjoint subtrees of $T$. Hence, we will bound the weight of cross edges incident to nodes in LNF by the total radius of the roots, which are called almost complete nodes. Formally, an \emph{almost complete} node is an incomplete node whose parent is complete. From \Cref{clm:parent-complete}, any two almost complete nodes do not have the ancestor-descendant relationship. Thus, there is no almost complete node that is an ancestor/descendant of another almost complete node, meaning that the subtrees in LNF are node-disjoint.

By a relatively simple argument, one can show that the total radius of the root nodes of the LNF  is a good approximation of the weight of $\MST(X)$. Note that \Cref{lm:sum-disjoint-nodes} is not true for arbitrary construction of the net-tree $T$. We need the property that our net-tree is constructed by a greedy method, meaning the parent of a net-point at level $i$ is its closest net-point at level $i + 1$. 

\begin{lemma}
	\label{lm:sum-disjoint-nodes}
	Let $A$ be a subset of $V(T)$ such that there is no pair of nodes in $A$ having the ancestor-descendant relationship. Then, $\sum_{x \in A} r_{\lvl(x)} =  O(\weight(\MST(X)))$.
\end{lemma}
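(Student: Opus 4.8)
The plan is to charge the radius $r_{\lvl(x)}$ of each node $x \in A$ to a portion of $\MST(X)$ in a way that each edge of the MST is charged $O(1)$ times in total. First I would recall the key structural fact about the greedy net-tree: if $x = (u,i)$ is a node and $p$ is its parent at level $i+1$, then $p$ is the \emph{closest} level-$(i+1)$ net-point to $u$, so in particular $\delta(u, p) \le \delta(u, w)$ for every $w \in N_{i+1}$. Since $N_{i+1}$ is an $r_{i+1}$-net of $N_i$, there is some level-$(i+1)$ net-point within distance $r_{i+1}$ of $u$; hence $\delta(u, p) \le r_{i+1} = 5 r_i$. More importantly, if $u \notin N_{i+1}$ — i.e.\ $u$ is not ``promoted'' to level $i+1$ — then $u$ has a net-neighbor; but the useful direction is the lower bound: any two distinct level-$i$ net-points are at distance $\ge r_i$, and if $u$ itself does not appear in $N_{i+1}$, then by the greedy/closest-parent rule the parent $p \ne u$, so $\delta(u,p) \ge r_{i+1}/(\text{something})$? — this is not quite what I want. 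Instead, the cleaner route is to bound $r_{\lvl(x)}$ from above by (a constant times) the length of a short path in $\MST(X)$ localized near $x$, and then argue these path-segments have bounded overlap.

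Concretely, here is the approach I would carry out. For each $x = (u,i) \in A$, consider walking up the net-tree from $u$: let $u = u_i, u_{i+1}, u_{i+2}, \dots$ be the sequence of ancestors, where $u_{j+1}$ is the parent of $u_j$. By \Cref{clm:leaf-dist}-style geometric-sum reasoning, $\delta(u_i, u_j) \le \sum_{k=i+1}^{j} r_k \le \frac{5}{4} r_j$, and conversely the ``jump'' at some level must be substantial: since $u \in N_i$ but, for the node to have finite radius contribution, there is a well-defined top of the tree, and along the way either $u$ stays a net-point or it gets absorbed — in the greedy construction, $u \in N_j$ for all $j$ up to the point where $u$ is first within distance $r_{j+1}$ of another, pre-existing (lexicographically earlier) net-point, at which moment the parent edge has length in $(r_j, \, 5 r_j]$ roughly. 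So I would charge $r_i = r_{\lvl(x)}$ to the MST-path between $u$ and its nearest \emph{distinct} net-point at an appropriate nearby level; more robustly, I would use the standard fact that $\MST(X)$ restricted to points of $N_{i-c}$ (for a suitable constant $c$) already has a subpath of length $\Omega(r_i)$ emanating from $u$, because the nearest neighbor of $u$ in $X$ at scale $r_i$ is at distance $\Theta(r_i)$ (the minimum inter-point distance is $64 \ge r_0$, and a greedy net-point at level $i$ with a parent edge of length $\approx r_i$ forces a point at distance $\approx r_i$). Then $\sum_{x \in A} r_{\lvl(x)} \le c' \sum_{x \in A} (\text{length of a designated MST-path } \pi_x)$, and I would show each MST-edge lies on $O(1)$ of the paths $\{\pi_x : x \in A\}$: the paths $\pi_x$ and $\pi_{x'}$ for $x,x' \in A$ can overlap only if $\lvl(x)$ and $\lvl(x')$ are within $O(1)$ of each other (by the geometric decay of $r_i$) and the nodes are spatially close; since no node of $A$ is an ancestor of another, a packing argument at each relevant scale (\Cref{lm:packing-bound}) bounds the multiplicity by $2^{O(d)}$.

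The main obstacle, and the place I would spend the most care, is making the charging injective-up-to-$O(1)$ across \emph{different levels}: a high-level node $x$ has a long path $\pi_x$, and an MST-edge near the ``bottom'' end of $\pi_x$ could also be the designated path-segment of many low-level nodes of $A$ nested spatially underneath $x$ — except that the ancestor-free condition on $A$ forbids $x$'s descendants from being in $A$, which is exactly what kills this. So the real content is: (i) pin down the correct definition of $\pi_x$ so that $|\pi_x| = \Omega(r_{\lvl(x)})$ using \emph{only} the greedy (closest-parent) property — this is where $r_i = 5^i$ rather than $2^i$ matters, to get enough slack in the geometric sums so that the parent-edge length dominates the accumulated detour from lower levels; and (ii) bound overlaps via packing. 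Once $\pi_x$ is chosen so that $\pi_x \subseteq \MST(X)$, $|\pi_x| \ge \kappa\, r_{\lvl(x)}$ for a constant $\kappa = \kappa(d) > 0$, and $\pi_x$ is contained in a ball of radius $O(r_{\lvl(x)})$ around $u$, the fact that $A$ is an antichain plus packing gives $\sum_{x\in A} |\pi_x| \le 2^{O(d)} \weight(\MST(X))$, and the lemma follows with the implied constant $O(2^{O(d)}/\kappa)$.
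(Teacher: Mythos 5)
Your high-level plan --- charge each $r_{\lvl(x)}$ to MST weight near $x$, then bound multiplicity via the greedy property and the antichain condition --- identifies the right ingredients, but the step you use to conclude is wrong as stated, and you never pin down the one lemma that actually makes the argument go through. Specifically, your final claim that ``$\pi_x \subseteq \ball(u, O(r_{\lvl(x)}))$ plus antichain plus packing gives $\sum_x |\pi_x| \le 2^{O(d)}\weight(\MST(X))$'' does not hold: for a fixed MST edge $e$ of length $\ell$, packing only bounds the number of nodes $x\in A$ \emph{at each level} $i$ with $e \subseteq \ball(x, O(r_i))$ by $2^{O(d)}$, but the level $i$ can range over all of $[\log(\ell/O(1)), \zeta]$, giving $\Theta(\log n)$ candidate levels; the antichain condition only removes ancestors of $x$, not the many spatially-nearby-but-unrelated nodes at other levels. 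So packing alone gives multiplicity $O(\log n)$, exactly the bound the lemma is trying to beat. Also, your claim that ``the nearest neighbor of $u$ in $X$ at scale $r_i$ is at distance $\Theta(r_i)$'' is false: $u$ being a level-$i$ net point does not prevent $X$ from having points at the minimum distance $64$ from $u$; the $r_i$-separation is only among level-$i$ net points, so the incident MST edge in $\MST(X)$ can be tiny regardless of $\lvl(x)$.

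What is missing is the precise geometric lemma that converts the greedy parent rule into disjointness: for a node $x=(u,i)$, every $v \in X \cap \ball(u, r_i/5)$ must be a descendant leaf of $x$. (Sketch: if $v$'s level-$i$ ancestor were some $w \ne u$, then letting $w'$ be $v$'s level-$(i-1)$ ancestor, the closest-parent rule forces $\nd(u,w') \ge \nd(w,w') $, hence $\nd(u,w')\ge \nd(u,w)/2 \ge r_i/2$; but $\nd(u,w') \le \nd(u,v)+\nd(v,w') \le r_i/5 + 5r_{i-1}/4 = 9r_i/20 < r_i/2$, a contradiction.) With this in hand, all nodes whose scaled ball $\ball(\cdot, r_{\lvl(\cdot)}/5)$ contains a given $v$ lie on a single root-to-leaf path, so the antichain condition makes the balls $\{\ball(x, r_{\lvl(x)}/5) : x\in A\}$ \emph{pairwise disjoint}. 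Note the radius must be $r_{\lvl(x)}/5$, not $O(r_{\lvl(x)})$, for this to work. The paper then avoids your path-charging difficulties entirely: disjointness implies $\nd(x,y) \ge \max\{r_{\lvl(x)}, r_{\lvl(y)}\}/5$ for all $x\ne y\in A$, so in $\MST(X_A)$ (MST on the underlying point set of $A$) every edge incident to $x$ has length $\ge r_{\lvl(x)}/5$; summing an incident edge per node double-counts each MST edge at most twice, and the folklore bound $\weight(\MST(X_A)) \le 2\weight(\MST(X))$ finishes. This route uses only edge lengths of the MST on a well-separated subset, not localized subpaths of $\MST(X)$, and never needs a multiplicity bound beyond ``each edge has two endpoints.''
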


\begin{proof}
	The crucial property to prove \Cref{lm:sum-disjoint-nodes} is the greedy property of the net-tree. Recall that in our net-tree construction (\Cref{def:net-tree}), the parent of a node $(u, i)$ is the one corresponding to the closest point to $u$ in $N_{i + 1}$. Let $\mathcal{B} = \{\ball(x, r_{\lvl(x)} / 5): x \in A\}$.
	
	We claim that every point $v \in X$ is in at most one ball in $\mathcal{B}$. Since no node in $A$ is an ancestor of another, it is sufficient to show that if a ball $\ball(x, r_{\lvl(x)} / 5)$ in $\mathcal{B}$ contains $v$ for some $x$, $v$ has to be a leaf of $x$. Then, all nodes $x$ such that $\ball(x, r_{\lvl(x)} / 5)$ containing $v$ must lie on the path from $(v, 0)$ to the root of $T$, and only one node in that path can belong to $A$. 
	
	Assume that there exists a node $x = (u, i)$ such that $\ball(u, r_{i} / 5)$ contains $v$ and $v$ is not a leaf of (the subtree rooted) $x$. Let $y = (w', i - 1)$ and $z = (w, i)$ be the ancestor of $(v, 0)$ at level $i - 1$ and $i$, respectively. By our choice of children for each node in the net, $\nd(u, w') \geq \nd(w, w')$, implying that:
	\begin{equation}
		\label{eq:half-dist-1}
		\nd(u, w') \geq (\nd(u, w') + \nd(w', w))/2 \geq \nd(u, w)/2 \geq r_i/2 \qquad.
	\end{equation}
	
	On the otherhand, by \Cref{clm:leaf-dist}, $\nd(w', v) \leq 5r_{i - 1}/4$. Using the triangle inequality, we obtain:
	\begin{equation*}
		\label{eq:half-dist-2}
		\nd(u, w') \leq \nd(u, v) + \nd(v, w') \leq r_i / 5 + 5r_{i - 1}/4 = 9r_i / 20 < r_i/2 \qquad,
	\end{equation*}
	contradicting to \Cref{eq:half-dist-1}. 
	
	Therefore, each point in $X$ belongs to at most one ball in $\mathcal{B}$. Thus, for any two nodes $x, y \in A$, $\nd(x, y) \geq \max\{r_{\lvl(x)}, r_{\lvl(y)}\} / 5$.
	
	Let $X_A$ be the set of points in $A$ (we translate each node in $A$ by its representative in $X$). By a folkore result, $\weight(\MST(X_A)) \leq 2\cdot \weight(\MST(X))$. For each node $x \in A$, let $e(x)$ be an arbitrary edge incident to $x$ in $\MST(X_A)$. Hence, $\weight(e(x)) \geq r_{\lvl(x)}/5$, implying that: 
	\begin{equation}
		\label{eq:sumradi}
		\sum_{x \in A}\weight(e(x)) \geq \sum_{x \in A}r_{\lvl(x)}/5\qquad.
	\end{equation}
	On the other hand, since each edge in $\MST(X_A)$ is incident to at most $2$ nodes in $A$, we have:
	\begin{equation}
		\label{eq:sumedges}
		\sum_{x \in A}\weight(e(x)) \leq 2\sum_{e \in E(X_A)}\weight(e) \leq 2\weight(\MST(X_A)) \qquad.
	\end{equation}
	By \Cref{eq:sumradi} and \Cref{eq:sumedges}, $\sum_{x \in A}r_{\lvl(x)}/5 \leq 2\weight(\MST(X_A)) \leq 4\weight(\MST(X))$. Therefore, $\sum_{x \in A}r_{\lvl(x)} = O(\weight(\MST(X)))$.
\end{proof} 
We are now ready to bound $\weight(E_{inc})$ (in \Cref{lm:Einc-weight}) and $\weight(E_{com})$ (in \Cref{lm:Ecom-weight}).

\begin{lemma}
	\label{lm:Einc-weight}
	$\weight(E_{inc}) = \eps^{-O(d)}f^2\weight(\MST(X))$.
\end{lemma}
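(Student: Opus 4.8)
The plan is to bound $\weight(E_{inc})$ by charging each edge to the weight of the light net-forest, and then invoke \Cref{lm:sum-disjoint-nodes}. Recall that every cross edge in $E^*_{inc}$ is, by definition, incident to at least one incomplete node, and that the incomplete nodes decompose (by \Cref{clm:parent-complete} and \Cref{obs:all-leaves}) into node-disjoint rooted subtrees of $T$, each rooted at an almost complete node. Let $R$ be the set of almost complete nodes, so the subtrees $\{T(x) : x \in R\}$ are node-disjoint. The first step is to show that every cross edge $(y,z) \in E^*_{inc}$ can be charged to some $x\in R$ with $\lvl(y) = \lvl(z) \le \lvl(x) + 5\log\lambda$, in such a way that the total charge to each $x$ is $\eps^{-O(d)} r_{\lvl(x)}$. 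Indeed, by the discussion preceding \Cref{def:lnf}, if $(y,z) \in E^*\setminus E^*_O$ is incident to an incomplete node, then $y,z \in NC[w]$ for some node $w$ at level $i$ that is either incomplete or an ancestor within $5\log\lambda$ levels of an incomplete node $w'$; in either case $w$ (resp.\ $w'$) lies in some $T(x)$, $x\in R$, and since the levels of $w$ and $w'$ are both within $5\log\lambda$ of the level of $x$ — using that almost complete nodes have no ancestor/descendant relationship — we get $i \le \lvl(x) + 5\log\lambda$ and also $i \ge \lvl(x)$ is \emph{not} needed; what we need is just the upper bound on $i$. So we charge $(y,z)$ to the almost complete ancestor $x = x(w)$ of $w$ (if $w$ is itself in $\mathrm{LNF}$, take $x$ to be the root of its subtree; if $w$ is an ancestor of an incomplete node $w'$, take $x$ to be the root of $w'$'s subtree).

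The second step is to control the total weight charged to a fixed $x\in R$. For each level $i$ with $\lvl(x) \le i \le \lvl(x) + 5\log\lambda$, the nodes $w$ charging through level $i$ all lie in $\ball(x, O(r_i))$ (since $w$ is within $5\log\lambda$ levels of $x$, so $\nd(x,w) \le \sum_{k} r_k = \eps^{-O(d)} r_i$ by the geometric-sum bound of \Cref{clm:leaf-dist}), they form an $r_i$-separated set, and hence by the packing bound there are only $\eps^{-O(d)}$ of them; each such $w$ contributes at most $|NC[w]|^2 \le (4\lambda)^{2d} = \eps^{-O(d)}$ cross edges to $\Cr(NC[w])$, each of weight at most $\lambda r_i = \lambda r_{\lvl(x)} 5^{i - \lvl(x)}$. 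Summing the geometric series over $i$ from $\lvl(x)$ to $\lvl(x) + 5\log\lambda$ gives total cross-edge weight charged to $x$ at most $\eps^{-O(d)} \cdot 5^{5\log\lambda} \cdot r_{\lvl(x)} = \eps^{-O(d)} \lambda^5 r_{\lvl(x)} = \eps^{-O(d)} r_{\lvl(x)}$ (absorbing the fixed polynomial in $\lambda = 5^{20}(1+1/\eps)$ into the $\eps^{-O(d)}$ factor). Therefore $\weight(E^*_{inc}) \le \sum_{x \in R} \eps^{-O(d)} r_{\lvl(x)} = \eps^{-O(d)} \weight(\MST(X))$ by \Cref{lm:sum-disjoint-nodes}, since the almost complete nodes in $R$ pairwise lack the ancestor/descendant relation.

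The final step passes from $\weight(E^*_{inc})$ to $\weight(E_{inc})$. By \Cref{def:bipartite-conn}, for each cross edge $(x,y) \in E^*_{inc}$ the bipartite connection $M(S(x),S(y))$ has at most $(f+1)^2$ edges, and by \Cref{it:connection-small} of \Cref{prop:lightness-properties} each such edge has weight at most $2\nd(x,y)$. Hence $\weight(E_{inc}) \le 2(f+1)^2 \weight(E^*_{inc}) = \eps^{-O(d)} f^2 \weight(\MST(X))$, as claimed.

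The main obstacle is the first step — making the charging well-defined and verifying that every cross edge in $E^*_{inc}$ routes to a \emph{unique} root $x \in R$ at level within $5\log\lambda$ above it. The subtlety is that a node $w$ with $y,z \in NC[w]$ might, a priori, have several incomplete descendants in different subtrees, or might itself be incomplete while also being an ancestor of incomplete nodes in distinct subtrees; one must argue (using \Cref{clm:parent-complete}, which forces the incomplete nodes below any fixed node to all lie in a single subtree of the LNF once we are within $5\log\lambda$ levels, together with \cref{line:Dx-def}--\ref{line:check-Dx} of \Cref{alg:VFT-spanner}) that the triggering witness $w'$ — and hence its LNF-root $x$ — can be chosen canonically, so that the per-root packing count of $\eps^{-O(d)}$ witnesses per level is not violated. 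Once the charging is pinned down, the rest is the routine packing-plus-geometric-series estimate carried out above.
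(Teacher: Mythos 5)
There is a genuine gap in the second step of your argument, and the error is the core technical difficulty the paper's proof is designed to avoid.

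You claim that the total charge to each root $x\in R$ is $\eps^{-O(d)} r_{\lvl(x)}$, and you justify this by summing over levels $i$ in the range $\lvl(x)\le i \le \lvl(x)+5\log\lambda$, packing the $\eps^{-O(d)}$ witnesses $w$ near $x$ at each such level. But cross edges in $E^*_{inc}$ are \emph{not} confined to that window. By definition, every $(y,z)\in E^*_{inc}$ has an incomplete endpoint, say $y$, and $y$ lies \emph{somewhere} in the LNF subtree $T(x)$ --- at any level from $0$ up to $\lvl(x)$. (Recall \Cref{obs:all-leaves}: every leaf of $T$ is in the LNF, so these subtrees reach all the way down to level $0$.) Your intermediate statement that ``the levels of $w$ and $w'$ are both within $5\log\lambda$ of the level of $x$'' is simply false when $w$ is itself incomplete: then $w\in T(x)$ and $\lvl(w)$ can be arbitrarily far below $\lvl(x)$. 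At each such level $j<\lvl(x)$ there can be up to $f$ incomplete nodes of $T(x)$ (since $T(x)$ has up to $f$ leaves), each contributing $\eps^{-O(d)}$ cross edges of weight $\Theta(\lambda r_j)$. Summing the geometric series you therefore get $\weight(E^*_{inc}) = \eps^{-O(d)} f \cdot\weight(\MST(X))$, with a genuine factor of $f$ --- exactly the factor that shows up in the analogous bound \Cref{eq:down} in the paper's proof of \Cref{lm:Ecom-weight}. Multiplying by $(f+1)^2$ (which is the right per-cross-edge count since one endpoint is incomplete and the bipartite connection is a clique) then yields $\eps^{-O(d)} f^3\weight(\MST(X))$, one factor of $f$ too many.

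The paper's proof avoids this loss by never bounding $\weight(E^*_{inc})$ and then multiplying by $(f+1)^2$. Instead it bounds $\weight(E_{inc})$ directly by charging edges to vertices: every edge $e\in E_{inc}$ at level $i$ has an endpoint $u$ which is a leaf of the incomplete node contributing $e$, hence $i_u\ge i$ where $i_u$ is the level of $u$'s LNF root. By the degree bound (\Cref{lm:bounded-surrogate}), $u$ is incident to at most $\xi f$ level-$i$ edges, so $\weight(E_{inc}(u)) \le 2\lambda\xi f\, r_{i_u}$. Summing over the at most $f$ leaves $u$ of each LNF subtree then yields the two factors of $f$ --- one from the per-vertex degree bound and one from $|L(x)|\le f$ --- without ever taking a $(f+1)^2$ factor. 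The uniform multiplication by $(f+1)^2$ is too lossy precisely because the cliques $M(S(y),S(z))$ over the $\Theta(f)$ incomplete nodes at a fixed level share their $\le f$ leaves as surrogates; the degree bound captures this overlap, whereas your pass through $\weight(E^*_{inc})$ does not.
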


\begin{proof}
	Let $A$ be the set of roots of subtrees in $\mathrm{LNF}$ (or the set of almost complete nodes of $T$). There is no node in $A$ being the ancestor of another. Hence, by \Cref{lm:sum-disjoint-nodes},
	\begin{equation}
		\label{eq:bdd-rad-almost1}
		\sum_{x \in A}r_{\lvl(x)} = O(\weight(\MST(X))).
	\end{equation}
	
	By \Cref{def:lnf} and \Cref{clm:parent-complete}, each incomplete node has exactly one almost complete ancestor in $A$. For each point $u \in X$, let $i_u$ be the level of the root of the subtree in LNF containing $(u, 0)$. Let $E_{inc}(u)$ be the set of all edges with level from $0$ to $i_u$ incident to $u$ in $E_{inc}$, then $E_{inc} = \bigcup_{u \in X}E_{inc}(u)$. Note that there might be other edges in $E_{inc}$ that are incident to $u$; however, each of those edges must be incident to some vertex $v$ with $i_v$ larger than the level of that edge.  
	
	By \Cref{lm:bounded-surrogate}, $u$ is incident to at most $\xi f$ level-$i$ edges for every $0 \leq i \leq \zeta$. Hence, the total weight of level-$i$ edges incident to $u$ is at most $\xi f \cdot \lambda r_i$. This gives:
	\begin{equation*}
		\weight(E_{inc}(u)) \leq \sum_{0 \leq i \leq i_u} \xi f \cdot \lambda r_i \leq 2\lambda\xi f\cdot r_{i_u}
	\end{equation*}
	by geometric sum.
	
	As $E_{inc} = \bigcup_{u \in X}E_{inc}(u)$, $\weight(E_{inc}) \leq \sum_{u \in X}2\lambda\xi f\cdot r_{i_u}$. By \Cref{obs:all-leaves}, we partition the set $X$ into the representatives of leaves within the same subtree in LNF. Thus,
	\begin{equation}
		\label{eq:einc1}
		\begin{split}
			\weight(E_{inc}) &\leq 2\lambda\xi f \cdot \sum_{u \in X} r_{i_u} \leq \eps^{-O(d)}f \cdot\sum_{x \in A}\sum_{u \in L(x)}\underbrace{r_{i_u}}_{ = r_{\lvl(x)}} \leq  \eps^{-O(d)}f\cdot\sum_{x \in A} |L(x)| \cdot r_{\lvl(x)}
		\end{split}
	\end{equation}
	Since each node in LNF is incomplete, each of its subtrees has at most $f$ leaves by the definition of incomplete. Hence, from \Cref{eq:einc1}, we have: 
	\begin{equation*}
		\begin{split}
			\weight(E_{inc}) &\leq  \eps^{-O(d)}f\cdot\sum_{x \in A} |L(x)| \cdot r_{\lvl(x)} \leq \eps^{-O(d)}f^2 \cdot \sum_{x \in A} r_{\lvl(x)}\\
			& \leq \eps^{-O(d)}f^2 \cdot \weight(\MST(X)) \qquad \text{(by \Cref{eq:bdd-rad-almost1})},
		\end{split}
	\end{equation*}
	as claimed.
\end{proof}

\begin{lemma}
	\label{lm:Ecom-weight}
	$\weight(E_{com}) = \eps^{-O(d)}f^2\weight(\MST(X))$.
\end{lemma}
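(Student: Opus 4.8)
The plan is to bound $\weight(E_{com})$ by charging the cross edges in $E^*_{com}$ to the radii of the \emph{almost complete} nodes, just as in the proof of \Cref{lm:Einc-weight}, but routed through the complete nodes that are \emph{close} to the LNF. Recall that every cross edge $(y,z) \in E^*_{com}$ is added in \cref{line:add-NCx}, which fires for a node $x$ that is either incomplete or has an incomplete descendant within $5\log\lambda$ levels; and $y,z \in NC[x]$. Since $(y,z)\in E^*_{com}$ both $y$ and $z$ are complete, so by \Cref{clm:parent-complete} every ancestor of $y$ (and of $z$) is complete; in particular the ``triggering'' node $x$ itself — which has $y$ as a cross neighbor at the same level — is complete, hence it is a \emph{strict} ancestor (within $5\log\lambda$ levels) of some almost complete node. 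Thus the level $i$ of any edge in $E^*_{com}$ satisfies $\lvl(a) \le i \le \lvl(a) + 5\log\lambda$ for some almost complete node $a$ lying in the subtree $T(x)$, and moreover $\delta(x,a) \le 5 r_i / 4 \le 2 r_i$ by \Cref{clm:leaf-dist}, while $\delta(x,y)\le \lambda r_i$, so $y \in \ball(a, O(\lambda r_i))$.

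The key combinatorial step is: \emph{for each almost complete node $a$, the number of distinct cross edges of $E^*_{com}$ that can be charged to $a$ in this way is $\eps^{-O(d)}$, and each has weight $\eps^{-O(d)} r_{\lvl(a)}$.} First I would fix an almost complete node $a$ at level $j = \lvl(a)$. Any edge $(y,z)\in E^*_{com}$ charged to $a$ lies at level $i\in[j, j+5\log\lambda]$ and has both endpoints within $\ball(a, c\lambda r_i) \subseteq \ball(a, c\lambda 5^{5\log\lambda} r_j)$ for an absolute constant $c$; at each such level the endpoints form an $r_i$-separated set inside a ball of radius $\eps^{-O(d)} r_i$ (since $\lambda = \eps^{-O(d)}$ and there are $O(\log\lambda)$ levels), so by the packing bound (\Cref{lm:packing-bound}) there are $\eps^{-O(d)}$ candidate points at each level and $\eps^{-O(d)}$ candidate edges, and summing over the $O(\log\lambda)$ levels keeps the count $\eps^{-O(d)}$. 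Each such edge has weight at most $\lambda r_i \le \lambda 5^{5\log\lambda} r_j = \eps^{-O(d)} r_j$. I also need that every cross edge of $E^*_{com}$ is charged to \emph{at least one} almost complete node — this is immediate from the discussion above, since \cref{line:add-NCx} only fires when an incomplete node is within $5\log\lambda$ levels below $x$, and every incomplete node sits in a subtree of the LNF whose root is almost complete.

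Combining these: since $G$ is complete and $\min\{|S(x)|,|S(y)|\} = f+1$ need not hold (a bipartite connection $M(S(x),S(y))$ has at most $(f+1)^2$ edges regardless), each cross edge of $E^*_{com}$ contributes at most $(f+1)^2$ edges to $E_{com}$, each of weight at most $2\delta(x,y) = \eps^{-O(d)} r_{\lvl(a)}$ by \Cref{it:connection-small} of \Cref{prop:lightness-properties}. Hence
\[
\weight(E_{com}) \;\le\; \sum_{a \text{ almost complete}} \eps^{-O(d)} \cdot (f+1)^2 \cdot \eps^{-O(d)} r_{\lvl(a)} \;=\; \eps^{-O(d)} f^2 \sum_{a\in A} r_{\lvl(a)},
\]
where $A$ is the set of almost complete nodes. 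By \Cref{clm:parent-complete} no node of $A$ is an ancestor of another, so \Cref{lm:sum-disjoint-nodes} gives $\sum_{a\in A} r_{\lvl(a)} = O(\weight(\MST(X)))$, and therefore $\weight(E_{com}) = \eps^{-O(d)} f^2 \weight(\MST(X))$, as claimed. The main obstacle I anticipate is the charging/counting step — making precise that each cross edge in $E^*_{com}$ is associated with a \emph{bounded} number of almost complete nodes (so that we do not over-count radii) \emph{and} that each almost complete node receives only $\eps^{-O(d)}$ such edges; the packing argument must be set up carefully because the triggering node $x$ need not be an ancestor of $a$ in general (it has $a$ as a descendant, but $y,z$ are only \emph{cross neighbors} of $x$, not ancestors of anything), so one must work with the geometric ball around $a$ rather than the tree structure alone.
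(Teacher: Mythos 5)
Your overall strategy — charge the cross edges of $E^*_{com}$ to the almost complete nodes $A$, apply a packing bound per node, and finish via \Cref{lm:sum-disjoint-nodes} — is the same skeleton as the paper's argument. But the central counting claim has a genuine gap that makes the packing step fail, and the final bound you obtain is correct only because two independent errors happen to cancel.

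The false step is the claim that every edge $(y,z)\in E^*_{com}$ added via \cref{line:add-NCx} at level $i$ satisfies $\lvl(a)\le i\le \lvl(a)+5\log\lambda$ for the relevant almost complete node $a$. You justify this by asserting that the triggering node $x$ must be complete because its cross neighbor $y$ is complete; this is a non-sequitur — $y\in NC[x]$ is an arbitrary cross neighbor at the same level, not an ancestor, and $x$ can perfectly well be incomplete while all of $NC(x)$ are complete. When $x$ is incomplete, $x$ lies \emph{inside} the LNF subtree $T(a)$ rooted at $a$, and then $i=\lvl(x)$ can be arbitrarily far \emph{below} $\lvl(a)$: a long chain in the LNF produces cross edges at unboundedly many levels beneath $a$. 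Your packing argument over ``$O(\log\lambda)$ levels'' does not cover these, so the claim ``$\eps^{-O(d)}$ cross edges charged to each $a$'' is false. The correct count is $\eps^{-O(d)}\cdot f$ (in the weight-normalized sense $\sum_{y\in T(a)}r_{\lvl(y)}\le 2f\,r_{\lvl(a)}$): the missing ingredient is the structural fact that every node in $T(a)$ is incomplete (by \Cref{clm:parent-complete}), hence $T(a)$ has at most $f$ leaves and therefore at most $f$ nodes at every level, which collapses the unbounded level range into a factor $f$. This is exactly the role of the partition into $T_{i,j}$ and the bound $|T_{i,j}|\le f$ in the paper's proof. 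Your final answer of $\eps^{-O(d)}f^2$ comes out right because you also over-counted the bipartite connection size as $(f+1)^2$; since both endpoints of an edge in $E^*_{com}$ are complete, \Cref{lm:complete-node-complete} forces $M(S(y),S(z))$ to be a perfect matching of size $f+1$, not a complete bipartite graph. The two mistakes — undercounting cross edges per $a$ by a factor of $f$, and overcounting $H$-edges per cross edge by a factor of $f$ — cancel, but each is individually wrong, and the argument as written does not establish the lemma.
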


\begin{proof}
	We bound the weight of $E_{com}$ by the weight of $E^*_{com}$. Since the bipartite connection of each cross edge $(y, z)$ in $E^*_{com}$ has exactly $f + 1$ edges. Each edge in $M(S(y), S(z))$ has both ends in $\ball(y, 16r_{\lvl(y)})$ and $\ball(z, 16r_{\lvl(z)})$ ($\lvl(y) = \lvl(z)$), and hence has weight at most $(\lambda + 32) r_{\lvl(y)}$ by the triangle inequality. We obtain that: 
	\begin{equation}
		\label{eq:Ecom}
		\begin{split}
			\weight(E_{com}) &\leq \sum_{(y, z) \in E^*_{com}}(f + 1)(\lambda + 32) r_{\lvl(y)} \\
			&\leq \sum_{(y, z) \in E^*_{com}}(f + 1)(\lambda + 32) \nd(y, z) / 64 \qquad \text{(since $\nd(y, z) \geq 64r_{\lvl(y)}$)}\\
			&\leq \sum_{(y, z) \in E^*_{com}}(f + 1)\lambda\nd(y, z) = (f + 1)\lambda\weight(E^*_{com})
		\end{split}		
	\end{equation}
	By \cref{line:check-Dx}--\ref{line:add-NCx}, every cross edge$(y, z)$ in $E^* \setminus E^*_O$ is in $\Cr(NC[x])$ for some node $x$ such that $x$ is either in the LNF or $x$ is an ancestor within $5\log{\lambda}$ levels of the root of some subtree in LNF. Let $A = \{a_1, a_2, \ldots a_k\}$ be the set contains all roots of subtrees in $\mathrm{LNF}$. By \Cref{lm:sum-disjoint-nodes},
	\begin{equation}
		\label{eq:bdd-rad-almost2}
		\sum_{i = 1}^kr_{\lvl(a_i)} = O(\weight(\MST(X))).
	\end{equation}
	
	We have:  
	\begin{equation*}
		E^* \setminus E^*_O \subseteq \bigcup_{x \in A}\Aug(x, 0, 5\log{\lambda}) \bigcup \bigcup_{y \in \mathrm{LNF}}\Cr(NC[y]). 
	\end{equation*}
	Since $E^*_{com} = E^*\setminus (E^*_O \cup E^*_{inc}) \subseteq E^* \setminus E^*_O$, we have the following bound on $\weight(E^*_{com})$:
	
	\begin{equation}
		\label{eq:E*com}
		\weight(E^*_{com}) \leq \weight(E^* \setminus E^*_O) \leq \sum_{x \in A}\weight(\Aug(x, 0, 5\log{\lambda})) + \sum_{y \in LNF}\weight(\Cr(NC[y])).
	\end{equation}
	
	The reason behind the exclusion of $E^*_{inc}$ is to make sure that each bipartite connection of cross edges in $E^*_{com}$ contains exactly $f + 1$ edges in $H$. 
	
	By \Cref{it:bounded-up-aug} of \Cref{prop:lightness-properties}, we get: 
	\begin{equation}
		\label{eq:up}
		\weight(\Aug(x, 0, 5\log{\lambda}) \setminus E^*_{inc}) \leq \weight(\Aug(x, 0, 5\log{\lambda})) = \eps^{-O(d)} r_{\lvl(x)} \qquad.
	\end{equation} 
	
	We then bound $\sum_{y \in \mathrm{LNF}}\weight(\Cr(NC[y]))$. By the packing bound (\Cref{lm:packing-bound}), $NC[y]$ contains at most $\eps^{-O(d)}$ nodes. Thus, $\Cr(NC[y])$ contains at most $\eps^{-O(d)}$ cross edges, each of them has weight $\lambda r_{\lvl(y)}$. Thus,
	
	\begin{equation}
		\label{eq:sum-cross}
		\sum_{y \in \mathrm{LNF}}\weight(\Cr(NC[y])) \leq \sum_{y \in \mathrm{LNF}}\lambda r_{\lvl(y)}.
	\end{equation}
	
	We partition the nodes in LNF into the set of nodes in subtrees of $T$. Let $\{T_1, T_2, \ldots T_k\}$ is the set of subtrees in $T$ with root $\{a_1, a_2, \ldots a_k\}$ respectively ($T_1, T_2, \ldots T_k$ are node-disjoint). Then $\mathrm{LNF} = T_1 \cup T_2 \cup \ldots T_k$. For each $i \in [1, k]$, let $T_{i, j}$ be the set of nodes of $T_i$ at level $j$. Let $A = \{a_1, a_2, \ldots a_k\}$ be roots of $T_1, T_2, \ldots T_k$. 
	
	Since each subtree has at most $f$ leaves, $|T_{i, j}| \leq f$ for every $i \in [1, k]$ and $j \leq \lvl(a_i)$. We have: 
	
	\begin{equation}
		\label{eq:down}
		\begin{split}
			\sum_{y \in \mathrm{LNF}}\lambda r_{\lvl(y)} &= \lambda\sum_{i = 1}^k \sum_{y \in T_i} r_{\lvl(y)} = \lambda\sum_{i = 1}^k \sum_{j = 0}^{\lvl(a_i)}\sum_{y \in T_{i, j}}r_{\lvl(y)} = \lambda\sum_{i = 1}^k \sum_{j = 0}^{\lvl(a_i)}\sum_{y \in T_{i, j}}r_{j} \\
			&= \lambda\sum_{i = 1}^k \sum_{j = 0}^{\lvl(a_i)}|T_{i, j}| r_{j} \leq \lambda\sum_{i = 1}^k \sum_{j = 0}^{\lvl(a_i)}f r_{j} \qquad \text{(since $|T_{i, j}| \leq f$)}\\
			&\leq \lambda f \sum_{i = 1}^k 2r_{\lvl(a_i)} \qquad \text{(by geometric sum)}\\
			&\leq \lambda f \cdot \weight(\MST(X)) \qquad \text{(by \Cref{eq:bdd-rad-almost2})}.
		\end{split}
	\end{equation}
	
	By \Cref{eq:Ecom} and \Cref{eq:E*com}, we have:
	\begin{equation*}
		\begin{split}
			\weight(E_{com}) &\leq \lambda(f + 1) \cdot \weight(E^*_{com}) \leq \lambda(f + 1) \left[\sum_{x \in A}\weight(\Aug(x, 0, 5\log{\lambda})) + \sum_{y \in LNF}\weight(\Cr(NC[y]))\right]\\
			& \leq \lambda(f + 1)\left[\sum_{x \in A}\eps^{-O(d)} r_{\lvl(x)} +  \sum_{y \in LNF}\weight(\Cr(NC[y]))\right] \qquad\text{(by \Cref{eq:up})}\\
			&\leq  \lambda(f + 1)\left[\sum_{x \in A}\eps^{-O(d)} r_{\lvl(x)} + \sum_{y \in \mathrm{LNF}}\lambda r_{\lvl(y)}\right] \qquad\text{(by \Cref{eq:sum-cross})}\\
			&\leq  \lambda(f + 1)\left[\eps^{-O(d)} \weight(\MST(X)) + \lambda f \cdot \weight(\MST(X))\right] \qquad\text{(by Eq. \ref{eq:bdd-rad-almost2} and Eq. \ref{eq:down})}\\
			&= \eps^{-O(d)}f^2 \cdot \weight(\MST(X)),
		\end{split}
	\end{equation*}
	as desired.
\end{proof}

We now ready to prove \Cref{lm:weight_E-E_O}.

\begin{proof}[Proof of \Cref{lm:weight_E-E_O}]
	Observe that $E(H) \setminus E_O = E_{com} \cup E_{inc}$, we obtain $\weight(E(H) \setminus E_O) \leq \weight(E_{com}) + \weight(E_{inc}) = \eps^{-O(d)}f^2\cdot\weight(\MST(X))$ by \Cref{lm:Einc-weight} and \Cref{lm:Ecom-weight}. 
\end{proof}

\subsubsection{Proof of \Cref{lm:complete-node-complete}}

\label{appen:p1}
In this section, we provide a detailed proof of \Cref{lm:complete-node-complete}. First, we introduce some notation. A point $v$ is clean if it has a degree of at most $c_2f$ and is semi-saturated if it is not marked as saturated and has a degree larger than $c_2f$. Point $v$ is saturated if it is marked as saturated in \cref{line:mark-saturated} of \Cref{alg:VFT-spanner}. Note that ``clean" and ``semi-saturated" are \emph{time-sensitive} properties, meaning that they change over time. Specifically, a vertex may change from clean to semi-saturated after the execution of \cref{line:H-update} of \Cref{alg:VFT-spanner}. In this section, we say a point is ``clean" or ``semi-saturated" with respect to some specific moment while running \Cref{alg:VFT-spanner}; this usually happens when we select surrogates in \Cref{alg:Surrogate} or before/after the execution of \cref{line:H-update} in some iteration.

For every node $(u, i)$, recall that we choose the surrogate set of $(u, i)$ by: 
\begin{itemize}
	\item If $(u, i)$ is small and incomplete, $S(u, i) = L(u, i)$ where $L(u, i)$ is the set of leaves of the subtree of $T$ with root $(u, i)$.
	\item If $(u, i)$ is small and complete, $S(u, i)$ contains $f + 1$ arbitrary non-saturated points in $L(u, i)$.
	\item If $(u, i)$ is large, we find all semi-saturated points in $\ball(u, 16r_i)$ and select arbitrary $f + 1$ of them to $S(u, i)$. If there are not enough $f + 1$ such points, we add clean points in $\ball(u, 4r_i)$ to $S(u, i)$ until $|S(u, i)|$ reaches $f + 1$.
\end{itemize}

To prove \Cref{lm:complete-node-complete}, we show that there are always more than $f + 1$ points in $\ball(u, 4r_i)$ if $(u, i)$ is large. Hence, the surrogate set of $(u, i)$ always contains $f + 1$ points. 

For each point $v$, let $\deg_{< i}(v)$ and $\deg_{\leq i}$ be the degree of $v$ before and after the $i^{th}$ iteration, respectively. A point $v$ is \textit{$i$-clean} if $\deg_{< i}(v) \leq c_2f$ and is \textit{$i$-saturated} if $\deg_{< i}(v) > c_3f$. Recall that a small node is complete if it has at least $f + 1$ leaves and a large node is always complete. 

\begin{lemma}
	\label{lm:complete-ball}
	For every large node $(u, i)$, the number of $i$-clean points in $\ball(u, 4r_i)$ is at least $(4f + 4)$.
\end{lemma}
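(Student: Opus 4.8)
The plan is to pin down a concrete high-degree witness near $u$ and then show that its accumulation of edges forced many low-degree ``companions'' to remain inside $\ball(u,4r_i)$. Since $(u,i)$ is large, by definition there is a leaf $v$ of $T(u,i)$ with $\deg_{<i}(v) > c_1 f = 50\log\lambda\cdot\xi f$, and by \Cref{clm:leaf-dist} we have $\nd(u,v)\le \frac{5}{4}r_i$. The first step is to push this bound back in time: by \Cref{lm:bounded-surrogate} each vertex is incident to at most $\xi f$ level-$j$ edges, so $\deg_{<i-2\log\lambda}(v)\ge c_1 f - 2\log\lambda\cdot\xi f = 48\log\lambda\cdot\xi f$. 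Consequently $v$ has a set $W$ of at least $48\log\lambda\cdot\xi f$ distinct neighbours reached through edges of level at most $i-2\log\lambda-1$.

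Next I would localize $W$. If $(v,w)\in M(S(y),S(z))$ is a level-$j$ edge with $j\le i-2\log\lambda-1$, then $\nd(v,y),\nd(w,z)\le 16 r_j$ and $\nd(y,z)\le\lambda r_j$, hence $\nd(v,w)\le(\lambda+32)r_j\le(\lambda+32)r_{i-2\log\lambda-1}=\frac{\lambda+32}{5\lambda^2}\,r_i$, which is below $\frac{r_i}{\lambda}$ since $\lambda=5^{20}(1+1/\eps)$ is huge. Therefore $W\subseteq\ball\!\left(v,\frac{r_i}{\lambda}\right)\subseteq\ball(u,2r_i)\subseteq\ball(u,4r_i)$; moreover every level-$(\le i-2\log\lambda-1)$ edge incident to a point of $\ball(v,\frac{r_i}{\lambda})$ has both endpoints inside $\ball(v,\frac{2r_i}{\lambda})\subseteq\ball(u,4r_i)$, so all of the early combinatorics around $v$ takes place inside the target ball.

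It remains to show that at least $4f+4$ points of $W$ are $i$-clean, i.e.\ still have degree at most $c_2 f = 51\log\lambda\cdot\xi f$ just before iteration $i$. This is the heart of the lemma and is exactly where \Cref{alg:Surrogate}'s priority rule matters. The intended mechanism is that each edge $(v,w)$ with $w\in W$ was created when $v$ lay in some surrogate set $S(y)$ and a bipartite connection $M(S(y),S(z))$ was added; because \Cref{alg:Surrogate} prefers \emph{semi-saturated} vertices of the large ball $\ball(z,16r_{\lvl(z)})$ over \emph{clean} vertices of the small ball $\ball(z,4r_{\lvl(z)})$, whenever $z$ has to spend a clean vertex on $S(z)$ it is because its large ball ran out of semi-saturated candidates — so clean vertices near $v$ are never consumed faster than $v$'s own degree grows. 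To turn this into a count I would: (i) fix the last $2\log\lambda$ levels as a window and observe, via \Cref{lm:bounded-surrogate}, that any vertex with degree below $c_2 f - 2\log\lambda\cdot\xi f = 49\log\lambda\cdot\xi f$ before level $i-2\log\lambda$ is still $i$-clean; (ii) bound from above, using the priority rule together with the already-established maximum degree $2c_3 f$ of \Cref{lm:bound-deg-H} (which does not depend on this lemma), the number of vertices of $\ball(u,4r_i)$ whose degree before level $i-2\log\lambda$ already lies in the high band $(49\log\lambda\cdot\xi f,\,2c_3 f]$; and (iii) subtract the bound of (ii) from $|W|\ge 48\log\lambda\cdot\xi f$, which — since $\log\lambda\cdot\xi$ is a large constant — leaves at least $4f+4$ $i$-clean survivors.

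The main obstacle is step (ii): controlling the number of ``overused'' vertices crowded around $v$. A naive charging argument — summing, level by level, how much degree can be deposited on points of a small ball around $v$ — fails, because low levels contain far too many nodes and the bound degenerates to something like $\eps^{-O(d)}f^2\cdot n\log n$. What one really needs is the structural consequence of the priority rule sketched above: an overused vertex cannot have become overused without, in the process, having ``freed up'' a proportional number of clean vertices in the relevant balls. I expect this has to be formulated and maintained as an invariant across the levels of $T$ (processed from low to high), rather than proved for a single node $(u,i)$ in isolation. Once the bound of step (ii) is available, the first three paragraphs above complete the proof by routine bookkeeping.
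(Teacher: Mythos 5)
Your opening moves match the paper's: you locate a leaf $v$ of $T(x)$ with $\deg_{<i}(v)\ge c_1 f$, push that bound back by $2\log\lambda$ levels via \Cref{lm:bounded-surrogate} to get $\deg_{<i'}(v)\ge 48\log\lambda\cdot\xi f$ at $i'=i-2\log\lambda$, and you correctly observe that edges incident to $v$ at levels $\le i'$ land inside $\ball(u,4r_i)$. The problem is that the crux of the lemma --- your step (ii), showing that enough of those neighbours remain $i$-clean --- is announced rather than proved, and you explicitly acknowledge you do not know how to carry it out. That is a genuine gap, not a bookkeeping detail.

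The mechanism you are missing is that \Cref{lm:complete-ball} is proved by \emph{induction on the level $i$}, and the induction hypothesis is itself the key resource. One walks up the path from $(v,0)$ to $(u,i)$, and rather than examining the whole neighbour set $W$ of $v$ in a single snapshot, one tracks the per-level degree increase $\md_{i_k}(\ball(v_{i_k},4r_{i_k}))$ at descendant nodes $v_{i_k}$ of $x$ on this path. The workhorse (\Cref{lm:grow-clean}) says: if the degree near $v_{i_k}$ jumps at level $i_k$, then a fresh batch $A_{i_k}$ of clean points appears inside $\ball(v_{i_{k+1}},4r_{i_{k+1}})$ that is \emph{disjoint} from $\ball(v_{i_k},4r_{i_k})$. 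Its Case~2 is exactly the scenario your argument cannot handle --- an edge from $v$'s vicinity to a vertex $w_0$ whose degree is \emph{already} large. There, the ancestor $(w',i')$ of $(w_0,0)$ is a large node at a strictly lower level, so the induction hypothesis supplies $4f+4$ clean points in $\ball(w',4r_{i'})$, and a distance computation places them inside $\ball(u,4r_i)$ while keeping them out of $\ball(v_{i'},4r_{i'})$. This is why ``overused'' neighbours are not a loss: each one carries (by IH) its own reservoir of clean points that you then inherit. Finally the batches are made disjoint by a pigeonhole over residues modulo $\lceil 2\log\lambda\rceil+1$, yielding $\ge 6f+6$ clean points; \Cref{cor:f+1-non-clean} (the priority rule, formalized) limits the number that can turn semi-saturated before level $i'$ to $f+1$, and \Cref{cor:small-change-constant-lv} absorbs another $f+1$ in the window $[i',i]$, leaving $4f+4$.

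A further concrete issue with your plan: subtracting ``the number of vertices in the high band $(49\log\lambda\cdot\xi f,\,2c_3 f]$'' from $|W|$ presupposes that this count is small, but nothing you wrote bounds it --- $W$ itself has size $\Theta(\log\lambda\cdot\xi f)$ and could in principle consist entirely of such vertices. The degree bound $2c_3 f$ caps individual degrees, not the \emph{number} of near-saturated vertices in a ball. The paper sidesteps this by never needing such a count: it builds the clean set incrementally and shows that only $O(f)$ of the accumulated clean points can be spoiled, because \Cref{clm:semi-first} guarantees that once $f+1$ points in a small-diameter set are semi-saturated, no further clean point in that set is ever spent by the surrogate-selection rule.
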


We will show that \Cref{lm:complete-node-complete} follows from \Cref{lm:complete-ball}. 

\begin{proof}[Proof of \Cref{lm:complete-node-complete}]
	Let $x = (u, i)$ be a large node. We consider two cases:
	
	Case $1$. If $x$ is small, then $x$ has at least $f + 1$ leaves by the definition of a complete node. We claim that there are $f + 1$ non-saturated points in $L(u, i)$. For every leaf $v$ of $x$, i.e., $v \in L(u, i)$, the degree of $v$ before iteration $i$ is at most $c_1f$ since $x$ is small. Hence, $v$ is not marked as saturated before iteration $i$ since $c_1f \leq c_3f$. (Note that $v$ is not marked as saturated during the execution of level $i$.) Therefore, $v$ is either clean or semi-saturated when we update $S(u, i)$. (In \Cref{alg:VFT-spanner}, $S(u, i)$ might be updated multiple times.) Since all leaves of $x$ are either clean or semi-saturated, there must be at least $f + 1$ non-saturated points in $L(u, i)$ as $x$ has at least $f + 1$ leaves. By \cref{line:add-leaf-S} in \Cref{alg:Surrogate}, $|S(u, i)| = f + 1$.
	
	Case $2$. If $x$ is large, by \Cref{lm:complete-ball}, the number of $i$-clean points in $\ball(u, 4r_i)$ is at least $4f + 4$. Let $C$ be the set of clean points in $\ball(u, 4r_i)$. During the execution of level $i$, the points in $C$ are not marked as saturated. Therefore, they either remain clean or become semi-saturated while updating $S(u, i)$. From \cref{line:def-Sprime}--\ref{line:general-surrogate}, $S(u, i)$ is formed by choosing semi-saturated points in $\ball(u, 16r_i)$ and clean points in $\ball(u, 4r_i)$. As $C \subseteq \ball(u, 4r_i) \subseteq \ball(u, 16r_i)$, all points in $C$ are eligible to be selected to $S(u, i)$. Thus, there are enough $f + 1$ "candidates" for surrogates in $S(u, i)$, implying $|S(u, i)| = f + 1$.   
\end{proof}

We now focus on proving \Cref{lm:complete-ball}. First, we list some properties of the net-tree $T$. Recall that for a given node $(u, i)$, $D(u, i)$ is the set of descendants of $(u, i)$.

\begin{property} \label{prop:pool-prop} We have the following properties:
	\begin{enumerate}
		\item \label{it:descendant-pool} Let $(u, i)$ be an arbitrary complete node in $T$. For any $(v, j) \in D(u, i)$ and any $\alpha \geq 2$, $\ball(v, \alpha r_j) \subseteq \ball(u, \alpha r_i)$.
		\item \label{it:gain-vertices} Let $(u, i)$ be an arbitrary node in $T$, $v$ be a point in $\ball(u, 3r_i)$ and $(v, w)$ be a level-$k$ edge with $k \leq i - 2\log{\lambda}$. Then, $w \in \ball(u, 4r_i)$. 
		\item \label{it:level-i-edge} For every level-$i$ edge $(u, v)$, $\nd(u, v) \geq 32r_i$.
	\end{enumerate}
\end{property}

\begin{proof}
	\textit{\Cref{it:descendant-pool}: } Since $i > j$, we have $r_j \leq r_i / 5$. By \Cref{clm:leaf-dist}, $\nd(u, v) \leq 5r_i / 4$. Thus, for each point $w \in \ball(v, \alpha r_j)$, we get:
	\begin{equation}
		\nd(u, w) \leq \nd(u, v) + \nd(v, w) \leq 5r_i / 4 + \alpha r_j \leq (5/4 + \alpha/5) r_i \leq \alpha r_i \qquad,
	\end{equation}
	since $\alpha \geq 2$.
	
	\textit{\Cref{it:gain-vertices}: } By triangle inequality, $\nd(v, w) \leq (\lambda + 32)r_{k} \leq \frac{\lambda + 32}{\lambda^2}r_i \leq r_i/2$. Then, by the triangle inequality, $\nd(u, w) \leq \nd(u, v) + \nd(v, w) \leq 3r_i + r_i/2 \leq 4r_i$.
	
	\textit{\Cref{it:level-i-edge}: } Let $(\tilde{u}, \tilde{v})$ be the level-$i$ cross edge such that $(u, v) \in M(S(\tilde{u}), S(\tilde{v}))$. By construction, $\nd(\tilde{u}, \tilde{v}) \geq 64r_i$. Since $u \in \ball(\tilde{u}, 16r_i)$ and $v \in \ball(\tilde{v}, 16r_i)$, we have $\nd(u, v) \geq \nd(\tilde{u}, \tilde{v}) - \nd(u, \tilde{u}) - \nd(v, \tilde{v}) \geq 32r_i$.
\end{proof}

We show that for any set $A$ of small diameter, any large node using a clean point in $A$ as a surrogate must also use all semi-saturated points in $A$. This property is due to the fact that we prioritize the use of semi-saturated points over clean ones. 

\begin{claim}
	\label{clm:semi-first}
	Let $i$ and $i'$ be two levels such that $i' \geq i$, $x$ be a node at level $i'$ and $S(x)$ be the result of $\mathrm{SelectSurrogate}(x)$ (\Cref{alg:Surrogate}). For any set $A$ of diameter at most $8r_i$, if $S(x)$ contains any clean point in $A$, then $S(x)$ also contains all semi-saturated points in $A$.
\end{claim}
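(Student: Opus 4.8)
The plan is to read $S(x)$ off the branches of \Cref{alg:Surrogate} and check the claim case by case; only one case carries real content. If $x$ is small (\cref{line:check-small}), then $S(x)\subseteq L(x)$, and by the definition of a small node every point of $L(x)$ has degree at most $c_1 f<c_2 f$ at the moment $S(x)$ is formed, so $S(x)$ consists entirely of clean points and contains no semi-saturated point at all; since the claim is invoked only for large $x$, I will note this and set it aside. If $x$ is large and the test ``$|S'|\ge f+1$'' succeeds, then $S(x)$ is drawn \emph{entirely} from $S'$, which consists only of unsaturated points of degree at least $c_2 f$ --- i.e.\ semi-saturated points --- so $S(x)$ contains no clean point whatsoever and the hypothesis of the claim is vacuous.

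The substantive case is $x$ large with $|S'|<f+1$, where \cref{line:general-surrogate} sets $S(x)=S'\cup\{\text{finitely many clean points of }\ball(x,4r_{i'})\}$; in particular $S(x)\supseteq S'$, so $S(x)$ contains \emph{every} unsaturated point of degree at least $c_2 f$ that lies in $\ball(x,16r_{i'})$. I would then argue as follows. Fix a clean point $p\in A$ with $p\in S(x)$; since $p$ is clean it does not belong to $S'$, so it was taken from the pool $S''$ of clean points in $\ball(x,4r_{i'})$, whence $\nd(x,p)\le 4r_{i'}$. Let $q\in A$ be any semi-saturated point. Because $i'\ge i$ we have $\diam(A)\le 8r_i\le 8r_{i'}$, so $\nd(p,q)\le 8r_{i'}$, and the triangle inequality gives $\nd(x,q)\le\nd(x,p)+\nd(p,q)\le 12r_{i'}\le 16r_{i'}$. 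Hence $q\in\ball(x,16r_{i'})$; being unsaturated of degree exceeding $c_2 f$, it qualifies for $S'$, and therefore $q\in S'\subseteq S(x)$. As $q$ ranged over all semi-saturated points of $A$, the claim follows.

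The one point requiring care is the \emph{time} at which quantities are read: degrees and the saturated flag are monotone but not constant, so ``clean'', ``semi-saturated'', $S'$ and $S''$ must all be interpreted at the instant $\mathrm{SelectSurrogate}(x)$ runs. Under that (intended) convention, ``$p\in S(x)$ and $p$ clean'' says exactly that $p$ was drawn from $S''$, and ``$q$ semi-saturated'' says exactly that $q$ is eligible for $S'$, which is precisely what legitimizes the two facts $\nd(x,p)\le 4r_{i'}$ and $\big(q\in\ball(x,16r_{i'})\Rightarrow q\in S'\subseteq S(x)\big)$. Once this bookkeeping is pinned down, the proof is the single triangle-inequality estimate above, whose only numerical content is that the gap $16r_{i'}-4r_{i'}=12r_{i'}$ between the two surrogate radii comfortably absorbs $\diam(A)\le 8r_i\le 8r_{i'}$. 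I do not expect a deeper obstacle here; the real force of the claim --- that a clean surrogate and all nearby semi-saturated surrogates of a small cluster must be consumed \emph{together} by any large node, since \Cref{alg:Surrogate} always prefers $S'$ over $S''$ --- is exploited later, in \Cref{lm:complete-ball}, not in this proof.
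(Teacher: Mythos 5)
Your proof matches the paper's: the same triangle-inequality estimate showing that a clean surrogate forces $A \subseteq \ball(x, 12r_{i'}) \subseteq \ball(x, 16r_{i'})$, combined with the observation that \Cref{alg:Surrogate} always exhausts $S'$ before drawing from $S''$, so any semi-saturated point of $A$ lands in $S'\subseteq S(x)$. Your explicit three-way case split is simply a more verbose version of what the paper's proof leaves implicit, and the caveat you flag for small $x$ (where $S(x)$ consists only of clean leaves) is equally unaddressed in the paper's proof but harmless, since, as you note and as the constant gap $c_2-c_1>\xi$ guarantees, the claim is only ever invoked via \Cref{cor:f+1-non-clean} when $x$ is large.
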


\begin{proof}
	Since $|S(x)| \leq f + 1$, $|A \cap S(x)| \leq f + 1$. Let $x = (u, i')$ and $v$ be a clean point in $A \cap S(x)$. Recall that in the construction of $S(x)$, first, we find all semi-saturated points in $\ball(u, 16r_{i'})$. If there are not enough $f + 1$ points in $\ball(u, 16r_{i'})$, we pick some clean points in $\ball(u, 4r_{i'})$. Thus, $v \in \ball(u, 4r_{i'})$. Since $\diam(A) \leq 8r_i$, for every $w \in A$,
	\begin{equation}
		\nd(w, u) \leq \nd(w, v) + \nd(u, v) \leq \diam(A) + 4r_{i'} \leq 8r_i + 4r_{i'} \leq 12r_{i'}\qquad,
	\end{equation}
	implying that $A \subseteq \ball(u, 16r_{i'})$. Let $A_s$ be the set of semi-saturated points in $A$. Observe that $A_s \subset S(x)$ since otherwise, there is no clean point in $S(x)$.
\end{proof}

Note that in the proof of \Cref{clm:semi-first}, we only need that for every node $x$ with $i = \lvl(x)$, all the semi-saturated vertices in $\ball(x, 12r_{i})$ must be in $S(x)$ before any clean vertex is added to $S(x)$. Hence, we prioritize selecting semi-saturated vertices in $\ball(x, 16r_{i})$ over the clean vertices in $\ball(x, 4r_{i})$ in \cref{line:def-Sprime}. Throughout \Cref{alg:VFT-spanner}'s analysis, the proof of \Cref{clm:semi-first} is the first (and also the only) proof that requires some geometric property other than the radius of the set where we choose the semi-saturated vertices from in \Cref{alg:Surrogate} (which is $\ball(x, 16r_i)$). Indeed, for all the other proofs in \Cref{sec:degree}, \Cref{sec:light} and \Cref{sec:connectivity}, we only need: $(i)$ all semi-saturated vertices in $S(x)$ have distance at most $16r_i$ from $x$ and $(ii)$ the set where we choose the semi-saturated vertices in \cref{line:def-Sdprime} contains the set where we choose clean vertices in \cref{line:def-Sprime}.

\begin{remark}
	\label{rm:ext-pool}
	\Cref{clm:semi-first} still holds if we replace the $\ball(x, 16r_i)$ ($i$ is the level of $x$) in \cref{line:def-Sprime} of \Cref{alg:Surrogate} by any subset of $\ball(x, 16r_i)$ containing $\ball(x, 12r_i)$. Furthermore, the correctness of \Cref{alg:VFT-spanner} still holds if we choose the semi-saturated vertices in $S'$ (in \cref{line:def-Sprime} of \Cref{alg:Surrogate}) from any subset of $\ball(x, 16r_i)$ containing $\ball(x, 12r_i)$. 
\end{remark}

By \Cref{clm:semi-first}, if a surrogate set $S(x)$ contains a clean point in a low-diameter set $A$, then the total number of semi-saturated points in $A$ is less than $f + 1$. More importantly, assuming that we are considering a cross edge $(x, y)$ in \cref{line:edge-process}, then after adding $M(S(x), S(y))$ to $H$, there are still $f + 1$ semi-saturated points in $A$, since we only change the degree of at most $f + 1$ points in $A$ and prioritize using semi-saturated points over clean ones. Then, we have the following direct corollary of \Cref{clm:semi-first}:

\begin{corollary}
	\label{cor:f+1-non-clean}
	Let $i$ and $i'$ be two levels such that $i' \geq i$, $A$ be a set of points with a diameter at most $8r_i$ and $(x, y)$ be a level-$i'$ cross edge in $E^*$. If $S(x)$ contains a clean point in $A$ before $M(S(x), S(y))$ is added to $H$, then the total number of semi-saturated points in $A$ after the adding of $M(S(x), S(y))$ is at most $f + 1$.
\end{corollary}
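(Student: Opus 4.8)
The plan is to combine Claim~\ref{clm:semi-first} with the fact, built into the construction, that every cross edge of $E^*$ at level $i'$ has length at least $64r_{i'}$ (the bound already exploited in \Cref{it:connection-small} of \Cref{prop:lightness-properties}). The driving observation is that adding the bipartite connection $M(S(x),S(y))$ changes the degree only of points in $S(x)\cup S(y)$, and among the points of the small‑diameter set $A$ it can touch only those in $A\cap S(x)$.

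First I would invoke Claim~\ref{clm:semi-first} with the levels $i\le i'=\lvl(x)$: since $\diam(A)\le 8r_i$ and $S(x)$ contains a clean point of $A$ at the moment just before $M(S(x),S(y))$ is added to $H$, the claim guarantees that at that moment $S(x)$ contains \emph{every} semi‑saturated point of $A$; in particular there are at most $|S(x)\cap A|\le f+1$ of them.

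Next I would show that $A\cap S(y)=\emptyset$. Pick any $v\in S(x)\cap A$ (nonempty, since $S(x)$ contains a clean point of $A$). By \Cref{def:sorrogate} we have $\nd(x,v)\le 16r_{i'}$ and $S(y)\subseteq\ball(y,16r_{i'})$, while $\diam(A)\le 8r_i\le 8r_{i'}$; hence if some $w$ lay in $A\cap S(y)$ we would get $\nd(x,y)\le\nd(x,v)+\nd(v,w)+\nd(w,y)\le 16r_{i'}+8r_{i'}+16r_{i'}=40r_{i'}<64r_{i'}$, contradicting that $(x,y)$ is a cross edge of $E^*$. Consequently, $M(S(x),S(y))$ does not alter the degree of any point of $A$ that lies outside $S(x)$.

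Finally I would assemble the bound. Let $p$ be any point of $A$ that is semi‑saturated immediately after $M(S(x),S(y))$ is added at \cref{line:H-update} of \Cref{alg:VFT-spanner}. If $p$'s degree changed during this step, then $p\in S(x)\cup S(y)$, hence $p\in S(x)$ by the previous paragraph. If $p$'s degree did not change, then $p$ was already semi‑saturated just before the step, hence $p\in S(x)$ by the first step. Either way $p\in S(x)\cap A$, so the number of semi‑saturated points of $A$ after the step is at most $|S(x)\cap A|\le|S(x)|\le f+1$, which is exactly the claim. I do not expect a substantive obstacle here; the only things that need care are that ``clean'' and ``semi‑saturated'' are time‑sensitive labels (so one must fix the precise moments just before and just after \cref{line:H-update}), and that Claim~\ref{clm:semi-first} is applied in the right direction --- it is the presence of a clean point of $A$ inside $S(x)$ that forces all semi‑saturated points of $A$ into $S(x)$.
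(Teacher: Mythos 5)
Your proof is correct and follows the same route the paper intends: invoke Claim~\ref{clm:semi-first} to put all previously semi-saturated points of $A$ inside $S(x)$, observe that only points of $S(x)\cup S(y)$ change degree, and rule out $A\cap S(y)$ by the lower bound $\nd(x,y)\ge 64r_{i'}$ on cross edges of $E^*$. The paper states this corollary as ``direct'' without writing out the $A\cap S(y)=\emptyset$ step (it appears explicitly only inside the later proof of Lemma~\ref{lm:clean-ball}); you have correctly identified and supplied that necessary ingredient here, and your bookkeeping about the time-sensitive ``clean''/``semi-saturated'' labels is accurate.
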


For each point $u \in X$, let $\deg_i(u)$ be the number of level-$i$ edges incident to $u$ in $H$. We have $\deg_{\leq i}(u) = \sum_{k = 0}^i\deg_{k}(u)$ and $\deg_{< i}(u) = \sum_{k = 0}^{i - 1}\deg_k(u)$. For two integers $i_1$ and $i_2$ that $0 \leq i_1 \leq i_2 \leq \zeta$, let $\deg_{[i_1, i_2]}(u)$ be the total number of edges at a level within the range $[i_1, i_2]$. Formally, $\deg_{[i_1, i_2]}(u) = \sum_{k = i_1}^{i_2}\deg_k(u)$.

We now prove that if the maximum degree increase of a set is less than the gap between saturated and clean, there are at most $f + 1$ points in that set becoming semi-saturated. For every $A \subseteq X$, let $\md_i(A) = \max_{w \in A}\deg_{i}(w)$ and $\md_{[i, j]}(A) = \max_{w \in A}\deg_{[i, j]}(w)$

\begin{lemma}
	\label{lm:clean-ball}
	Let $i$ be a level and $A$ be a set of $i$-clean points with a diameter at most $8r_i$. For every $k$ such that  $\md_{[i, i + k]}(A) \leq (c_3 - c_2)f$, the number of $(i + k)$-clean points in A is at least $|A| - (f + 1)$. 
\end{lemma}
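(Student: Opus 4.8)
The plan is to track, throughout iterations $i, i+1, \ldots, i+k-1$ of the main loop (\cref{line:level-iter} of \Cref{alg:VFT-spanner}), the set of points of $A$ whose degree has crossed the threshold $c_2 f$, and to show that this set never grows beyond $f+1$ points; since the degree hypothesis will force every such point to be semi-saturated rather than saturated, this immediately bounds the number of non-$(i+k)$-clean points of $A$.

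First I would record the key consequence of the degree hypothesis. As $A$ consists of $i$-clean points we have $\deg_{<i}(w) \le c_2 f$ for every $w \in A$, and combining this with $\md_{[i,i+k]}(A) \le (c_3-c_2)f$ and the monotonicity of degrees gives, for every $w \in A$ and every $i \le i' \le i+k$, that $\deg_{<i'}(w) = \deg_{<i}(w) + \deg_{[i,i'-1]}(w) \le c_2 f + (c_3-c_2)f = c_3 f$ (and trivially $\deg_{<i'}(w) \le c_2 f$ for $i' \le i$). Hence at the start of every iteration no point of $A$ has degree exceeding $c_3 f$, so no point of $A$ is ever marked as saturated in \cref{line:mark-saturated}. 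In particular, at the start of iteration $i+k$ a point $w \in A$ fails to be $(i+k)$-clean if and only if $\deg_{<i+k}(w) > c_2 f$, which --- since $w$ is not saturated --- is exactly the condition that $w$ is semi-saturated; so it suffices to prove that at most $f+1$ points of $A$ are semi-saturated at that moment.

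Next I would establish the invariant that the number of semi-saturated points of $A$ is at most $f+1$ at every point in time during iterations $i$ through $i+k-1$. At the start of iteration $i$ this number is $0$, since every point of $A$ is $i$-clean. It can change only when an edge is added in \cref{line:H-update}, i.e., when some $M(S(x),S(y))$ is added for a level-$i'$ cross edge $(x,y) \in E^*$ with $i \le i' \le i+k-1$; and since no point of $A$ is marked saturated and degrees never decrease, this number can only increase, and it can increase only if some point $w \in A$ with degree at most $c_2 f$ just before the addition lies in $S(x) \cup S(y)$ and receives a new edge --- that is, only if $S(x)$ or $S(y)$ contains a clean point of $A$ at that moment. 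In that case \Cref{cor:f+1-non-clean} applies (with the level-$i'$ cross edge $(x,y)$, the inequality $i' \ge i$, and the diameter bound $\diam(A) \le 8r_i$; using the symmetry of $(x,y)$ if the clean point lies in $S(y)$), and it guarantees that immediately after $M(S(x),S(y))$ is added to $H$ the number of semi-saturated points of $A$ is at most $f+1$. Since this number starts at $0$ and is at most $f+1$ right after every atomic edge-addition step in which it could have grown, it stays at most $f+1$ throughout, and in particular at the start of iteration $i+k$.

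Combining the two parts, at the start of iteration $i+k$ the non-$(i+k)$-clean points of $A$ are precisely the semi-saturated points of $A$, of which there are at most $f+1$; hence the number of $(i+k)$-clean points of $A$ is at least $|A| - (f+1)$, as claimed. The heavy lifting has already been done in \Cref{clm:semi-first} and its corollary \Cref{cor:f+1-non-clean} (the prioritization of semi-saturated over clean surrogates inside \Cref{alg:Surrogate}); the only subtle points in the present argument are verifying that no point of $A$ is ever saturated --- which is precisely where the hypothesis $\md_{[i,i+k]}(A) \le (c_3-c_2)f$ enters --- and being careful to observe the count of semi-saturated points of $A$ only at the atomic boundaries of the edge-addition steps, so that \Cref{cor:f+1-non-clean} can legitimately be invoked each time the count increases.
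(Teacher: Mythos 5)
Your proof is correct and essentially the same as the paper's: you both first use the hypothesis $\md_{[i,i+k]}(A) \le (c_3-c_2)f$ to rule out any point of $A$ becoming saturated, and then invoke \Cref{cor:f+1-non-clean} at each edge-addition step that could push a point of $A$ past the $c_2 f$ threshold. The only difference is framing — you maintain the invariant forward across all edge-addition steps, whereas the paper argues by contradiction from the first offending cross edge and pauses to re-derive $S(y)\cap A = \emptyset$ explicitly before invoking the corollary (a fact which is already built into \Cref{cor:f+1-non-clean} as stated, so your direct appeal to it is legitimate).
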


\begin{proof}
	Let $A = \{a_1, a_2, \ldots a_{|A|}\}$. If $|A| \leq f + 1$, \Cref{lm:clean-ball} trivially holds. Assume that $|A| > f + 1$. We prove that at most $f + 1$ points in $A$ are semi-saturated before the execution of level $i + k$ in \cref{line:level-iter} of \Cref{alg:VFT-spanner}. For any $j \in [1\ldots |A|]$, since $a_j$ is $i$-clean, $\deg_{< i}(a_j) \leq c_2 f$. Then, $\deg_{\leq i + k}(a_j) \leq c_2 f + \max_{w \in A}\deg_{[i, i + k]}(w) \leq c_3 f$ by the assumption of the lemma. Thus, no point in $A$ is saturated before level $i + k$. 
	
	We prove by contradiction that $A$ contains at least $|A| - (f + 1)$ clean points before iteration $i + k$. Let $(x, y)$ be the first cross edge such that after adding $M(S(x), S(y))$ to $E(H)$, $A$ contains less than $|A| - (f + 1)$ clean points. Let $i_{xy}$ be the level of $(x, y)$; we have $i \leq i_{xy} \leq i + k$. Thus, no point in $A$ is saturated during the execution of level $i_{xy}$, implying that $A$ contains only clean and semi-saturated points before and after adding $M(S(x), S(y))$. Since some points in $A$ become semi-saturated after adding $M(S(x), S(y))$, either $S(x)$ or $S(y)$ contains some clean points in $A$. Without loss of generality, assume that $S(x)$ does. We claim that $S(y)$ does not contain any point in $A$. Let $u$ be a point in $S(x) \cap A$. By the construction of $S(x)$ in \Cref{alg:Surrogate}, $\nd(x, u) \leq 16r_{i_{xy}}$. For every point $v \in A$, we have:
	\begin{equation*}
		\nd(v, y) \geq \nd(x, y) - \nd(x, u) - \nd(u, v) \geq 64r_{i_{xy}} - 16r_{i_{xy}} - \diam(A) \geq 40r_{i_{xy}} \qquad,   
	\end{equation*}
	since $\diam(A) \leq 8r_i \leq 8r_{i_{xy}}$. Hence, $S(y)$ does not contain any point in $A$ because by \Cref{alg:Surrogate}, points in $S(y)$ are selected from $\ball(y, 16r_{i_{xy}})$. By \Cref{cor:f+1-non-clean}, there are at most $f + 1$ semi-saturated points in $A$ after adding $M(S(x), S(y))$, a contradiction.
\end{proof}

By \Cref{lm:bounded-surrogate}, each point's degree can only increase by $\xi f$ after each iteration, which means that after a small number of iterations, a set of clean points has at most $f + 1$ points turning into non-clean.

\begin{corollary}
	\label{cor:small-change-constant-lv}
	Let $x = (u, i)$ be a node in $T$ and $x' = (u', i')$ be a descendant of $x$ with $i' \geq i - 4\log{\lambda}$. If $\ball(u', 4r_{i'})$ contains a set $A$ of $i'$-clean points, then the number of $i$-clean points in $\ball(u, 4r_i)$ is at least $|A| - (f + 1)$. 
\end{corollary}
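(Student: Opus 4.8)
The plan is to derive the corollary directly from \Cref{lm:clean-ball}, after first lifting the witness set $A$ from the descendant $x' = (u',i')$ up to $x = (u,i)$. First I would observe that, since $x'$ is a (proper) descendant of $x$, \Cref{it:descendant-pool} of \Cref{prop:pool-prop} with $\alpha = 4 \ge 2$ gives $\ball(u',4r_{i'}) \subseteq \ball(u,4r_i)$; in particular $A \subseteq \ball(u,4r_i)$. Hence it suffices to show that at least $|A|-(f+1)$ points of $A$ remain $i$-clean, and the enclosing ball around $u$ plays no further role.

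Next I would apply \Cref{lm:clean-ball} at the lower level $i'$, taking its ``level $i$'' to be $i'$ and its parameter ``$k$'' to be $i-i'$, which is a non-negative integer at most $4\log\lambda$ by the hypothesis $i' \ge i - 4\log\lambda$. The structural hypotheses hold: since $A \subseteq \ball(u',4r_{i'})$ we have $\diam(A) \le 8r_{i'}$, and $A$ consists of $i'$-clean points by assumption. The degree-growth budget $\md_{[i',i]}(A) \le (c_3-c_2)f$ required by \Cref{lm:clean-ball} follows from \Cref{lm:bounded-surrogate}: each point of $A$ is incident to at most $\xi f$ level-$j$ edges for every level $j$, so over the at most $4\log\lambda$ levels separating $i'$ from $i$ its degree grows by at most $4\log\lambda \cdot \xi f = (c_3-c_2)f$, using the values $c_2 = 51\log\lambda\cdot\xi$ and $c_3 = 55\log\lambda\cdot\xi$ fixed in \cref{line:costants-c}. \Cref{lm:clean-ball} then yields that at least $|A|-(f+1)$ points of $A$ are $(i'+k) = i$-clean; since $A \subseteq \ball(u,4r_i)$, these are $i$-clean points of $\ball(u,4r_i)$, proving the corollary.

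I expect this to be short and essentially mechanical given \Cref{lm:clean-ball}, \Cref{lm:bounded-surrogate}, and \Cref{it:descendant-pool}; no new geometric or combinatorial idea is needed. The only point requiring a little care is the level bookkeeping: one must check that the degree of a point of $A$ can only accumulate over the $i-i' \le 4\log\lambda$ levels lying between the moment it is certified $i'$-clean and the moment it is tested for $i$-cleanness, so that the per-level increase $\xi f$ from \Cref{lm:bounded-surrogate} fits inside the slack $c_3 - c_2 = 4\log\lambda\cdot\xi$ deliberately built into the constants of \cref{line:costants-c}. That is exactly where the hypothesis $i' \ge i - 4\log\lambda$ is used.
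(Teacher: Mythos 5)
Your proof is correct and matches the paper's own proof essentially line for line: bound $\md_{[i',i]}(A)$ via \Cref{lm:bounded-surrogate} together with the choice of $c_2,c_3$ in \cref{line:costants-c}, invoke \Cref{lm:clean-ball} with $k = i - i'$, and then lift $A \subseteq \ball(u',4r_{i'})$ into $\ball(u,4r_i)$ using \Cref{it:descendant-pool} of \Cref{prop:pool-prop}. No substantive difference from the paper's argument.
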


\begin{proof}
	By \Cref{lm:bounded-surrogate}, $\md_{[i', i]}(w) \leq (i - i') \cdot \xi f\leq  4\log{\lambda} \cdot \xi f < (c_3 - c_2)f$. By \Cref{lm:clean-ball}, the number of $i$-clean points in $A$ is at least $|A| - (f + 1)$. From \Cref{it:descendant-pool} of \Cref{prop:pool-prop}, $A \subseteq \ball(u', 4r_{i'}) \subseteq \ball(u, 4r_i)$. Therefore, the number of $i$-clean points in $\ball(u, 4r_i)$ is at least $|A| - (f + 1)$.
\end{proof}

The next lemma shows that for each node $(u, i)$, $\ball(u, 4r_i)$ contains some clean points of the balls in lower levels close to some descendant $(u', i')$ of $(u, i)$. To find such balls, we monitor the changes in the degree of points near $u'$ and prove that for every degree gained, there are a corresponding number of clean points ``contributed" to $\ball(u, 4r_i)$. If there is an edge from a point near $u'$ to a leaf of a small node, the leaves of that node are clean and we can reuse them after a constant number of levels. If the edge is between a point near $u'$ to a leaf of a large node, we assume that there are some clean points close to the representative of each large node. This assumption will be our induction hypothesis in the proof of \Cref{lm:complete-ball}.

\begin{lemma}
	\label{lm:grow-clean}
	Let $(u, i)$ be a node at level $i > \lceil2\log{\lambda}\rceil + 1$ and $(u', i')$ be a descendant of $(u, i)$ at level $i' = i - (\lceil2\log{\lambda}\rceil + 1)$. Assume that for every large node $(u'', i')$ at level $i'$, the number of $i'$-clean points in $\ball(u'', 4r_{i'})$ is at least $4f + 4$. Then, there exists a subset $A$ of $\ball(u, 4r_i)$ containing only $i$-clean points such that $|A| \geq \lceil \frac{\md_{i'}\ball(u', 4r_{i'})}{\xi} \rceil$ and $A \cap \ball(u', 4r_{i'}) = \emptyset$.
\end{lemma}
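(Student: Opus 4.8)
The plan is to pick a point $w^\star\in\ball(u',4r_{i'})$ realizing $D:=\md_{i'}\ball(u',4r_{i'})=\deg_{i'}(w^\star)$ (if $D=0$ there is nothing to prove), and to \emph{harvest} the set $A$ from the far endpoints of the $D$ level-$i'$ edges incident to $w^\star$. Each such edge lies in $M(S(\tilde u),S(\tilde v))$ for a level-$i'$ cross edge $(\tilde u,\tilde v)$ with $w^\star\in S(\tilde u)$; write $z\in S(\tilde v)$ for its other endpoint. Two geometric facts drive everything. \emph{Localization}: using $\nd(u,u')\le\tfrac54 r_i$ (\Cref{clm:leaf-dist}), $\nd(w^\star,\tilde u)\le16r_{i'}$, $\nd(\tilde u,\tilde v)\le\lambda r_{i'}$, $\nd(\tilde v,z)\le16r_{i'}$, and $r_{i'}=r_i/5^{\lceil2\log\lambda\rceil+1}\le r_i/(5\lambda^2)$, every point within distance $20r_{i'}$ of any such $z$ --- in particular every leaf of $\tilde v$ and every point of $\ball(\tilde v,4r_{i'})$ --- lies inside $\ball(u,4r_i)$. \emph{Disjointness}: since $(w^\star,z)$ is a level-$i'$ edge, $\nd(w^\star,z)\ge32r_{i'}$ by \Cref{it:level-i-edge} of \Cref{prop:pool-prop}, so any $p$ with $\nd(p,z)\le20r_{i'}$ has $\nd(p,u')\ge32r_{i'}-20r_{i'}-4r_{i'}>4r_{i'}$ and hence $p\notin\ball(u',4r_{i'})$. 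Thus anything harvested near such a $z$ automatically satisfies $p\in\ball(u,4r_i)\setminus\ball(u',4r_{i'})$.

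Next I would pigeonhole onto a single node. The nodes $\tilde u$ with $w^\star\in S(\tilde u)$ lie in $\ball(w^\star,16r_{i'})$ and are $r_{i'}$-separated, so there are at most $2^{6d}$ of them (\Cref{lm:packing-bound}), and each has at most $(4\lambda)^d$ cross neighbours, so only $C:=2^{6d}(4\lambda)^d=\eps^{-O(d)}$ distinct nodes $\tilde v$ occur; I choose the constant hidden in $\xi=\eps^{-O(d)}$ so that $\xi\ge C$ (and $\xi$ is at least the constant of \Cref{lm:bounded-surrogate}). If $d_{\tilde v}$ denotes the number of the $D$ edges whose far endpoint is a surrogate of $\tilde v$, then $\sum_{\tilde v}d_{\tilde v}=D$, so the heaviest node $\tilde v^\star$ has $d_{\tilde v^\star}\ge D/C\ge D/\xi$, whence $d_{\tilde v^\star}\ge\lceil D/\xi\rceil$ by integrality. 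It therefore suffices to produce, within a ball of diameter $\le8r_{i'}$ around $\tilde v^\star$, at least $\lceil D/\xi\rceil$ points that are $i$-clean. If $\tilde v^\star$ is small, its leaves have degree $\le c_1 f<c_2 f$ before iteration $i'$ (definition of a small node), so they are $i'$-clean, there are at least $d_{\tilde v^\star}$ of them, and they have diameter $\le\tfrac52 r_{i'}$ by \Cref{clm:leaf-dist}. If $\tilde v^\star$ is large, the hypothesis of the lemma supplies $\ge4f+4$ $i'$-clean points inside $\ball(\tilde v^\star,4r_{i'})$. In either case call this pool $A_0$, a set of $i'$-clean points of diameter $\le8r_{i'}$.

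The final step promotes $A_0$ to $i$-clean points: the window $[i',i]$ spans $\lceil2\log\lambda\rceil+2\le4\log\lambda$ levels, so \Cref{lm:bounded-surrogate} gives $\md_{[i',i]}(A_0)\le(\lceil2\log\lambda\rceil+2)\xi f\le(c_3-c_2)f$, and \Cref{lm:clean-ball} then guarantees that all but at most $f+1$ points of $A_0$ are $i$-clean; taking $A$ to be $\lceil D/\xi\rceil$ of the survivors finishes, provided $|A_0|\ge\lceil D/\xi\rceil+f+1$. \textbf{This last inequality is the main obstacle.} For a large $\tilde v^\star$ it is immediate, since $|A_0|\ge4f+4\ge\lceil D/\xi\rceil+f+1$ (as $\lceil D/\xi\rceil\le f$ because $D\le\xi f$ by \Cref{lm:bounded-surrogate}). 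For a small $\tilde v^\star$ one has to work harder: the clean leaf pool must be shown to exceed $d_{\tilde v^\star}+f+1$, which I expect to follow by invoking the statement we are ultimately after (\Cref{lm:complete-node-complete}) at the \emph{lower} level $i'$ --- so that a complete partner node already has a full surrogate set and the structure of the bipartite connections incident to $\tilde v^\star$ is controlled --- together with the fact that the surrogate rule prefers semi-saturated vertices (\Cref{clm:semi-first}), which is precisely what keeps the loss in \Cref{lm:clean-ball} down to $f+1$ rather than larger. Carrying out this small-node bookkeeping cleanly, and threading it into the induction that proves \Cref{lm:complete-ball}, is the delicate part.
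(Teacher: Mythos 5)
Your proposal takes a genuinely different route from the paper and, under the plan you state, it has a real gap; but the gap is smaller than you seem to think, and the cure is exactly the observation the paper's proof is built around.

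\paragraph{What the paper actually does.} Take $z\in\ball(u',4r_{i'})$ realizing $D$ and let $W$ be the set of far endpoints of the $D$ level-$i'$ edges at $z$. The dichotomy is on the \emph{degree of the far endpoints}, not on small/large of the opposite cross-edge node. Case 1: if every $w\in W$ has $\deg_{<i'}(w)\le c_1 f$, then by \Cref{lm:bounded-surrogate} its degree can increase by at most $(\lceil2\log\lambda\rceil+1)\xi f$ over the $\lceil2\log\lambda\rceil+1$ levels up to $i$, so $\deg_{<i}(w)\le c_1 f+(\lceil2\log\lambda\rceil+1)\xi f\le c_2 f$; i.e., $w$ is already \emph{$i$}-clean, not merely $i'$-clean. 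No appeal to \Cref{lm:clean-ball}, no $f+1$ loss, and one can take $A=W$ with $|W|=D\ge\lceil D/\xi\rceil$, then verify the geometry via \Cref{it:level-i-edge} and \Cref{it:gain-vertices} of \Cref{prop:pool-prop}. Case 2: some $w_0\in W$ has $\deg_{<i'}(w_0)>c_1 f$; then the level-$i'$ \emph{ancestor} of $w_0$ (not necessarily any $\tilde v$ of a bipartite connection) is large by definition, the hypothesis supplies $\ge4f+4$ $i'$-clean points in its $4r_{i'}$-ball, \Cref{lm:clean-ball} costs only $f+1$ of them, and $3f+3\ge\lceil D/\xi\rceil$ since $D\le\xi f$.

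\paragraph{Where your plan breaks and why.} In the small-$\tilde v^\star$ branch you propose to take the $i'$-clean leaves of $\tilde v^\star$ and then promote them to $i$-clean via \Cref{lm:clean-ball}, losing up to $f+1$. You flag correctly that the pigeonhole only guarantees $d_{\tilde v^\star}\ge\lceil D/\xi\rceil$ points, not $\lceil D/\xi\rceil+f+1$, so the budget doesn't close. The fix you gesture at --- invoking \Cref{lm:complete-node-complete} at a lower level --- is circular: \Cref{lm:complete-node-complete} is derived from \Cref{lm:complete-ball}, which is exactly the lemma whose proof uses \Cref{lm:grow-clean}. The correct repair is the one the paper's Case 1 uses: since $\tilde v^\star$ is small, the leaves used as surrogates have $\deg_{<i'}\le c_1 f$, hence (by \Cref{lm:bounded-surrogate} and the choice of $c_1,c_2$) are already $i$-clean; \Cref{lm:clean-ball} is unnecessary here and the $f+1$ tax disappears, so $|A|=d_{\tilde v^\star}\ge\lceil D/\xi\rceil$ suffices.

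\paragraph{Comparison.} With that repair your pigeonhole argument would go through, but it does strictly more work than needed: the paper never pigeonholes over cross-neighbor nodes, it just asks whether $W$ has a high-degree element. Your large-$\tilde v^\star$ branch is also close but slightly off target --- the paper's Case 2 takes the level-$i'$ ancestor of the offending point $w_0$, which is guaranteed large because $w_0$ is one of its leaves with high degree; your $\tilde v^\star$ is the node \emph{whose surrogate} $w_0$ is, which need not be $w_0$'s ancestor when $\tilde v^\star$ is large (surrogates of large nodes come from $\ball(\tilde v^\star,16r_{i'})$, not from its leaves). Both choices happen to be geometrically admissible, but the paper's is the one whose largeness is \emph{forced} by the degree of $w_0$, whereas yours relies on the pigeonhole landing on a large node, which it may not.
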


\begin{proof}
	Let $z$ be a point in $\ball(u', 4r_{i'})$ such that $\deg_{i'}(z) = \md_{i'}(\ball(u', r_{i'}))$. Let $W$ be the set of points connected to $z$ by level-$i'$ edges. Thus, $|W| = \deg_{i'}(z)$. We consider two cases:
	
	Case $1$: If $\deg_{<i'}(w) \leq c_1f$ for every $w \in W$,  then by \Cref{lm:bounded-surrogate}, $\deg_{< i'}(w) \leq c_1f + (\lceil2\log{\lambda}\rceil + 1) \cdot \xi f \leq c_2f$. Hence, for all $w \in W$, $w$ is $(i' + \lceil2\log{\lambda}\rceil + 1)$-clean (or $i$-clean). By setting $A = W$, we claim that $A$ satisfies all properties in \Cref{lm:grow-clean}. First, $W$ is a set of $i$-clean points contained in $\ball(u_{i}, 4r_{i})$ and $|W| = \deg_{i'}(z) = \md_{i'}(\ball(u, r_{i'})) \geq \lceil\frac{\md_{i'}\ball(u', 4r_{i'})}{\xi}\rceil$. Furthermore, for every $w \in W$, $\nd(w, z) \geq 32r_{i'}$ by \Cref{it:level-i-edge} in \Cref{prop:pool-prop}. By triangle inequality, $\nd(w, u') \geq \nd(w, z) - \nd(z, u') \geq 32r_{i'} - 4r_{i'} = 28r_{i'}$, implying that $w \not \in \ball(u', 4r_{i'})$. Thus, $W \cap \ball(u', 4r_{i'}) = \emptyset$.
	
	Case $2$: There exists a point $w_0 \in W$ such that $\deg_{< i'}(w_0) > c_1f$. Hence, the ancestor of $(w_0, 0)$ at level $i'$, denoted by $(w', i')$, is a large node. By the lemma's assumption, the number of $i'$-clean points in $\ball(w', 4r_{i'})$ is at least $4f + 4$. Let $C$ be the set of $i'$-clean points in $\ball(w', 4r_{i'})$. By \Cref{lm:bounded-surrogate}, for every level $k$, $\md_{k}(C) \leq \xi f$; hence, $\sum_{k = i'}^{i}\md_k(C) \leq (\lceil2\log{\lambda}\rceil + 1) \cdot \xi f \leq (c_3 - c_2)f$ by the choice of $c_2$ and $c_3$ in \Cref{alg:VFT-spanner}. Thus, by \Cref{lm:clean-ball}, the number of $i$-clean points in $C$ is at least $|C| - (f + 1) \geq 3(f + 1) \geq \lceil \frac{\md_{i'}\ball(u', 4r_{i'})}{\xi} \rceil$ since $\md_{i'}\ball(u', 4r_{i'}) \leq \xi f$ by \Cref{lm:bounded-surrogate}. 
	
	Let $A$ be the set of $i$-clean points in $C$ satisfies conditions in \Cref{lm:grow-clean}. We show that $A$ satisfies the conditions in \Cref{lm:grow-clean}. Since $A \subseteq C \subseteq \ball(u', 4r_{i'})$, $\diam(A) \leq \diam(C) \leq 8r_{i'}$. For every point $v \in C$, by the triangle inequality, 
	\begin{equation}
		\begin{split}
			\nd(u, v) &\leq \nd(u, u') + \nd(u', v) \leq 2r_i + \nd(u', v) \qquad \text{(by \Cref{clm:leaf-dist})}\\
			&\leq 2r_i + \underbrace{\nd(u', z)}_{\leq 4r_{i'}} + \nd(z, w_0) + \underbrace{\nd(w_0, w')}_{\leq 2r_{i'}} + \underbrace{\nd(w', v)}_{4r_{i'}} \leq 2r_i + 10r_{i'} + \nd(z, w_0)\\
			&\leq 2r_i + 10r_{i'} + (\lambda + 32)\underbrace{r_{i'}}_{\leq r_i / \lambda^2} \qquad \text{(since $(z, w_0)$ is a level-$i$ edge)}\\
			&\leq 2r_i + \frac{\lambda + 42}{\lambda^2}r_i \leq 4r_i.
		\end{split}
	\end{equation}
	Therefore, $C \subseteq \ball(u, 4r_i)$, implying that $A \subseteq \ball(u, 4r_i)$. Furthermore, for each $v \in C$, we have:
	\begin{equation}
		\begin{split}
			\nd(v, u') &\geq \nd(w_0, z) - \nd(v, w_0) - \nd(u', z) \qquad \text{(by the triangle inequality)}\\
			&\geq 32r_{i'} - \nd(v, w_0) - \underbrace{\nd(u', z)}_{\leq 4r_{i'}} \qquad \text{(since $\nd(z, w_0) \geq 32r_{i'}$)}\\
			&\geq 32r_{i'} - \underbrace{\nd(v, w')}_{\leq 4r_{i'}} - \underbrace{\nd(w', w_0)}_{\leq 2r_{i'}} - 4r_{i'} \geq 32r_{i'} - 4r_{i'} - 2r_{i'} - 4r_{i'} \geq 22r_{i'} > 4r_{i'}.
		\end{split}
	\end{equation}
	This implies that $v \not \in \ball(u', 4r_{i'})$ for every $v \in C$. Since $A \subseteq C$, it follows that $A \cap \ball(u', 4r_{i'}) = \emptyset$.
\end{proof}

We now return to the proof of \Cref{lm:complete-ball}. The idea is to track the degree change of points in some path $P$ of $T$ from $(u, i)$ to one of its leaves. If there is a point whose degree changed, then there is at least one cross edge between (a node in) $P$ to another node, say $(v, i')$. The clean points in $\ball(v, 4r_{i'})$ become ``closer" to the $P$ at higher levels relative to the level radius and eventually can be used as surrogates by nodes in $P$.

\begin{proof}[Proof of \Cref{lm:complete-ball}]
	
	We induct on the level $i$. When $i = 0$, \Cref{lm:complete-ball} holds trivially since there is no large node at level $0$. Assume that \Cref{lm:complete-ball} holds for all large nodes at levels lower than $i$. Since $x$ is large, there exists a leaf $v$ of $x$ whose $\deg_{<i}(v) \geq c_1 f$. Let $i' = i - 2\log{\lambda}$ and $x'$ be the ancestor of $(v, 0)$ at level $i'$. By \Cref{lm:bounded-surrogate}, $\deg_{<i'}(v) \geq c_1 f - 2\log{\lambda}\cdot \xi f$. We show that the number of $i'$-clean points in $\ball(u', 4r_{i'})$ is at least $6f + 6$. 
	
	For $k \in [0 \ldots i']$, let $(v_k, k)$ be the ancestor of $(v, 0)$ at level $k$. Let $l = \lceil 2\log{\lambda} \rceil$ and $i''$ be the highest level such that $\sum_{k = i''}^{i'}\md_{k}(\ball(v_k, 4r_k)) \geq (l + 1) \cdot 2\xi \cdot (6f + 6)$. The level $i''$ exists since $v \in \ball(v_k, 4r_k)$ for every $k$ by \Cref{clm:leaf-dist} and 
	\begin{equation*}
		\deg_{<i'}(v) \geq c_1 f - 2\log{\lambda}\cdot \xi f \geq 48\log{\lambda}\cdot \xi \cdot f \geq (l + 1) \cdot 2\xi \cdot (6f + 6) \qquad \text{(by the choice of $c_1$).}
	\end{equation*}
	We partition the set $I = \{i'', i'' + 1, \ldots i'\}$ into congruent classes $I_0, I_1, I_2, \ldots I_l$ of modulo $l + 1$. Formally, $I_\alpha = \{k \in I : k - \alpha \equiv 0 \mod l + 1\}$ for each $\alpha \in [0\ldots l]$. By the pigeonhole principle, there exists an integer $\beta \in [0 \ldots l]$ such that:
	\begin{equation}
		\sum_{k \in I_\beta} \md_k(\ball(v_k, 4r_k)) \geq 2\xi \cdot (6f + 6)
	\end{equation}
	For simplicity, assume that $i' \equiv i'' \equiv \beta \mod l + 1$. Let $I_{\beta} = \{i'' = i_0, i_1, i_2, \ldots i_s = i'\}$ with $i_j = i'' + j \cdot (l + 1)$ and $s = \frac{i' - i''}{l + 1}$. 
	
	By \Cref{lm:grow-clean}, for each $k \in [0, 1, \ldots {s - 1}]$, there exists a set $A_{i_k}$ of $i_{k + 1}$-clean points in $\ball(v_{i_{k + 1}}, r_{i_{k + 1}})$ such that $|A_{i_k}| \geq \lceil\frac{\md_{i_k}(\ball(v_{i_k}, 4r_{i_k}))}{\xi}\rceil$. For each $h \leq k$,  $A_{i_h}$ is a subset of $\ball(v_{i_{h + 1}}, 4r_{i_{h + 1}})$ and hence is a subset of $\ball(v_{i_{k + 1}}, 4r_{i_{k + 1}})$. Since $A_{i_{k + 1}} \cap \ball(v_{i_{k + 1}}, 4r_{i_{k + 1}}) = \emptyset$, $A_{i_h} \cap A_{i_{k + 1}} = \emptyset$ for every $h \leq k$, implying that $A_{i_1}, A_{i_2}, \ldots A_{i_{s - 1}}$ are pairwise-disjoint. See  \Cref{fig:growing-clean-ball} for an illustration of the inclusion relation between $A_{i_1}, A_{i_2}$ and $A_{i_3}$ with $\ball(v_{i_2}, 4r_{i_2}), \ball(v_{i_3}, 4r_{i_3})$ and $\ball(v_{i_4}, 4r_{i_4})$. One can see from \Cref{fig:growing-clean-ball} that, if we keep going up the path, we gain more clean points.
	\begin{center}
		\begin{figure}[H]
			\includegraphics[width=0.7\textwidth]{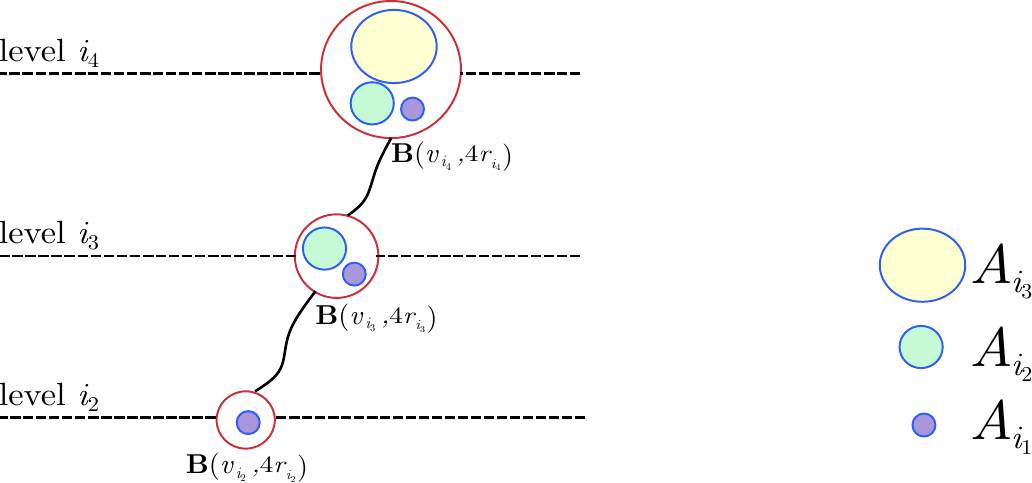}
			\caption{An example of $A_{i_1}, A_{i_2}$ and $A_{i_3}$. Here, $A_{i_1}, A_{i_2}$ and $A_{i_3}$ are disjoint. For each $k' \in \{1, 2, 3\}$, $A_{i_{k'}} \subseteq \ball(v_{i_{k}}, 4r_{i_{k}})$ for all $k > k'$.}
			\label{fig:growing-clean-ball}
		\end{figure}
	\end{center}
	We then bound the total number of points in $A_{i_k}$ for all $k \in [1, s-1]$: 
	\begin{equation}
		\begin{split}
			|\bigcup_{k = 1}^{s - 1}A_{i_k}| &= \sum_{t = 1}^{s - 1}|A_{i_k}| \geq \sum_{k = 1}^{s - 1}\lceil \frac{\md_{i_k}\ball(v_{i_k}, 4r_{i_k})}{\xi} \rceil \geq \sum_{k = 1}^{s - 1}\frac{\md_{i_k}\ball(v_{i_k}, 4r_{i_k})}{\xi}\\
			& = \left(\sum_{k = 1}^{s}\frac{\md_{i_k}\ball(v_{i_k}, 4r_{i_k})}{\xi}\right) - \frac{\md_{i_s}\ball(v_{i_s}, 4r_{i_s})}{\xi} \geq \frac{2\xi(6f + 6) - \xi f}{\xi} \geq 6f + 6.
		\end{split}
	\end{equation}
	
 In the next part of the proof, we show that only $f + 1$ of those clean points gained from lower levels turned into semi-saturated.  

Let $A = \bigcup_{k = 1}^{s - 1}A_{i_k}$. We prove by contradiction that at most $f + 1$ points in $A$ are not $i_s$-clean, which will give us the lemma. We have two observations. First, if a point is $k$-clean for some $k$, it is also $k'$-clean for every $k' \leq k$. Similarly, if a point is $k'$-saturated, it is also $k$-saturated for every $k \geq k'$. Hence, every point in $A$ is $i_0$-clean. Second, every point in $A$ is not saturated before iteration $i_s$.

\begin{claim}
	\label{clm:A-no-satuarted}
	There is no point in $A$ becoming saturated before iteration $i_s$.
\end{claim}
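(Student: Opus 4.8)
The plan is to fix an arbitrary point $v\in A$, bound its accumulated degree $\deg_{<i_s}(v)$ just before iteration $i_s$, and show it never exceeds $c_3f$. Since the degree of a point is non‑decreasing in the level and \cref{line:mark-saturated} only marks points whose current degree exceeds $c_3f$, such a bound shows $v$ is not marked saturated in any of the iterations $0,\dots,i_s-1$, which is exactly the claim. To locate $v$: because $A=\bigcup_{k=1}^{s-1}A_{i_k}$, there is some $k\in[1,s-1]$ with $v\in A_{i_k}$, and by \Cref{lm:grow-clean} (which produced $A_{i_k}$) the set $A_{i_k}$ consists of $i_{k+1}$-clean points, so $\deg_{<i_{k+1}}(v)\le c_2f$. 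Hence the only degree growth we must account for is the one incurred during iterations $i_{k+1},\dots,i_s-1$.

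Next I would confine $v$ to the balls whose $\md$-values we control. Again by \Cref{lm:grow-clean}, $A_{i_k}\subseteq\ball(v_{i_{k+1}},4r_{i_{k+1}})$, and $v_{i_{k+1}}$, the ancestor of $v$ at level $i_{k+1}$, is a descendant of $v_j$ for every $j\ge i_{k+1}$ since all these nodes lie on the leaf-to-root path of $v$. Applying \Cref{it:descendant-pool} of \Cref{prop:pool-prop} with $\alpha=4$ gives $\ball(v_{i_{k+1}},4r_{i_{k+1}})\subseteq\ball(v_j,4r_j)$, hence $v\in\ball(v_j,4r_j)$ and $\deg_j(v)\le\md_j\bigl(\ball(v_j,4r_j)\bigr)$ for every $j\in[i_{k+1},i_s-1]$. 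Summing,
\[
\deg_{<i_s}(v)=\deg_{<i_{k+1}}(v)+\sum_{j=i_{k+1}}^{i_s-1}\deg_j(v)\le c_2f+\sum_{j=i_{k+1}}^{i_s-1}\md_j\bigl(\ball(v_j,4r_j)\bigr).
\]
Since $k\ge1$ we have $i_{k+1}\ge i_2>i''$ and $i_s-1<i_s=i'$, so the summation range is contained in $[i''+1,i']$; and by the choice of $i''$ as the \emph{highest} level with $\sum_{j=i''}^{i'}\md_j(\ball(v_j,4r_j))\ge(l+1)\cdot2\xi(6f+6)$, the sum over $[i''+1,i']$ is strictly less than this threshold. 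Plugging this in and using the choice of the constants $c_1,c_2,c_3$ and $l=\lceil2\log\lambda\rceil$ from \Cref{alg:VFT-spanner} yields $\deg_{<i_s}(v)\le c_3f$, which completes the claim.

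The argument is essentially careful bookkeeping; the two delicate points are (i) ensuring the summation range $[i_{k+1},i_s-1]$ sits strictly above $i''$, so the maximality of $i''$ can be invoked for it — this is precisely why $A$ collects only the sets $A_{i_k}$ with $k\ge1$, keeping us away from the boundary level $i''$ — and (ii) the constant chase $c_2f+(l+1)\cdot2\xi(6f+6)\le c_3f$, using also $\md_j\le\xi f$ at every level (\Cref{lm:bounded-surrogate}) when the term at level $i''$ itself must be absorbed. I expect (ii), the interplay between the threshold defining $i''$ and the gap $c_3-c_2$, to be the only step requiring real care.
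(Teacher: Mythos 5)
Your proof is correct and follows essentially the same approach as the paper's: fix $w\in A_{i_k}$, note $w$ is $i_{k+1}$-clean so $\deg_{<i_{k+1}}(w)\le c_2f$, confine $w$ to $\ball(v_j,4r_j)$ for $j\ge i_{k+1}$ via the nested-ball property, and sum the $\md$-values over the levels up to $i_s-1$, then invoke the maximality of $i''$ to bound that sum. The only (very mild) divergence is in step 5: you restrict the summation range to $[i''+1,i']$ outright, which lets you apply the maximality of $i''$ directly, whereas the paper bounds the range by $[i'',i']$ and then separately splits off the level-$i''$ term, contributing an extra $+\xi f$ (which it absorbs into the same $c_3-c_2$ gap). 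Your version is marginally tighter but yields the identical conclusion. One small inaccuracy in your commentary: you suggest that $A$ starting at $k=1$ is ``precisely why'' the range stays above $i''$ — but even $k=0$ would give $i_{k+1}=i_1=i''+(l+1)>i''$, so the $k\ge1$ restriction in the definition of $A$ serves a counting purpose elsewhere in the proof of \Cref{lm:complete-ball} rather than being needed for this claim. That misattribution does not affect correctness.
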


\begin{proof}
	Let $w$ be an arbitrary point in $A$ and $k \in [0, s - 1]$ be the index that $w \in A_{i_k}$. Since $w$ is $i_{k + 1}$-clean, $\deg_{<i_{k + 1}}(w) \leq c_2f$. On the other hand, $w \in \ball(v_{i_{k + 1}}, 4r_{i_{k + 1}})$. Recall that for every $i_0 \leq h' \leq h \leq i_s$, $(v_h, h)$ is the ancestor of $(v_{h'}, h')$ and hence $\ball(v_{h'}, 4r_{h'}) \subseteq \ball(v_h, 4r_{h})$. Thus, for every $h \in [i_{k + 1}, i_s]$, $w$ is also in $\ball(v_{h}, 4r_{h})$, implying that $\deg_h(w) \leq \md_h(\ball(v_{h}, 4r_{h}))$. Therefore,
	\begin{equation}
		\label{eq:w-deg}
		\begin{split}
			\deg_{< i_s}(w) &= \deg_{< i_{k + 1}}(w) + \sum_{h = i_{k + 1}}^{i_s - 1}\deg_h(w) \leq c_2f + \sum_{h = i_{k + 1}}^{i_s - 1}\md_h(\ball(v_h, 4r_h))\\
			&\leq c_2f + \sum_{h = i''}^{i'}\md_h(\ball(v_h, 4r_h)) \qquad \text{(since $i'' = i_0 \leq i_{k + 1} \leq i_s = i'$)}
		\end{split}
	\end{equation}
	Recall that $i''$ is the highest index such that $\sum_{k = i''}^{i'}\md_{k}(\ball(v_k, 4r_k)) \geq (l + 1) \cdot 2\xi \cdot (6f + 6)$. By the maximality of $i''$, $\sum_{k = i'' + 1}^{i'}\md_{k}(\ball(v_k, 4r_k)) < (l + 1) \cdot 2\xi \cdot (6f + 6)$, implying that:
	\begin{equation}
		\label{eq:i''-bound}
		\begin{split}
			\sum_{k = i''}^{i'}\md_{k}(\ball(v_k, 4r_k)) &\leq \sum_{k = i'' + 1}^{i'}\md_{k}(\ball(v_k, 4r_k)) + \md_{i''}(\ball(v_{i''}, 4r_{i''}))\\
			&\leq (l + 1) \cdot 2\xi \cdot (6f + 6) + \xi f \leq (c_3 - c_2)f.
		\end{split}
	\end{equation}
	The last equation holds by the choice of $c_2$ and $c_3$. By \Cref{eq:w-deg} and \Cref{eq:i''-bound}, we get $\deg_{<i_s}(w) \leq c_3f$ and hence, $w$ is non-saturated before iteration $i_s$.
\end{proof}

Assume that $A$ contains more than $f + 1$ non-clean points before iteration $i_s$. Let $(x, y)$ be the first cross edge in $E^*$ such that after $M(S(x), S(y))$ is added to $H$, $A$ has more than $f + 1$ non-clean points. Let $j$ be the level of $(x, y)$ and $t$ be index such that $i_t \leq j < i_{t + 1}$ $(0 \leq t < s)$. Let $\tau{(x, y)}$ be the time when $M(S(x), S(y))$ is added to $H$. Since all points in $A$ are not $i_s$-saturated by \Cref{clm:A-no-satuarted}, $A$ contains only clean and semi-saturated points before $\tau{(x, y)}$. Furthermore, the number of semi-saturated points in $A$ is at most $f + 1$ before $\tau{(x, y)}$. Let $A_{< i_t} = \bigcup_{k < t}A_{i_k}$ and $A_{\geq i_t} = A \setminus A_{< i_t} =  \bigcup_{k \geq t}A_{i_k}$. Since all points in $A_{\geq i_t}$ are $i_{t + 1}$-clean, they are also clean after $\tau{(x, y)}$. Then, all semi-saturated points in $A$ before and after $\tau{(x, y)}$ are in $A_{< i_t}$. Recall that $A_{< i_t} \subseteq \ball(u_{i_t}, 4r_{i_t})$; hence, $\diam(A_{< i_t}) \leq 8r_{i_t} \leq 8r_j$ as $i_t \leq j$. Since the number of clean points in $A_{< i_t}$ before $\tau(x, y)$ is smaller than that after $\tau(x, y)$, there must be a clean point in $A_{< i_t}$ becoming semi-saturated. Thus, $S(x)$ or $S(y)$ must contain a clean point in $A_{< i_t}$. However, by \Cref{cor:f+1-non-clean}, there are at most $f + 1$ semi-saturated points in $A_{< i_t}$ after $\tau(x, y)$, contradicted to the assumption that $A$ contains more than $f + 1$ non-clean points after $\tau(x, y)$. 

Therefore, the number of $i_s$-clean points in $A$ is at least $(6f + 6) - (f + 1) = 5f + 5$. Since $A \subseteq \ball(u_{i_s}, 4r_{i_s})  = \ball(u_{i'}, 4r_{i'})$ (recall $i_s = i'$), $\ball(u_{i'}, 4r_{i'})$ contains at least $5f + 5$ clean points at level $i'$. Using \Cref{cor:small-change-constant-lv}, we obtain that the number of $i$-clean points in $\ball(u, 4r_{i})$ is at least $(5f + 5) - (f + 1) = 4f + 4$.
\end{proof}
	\section{Fault-tolerance}
\label{sec:connectivity}
In this section, we prove the fault tolerance property, meaning that after removing any $f$ points from $H$, the remaining graph is still a spanner. Throughout this section, we assume that $\eps \leq \frac{1}{20}$.

\begin{lemma}
	\label{lm:vertex-disjoint-path}
 	\Cref{alg:VFT-spanner} produces a $f$-VFT $(1 + 5\eps)$-spanner $H$ of $X$, i.e., for any set $F$ of at most $f$ points in $X$, $H[X - F]$ is a $(1 + 5\eps)$-spanner of $(X - F, \nd)$. 
\end{lemma}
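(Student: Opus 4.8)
The plan is to prove the following stronger statement by induction on the level $i_{uv}$ of the original cross edge $(\ddot u,\ddot v)$ of $(u,v)$ (equivalently, up to the $\Theta(\lambda)$ factor of \Cref{prop:lightness-properties}, on $\nd(u,v)$): for every $F\subseteq X$ with $|F|\le f$ and every non-faulty pair $u,v\in X\setminus F$, the graph $H[X\setminus F]$ contains a $(1+5\eps)$-spanner path between $u$ and $v$. Every length estimate below is driven by three facts: a surrogate of a level-$i$ node lies within $16r_i$ of it (\Cref{def:sorrogate}); a node lies within $2r_i$ of each of its leaves (\Cref{clm:leaf-dist}); and an original cross edge at level $i$ has length $\Theta(\lambda r_i)$, so $r_i=O(\eps)\cdot(\text{that length})$ since $\lambda=5^{20}(1+1/\eps)$. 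The induction splits into two cases according to whether the original cross edge of $(u,v)$ ``lies in the LNF''; the base case $i_{uv}=0$ is absorbed into the first case below.

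\textbf{Low-level case: $\ddot u$ or $\ddot v$ is incomplete.} First I would show $(\ddot u,\ddot v)\in E^*$: if, say, $\ddot v$ is incomplete, then while \Cref{alg:VFT-spanner} processes $\ddot v$ at level $i_{uv}$, \cref{line:check-Dx}--\ref{line:add-NCx} fire (either $\ddot v$ itself, or each of its children, is small with at most $f$ leaves), so all long-enough cross edges of $\Cr(NC[\ddot v])$ enter $E^*$, and $(\ddot u,\ddot v)$ is one of them since $\ddot u\in NC[\ddot v]$ and $64r_{i_{uv}}\le\lambda r_{i_{uv}}/6\le\nd(\ddot u,\ddot v)\le\lambda r_{i_{uv}}$ by \Cref{prop:lightness-properties}. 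Hence $M(S(\ddot u),S(\ddot v))\subseteq E(H)$ is the full clique $S(\ddot u)\times S(\ddot v)$ (one side has size $\le f$), and both sides survive $F$: the incomplete side is a leaf set containing the non-faulty endpoint, and the other side is either the same kind of set or (by \Cref{lm:complete-node-complete}) has $f+1$ points of which at most $f$ are deleted. Picking a surviving edge $(a,b)\in M$ gives $\nd(a,u),\nd(b,v)\le 18r_{i_{uv}}$, an $O(\eps)$-fraction of $\nd(u,v)$, so the inductive hypothesis supplies $(1+5\eps)$-spanner paths between $u,a$ and between $b,v$ in $H[X\setminus F]$; gluing these to $(a,b)$ yields a path of weight at most $(1+5\eps)\cdot 36r_{i_{uv}}+\big(\nd(\ddot u,\ddot v)+32r_{i_{uv}}\big)\le(1+\eps)\nd(u,v)+O(\eps/\lambda)\nd(u,v)\le(1+5\eps)\nd(u,v)$, using $\nd(\ddot u,\ddot v)\le(1+\eps)\nd(u,v)$.

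\textbf{High-level case: $\ddot u$ and $\ddot v$ both complete.} Here $(\ddot u,\ddot v)$ need not belong to $E^*$, so I would route along the fixed shortest path $P=(u=u_1,\dots,u_l=v)$ in $G$ (weight $\le(1+\eps)\nd(u,v)$, and every sub-path of $P$ is again a $G$-shortest path), processing it from $u_1$ while keeping a non-faulty vertex already joined to $u$ and an index $j$ with that vertex near $u_j$. Looking at $(u_j,u_{j+1})$: if $u_{j+1}\notin F$ and this edge is \emph{replaceable} --- one of the cross edges between the ancestors of $u_j$ and $u_{j+1}$ at levels from $\ell_j$ (the level of the original cross edge of $(u_j,u_{j+1})$) up to $\ell_j+5\log\lambda$, all of which sit in $E^*$ thanks to the $\Aug(\hat u_j,0,5\log\lambda)\cup\Aug(\hat u_{j+1},0,5\log\lambda)$ step, has a surviving bipartite-connection edge --- I take that edge and advance $j$ by one; otherwise I invoke the key claim below to jump to some $u_i$ with $i>j+1$ and advance $j$ to $i$. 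Between two consecutive hops the two surrogates of the shared anchor $u_j$ used by the incoming and the outgoing hop may differ, but their distance is $O(r)$ for the smaller level $r$ of the two incident cross edges, an $O(\eps)$-fraction of the stretch just crossed, and I bridge it by the inductive hypothesis; the $\Aug(\cdot,0,5\log\lambda)$ edges are exactly what make the two cross edges needed for such a bridge --- possibly at different levels within $5\log\lambda$ of one another --- simultaneously available. Since a size-$(f+1)$ matching always keeps an edge after $\le f$ faults, each hop-edge is a $(1\pm O(\eps))$-approximation of its $P$-stretch, and, summing all hop-edges together with all bridges (the latter totalling an $O(\eps)$-fraction of $\weight(P)$) by a geometric-series accounting with $\eps\le\frac1{20}$, the total weight is at most $(1+5\eps)\nd(u,v)$.

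\textbf{The main obstacle --- the key claim.} The crux, as \Cref{techhighlight} warns, is to prove: \emph{if $(u_j,u_{j+1})$ is not replaceable, then $E^*$ contains a cross edge $(\tilde u_j,\tilde u_i)$, with $\tilde u_j$ an ancestor of $u_j$ and $\tilde u_i$ an ancestor of some $u_i$ with $i>j+1$, whose length $(1\pm O(\eps))$-approximates $\dist_P(u_j,u_i)$ and whose bipartite connection retains an edge after deleting $F$.} The first observations are that non-replaceability forces, at \emph{every} level in $[\ell_j,\ell_j+5\log\lambda]$, the ancestor of $u_{j+1}$ to be incomplete with a clique-type bipartite connection whose surviving $u_j$-side (it contains the non-faulty leaf $u_j$) means that the deleted side is the $u_{j+1}$-side --- so the level-$(\ell_j+5\log\lambda)$ ancestor $Y$ of $u_{j+1}$ is incomplete and \emph{all of its $\le f$ leaves lie in $F$}; and that the root is complete since $n\ge f+2$, so such incompleteness cannot persist upward forever (by \Cref{clm:parent-complete} the incomplete ancestors of $u_{j+1}$ form a prefix of its root-path). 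Since $v=u_l\notin F$ is not a leaf of $Y$, $P$ must leave the subtree of $Y$: there is a first $u_i$ ($i>j+1$) that is not a leaf of $Y$, and $u_j,\dots,u_i$ all lie within $O(r_{\lvl(Y)})$ of $Y$ while $\dist_P(u_j,u_i)=\Theta(\lambda r_{\ell_j})$ is of comparable order; the ancestor of $u_i$ at level $\lvl(Y)$ (or a cross-neighbour of it), together with the corresponding ancestor of $u_j$, forms a cross edge that the $\Aug(\hat u_j,0,5\log\lambda)$ / $\Cr(NC[\cdot])$ machinery has placed in $E^*$, has both endpoints complete (hence a surviving size-$(f+1)$ matching by \Cref{lm:complete-node-complete}), and has the required length. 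The delicate, genuinely hard part will be making ``$P$ leaves $Y$'s subtree within $5\log\lambda$ levels and not too far along $P$, at a node that is complete'' fully quantitative --- controlling how high up the tree and how far along $P$ one must travel, ruling out the competing possibility that the relevant ancestor becomes complete only by acquiring a high-degree leaf, and discharging all the $(1\pm O(\eps))$ length bounds; this is precisely where the $5\log\lambda$ slack in $\Aug(\cdot,0,5\log\lambda)$, the factor-$5$ radii of the greedy net-tree, and \Cref{lm:complete-node-complete} are all brought to bear.
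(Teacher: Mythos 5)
Your skeleton --- induction on $\nd(u,v)$; a low-level case where the original cross edge has an incomplete endpoint and is therefore placed in $E^*$ with a clique-type bipartite connection whose resilience to $F$ comes from the non-faulty endpoint living in the leaf set (this is exactly \Cref{it:non-fault} of \Cref{prop:detour}); and a high-level case where you march along the $G$-shortest path $P$, replacing or jumping past edges and bridging the surrogate mismatches by induction --- is the paper's skeleton, and the length accounting you sketch (good cross edges $(1\pm O(\eps))$-approximate their $P$-stretch, bridges cost an $O(\eps)$-fraction, geometric sums with $\eps\le 1/20$ close the induction) matches the paper's Case~1/Case~2 analysis. You have also correctly isolated the crux: it is the paper's \Cref{lm:detour-exist}, that when $(u_j,u_{j+1})$ cannot be replaced directly, $E^*$ still contains a cross edge spanning from near $u_j$ to near some later anchor on $P$ with enough surviving surrogates on both sides.

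But your sketch of that crux has a genuine gap, precisely where you flag it. After establishing that the level-$(\ell_j+5\log\lambda)$ ancestor $Y$ of $u_{j+1}$ is incomplete with all of its at most $f$ leaves in $F$, and that $P$ must therefore exit $T(Y)$ at some first $u_i$, you assert that the ancestor of $u_i$ at level $\lvl(Y)$, together with the corresponding ancestor of $u_j$, forms a cross edge in $E^*$ ``with both endpoints complete.'' Nothing forces the $u_i$-side to be complete: it can just as well be another incomplete node belonging to a different LNF subtree, in which case \Cref{lm:complete-node-complete} gives you nothing and the clique on the $u_i$-side may have no surviving surrogate at all. The paper closes this by making the jump \emph{iterative}: \Cref{lm:detour-edge} produces a $P$-jump whose near endpoint is complete but whose far endpoint need not be; whenever the far endpoint is still incomplete one appends another jump, building a $P$-stair-jump (\Cref{def:detour}); \Cref{clm:stair-weight} keeps the levels and the total stretch under control so the construction cannot drift arbitrarily high; \Cref{lm:shortcut-crosspath} then collapses any stair-jump with a \emph{complete} tail into a single good cross edge with both endpoints complete (using \Cref{clm:augmented} to certify that the collapsed cross edge really was put into $E^*$ by the $\Aug(\cdot,0,5\log\lambda)$ rule); and \Cref{clm:short-tail-cross} handles the remaining case, where the tail stays incomplete all the way, by showing that the incomplete subtree is then within cross-edge reach of both $u$ and $v$, so that a direct good cross edge of $(u,v)$ was already placed in $E^*$ via \cref{line:add-NCx}. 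None of this recursion appears in your sketch, and without it the single-exit argument does not close. A second, quantitative issue: a cross edge at level $\lvl(Y)=\ell_j+5\log\lambda$ is not automatically a ``good'' cross edge --- \Cref{it:good-approx-weight} needs $\kappa\le 10$ --- so the claimed $(1\pm O(\eps))$ length approximation is not immediate either; the paper handles this through the two sub-cases of \Cref{lm:detour-edge}, depending on whether the jump's original cross edge lies within $3\log\lambda$ levels of the incomplete node or strictly above. Finally, the obstruction you call out --- ``the relevant ancestor becoming complete only by acquiring a high-degree leaf'' --- is not actually a problem: by \Cref{lm:complete-node-complete} such a node still has $|S(\cdot)|=f+1$, which is all that the surviving-matching argument uses.
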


Equivalently, we need to show that for every pair of points $(u, v)$ and every set $F \subseteq X \setminus \{u, v\}$ of at most $f$ points, there is a path from $u$ to $v$ in $H - F$ whose length is at most $(1 + \eps)\nd(u, v)$. We prove by induction on the length $\nd(u, v)$. To find a short path from $u$ to $v$ in $H - F$, we follow the shortest path $P$ from $u$ to $v$ in $G$. Intuitively, we find a cross edge, denoted by $(x, y)$, in $E^*$ from an ancestor $x$ of (the leaf corresponding to) $u$ to the ancestor $y$ of some point $u'$ between $u$ and $v$ in $P$. Hence, we create a path in $H - F$ from $u$ to a surrogate of $x$ to a surrogate of $y$, and recursively do the same for the path from $u'$ to $v$. For this method to work, we need $(x, y)$ to have two properties:

\begin{itemize}
	\item $\nd(x, y)$ is a good approximation of $(u, u')$. This is the case when $(x, y)$ is a low ancestor of the original cross edge of $(u, u')$ (the formal definition of an ancestor of a cross edge will be given later). 
	\item Each surrogate set of $x$ and $y$ has $f + 1$ points; otherwise, if one surrogate set, say $S(x)$, has less than $f + 1$ points, then there is no non-faulty edges in the bipartite connection $M(S(x), S(y))$ in case $F = S(x)$ (a faulty edge is an edge with at least one end in $F$). By \Cref{lm:complete-node-complete}, a complete node always has $f + 1$ points in its surrogate. Hence, we find $x$ among the complete ancestors of $(u, 0)$. 
\end{itemize}

For each incomplete node $x$ in $T$, the \textit{lowest complete ancestor} (LCA) of $x$, denoted by $\LCA(x)$, is the parent of the almost complete ancestor of $x$. Recall that an almost complete node is an incomplete node whose parent is complete. For each point $u \in X$, let $\LCA(u) = \LCA(u, 0)$. For each cross edge $(x, y)$, a cross edge $(\tilde{x}, \tilde{y})$ is an ancestor (parent) of $(x, y)$ if $\tilde{x}$ and $\tilde{y}$ are ancestors (parents) of $x$ and $y$, respectively. For each pair $(u, v) \in X^2$, assume that $(\hat{u}, \hat{v})$ is the original cross edge of $(u, v)$. The \emph{$\kappa$-cross edge} of $(u, v)$ is the ancestor of $(\hat{u}, \hat{v})$ at level $\lvl(\hat{u}) + \kappa$. If $\kappa \leq 10$, we call $(x, y)$ a \emph{good cross edge} of $(u, v)$. 

Recall that an original cross edge of $(u, v)$  is the lowest-level cross edge $(x, y)$ such that $x$ and $y$ are ancestor of $(u, 0)$ and $(v, 0)$. For each level-$i$ node $x$ and a level $j$, consider two cases:
\begin{itemize}
	\item If $j \geq i$, $\Aug_j(x)$ is $\Cr(NC[x'])$, with $x'$ is the ancestor of $x$ at level $j$.
	\item If $j < i$, $\Aug_j(x)$ is the union among all descendants $x''$ of $x$ at level $j$ of $\Cr(NC[x''])$.
\end{itemize}
Recall that $\Cr(A)$ is the set of cross edges between nodes in $A$ for every $X \subseteq V(T)$.
\begin{property} We have the following properties: 
	\label{prop:connectivity}
	\begin{enumerate}
		\item \label{it:good-cross-edge-long} Let $(u, v)$ be any pair of points in $X$, $i$ be the level of the original cross edge of $(u, v)$  and $\kappa$ be any integer in $[0, \zeta - i]$. For every $\kappa$-cross edge $(\tilde{u}, \tilde{v})$ at level $j = i + \kappa$, $\nd(\tilde{u}, \tilde{v}) \geq r_{j} \cdot \left(\frac{\lambda}{6\cdot5^{\kappa}} - 4\right)$. Furthermore, if $\kappa \leq 10$, $\nd(\tilde{u}, \tilde{v}) \geq 64r_{j}$.

		\item \label{it:good-approx-weight} Let $(u, v)$ be any pair of points in $X$, $(\tilde{u}, \tilde{v})$ be any good cross edge of $(u, v)$ and $j$ be the level of $(\tilde{u}, \tilde{v})$. For every two points $u' \in \ball(\tilde{u}, 16r_j)$ and $v' \in \ball(\tilde{v}, 16r_j)$, $1 - 5^{-3}\eps \leq \frac{\nd(u', v')}{\nd(u, v)} \leq 1 + 5^{-3}\eps$.
		
		\item \label{it:node-to-augmented-dist} Let $x$ be any node in $T$, $i$ be a level in $T$ such that $\lvl(x) \leq i \leq \zeta$ and $(y, z)$ be any level-$i$ cross edge in $\Aug_i(x)$. Then, $\nd(x, y) \leq (\lambda + 2)r_i$.
		
		\item \label{it:level-weight-relation}
		For any original cross edge $(\hat{u}, \hat{v})$, $\log{\frac{\nd(\hat{u}, \hat{v})}{\lambda}} \leq \lvl(\hat{u}) < \log{\frac{\nd(\hat{u}, \hat{v})}{\lambda}} + 2$.
		
		\item \label{it:cross-edge-ancestor}
		Let $(x, y)$ be a cross edge at level $i$ ($\nd(x, y) \leq \lambda r_i$). For any ancestor $(x', y')$ of $(x, y)$ at level $i'$, $(x', y')$ is also a cross edge, i.e., $\nd(x', y') \leq \lambda r_{i'}$, which implies that $x' \in NC[y']$ and $y' \in NC[x']$.
	\end{enumerate}
\end{property}

\begin{proof}
	\emph{\Cref{it:good-cross-edge-long}:} Let $(\hat{u}, \hat{v})$ be the original cross edge of $(u, v)$. By \Cref{it:original-weight} of \Cref{prop:lightness-properties}, $\nd(\hat{u}, \hat{v}) \geq \lambda r_i / 6$. By \Cref{clm:leaf-dist}, $\nd(\hat{u}, \tilde{u}), \nd(\hat{v}, \tilde{v}) \leq 2r_{j}$. By the triangle inequality,
	
	\begin{equation}
		\nd(\tilde{u}, \tilde{v}) \geq \nd(\hat{u}, \hat{v}) - \nd(\tilde{u}, \hat{u}) - \nd(\tilde{v}, \hat{v}) \geq \lambda r_i / 6 - 4r_{j} = r_j\left(\frac{\lambda}{6 \cdot 5^{\kappa}} - 4\right) \qquad,
	\end{equation}
	as claimed. When $\kappa \leq 10$, $\nd(\tilde{u}, \tilde{v}) \geq r_j \cdot \left(\frac{\lambda}{6 \cdot 5^{10}} - 4\right) = r_j \cdot (5^{8}(1 + \eps^{-1}) - 4) \geq 64r_j$ as $\lambda = 5^{20}(1 + \eps^{-1})$.
	
	\emph{\Cref{it:good-approx-weight}:}	Let $(\hat{u}, \hat{v})$ be the original cross edge of $(u, v)$ and $i = \lvl(\hat{u})$. By \Cref{clm:leaf-dist}, $\nd(\tilde{u}, \hat{u}), \nd(\tilde{v}, \hat{v}) \leq 2r_{j}$.
	Using the triangle inequality, we have: 
	\begin{equation}
		\begin{split}
			\nd(u' , v') &\leq \nd(\hat{u}, \hat{v}) + \nd(u', \hat{u}) + \nd(v', \hat{v}) \\
			&\leq \nd(\hat{u}, \hat{v}) + \underbrace{\nd(u', \tilde{u})}_{\leq 16r_{j}} + \underbrace{\nd(v', \tilde{v})}_{\leq 16r_{j}} + ~\nd(\tilde{u}, \hat{u}) + \nd(\tilde{v}, \hat{v})\\
			&\leq \nd(\hat{u}, \hat{v}) + 16r_{j} + 16r_{j} + 2r_{j} + 2r_{j} \leq \nd(\hat{u}, \hat{v}) + 36r_{j} \leq \nd(\hat{u}, \hat{v}) + 5^{13}r_{i}.\\ 
		\end{split}
	\end{equation}
	Similarly, $\nd(u' , v') \geq \nd(\hat{u}, \hat{v}) - 5^{13}r_{i}$. 
	\begin{equation}
		\begin{split}
			\frac{\nd(u', v')}{\nd(u, v)} &\leq \frac{\nd(\hat{u}, \hat{v}) + 5^{13} r_i}{\nd(\hat{u}, \hat{v}) - 4r_i} = 1 + \frac{5^{13}r_i + 4r_i}{\nd(\hat{u}, \hat{v}) - 4r_i} = 1 + \frac{5^{13} + 4}{\nd(\hat{u}, \hat{v})/r_i - 4} \\
			&\leq 1 +\frac{5^{13} + 4}{\lambda/6 - 4} \qquad \text{(by \Cref{it:original-weight} of \Cref{prop:lightness-properties})} \\
			&= 1 + \frac{5^{13} + 4}{5^{19}\left(1 + 1/\eps\right) - 6} \leq 1 + 5^{-3}\eps \qquad ,
		\end{split}
	\end{equation}
	Using similar argument, $\frac{\nd(u', v')}{\nd(u, v)} \geq 1 - 5^{-3}\eps.$
	
	\emph{\Cref{it:node-to-augmented-dist}:} Let $x_i$ be the ancestor of $x$ at level $i$. Since $(y, z) \in \Aug_i(x)$, $y$ and $z$ are both in $NC[x_i]$. Hence, $\nd(x_i, y) \leq \lambda r_i$. By the triangle inequality, we have $\nd(x, y) \leq \nd(x, x_i) + \nd(x_i, y) \leq 2r_i + \lambda r_i \leq (\lambda + 2)r_i$ since $\nd(x, x_i) \leq 2r_i$ by \Cref{clm:leaf-dist}.
	
	\emph{\Cref{it:level-weight-relation}: } Let $i = \lvl(\hat{u})$. By the definition of cross edges, $\nd(\hat{u}, \hat{v}) \leq \lambda r_i$. By \Cref{it:original-weight} of \Cref{prop:lightness-properties}, $\nd(\hat{u}, \hat{v}) \geq \lambda r_i / 6$. Taking the logarithm of both sides, we obtain the desired inequality.
	
	\emph{\Cref{it:cross-edge-ancestor}: } We only need to show that \Cref{it:cross-edge-ancestor} is true when $(x', y')$ is the parent of $(x, y)$ ($i' = i + 1$), the result for any ancestor of $(x, y)$ follows by induction. By \Cref{clm:leaf-dist}, $\nd(x', x), \nd(y', y) \leq 2r_{i'}$. Using the triangle inequality,
	\begin{equation}
		\begin{split}
			\nd(x', y') &\leq \nd(x', x) + \nd(x, y) + \nd(y, y') \leq 2r_{i'} + \lambda r_i + 2r_{i'} \\
			&\leq \left(\lambda / 5 + 4\right)r_{i'} \leq \lambda r_{i'} \qquad \text{(since $r_{i'} = 5r_i$ and $\lambda > 5$)},
		\end{split}
	\end{equation}
	as claimed.
\end{proof}

By \Cref{it:good-cross-edge-long}, a good cross edge $(x, y)$ at level $i$ is always longer than $64r_i$. Hence, when \Cref{alg:VFT-spanner} discovers a good cross edge $(x, y)$ either in \cref{line:augm} or \cref{line:add-NCx}, $(x, y)$ is always added to $E^*$.

Given a path $P = (u_1, u_2, \ldots u_l)$ in $G$, a cross edge $(x, y)$ is a \emph{$P$-detour} if:

\begin{itemize}
	\item $(x, y)$ is a good cross edge of $(u_1, u_l)$ or 
	\item $(x, y)$ is a good cross edge of $(u_1, u_k)$ for some integer $k \in (1, l)$ and both $x$ and $y$ are complete.
\end{itemize}

We consider some properties of a $P$-detour:
\begin{property} \label{prop:detour}Let $u$ and $v$ be two points in $X$ and $P$ be the shortest path from $u$ to $v$ in $G$. Let $(x, y)$ be a $P$-detour in $E^*$ and $i = \lvl(x)$ ($= \lvl(y)$). We have the following properties:
	\begin{enumerate}
		\item \label{it:non-fault} There exists a non-faulty edge $(w, z)$ in $H$ such that $w \in \ball(x, 16r_i)$ and $z \in \ball(y, 16r_i)$.
		\item \label{it:radius-small}$r_i \leq 5^{-8}\eps\cdot\min\{\nd(u, v), \nd(x, y)\}$.
	\end{enumerate}
\end{property}

\begin{proof}
	\Cref{it:non-fault}: Consider the time we add $M(S(x), S(y))$ to $H$ in \cref{line:H-update}. We show that there is at least one remaining edge in $M(S(x), S(y))$ after $f$ points (not including $u$ or $v$) is deleted.
	
	If $x$ and $y$ are complete, $S(x)$ and $S(y)$ contains $f + 1$ points each, and by \Cref{def:bipartite-conn}, $M(S(x), S(y))$ is a matching of size $f + 1$. Hence, when we delete $f$ points, there is still at least one remaining edge in $M(S(x) , S(y))$. By \Cref{alg:Surrogate}, $S(x) \subseteq \ball(x, 16r_i)$ and $S(y) \subseteq \ball(y, 16r_i)$, which completes our proof. 
	
	If either $S(x)$ or $S(y)$ is incomplete, then $x$ and $y$ are ancestors of $(u, 0)$ and $(v, 0)$ by the definition of a $P$-detour. By \Cref{def:bipartite-conn}, $M(S(x), S(y))$ is the complete bipartite graph between $S(x)$ and $S(y)$. Hence, we only need to show that $S(x) \setminus F$ and $S(y) \setminus F$ are non-empty. This is true if $|S(x)| = f + 1$ or $|S(y)| = f + 1$. If $|S(x)| \leq f$, since the surrogate set of an incomplete node must contain all of its leaves, $S(x)$ must contain $u$. Then, $S(x) \setminus F$ is non-empty since $u \not \in F$. Similarly, $S(y) \setminus F$ is non-empty.
	
	\Cref{it:radius-small}: Since $(x, y)$ is a good cross edge of $(u, u')$ for some $u' \in P$, $(x, y)$ is a $\kappa$-cross edge of $(u, u')$ with $\kappa \leq 10$. From \Cref{it:good-cross-edge-long}, we have $\nd(x, y) \geq r_i\left(\frac{\lambda}{6\cdot5^{\kappa}}- 4\right)$. Hence,
	\begin{equation}
		\label{eq:ri-fi-bound}
		\begin{split}
			r_i &\leq \nd(x, y) \cdot \left(\frac{\lambda}{6 \cdot 5^{\kappa}} - 4\right) ^ {-1} \leq \frac{\nd(u, u')}{1 + 5^{-3}\eps} \cdot  \left(\frac{\lambda}{6\cdot 5^{\kappa}} - 4\right) ^ {-1} \qquad \text{(by It. \ref{it:good-approx-weight} of Prop. \ref{prop:connectivity})}\\
			&\leq \frac{\dist_G(u, u')}{1 + 5^{-3}\eps} \cdot  \left(\frac{\lambda}{6 \cdot 5^{\kappa}} - 4\right) ^ {-1} \leq  \frac{\dist_G(u, v)}{1 + 5^{-3}\eps} \cdot  \left(\frac{\lambda}{6 \cdot 5^{\kappa}} - 4\right) ^ {-1} \qquad\text{(since $P$ contains $u'$)} \\
			&\leq \frac{(1 + \eps)\nd(u, v)}{1 + 5^{-3}\eps}\left(\frac{\lambda}{6\cdot 5^{\kappa}} - 4\right) ^ {-1} \leq \frac{(1 + \eps)\nd(u, v)}{1 + 5^{-3}\eps}\left(\frac{\lambda}{6 \cdot 5^{10}} - 4\right) ^ {-1} \qquad \text{(since $\kappa \leq 10$)}\\
			&= \frac{1 + \eps}{1 + 5^{-3}\eps} \cdot \frac{6\cdot5^{10}}{5^{20}(1 + \eps^{-1}) - 24\cdot5^{10}}\nd(u, v) \leq 5^{-8}\eps \nd(u, v)
		\end{split}
	\end{equation}
	The first equation of \Cref{eq:ri-fi-bound} also gives us:
	\begin{equation}
		\begin{split}
			r_i &\leq \nd(x, y) \left(\frac{\lambda}{6\cdot 5^{\kappa}} - 4\right) ^ {-1} \leq \nd(x, y) \cdot \left(\frac{\lambda}{6\cdot5^{10}} - 4\right) ^ {-1} \qquad \text{(since $\kappa \leq 10$)}\\
			&= \frac{6 \cdot 5^{10}}{5^{20}(1 + \eps^{-1}) - 24 \cdot 5^{10}}\nd(x, y) \leq 5^{-8}\eps\nd(x, y),
		\end{split}
	\end{equation}
	as claimed.
\end{proof}

We have the following lemma:

\begin{lemma}
	\label{lm:detour-exist}
	For every pair of points $(u, v)$ in $X$, let $P$ be any shortest path from $u$ to $v$ in $G$. Then, there exists a $P$-detour in $E^*$. 
\end{lemma}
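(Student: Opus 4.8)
The plan is to exhibit an explicit cross edge of $E^*$ that is a $P$-detour, where $P=(u_1=u,\dots,u_l=v)$, via a case analysis on the level $i^*$ of the original cross edge $(\hat u,\hat v)$ of $(u,v)$ and on $\ell_1:=\lvl(\LCA(u))$, $\ell_v:=\lvl(\LCA(v))$. Three facts drive it. (i)~Every prefix $(u_1,\dots,u_k)$ of $P$ is again a shortest path in $G$, so $\nd(u_1,u_2)\le\dist_G(u_1,u_k)\le(1+\eps)\nd(u_1,u_k)$; combined with \Cref{it:level-weight-relation}, \Cref{it:original-weight}, \Cref{it:approx-weight} and \Cref{clm:leaf-dist}, this forces the level $i^*_k$ of the original cross edge of $(u_1,u_k)$ to satisfy $i^*_k\ge i_1-O(1)$, writing $i_1:=i^*_2$. (ii)~By \Cref{obs:all-leaves} and \Cref{clm:parent-complete}, the ancestor of $(u_1,0)$ at level $j$ is incomplete iff $j<\ell_1$; consequently the ancestor of $(u_1,0)$ at every level $j'<\ell_1+5\log\lambda$ has an incomplete descendant within $5\log\lambda$ levels, so it is marked problematic on \cref{line:check-Dx}, and all cross edges of $\Cr(NC[\cdot])$ of weight $\ge 64r_{j'}$ enter $E^*$ on \cref{line:add-NCx} (and symmetrically around $(v,0)$, and around each $(u_k,0)$ below $\lvl(\LCA(u_k))+5\log\lambda$). (iii)~Processing $(u_1,u_2)\in E(G)$ puts into $E^*$, via \crefrange{line:augm}{line:Estar-update} (these being cross edges by \Cref{it:cross-edge-ancestor} and long by \Cref{it:good-cross-edge-long}), every $\kappa$-cross edge of $(u_1,u_k)$ with $\kappa\le10$ and $i^*_k+\kappa\in[i_1,i_1+5\log\lambda]$ — a non-empty set of $\kappa$'s whenever $i^*_k\le i_1+5\log\lambda$, by~(i).

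The easy regimes: if $l=2$, the original cross edge of $(u,v)$ is in $E^*$ by \cref{line:add-origin-CE} and is a good cross edge of $(u_1,u_l)$, hence a $P$-detour; so assume $l\ge3$. \textbf{Case A} ($i^*\le i_1+5\log\lambda$): by~(iii) with $k=l$, a good cross edge of $(u,v)$ lies in $E^*$, a $P$-detour by the first clause. \textbf{Case B1} ($i^*>i_1+5\log\lambda$, and $\ell_1>i^*-5\log\lambda$ or $\ell_v>i^*-5\log\lambda$): say the former; then $\hat u$ (the ancestor of $(u_1,0)$ at level $i^*$) is problematic by~(ii), and since $(\hat u,\hat v)$ is a cross edge with $\hat v\in NC[\hat u]$ and $\nd(\hat u,\hat v)\ge\lambda r_{i^*}/6\ge64r_{i^*}$ (\Cref{it:original-weight}), it enters $E^*$ and is a $P$-detour. (That $i^*<\zeta-18$, so these nodes are really processed in Phase~2, follows from $\nd(\hat u,\hat v)\le\Delta+O(r_{i^*})$ against $\nd(\hat u,\hat v)\ge\lambda r_{i^*}/6$.)

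The core is \textbf{Case B2}: $i^*>i_1+5\log\lambda$ and $\ell_1,\ell_v\le i^*-5\log\lambda$ (so $\hat u,\hat v$ are already complete, but $(\hat u,\hat v)$ need not be in $E^*$). Here we need an intermediate $k\in(1,l)$ and a good cross edge of $(u_1,u_k)$ in $E^*$ with \emph{both} endpoints complete. The plan is an extremal argument: let $k^\star$ be the largest $k\in(1,l]$ admitting a good cross edge of $(u_1,u_k)$ in $E^*$ (non-empty, as $k=2$ qualifies by~(iii)). If $k^\star=l$ we finish by the first clause; otherwise pick such a good cross edge $(x,y)$ of $(u_1,u_{k^\star})$ at level $j^\star\in[i^*_{k^\star},i^*_{k^\star}+10]$, and claim $x,y$ are both complete, so $(x,y)$ is a $P$-detour by the second clause. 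If instead $x$ — the ancestor of $(u_1,0)$ at $j^\star$ — is incomplete ($j^\star<\ell_1$), we \emph{advance} to $k^\star+1$ and exhibit a good cross edge of $(u_1,u_{k^\star+1})$ in $E^*$: when the step $\nd(u_{k^\star},u_{k^\star+1})$ is small enough that $i^*_{k^\star+1}<\ell_1+5\log\lambda$, fact~(ii) sweeps the (long, by \Cref{it:original-weight}) original cross edge of $(u_1,u_{k^\star+1})$ into $E^*$; when the step is large, the $G$-edge $(u_{k^\star},u_{k^\star+1})$ has a high-level original cross edge, say at level $i'$, and at a level $j=i'+O(1)$ (with $O(1)\le10$) a triangle-inequality check with \Cref{clm:leaf-dist} shows the ancestor of $(u_1,0)$ at level $j$ is a cross neighbor of the ancestor of $(u_{k^\star},0)$ at level $j$, so the augmented cross edges of $(u_{k^\star},u_{k^\star+1})$ already contain a good cross edge of $(u_1,u_{k^\star+1})$ — either way $k^\star+1$ joins the extremal set, contradicting maximality. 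The mirror situation, in which the ancestor $y$ of $(u_{k^\star},0)$ at $j^\star$ is the incomplete endpoint, is subtler and additionally uses that the almost-complete ancestor of $(u_{k^\star},0)$ has at most $f$ leaves and that $P$ is simple, so $P$ advances out of that LNF subtree within $f$ steps, recovering completeness on the $u_k$-side and reducing to the previous situation.

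\textbf{Main obstacle.} The whole difficulty sits in Case B2: placing the chosen good cross edge simultaneously inside the reach of $E^*$ — the Phase-1 band $[i_1,i_1+5\log\lambda]$, the analogous band around a long edge of $P$, or a Phase-2 problematic band of width $5\log\lambda$ below $\ell_1+5\log\lambda$ / $\lvl(\LCA(u_k))+5\log\lambda$ / $\ell_v+5\log\lambda$ — \emph{and} with both endpoints complete, together with the proof that an incomplete endpoint is always escaped by a single advancement step without the level $i^*_k$ overshooting every relevant band. Making this band-bookkeeping dovetail with the completeness tracking (through \Cref{obs:all-leaves}, \Cref{clm:parent-complete}, \Cref{lm:complete-node-complete}) on both the $u$-side and the $u_k$-side is the technical heart, and it relies only on the geometric estimates already established in \Cref{prop:lightness-properties} and \Cref{prop:connectivity}.
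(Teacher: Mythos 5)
Your proposal diverges fundamentally from the paper's proof and has a genuine gap at its core. The paper's proof is an inductive chain construction: starting from $u$, it repeatedly invokes \Cref{lm:detour-edge} to find a \emph{$P$-jump} whose source is the complete node $\LCA$ of the current starting vertex, producing a \emph{$P$-stair-jump} $\overrightarrow{D}$ whose levels are strictly increasing (because each incomplete tail $y_k$ sits strictly below $\LCA(u_{a_{k+1}})$, the source of the next jump). Termination is automatic, the near side $x_k$ of each jump is complete by construction, and \Cref{lm:shortcut-crosspath} (using \Cref{clm:stair-weight}) collapses the chain into a single good cross edge of $(u_1,u')$ with both endpoints complete and lying in $\Pa^2(\Aug(x_h,0,\log\lambda))\subseteq E^*$. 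You instead fix $u_1$ throughout, take the largest $k^\star$ for which a good cross edge of $(u_1,u_{k^\star})$ lies in $E^*$, and try to show maximality forces both endpoints complete.

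The gap is the $y$-incomplete branch of your Case~B2 advancement. Your argument that ``the almost-complete ancestor of $(u_{k^\star},0)$ has at most $f$ leaves and $P$ is simple, so $P$ advances out of that LNF subtree within $f$ steps, recovering completeness on the $u_k$-side'' does not hold up. First, the vertices $u_{k^\star+1},u_{k^\star+2},\dots$ of the shortest path in $G$ have no reason to be leaves of the LNF subtree containing $(u_{k^\star},0)$ --- they are chosen by the metric structure of $G$, not by the tree $T$ --- so the $f$-leaf bound imposes no constraint on how many steps of $P$ are ``inside'' that subtree. Second, even if some $u_m$ escapes that particular LNF subtree, its ancestor at the relevant level can perfectly well lie in a \emph{different} LNF subtree, so no completeness is ``recovered.'' Third, and most importantly, the advancement step must produce a good cross edge of $(u_1,u_{k^\star+1})$ actually inside $E^*$. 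Your $x$-incomplete analysis leaned on the fact that the ancestor of $u_1$ at the relevant level is problematic (fact~(ii)), so $\Cr(NC[\cdot])$ enters $E^*$ by \cref{line:add-NCx}. When it is the $u_{k^\star}$-side ancestor $y$ that is incomplete, the ``small step'' regime would need the ancestor of $u_{k^\star+1}$ (not $u_{k^\star}$) at level $\approx i^*_{k^\star+1}$ to be problematic, which need not hold; and the ancestor of $u_1$ at that level may be complete (indeed in Case~B2 $j^\star\ge\ell_1$ is possible). Your proposal does not actually exhibit a cross edge in $E^*$ here, and ``reducing to the previous situation'' is not justified. This is exactly the difficulty the paper's stair-jump sidesteps: when the far endpoint of a jump is incomplete, the paper \emph{restarts the source} at $\LCA(u_{a_{k+1}})$, which is complete, rather than insisting on a good cross edge from $u_1$ directly; the monotone level increase and Lemma~\ref{lm:shortcut-crosspath} then do the bookkeeping you are trying to do ad hoc. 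Your Cases~A and~B1 and the $x$-incomplete sub-case of~B2 are plausible and consistent with the machinery of \Cref{prop:lightness-properties} and \Cref{prop:connectivity}, but the proof as outlined is incomplete without a correct treatment of the $y$-incomplete advancement.
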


We now show that \Cref{lm:vertex-disjoint-path} follows from \Cref{lm:detour-exist}.

\begin{proof}[Proof of \Cref{lm:vertex-disjoint-path}]
	Let $u$ and $v$ be two points of $X \setminus F$ and $P =  (u = u_1, u_2, \ldots u_l = v)$ be a shortest path from $u$ to $v$ in $G$. We induct on the length $\nd(u, v)$. If $\nd(u, v) \leq \lambda$ (the minimum distance between two points is $64$), we claim that the cross edge between $(u, 0)$ and $(v, 0)$ is in $E^*$. Observe that both $(u, 0)$ and $(v, 0)$ are incomplete since each of them has one leaf with degree $0$ before iteration $0$. Hence, by \cref{line:begin-add-incomplete}--\ref{line:end-add-incomplete}, the cross edge between $(u, 0)$ and $(v, 0)$ is in $E^*$ since $\nd(u, v) \geq 64 \geq 64r_0$. By \cref{line:H-update}, $M(S(u, 0), S(v, 0))$ is added to $H$. By \cref{line:add-leaf-S} of \Cref{alg:Surrogate} $S(u, 0) = \{u\}$ and $S(v, 0) = \{v\}$. Thus, $(u, v) \in E(H)$.
	
	Assume that for any two points in $X \setminus F$ with distance less than $\nd(u, v)$, there is an $(1 + 5\eps)$-spanner path between them in $H - F$. By \Cref{lm:detour-exist}, there exist a $P$-detour $(x_1, y_1)$. Let $i_1 = \lvl(x_1)$. We consider two cases:
	
	Case $1$: $(x_1, y_1)$ is a good cross edge of $(u, v)$. We claim that there is a path from $u$ to $v$ with total weight less than $(1 + 5\eps)\nd(u, v)$. Let  $u', v'$ be two points in $\ball(x_1, 16r_{i_1})$ and $\ball(y_1, 16r_{i_1})$ such that $(u', v') \in E(H - F)$. $u'$ and $v'$ exist by \Cref{it:non-fault} of \Cref{prop:detour}.

	Since $(u, 0)$ is a leaf of $x_1$, $\nd(u, x_1) \leq 2r_{i_1}$ by \Cref{clm:leaf-dist}. By the triangle inequality,
	\begin{equation}
		\label{eq:u'u}
		\nd(u', u) \leq \nd(u, x_1) + \nd(x_1, u') \leq 2r_{i_1} + 16r_{i_1} = 18r_{i_1}
	\end{equation}
	By \Cref{it:radius-small}, $r_{i_1} \leq 5^{-8}\eps\nd(u, v)$. Plugging in \Cref{eq:u'u}, we obtain $\nd(u', u) \leq 18 \cdot 5^{-8}\eps\nd(u, v) < \nd(u, v)$. By our induction hypothesis, $\dist_{H - F}(u', u) \leq (1 + 5\eps)\nd(u, u')$. Using the same argument, we get $\nd(v, v') \leq 18r_{i_1}$ and $\dist_{H - F}(v, v') \leq (1 + 5\eps)\nd(v, v')$. By the triangle inequality,
	\begin{equation}
		\label{eq:dist-good-cross}
		\begin{split}
			\dist_{H - F}(u, v) & \leq \dist_{H - F}(u, u') + \nd(u', v') + \dist_{H - F}(v', v)\\
				& \leq (1 + 5\eps)\underbrace{\nd(u, u')}_{\leq 18r_{i_1}} +  \nd(u', v') + (1 + 5\eps)\underbrace{\nd(v', v)}_{\leq 18 r_{i_1}}\\
				&\leq (1 + 5\eps)36r_{i_1} + \nd(u', v') \leq (1 + 5\eps)36r_{i_1} + (1 + 5^{-3}\eps)\nd(u, v) \qquad \text{(by It. \ref{it:good-approx-weight} of Prop. \ref{prop:connectivity})}\\
				&\leq (1 + 5\eps)36 \cdot 5^{-8} \eps \nd(u, v) + (1 + 5^{-3}\eps)\nd(u, v) \qquad \text{(by It. \ref{it:radius-small} of Prop. \ref{prop:detour})}\\
				&\leq (1 + 2\eps)\nd(u, v) \leq (1 + 5\eps)\nd(u, v) \qquad \text{(since $\eps \leq 1/20$)},
		\end{split}
	\end{equation}
	as claimed.
	
	Case $2$: $(x_1, y_1)$ is a good cross edge of $(u_1, u_{a_1})$ for some $a_1 \in (1, l)$. For each $j \in [1, l]$, let $P_j$ be the subpath $(u_j, u_{j + 1}, \ldots u_l)$. By \Cref{lm:detour-exist}, there exists a $P_{a_1}$-detour, denoted by $(x_2, y_2)$, such that $(x_2, y_2)$ is a good cross edge of $(u_{a_1}, u_{a_2})$ for some $a_2 > a_1$. Recursively, for each $k > 0$, we find the $P_{a_k}$-detour, denoted by $(x_{k + 1}, y_{k + 1})$, such that $(x_{k + 1}, y_{k + 1})$ is a good cross edge of $(u_{a_k}, u_{a_{k + 1}})$ until $a_{k + 1} = l$. Assume that we have $s + 1$ of such detours with $s > 0$. Let $D = \{(x_1, y_1), (x_2, y_2), \ldots (x_s, y_s), (x_{s+1}, y_{s+1})\}$. 
	\begin{observation}{\label{obs:detour-decom}}We have the following properties: 
		\begin{enumerate}
			\item For every $k \in [1, s + 1]$, $(x_k, y_k)$ is a good cross edge of $(u_{a_k}, u_{a_{k + 1}})$.
			\item For every $k \in [1, s]$, $x_k$ and $y_k$ are complete. 
			\item \label{it:anc-same-branch} For every $k \in [1, s]$, $x_{k + 1}$ and $y_k$ are ancestors of $(u_{a_k}, 0)$. Hence, either $x_{k+1}$ or $y_k$ is the ancestor of the other.
			\item $a_1 < a_2 < a_3 < \ldots < a_{s + 1} = l$.
		\end{enumerate}
	\end{observation}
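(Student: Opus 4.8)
The plan is to verify the four assertions one at a time, entirely by unwinding the recursive construction of $D=\{(x_1,y_1),\ldots,(x_{s+1},y_{s+1})\}$ in Case~2 of the proof of \Cref{lm:vertex-disjoint-path}, together with the definitions of a \emph{good cross edge} and of a \emph{$P$-detour}. The one auxiliary fact I would isolate at the outset, and reuse throughout, is: if $(x,y)$ is a good cross edge of a pair $(p,q)$, then $x$ is an ancestor of $(p,0)$ and $y$ is an ancestor of $(q,0)$. This is immediate from the definitions --- the original cross edge $(\hat p,\hat q)$ of $(p,q)$ has $\hat p,\hat q$ ancestors of $(p,0),(q,0)$ respectively, and a good cross edge is a $\kappa$-cross edge with $\kappa\le 10$, i.e.\ an ancestor of $(\hat p,\hat q)$, so its two coordinates are still ancestors of $(p,0)$ and $(q,0)$.

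For bookkeeping I would set $a_0:=1$, so that $u_{a_0}=u_1=u$; with this convention, item~1 just records how the sequence is built in Case~2, namely $(x_k,y_k)$ is chosen to be a good cross edge of $(u_{a_{k-1}},u_{a_k})$ for every $k\in[1,s+1]$ ($(x_1,y_1)$ handles $k=1$, and the recursion handles $k\ge 2$). For item~4, the anchors strictly increase: for $k\le s$ the detour $(x_k,y_k)$ satisfies the second clause of the definition of a $P_{a_{k-1}}$-detour, so $u_{a_k}$ is an \emph{interior} vertex of $P_{a_{k-1}}=(u_{a_{k-1}},\ldots,u_l)$ and hence $a_{k-1}<a_k<l$; the recursion halts precisely when the detour obtained is a good cross edge of the two endpoints of the current subpath, i.e.\ when $a_{s+1}=l$. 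Since the $a_k$ strictly increase and are bounded by $l$, and since \Cref{lm:detour-exist} guarantees a detour for every subpath we inspect (when no interior vertex is left, that detour is forced into the first clause), the recursion terminates after finitely many steps, giving $a_1<a_2<\cdots<a_{s+1}=l$.

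Item~2 then follows from the stopping rule: for $k\in[1,s]$, the detour $(x_k,y_k)$ is \emph{not} a good cross edge of the two endpoints of $P_{a_{k-1}}$ --- that case occurs only at the terminal step $s+1$ --- so it must fall under the second clause of the definition of a $P_{a_{k-1}}$-detour, which explicitly requires both $x_k$ and $y_k$ to be complete. For item~3, fix $k\in[1,s]$: by item~1, $(x_k,y_k)$ is a good cross edge of $(u_{a_{k-1}},u_{a_k})$, so the auxiliary fact gives that $y_k$ is an ancestor of $(u_{a_k},0)$; also $(x_{k+1},y_{k+1})$ is a good cross edge of $(u_{a_k},u_{a_{k+1}})$, so $x_{k+1}$ is an ancestor of $(u_{a_k},0)$. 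Two nodes that are both ancestors of the common leaf $(u_{a_k},0)$ lie on the unique root-to-$(u_{a_k},0)$ path of $T$, whence one of $x_{k+1},y_k$ is an ancestor (possibly equal) of the other.

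I do not anticipate any genuine obstacle: each of the four items reduces to a definition chase, and the deepest input --- that the coordinates of a good cross edge are the right ancestors --- is itself immediate. The only point requiring care is the index bookkeeping around the stopping condition: one must be consistent about whether $(x_k,y_k)$ is tagged to $(u_{a_{k-1}},u_{a_k})$ or to $(u_{a_k},u_{a_{k+1}})$ (I would fix $a_0:=1$ as above), and one must correctly pin down which of the two defining clauses of a $P$-detour applies at each step --- the first (endpoint) clause only at the last step, the second (interior, both-complete) clause at every earlier step. With that settled, items~1--4 are immediate.
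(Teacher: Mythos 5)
Your proposal is correct and matches the paper's implicit reasoning. The paper gives no explicit proof of this observation — it is stated as immediate from the Case~2 construction — and your unwinding of the recursive construction, the definition of a $P$-detour, and the fact that the two coordinates of a good cross edge of $(p,q)$ are ancestors of $(p,0)$ and $(q,0)$ respectively is exactly what is intended. You also correctly diagnose the off-by-one typo in item~1 as stated (it should read $(u_{a_{k-1}},u_{a_k})$ with the convention $a_0=1$), which is the convention the paper itself adopts a few lines below the observation.
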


	For every $k \in [1, s + 1]$, let $i_k = \lvl(x_k)$. For $k \in [1, s]$, let $w_{k}$ and $z_{k}$ be two points in $\ball(x_k, 16r_{i_k})$ and $\ball(y_k, 16r_{i_k})$, respectively, such that $(w_{k}, z_{k}) \in E(H)$. $w_k$ and $z_k$ exist by \Cref{it:non-fault} of \Cref{prop:detour}. See \Cref{fig:lm13-1} for an illustration.
	\begin{center}
		\begin{figure}[htp!]
			\includegraphics[width=1\textwidth]{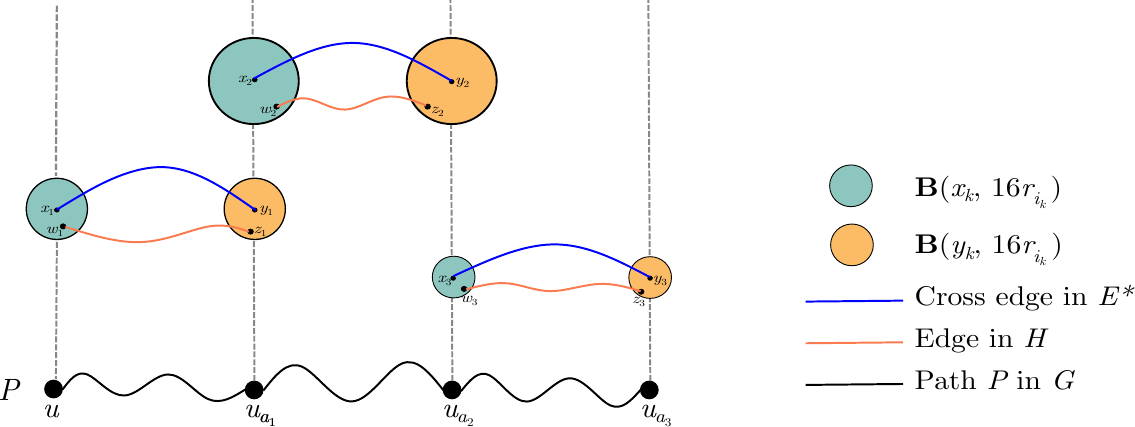}
			\caption{An illustration of $(x_k, y_k), (w_k, z_k)$ for $i \in [1, 3]$. Here, it might be misleading that $u_{a_k}$ is far from $x_{k + 1}$ and $y_k$. In fact, $u_{a_k}$ is in both $\ball(x_{k + 1}, 16r_{k + 1})$ (the blue ball) and $\ball(y_k, 16r_k)$ (the yellow ball). Furthermore, the higher ball in $\{\ball(x_{k + 1}, 16r_{k + 1}), \ball(y_k, 16r_k)\}$ contains the other.}
			\label{fig:lm13-1}
		\end{figure}
	\end{center}
	
	 Let $a_0 = 1$. Using the triangle inequality, we have:
	\begin{equation}
		\label{eq:d(uzs)}
		\begin{split}
			\dist_{H - F}(u, z_{s}) &\leq \dist_{H - F}(u, w_{1}) + \sum_{k = 1}^s\dist_{H - F}(w_{k}, z_{k}) + \sum_{k = 1}^{s - 1}\dist_{H - F}(z_{k}, w_{k + 1})\\
			&= \dist_{H - F}(u, w_{1}) + \sum_{k = 1}^s\nd(w_{k}, z_{k}) + \sum_{k = 1}^{s - 1}\dist_{H - F}(z_{k}, w_{k + 1})		
		\end{split}
	\end{equation} 
	By \Cref{it:good-approx-weight} of \Cref{prop:connectivity}, $\nd(w_k, z_k) \leq (1 + 5^{-3}\eps)\nd(u_{a_{k - 1}}, u_{a_k})$ for every $k \in [1, s]$. Hence,
	\begin{equation}
		\label{eq:sum-zw}
		\begin{split}
			\sum_{k = 1}^s\nd(w_k, z_k)& \leq (1 + 5^{-3})\sum_{k = 1}^s\nd(u_{a_{k - 1}}, u_{a_k})\leq (1 + 5^{-3})\sum_{k = 1}^s\dist_G(u_{a_{k - 1}}, u_{a_k})\\
			&\leq (1 + 5^{-3}\eps)\dist_G(u, u_{a_s})\qquad.
		\end{split}
	\end{equation}
	The last equation holds since $u_{a_0} = u_1 = u$ and the shortest path from $u$ to $u_{a_s}$ in $G$, which is $P$, contains $u_{a_1}, u_{a_2} \ldots u_{a_{s - 1}}$. 
	
	We claim that $\nd(u, w_1)$ and $\nd(z_k, w_{k + 1})$ for $k \in [1, s - 1]$ are less than $\nd(u, v)$. Hence, $\dist_{H - F}(u, w_1)$ and $\dist_{H - F}(z_k, w_{k + 1})$ are approximate their distances in $(X, \nd)$. Since $\nd(u, x_1) \leq 2r_{i_1}$ by \Cref{clm:leaf-dist} and $w_1 \in \ball(x, 16r_{i_1})$, using the triangle inequality, we have:
	
	 \begin{equation}
	 	\label{eq:d(uw1)}
	 	\nd(u, w_1) \leq \nd(u, x_1) + \nd(x_1, w_1) \leq 18r_{i_1} < 32r_{i_1}.
	 \end{equation}
	 For every $k \in [1, s - 1]$, $z_k \in \ball(y_k, 16r_{i_k})$ and $w_{k + 1} \in \ball(x_{k + 1}, 16r_{i_{k + 1}})$.  By \Cref{it:anc-same-branch} of \Cref{obs:detour-decom}, either $y_k$ or $x_{k + 1}$ is the ancestor of the other. If $y_k$ is an ancestor of $x_{k + 1}$, then $\ball(x_{k + 1}, 16r_{i_{k + 1}}) \subseteq \ball(y_k, 16r_{i_k})$ by \Cref{it:descendant-pool} of \Cref{prop:pool-prop}, implying that $w_{k + 1} \in \ball(y_k, 16r_{i_k})$. Hence, $\nd(z_k, w_{k + 1}) \leq 16r_{i_k} + 16r_{i_k} = 32r_{i_k}$. Otherwise, $x_{k + 1}$ is an ancestor of $y_k$. Using similar argument, we get $\nd(z_k, w_{k + 1}) \leq 32r_{i_{k + 1}}$. Therefore, we obtain:
	 \begin{equation}
	 	\label{eq:d(zkwk+1)}
	 	\nd(z_k, w_{k + 1}) \leq 32\max\{r_{i_k}, r_{i_{k  +1}}\}.
	 \end{equation}	
 	
 	We then prove that $\nd(u, v) > 32 r_k$ for every $k \in [1, s]$. Since $G$ is a $(1 + \eps)$-spanner of $X$, for every $k \in [1, s + 1]$, we have: 
	 \begin{equation}
	 	\label{eq:d(uv)>32rk}
	 	\begin{split}
	 		\nd(u, v) &\geq \frac{\dist_G(u, v)}{1 + \eps} \geq \frac{\nd(u_{a_{k}}, u_{a_{k + 1}})}{1 + \eps} \qquad\text{(since $P$ contains $(u_{a_k}, u_{a_{k + 1}})$)}\\
	 		&\geq \frac{\nd(x_k, y_k)}{(1 + 5^{-3}\eps)(1 + \eps)}\qquad\text{(by It. \ref{it:good-approx-weight} of Prop. \ref{prop:connectivity})}\\ 
	 		&\geq \frac{5^8\eps^{-1}r_{i_k}}{(1 + \eps)^2}. \qquad\text{(by It. \ref{it:radius-small} of Prop. \ref{prop:detour})},
	 	\end{split}		
	\end{equation} 
	
	implying that $\nd(u, v) > 32r_{i_k}$. By \Cref{eq:d(uw1)}, \Cref{eq:d(zkwk+1)}, \Cref{eq:d(uv)>32rk}, we get $\nd(u, w_1)$ and $\nd(z_k, w_{k + 1})$ are less than $\nd(u, v)$. Hence, by the induction hypothesis, $\dist_{H - F}(u, w_1) \leq (1 + 5\eps)\nd(u, w_1)$ and $\dist_{H - F}(z_k, w_{k + 1}) \leq (1 + 5\eps)\nd(z_k, w_{k + 1})$. Then, \Cref{eq:d(uw1)} implies that:
	\begin{equation}
		\label{eq:duw1-2}
		\begin{split}
			\dist_{H - F}(u, w_1) \leq (1 + 5\eps)\nd(u, w_1) \leq (1 + 5\eps)32r_{i_1} \leq (1 + 5\eps)32\sum_{k = 1}^{s}r_{i_k}.
		\end{split}
	\end{equation}
	Similarly, \Cref{eq:d(zkwk+1)} implies that:
	\begin{equation}
		\label{eq:sum-zw2}
		\sum_{k = 1}^{s - 1}\dist_{H - F}(z_{k}, w_{k + 1})	\leq (1 + 5\eps)\sum_{k = 1}^{s - 1}\nd(z_k, w_{k + 1}) \leq (1 + 5\eps)32\sum_{k = 1}^{s - 1}\max\{r_{i_k}, r_{i_{k + 1}}\} \leq (1 + 5\eps)64\sum_{i = 1}^kr_{i_k}.
	\end{equation}
	We then bound $\sum_{k = 1}^sr_{i_k}$. For every index $k \in [1, s]$, by \Cref{it:radius-small} of \Cref{prop:detour}, $r_{i_k} \leq 5^{-8}\eps\nd(x_k, y_k)$. Since $(x_k, y_k)$ is a good cross edge of $(u_{a_{k - 1}}, u_{k})$, $r_{i_k} \leq 5^{-8}\eps\nd(x_k, y_k)\leq 5^{-8}\eps(1 + \eps)\nd(u_{a_{k - 1}}, u_{k})$ by \Cref{it:good-cross-edge-long} of \Cref{prop:connectivity}. Hence,
	\begin{equation}
		\label{eq:sum-ri}
		\begin{split}
			\sum_{k = 1}^{s}r_{i_k} &\leq 5^{-8}\eps(1 + \eps)\sum_{k = 1}^s\nd(u_{a_{k - 1}}, u_{a_k}) \leq  5^{-8}\eps(1 + \eps)\dist_G(u, u_{a_s}) \qquad,
		\end{split}
	\end{equation}
	since the shortest path between $u$ to $u_{a_s}$ in $G$ contains $u_{a_1}, u_{a_2}, \ldots u_{a_{s - 1}}$. Plugging \Cref{eq:sum-zw}, \Cref{eq:duw1-2} and \Cref{eq:sum-zw2} in \Cref{eq:d(uzs)}, we get:
	\begin{equation}
		\label{eq:duzs2}
		\begin{split}
			\dist_{H - F}(u, z_s) &\leq (1 + 5^{-3}\eps)\dist_G(u, u_{a_s}) + (1 + 5\eps) \cdot 96\sum_{k = 1}^{s}r_{i_k}\\
			&\leq (1 + 5^{-3}\eps)\dist_G(u, u_{a_s}) + \underbrace{(1 + 5\eps)96\cdot5^{-8}}_{\leq 1/2 ~\text{ as $\eps \leq 1/20$}}\eps(1 + \eps)\dist_G(u, u_{a_s}) ~\text{(by Eq. \ref{eq:sum-ri})}\\
			&\leq(1 + 3\eps / 4)\dist_G(u, u_{a_s}) \leq (1 + 3\eps/4)(1 + \eps)\nd(u, u_{a_s})~\text{(since $G$ is a $(1 + \eps)$-spanner of $X$)}\\
			&\leq (1 + 2\eps)\nd(u, u_{a_s})
		\end{split}
	\end{equation}
	
	The proof of \Cref{eq:duzs2} contains all the key ideas of this lemma. The remaining part of the proof focuses on how to deal with a tricky corner case when $x_{s + 1}$ is incomplete.  
	
	We consider $(x_{s + 1}, y_{s + 1})$. If $x_{s + 1}$ is complete, let $w_{s + 1}$ and $z_{s + 1}$ be two points in $\ball(x_{s + 1}, 4r_{i_{s + 1}})$ and $\ball(y_{s + 1}, r_{i_{s + 1}})$ such that $(w_{s + 1}, z_{s + 1}) \in E(H - F)$. $w_{s + 1}$ and $z_{s + 1}$ exist by \Cref{it:non-fault} of \Cref{prop:detour}. Using the same argument in \Cref{eq:duzs2}, we have $\dist_{H - F}(u, z_{s + 1}) \leq (1 + 2\eps)\nd(u, v)$. By \Cref{clm:leaf-dist}, $\nd(v, y_{s + 1}) \leq 2r_{i_{s + 1}}$.  Using the triangle inequality, we get:
	\begin{equation}
		\label{eq:vzs+1}
		\begin{split}
			\nd(v, z_{s + 1}) &\leq \nd(v, y_{s + 1}) + \nd(y_{s + 1}, z_{s + 1}) \leq 2r_{i_{s + 1}} + 16r_{i_{s + 1}} \\
			&= 18r_{i_{s + 1}} \leq 18(1 + \eps)^25^{-8}\eps\nd(u, v) \qquad \text{(by Eq. \ref{eq:d(uv)>32rk})}\\
			&\leq \eps\nd(u, v) / 2 \qquad,
		\end{split}
	\end{equation}
	implying that $\nd(v, z_{s + 1}) < \nd(u, v)$. By our induction hypothesis, $\dist_{H - F}(v, z_{s + 1}) \leq (1 + 5\eps)\nd(v, z_{s + 1}) \leq (1 + 5\eps)\eps\nd(u, v)/2$. Thus, 
	\begin{equation}
		\begin{split}
			\dist_{H - F}(u, v) &\leq \dist_{H - F}(u, z_{s + 1}) + \dist_{H - F}(z_{s + 1}, v)\\
			&\leq (1 + 2\eps)\nd(u, v) + (1 + 5\eps)\eps\nd(u, v)/2 \leq (1 + 3\eps)\nd(u, v) \qquad,
		\end{split}
	\end{equation}
	as $\eps \leq 1/20$.
	
	The last case is when $x_{s + 1}$ is incomplete. Let $t = \LCA(u_{a_s})$ and $i = \lvl(z)$ and $\tilde{v}$ be the ancestor of $(v, 0)$ at level $i$. Since $(t, \tilde{v})$ is an ancestor of a cross edge, $\nd(t, \tilde{v}) \leq \lambda r_i$ by \Cref{it:cross-edge-ancestor} of \Cref{prop:connectivity}. Consider two cases:
	
	\begin{center}
		\begin{figure}[htp!]
			\includegraphics[width=1\textwidth]{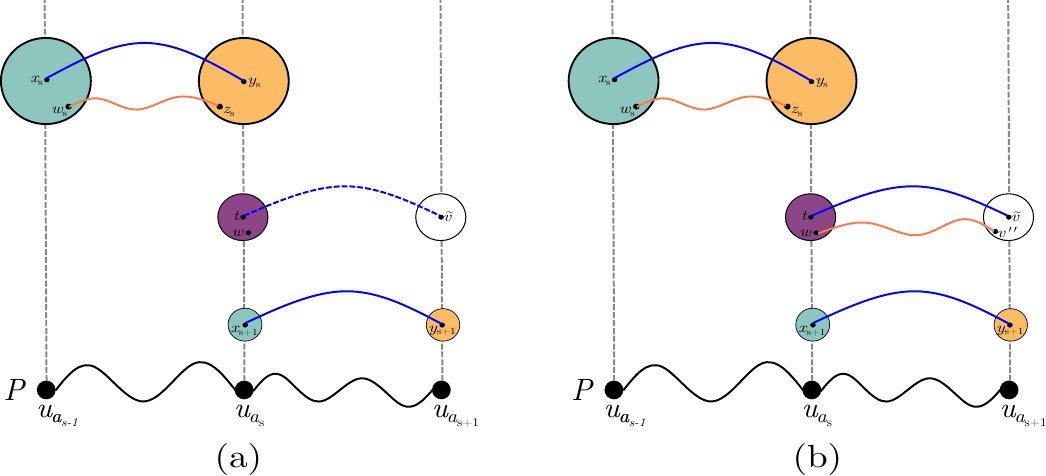}
			\caption{An illustration of two cases. The blue, orange edges and the cyan and yellow balls have the same meaning as those in \Cref{fig:lm13-1}. The purple node is the $\LCA$ of $u_{a_s}$ and the white node $\tilde{v}$ is the ancestor of $v = u_{a_{s + 1}}$ at the same level. The cross edge between $t$ and $\tilde{v}$ is not in $E^*$ for the first case $(a)$. In figure $(b)$, $(t, \tilde{v}) \in E^*$ and hence there is an edge of $H - F$ in the bipartite connection between $S(t)$ and $S(\tilde{v})$.}
			\label{fig:lm13-2}
		\end{figure}
	\end{center}
	
	Case $1$: $\nd(t, \tilde{v}) < 64r_i$. Let $w$ be a point in $\ball(t, 16r_i) \setminus F$. $w$ exists since by \Cref{lm:complete-ball}, $\ball(t, 4r_i)$ contains at least $4f + 4$ points. See \Cref{fig:lm13-2} (a) for an illustration. Since $i \leq i_s$, we get $y_s$ is an ancestor of $z$, implying that $\ball(t, 16r_i) \subseteq \ball(y_s, 16r_{i_s})$ by \Cref{it:descendant-pool} of \Cref{prop:pool-prop}. Hence, $w \in \ball(y_s, 16r_{i_s})$. Both $z_s$ and $w$ are in $\ball(y_{s}, 16r_{i_s})$, $\nd(z_s, w) \leq 32r_{i_s} < \nd(u, v)$ by \Cref{eq:d(uv)>32rk}, which implies:
	\begin{equation}
		\label{eq:dzsw}
		\dist_{H - F}(z_s, w) \leq (1 + 5\eps)\nd(z_s, w) \leq (1 + 5\eps)32r_{i_s} 
	\end{equation}
 	by the induction hypothesis. On the other hand, 
	\begin{equation}
		\begin{split}
			\nd(w, v) &\leq \nd(w, t) + \nd(t, \tilde{v}) + \nd(\tilde{v}, v) \\
			&\leq \nd(w, t) + 64r_i + \nd(\tilde{v}, v) \qquad\text{(by our assumption)} \\
			&\leq 16r_i + 64r_i + \nd(\tilde{v}, v) \leq 80r_i + \nd(\tilde{v}, v) \qquad \text{(since $w \in \ball(t, 16r_i)$)}\\
			&\leq 96r_i \leq 96r_{i_s} \qquad\text{(since $\nd(\tilde{v}, v) \leq 2r_i$ by \Cref{clm:leaf-dist})}.
		\end{split}
	\end{equation}
	Since $r_{i_s} \leq 2^{-8}\eps(1 + \eps)^2 \nd(u, v)$ (\Cref{eq:d(uv)>32rk}), we get $\nd(w, v) \leq 96 \cdot2^{-8}\eps (1 + \eps)^2 \nd(u, v) < \nd(u, v)$ as $\eps \leq 1/20$. By the induction hypothesis, 
	\begin{equation}
		\label{eq:dwv}
		\dist_{H - F}(w, v) \leq (1 + 5\eps)\nd(w, v) \leq (1 + 5\eps)96r_{i_s}
	\end{equation}
	Hence,
	\begin{equation}
		\begin{split}
			\dist_{H - F}(u, v) &\leq \dist_{H - F}(u, z_s) + \dist_{H - F}(z_s, w) + \dist_{H - F}(w, v)\\
			&\leq (1 + 2\eps)\underbrace{\nd(u, u_{a_s})}_{\leq \weight(P) \leq (1 + \eps)\nd(u, v)} + \dist_{H - F}(z_s, w) + \dist_{H - F}(w, v) \qquad \text{(by Eq. \ref{eq:duzs2})}\\
			&\leq  (1 + 2\eps)(1 + \eps)\nd(u, v) + (1 + 5\eps)32r_{i_s} + (1 + 5\eps)96r_{i_s} \qquad\text{(by Eq. \ref{eq:dzsw} and Eq. \ref{eq:dwv})}\\
			&\leq (1 + 4\eps)\nd(u, v) + 128(1 + 5\eps)r_{i_s}\\
			& \leq (1 + 4\eps)\nd(u, v) + 128(1 + 5\eps) \cdot 5^{-8}\eps(1 + \eps)^2\nd(u, v) \qquad\text{(by Eq. \ref{eq:d(uv)>32rk})}\\
			&\leq (1 + 4\eps)\nd(u,v ) + \eps \nd(u, v) = (1 + 5\eps)\nd(u, v) \qquad\text{(since $\eps \leq 1/20$)}, 
		\end{split}
	\end{equation}
	as claimed.
	
	Case $2$: $\nd(t, \tilde{v}) \geq 64r_i$. Then, $(t, \tilde{v}) \in E^*$ since $\nd(t, \tilde{v}) \leq \lambda r_{i_s}$ and $\Aug_i(t) \subseteq E^*$ as $z$ has at least one incomplete child (\cref{line:add-NCx}). Let $w$ and $v''$ be two points in $S(t)$ and $S(\tilde{v})$ such that $(w, v'') \in E(H - F)$. $w$ and $v''$ exist since $S(t)$ contains $f + 1$ points and $S(\tilde{v})$ contains either $f + 1$ points or $v$. See \Cref{fig:lm13-2} (b) for an illustration. 
	
	Since both $w$ and $u_{a_s}$ are in $\ball(t, 16r_i)$ ($\nd(u_{a_s}, t) \leq 2r_i$ by \Cref{clm:leaf-dist}), we have $\nd(w, u_{a_s}) \leq 32r_i \leq 32r_{i_{s}}$. Similarly, $\nd(z_s, w), \nd(v'', v) \leq 32r_{i_s}$. Since $32r_{i_s} < \nd(u, v)$ by \Cref{eq:d(uv)>32rk}, using the induction hypothesis, we obtain:
	 \begin{equation*}
	 	\dist_{H - F}(z_s, w) \leq (1 + 5\eps)\nd(z_s, w) \leq (1 + 5\eps)32r_{i_s} \qquad.
	 \end{equation*}
	 Similarly, $\dist_{H - F}(v'', v) \leq (1 + 5\eps)32r_{i_s}$. Using the triangle inequality, we have:
	
	\begin{equation}
		\label{eq:last-cross}
		\begin{split}
			\dist_{H - F}(z_s, v) &\leq \dist_{H - F}(z_s, w) + \dist_{H - F}(w, v'') + \dist_{H - F}(v'', v)\\
			&\leq (1+ 5\eps)32r_{i_s} + \nd(w, v'') + (1 + 5\eps)32r_{i_s} \qquad\text{(since $(w, v'')$ is in $H - F$)} \\
			&\leq (1 + 5\eps)64r_{i_s} + \nd(w, u_{a_s}) + \nd(u_{a_s}, v) + \nd(v, v'')\\
			&\leq (1 + 5\eps)64r_{i_s} + 32r_{i_s} + \nd(u_{a_s}, v) + 32r_{i_s} \qquad\text{(since $w, u_{a_s} \in \ball(y_s, 16r_{i_2})$ and $v, v'' \in \ball(\tilde{v}, 16r_i)$)}\\
			&\leq (1 + 5\eps)128r_{i_s} + \nd(u_{a_s}, v).
		\end{split}
	\end{equation} 
	Consider a path from $u$ to $v$ passing through $z_s$, we have the following bound based on the triangle inequality: 
	\begin{equation}
		\label{eq:incomplete-last-cross}
		\begin{split}
			\dist_{H - F}(u, v) &\leq \dist_{H - F}(u, z_s) + \dist_{H - F}(z_s, v)\\
			&\leq (1 + 2\eps) \nd(u, u_{a_s}) + (1 + 5\eps)128r_{i_s} + \nd(u_{a_s}, v)\qquad\text{(by Eq. \ref{eq:d(uzs)} and Eq. \ref{eq:last-cross})}\\
			&\leq (1 + 2\eps) (\nd(u, u_{a_s}) + \nd(u_{a_s}, v)) + (1 + 5\eps)128r_{i_s}\\
			&\leq (1 + 2\eps)(1 + \eps)\nd(u, v) + (1 + 5\eps)128 \cdot 5^{-8}\eps(1 + \eps)^2\nd(u, v)\qquad\text{(by Eq. \ref{eq:d(uv)>32rk})}\\
			&\leq (1 + 5\eps)\nd(u, v) \qquad \text{(since $\eps \leq 1/20$)},
		\end{split}
	\end{equation}
	as desired.
\end{proof}

In the rest of this section, we focus on proving \Cref{lm:detour-exist}. We first define some notation. Given a path $P = (u_1, u_2, \ldots u_l)$ in $G$. A cross edge $(x, y)$ is a \textit{$P$-jump} if:

\begin{enumerate}
	\item $x$ is a complete ancestor of $(u_1, 0)$ and $y$ is an ancestor of $(u_j, 0)$ for some $j \in (1, l]$.
	\item $(x, y)$ is a $\kappa$-cross edge of $(u_1, u_j)$ with $0 \leq \kappa \leq 2$.
\end{enumerate}

A $P$-jump is different from a $P$-detour. In general, a $P$-detour has two complete end nodes while the definition of $P$-jump only guarantees one end to be complete. A $P$-jump $(x, y)$ with incomplete $y$ is a $P$-detour if and only if $y$ is an ancestor of $(u_l, 0)$. 

For every set of cross edges $A$, let $\Pa(A)$ be the set of all cross edges $(x', y')$ such that there exists a child $(x, y)$ of $(x', y')$ in $A$. We call $\Pa(A)$ the parent set of $A$. Given a positive integer $i$, let $\Pa^1(A) = \Pa(A)$ and $\Pa^i(A) = \Pa(\Pa^{i - 1}(A))$. For each node $x$ and a nonnegative integer $i$, we also use the notation $\Pa^i(x)$ for the ancestor of $x$ at level $\lvl(x) + i$. We prove the following lemma:

\begin{lemma}
	\label{lm:detour-edge}
	Let  $u$ and $v$ be two points in $X$, $P = (u = u_1, u_2, \ldots u_l = v)$ be the shortest path between $u$ and $v$ in $G$, $i_1$ be the level of $\LCA(u_1)$ and $v_1$ be the ancestor $v$ at level $i_1$. If $\nd(\LCA(u_1), v_1) > \lambda r_{i_1}$, then there exists a $P$-jump $(x_0, y_0)$ in $E^*$. Furthermore, $\Pa^2(\Aug(x_0, 0, \log{\lambda}))$ and $\Pa^2(\Aug(y_0, 0, \log{\lambda}))$ are subsets of $E^*$.
\end{lemma}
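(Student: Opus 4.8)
The plan is to trace $P$ outward from $u_1$ until it has travelled far enough relative to the scale $r_{i_1}$, to read off the $P$-jump from the original cross edge of $(u_1,u_j)$ for a carefully chosen index $j$, and to certify membership in $E^*$ — both of the jump itself and of the level-shifted augmented families in the ``furthermore'' clause — by recognizing everything as augmented cross edges of a single $G$-edge of $P$ whose scale matches the target level.

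I would first record the elementary facts. By definition $\LCA(u_1)$ is the parent of the almost complete ancestor of $u_1$, so it is complete by \Cref{clm:parent-complete}; it lies at level $i_1$, and by \Cref{clm:parent-complete} every ancestor of $u_1$ at a level $\ge i_1$ is complete. From the hypothesis $\nd(\LCA(u_1),v_1)>\lambda r_{i_1}$ and \Cref{clm:leaf-dist} (a level-$i_1$ node lies within $2r_{i_1}$ of each of its leaves) I get $\nd(u_1,u_l)>(\lambda-4)r_{i_1}$; writing $\ell^{(k)}$ for the level of the original cross edge of $(u_1,u_k)$, \Cref{it:original-weight} and \Cref{it:approx-weight} of \Cref{prop:lightness-properties} force $\ell^{(l)}\ge i_1-1$, and in general $\ell^{(k)}$ climbs past $i_1-2$ as $k$ grows. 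Hence there is an index $j\in(1,l]$ with $\ell^{(j)}\ge i_1-2$; set $\kappa:=\max\{0,\,i_1-\ell^{(j)}\}\in\{0,1,2\}$ and let $(x_0,y_0)$ be the $\kappa$-cross edge of $(u_1,u_j)$, which sits at level $i:=\max\{\ell^{(j)},i_1\}\ge i_1$. Then $x_0=\mathrm{anc}(u_1,i)$ is complete; $y_0=\mathrm{anc}(u_j,i)$ is an ancestor of $u_j$ with $j\in(1,l]$; $(x_0,y_0)$ is a $\kappa$-cross edge of $(u_1,u_j)$ with $\kappa\le2$, so it is a $P$-jump; and since $\kappa\le10$ it is a good cross edge, so $\nd(x_0,y_0)\ge64r_i$ by \Cref{it:good-cross-edge-long} of \Cref{prop:connectivity} and it survives the length test of \cref{line:check-length}.

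The crux is to choose $j$, together with a witness edge, so that this jump is in fact inserted into $E^*$. I would take $j$ to be the first index at which $\dist_P(u_1,u_j)$ crosses a threshold proportional to $\lambda r_{i_1}$, the constant chosen — via the $(1+\eps)$ stretch of $G$ and the $\pm2$ slack of \Cref{it:level-weight-relation} — so that $\ell^{(j)}\ge i_1-2$, and then split on whether the edge of $P$ carrying the path across this threshold is ``short'' or ``long'' relative to its own scale. In the short case one checks $\nd(u_1,u_j)=\Theta(\lambda r_{i_1})$, so $i=\Theta(i_1)$ and the edge $e'=(u_1,u_2)$ has an endpoint ($u_1$, a leaf of $\mathrm{anc}(u_1,i)$) and, after bounding $\nd(u_1,u_2)$, original cross edge at a level $\ell'$ with $i-\ell'\in[0,4\log\lambda]$; in the long case the carrying edge itself pins the scale, and one produces $e'$ by triangle-inequality and packing estimates relating $\nd(u_1,u_{j-1})$, $\nd(u_{j-1},u_j)$, $\nd(u_1,u_j)$ and $r_{i_1}$, again with $i-\ell'\in[0,4\log\lambda]$ and an endpoint of $e'$ a leaf of $\mathrm{anc}(u_1,i)$ or of $\mathrm{anc}(u_j,i)$. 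Given such an $e'$, since $(x_0,y_0)$ is a cross edge (an ancestor of an original cross edge, by \Cref{it:cross-edge-ancestor} of \Cref{prop:connectivity}) between $\mathrm{anc}(u_1,i)$ and $\mathrm{anc}(u_j,i)$, it lies in $\Cr(NC[\mathrm{anc}(u_j,i)])=\Aug_i(\hat q)\subseteq\Aug(\hat q,0,5\log\lambda)$ for the $u_j$-side endpoint $\hat q$ of the original cross edge of $e'$, hence $(x_0,y_0)\in E^*$.

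For the ``furthermore'' clause I would isolate the containment $\Pa^2(\Cr(NC[z]))\subseteq\Cr(NC[z'])$, where $z'$ is the ancestor of $z$ two levels above: if $(a,b)\in\Cr(NC[z])$ has grandparent $(a',b')$, then by \Cref{clm:leaf-dist} $\nd(a',z')\le\nd(a',a)+\nd(a,z)+\nd(z,z')\le2r_{\lvl(z')}+\lambda r_{\lvl(z)}+2r_{\lvl(z')}\le\lambda r_{\lvl(z')}$ (as $\lambda\gg1$), and symmetrically for $b'$, while $(a',b')$ is a cross edge by \Cref{it:cross-edge-ancestor}; so $(a',b')\in\Cr(NC[z'])$. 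Since $x_0$ (resp.\ $y_0$) is an ancestor of $u_1$ (resp.\ of $u_j$), $\Aug_m(x_0)=\Cr(NC[\mathrm{anc}(u_1,m)])$ for $m\ge i$, and iterating the containment gives $\Pa^2(\Aug(x_0,0,\log\lambda))\subseteq\Aug(x_0,2,\log\lambda+2)$, and likewise for $y_0$. Because $\log\lambda+2\le4\log\lambda$ and we arranged $i-\ell'\le4\log\lambda$, the level range $[i+2,\,i+\log\lambda+2]$ lies inside $[\ell',\ell'+5\log\lambda]$, and there each $\Aug_m(x_0)$ (resp.\ $\Aug_m(y_0)$) coincides with $\Aug_m$ of the corresponding endpoint of the original cross edge of $e'$, hence lies in $E^*$. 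The step I expect to be the main obstacle is precisely the scale-matching of the third paragraph: since a shortest path in $G$ may contain edges of vastly different lengths, pinning down $j$ together with a witness edge $e'$ that is anchored near both ancestor chains (of $u_1$ and of $u_j$) and whose scale lies within $4\log\lambda$ of the target $i$ is delicate, and it is here that one must combine the completeness of $\LCA(u_1)$, the (near-)interval structure of $\{k:\ell^{(k)}\ge i_1-2\}$ along $P$, and the packing estimates to rule out the degenerate configurations (such as $P$ leaving the $r_{i_1}$-vicinity of $u_1$ only through a chain of tiny edges).
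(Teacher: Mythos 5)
Your top-level architecture matches the paper's: choose an index $j$ along $P$, take an ancestor of the original cross edge of $(u_1,u_j)$ as the $P$-jump, and certify membership in $E^*$ by identifying the required cross edges as augmented cross edges of something already inserted; for the ``furthermore'' clause your containment $\Pa^2(\Cr(NC[z]))\subseteq\Cr(NC[z'])$ is essentially the paper's \Cref{clm:augmented}. The first half of your argument --- that an ancestor of the original cross edge of some prefix of $P$ is a $P$-jump --- is correct.

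The gap is the one you flag yourself but do not close, and it is not a technicality. In your ``short case'' the proposed witness $e'=(u_1,u_2)$ fails: nothing bounds $\nd(u_1,u_2)$ from below, so the level $\ell'$ of the original cross edge of $e'$ can be arbitrarily far below $i\approx i_1$, and the range $[i+2,\,i+\log\lambda+2]$ then falls outside $[\ell',\ell'+5\log\lambda]$. The paper's corresponding case is $\lvl(\hat u_1)\le i_1+3\log\lambda$, where $(\hat u_1,\hat u_{j+1})$ is the original cross edge of $(u_1,u_{j+1})$ and $j$ is maximal with $\nd(x_1,x_j)\le\lambda r_{i_1}$ (writing $x_k$ for the ancestor of $u_k$ at level $i_1$). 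There the paper does not look for a $G$-edge witness at all: it uses that $\LCA(u_1)$ has an \emph{incomplete} child --- the almost-complete ancestor of $u_1$, at level $i_1-1$ --- so \cref{line:check-Dx}--\ref{line:add-NCx} of \Cref{alg:VFT-spanner} insert $\Cr(NC[\cdot])$ for every ancestor of $\LCA(u_1)$ within about $5\log\lambda$ levels, giving $\Aug(\LCA(u_1),0,5\log\lambda)\subseteq E^*$ unconditionally; then \Cref{clm:augmented} with $\gamma_1=3\log\lambda$, $\gamma_2=5\log\lambda$ finishes. You list ``completeness of $\LCA(u_1)$'' as an available ingredient, but the operative fact is the \emph{incompleteness} of its child on the path to $u_1$ and the LNF-driven insertion this triggers, which your proof never invokes; without it the short case is open. (In the complementary case $\lvl(\hat u_1)>i_1+3\log\lambda$ the paper shows $\nd(u_1,u_j)\le\eps\,\nd(u_j,u_{j+1})$, so the $G$-edge $(u_j,u_{j+1})$ is the witness and its original cross edge lies within $\pm2$ levels of $\hat u_1$ --- this is roughly what you sketch in your ``long case''.)
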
 

To prove \Cref{lm:detour-edge}, we first show that if a cross edge $(y, z)$ is an augmented cross edge of a node $x$, then some particular augmmented cross edges of $y$ and $z$ are also  augmented cross edges of $x$. 

\begin{claim}
	\label{clm:augmented}
	Let $x$ be a node in $T$, $\gamma_1$ and $\gamma_2$ be two non-negative integers such that $\gamma_1 < \gamma_2 - 2$ and $(y, z)$ is a cross edge in $\Aug(x, 0, \gamma_1)$. Then, $\Pa^2(\Aug(y, 0, \gamma_2 - \gamma_1 - 2))$ and $\Pa^2(\Aug(z, 0, \gamma_2 - \gamma_1 - 2))$ are subsets of $\Aug(x, 0, \gamma_2)$.
\end{claim}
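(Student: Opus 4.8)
The plan is to reduce the claim to a single wide‑margin triangle‑inequality estimate by unwinding the definitions of $\Aug$ and $\Pa$. First I would pin down the levels. Since $(y,z)$ is a cross edge and $(y,z)\in\Aug(x,0,\gamma_1)=\bigcup_{k=i_x}^{i_x+\gamma_1}\Aug_k(x)$, and each $\Aug_k(x)$ consists only of level‑$k$ cross edges, the common level $k:=\lvl(y)=\lvl(z)$ satisfies $i_x\le k\le i_x+\gamma_1$; moreover, since $k\ge\lvl(x)$, we are in the ``$j\ge i$'' branch of the definition, so $\Aug_k(x)=\Cr(NC[x_k])$ with $x_k:=\Pa^{k-i_x}(x)$, whence $y,z\in NC[x_k]$, i.e.\ $\delta(y,x_k),\delta(z,x_k)\le\lambda r_k$ (the degenerate cases $y=x_k$ or $z=x_k$ contributing distance $0$). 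By symmetry it suffices to treat $\Pa^2(\Aug(y,0,\gamma_2-\gamma_1-2))$.

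Next I would take an arbitrary element of $\Pa^2(\Aug(y,0,\gamma_2-\gamma_1-2))$. Using \Cref{it:cross-edge-ancestor} of \Cref{prop:connectivity} (the parent of a cross edge is a cross edge), this element is $(\Pa^2(a),\Pa^2(b))$ for some cross edge $(a,b)\in\Aug_j(y)$ with $k=\lvl(y)\le j\le k+\gamma_2-\gamma_1-2$; since $j\ge\lvl(y)$ we again get $\Aug_j(y)=\Cr(NC[y_j])$ with $y_j:=\Pa^{j-k}(y)$, so $a,b\in NC[y_j]$, i.e.\ $\delta(a,y_j),\delta(b,y_j)\le\lambda r_j$. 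The pair $(\Pa^2(a),\Pa^2(b))$ lives at level $j+2$, and an index check gives $i_x+2\le j+2\le i_x+\gamma_2$ (the upper bound because $j\le k+\gamma_2-\gamma_1-2\le i_x+\gamma_2-2$), so $j+2$ is a genuine level $>\lvl(x)$ and $\Aug_{j+2}(x)=\Cr(NC[x_{j+2}])\subseteq\Aug(x,0,\gamma_2)$, where $x_{j+2}:=\Pa^{j+2-i_x}(x)$. Since $(\Pa^2(a),\Pa^2(b))$ is a cross edge by \Cref{it:cross-edge-ancestor} of \Cref{prop:connectivity}, all that remains is to show $\Pa^2(a)\in NC[x_{j+2}]$ and $\Pa^2(b)\in NC[x_{j+2}]$.

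Finally I would bound $\delta(\Pa^2(a),x_{j+2})$ by chaining the triangle inequality along $\Pa^2(a)\to a\to y_j\to y\to x_k\to x_{j+2}$: the two outer legs $\Pa^2(a)\to a$ and $x_k\to x_{j+2}$ are ancestor–descendant hops, each $\le\tfrac54 r_{j+2}$ by \Cref{clm:leaf-dist}; the leg $y_j\to y$ is likewise $\le\tfrac54 r_j$; and the ``net'' legs give $\delta(a,y_j)\le\lambda r_j$ and $\delta(y,x_k)\le\lambda r_k\le\lambda r_j$. Summing and using $r_j=r_{j+2}/25$,
\begin{equation*}
\delta(\Pa^2(a),x_{j+2})\;\le\;\tfrac52 r_{j+2}+2\lambda r_j+\tfrac54 r_j\;=\;\Bigl(\tfrac52+\tfrac{2\lambda+5/4}{25}\Bigr)r_{j+2}\;\le\;\lambda r_{j+2},
\end{equation*}
the last inequality holding with enormous slack since $\lambda=5^{20}(1+1/\eps)$. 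The identical bound holds for $\Pa^2(b)$, so $(\Pa^2(a),\Pa^2(b))\in\Cr(NC[x_{j+2}])=\Aug_{j+2}(x)\subseteq\Aug(x,0,\gamma_2)$, and the same argument with $z$ in place of $y$ finishes the claim. The only part I expect to need genuine care is the level bookkeeping — keeping $k$, $j$ and $j+2$ inside their intended windows so that the ``$\ge\lvl(\cdot)$'' branch of the definition of $\Aug$ is the one that applies everywhere, and handling the degenerate cases where a node equals one of its nominal cross neighbors — since the geometric estimate itself is slack by many orders of magnitude.
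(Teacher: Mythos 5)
Your proof is correct and takes essentially the same approach as the paper: reduce to showing $\Pa^2(a)$, $\Pa^2(b)$ lie in $NC[x_{j+2}]$ and bound the distance by a triangle-inequality chain through $y$ and its ancestors, noting the total is far below $\lambda r_{j+2}$. The only cosmetic difference is that the paper packages the legs $x\to y$ and $y\to a$ via the pre-established bound $\nd(x,y)\le(\lambda+2)r_{i_y}$ (\Cref{it:node-to-augmented-dist} of \Cref{prop:connectivity}), whereas you expand those hops explicitly; the underlying estimate and level bookkeeping are identical.
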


\begin{proof}
	By symmetric property, we only need to show $\Pa^2(\Aug(y, 0,  \gamma_2 - \gamma_1 - 2)) \subseteq \Aug(x, 0, \gamma_2)$. Let $i_x$ and $i_y$ be the levels of $x$ and $y$, $i_x \leq i_y \leq i_x + \gamma_1$. For each cross edge $(z, w) \in \Aug(y, 0, \gamma_2 - \gamma_1 - 2)$, let $i_{zw}$ be the level of $z$ and $w$ ($i_y \leq i_{zw} \leq i_y + \gamma_2 - \gamma_1 - 2$), $z' = \Pa^2(z)$ and $w' = \Pa^2(w)$. We prove that $(z', w') \in \Aug(x, 0, \gamma_2)$.  
	
	Since $(z, w)$ is a cross edge, its ancestor $(z', w')$ is also a cross edge by \Cref{it:cross-edge-ancestor} of \Cref{prop:connectivity}. Observe that $\lvl(z') = i_{zw} + 2 \leq i_y + \gamma_2 - \gamma_1 \leq i_x + \gamma_2$. Then, to finish the proof of $(z', w') \in \Aug(x, 0, \gamma_2)$, we only need to show that both $z'$ and $w'$ are in the cross neighborhood of the ancestor of $x$ at level $\lvl(z')$.
	
	Let $x'$ and $y'$ be the ancestors of $x$ and $y$ at level $\lvl(z') = i_{zw} + 2$. By \Cref{it:node-to-augmented-dist} of \Cref{prop:connectivity}, 
	\begin{equation}
		\label{eq:dxyz}
		\nd(x, y) \leq (\lambda + 2) r_{i_y}\qquad\text{and} \qquad\nd(y, z) \leq (\lambda + 2) r_{i_{zw}}
	\end{equation}
	We show that $\nd(x', z') \leq \lambda r_{i_{zw} + 2}$ and hence $z' \in NC[x']$. By the triangle inequality, we have:
	\begin{equation}
		\label{eq:approx-x'-z'}
		\begin{split}
			\nd(x', z') &\leq \underbrace{\nd(x', x)}_{\leq 2r_{\lvl(x')}} + \nd(x, z) + \underbrace{\nd(z, z')}_{\leq 2r_{\lvl(z')}} \leq 4r_{i_{zw} + 2} + \nd(x, z) \qquad\text{(since $\lvl(x') = \lvl(z') = i_{zw} + 2$)}\\
			&\leq 4r_{i_{zw} + 2} + \nd(x, y) + \nd(y, z) \qquad \text{(by triangle inequality)}\\
			&\leq 4r_{i_{zw} + 2} + (\lambda + 2)r_{i_y} + (\lambda + 2)r_{i_{zw}} \qquad \text{(by Eq. \Cref{eq:dxyz})}\\
			&\leq \left(4 + \frac{\lambda + 2}{25} + \frac{\lambda + 2}{25}\right)r_{i_{zw} + 2} \leq \lambda r_{i_{zw} + 2} \qquad\text{(since $i_y \leq i_{zw}$ and $\lambda = 5^{20}(1 + \eps^{-1})$),}
		\end{split}
 	\end{equation} 
 	 as claimed. Similarly, $w' \in NC[x']$. Therefore, $(z', w') \in \Aug(x, 0, \gamma_2)$.
\end{proof}

\begin{proof}[Proof of \Cref{lm:detour-edge}]
	We find a $P$-jump from the set of cross edges of $E(G)$ if there is a long edge in some prefix of $P$. Otherwise, we claim that there exist $P$-jump from the set of argumented cross edges of nodes in $\mathrm{LNF}$. 
	
	Let $x_1 = \LCA(u_1)$ and $x_2, x_3, \ldots x_l$ be the ancestors of $u_2, u_3, \ldots u_l$ at level $i_1$, respectively. Let $j$ be the largest integer in $[1, l]$ such that $\nd(x_1, x_j) \leq \lambda r_{i_1}$. Since $\nd(\LCA(u_1), v_1) > \lambda r_{i_1}$, $j < l$. Let $(\hat{u}_1, \hat{u}_{j + 1})$ be the original cross edge of $(u_1, u_{j + 1})$. Because of the maximality of $j$, $\lvl(\hat{u}_1) = \lvl(\hat{u}_{j + 1}) > i_1$. Then, $(\hat{u}_1, \hat{u}_{j + 1})$ is a $P$-jump.
	
	If $\lvl(\hat{u}_1) \leq i_1 + 3\log{\lambda}$ then $(\hat{u}_1, \hat{u}_{j + 1}) \in E^*$ by \cref{line:begin-aug-original} -- \ref{line:end-aug-original} in \Cref{alg:VFT-spanner}. Using  \Cref{clm:augmented} with $\gamma_1 = 3\log{\lambda}$ and $\gamma_2 = 5\log{\lambda}$, we obtain $\Pa^2(\Aug(\hat{u}_1, 0, \gamma_2 - \gamma_1 - 2))$ is a subset of $\Aug(x_1, 0, 5\log{\lambda})$. Since $\gamma_2 - \gamma_1 - 2 = 5\log{\lambda} - 3\log{\lambda} - 2 \geq \log{\lambda}$, $\Pa^2(\Aug(\hat{u}_1, 0, \log{\lambda}))$ is a subset of $\Aug(x_1, 0, 5\log{\lambda})$ and hence is in $E^*$. Similarly, $\Pa^2(\Aug(\hat{u}_{j + 1}, 0, \log{\lambda}))$ is also in $E^*$.
	
	The last case is when $\lvl(\hat{u}_1) > i_1 + 3\log{\lambda}$. Let $(\tilde{u}_j, \tilde{u}_{j + 1})$ be the original cross edge of $(u_j, u_{j + 1})$. By \cref{line:begin-aug-original} -- \ref{line:end-aug-original}, $(\tilde{u}_j, \tilde{u}_{j + 1}) \in E^*$. Let $\hat{i}$ and $\tilde{i}$ be the levels of $\hat{u}_{j + 1}$ and $\tilde{u}_{j + 1}$, respectively.
	We claim that $\tilde{i} - 2 \leq \hat{i} \leq \tilde{i} + 2$, and therefore, there exists a good cross edge of $(u_1, u_{j + 1})$ in $E^*$. To do that, we show the following claim:
	
	\begin{claim}
		$\frac{1}{2} \leq \frac{\nd(\hat{u}_1, \hat{u}_{j + 1})}{\nd(\tilde{u}_j, \tilde{u}_{j + 1})} \leq 2$.
	\end{claim}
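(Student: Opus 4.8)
The plan is to show that each of the two cross-edge lengths appearing in the claim is, up to a $(1\pm O(\eps))$ factor, equal to $\nd(u_1,u_{j+1})$; this is far stronger than the asked-for factor-$2$ bound. The mechanism is the case hypothesis $\lvl(\hat{u}_1) > i_1 + 3\log\lambda$, which forces $\nd(u_1,u_{j+1})$ to be astronomically larger than $r_{i_1}$, whereas $\nd(u_1,u_j)$ is only $O(\lambda r_{i_1})$; hence replacing the endpoint $u_1$ by $u_j$ perturbs the distance negligibly, after which an appeal to \Cref{prop:lightness-properties} transfers the estimate from point-distances to original cross edges.

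Concretely, I would first bound the ``small'' quantity $\nd(u_1,u_j)$: by the maximality defining $j$ we have $\nd(x_1,x_j)\le\lambda r_{i_1}$, and since $(u_1,0),(u_j,0)$ are leaves of $x_1,x_j$ respectively, \Cref{clm:leaf-dist} gives $\nd(u_1,x_1),\nd(u_j,x_j)\le 2r_{i_1}$, so the triangle inequality yields $\nd(u_1,u_j)\le(\lambda+4)r_{i_1}$. Next I would bound the ``large'' quantity $\nd(u_1,u_{j+1})$ from below: since $\lvl(\hat{u}_1)>i_1+3\log\lambda$ and levels are integers, $r_{\lvl(\hat{u}_1)}\ge 5^{\,i_1+3\log\lambda}=\lambda^3 r_{i_1}$; combining \Cref{it:original-weight} of \Cref{prop:lightness-properties} (which gives $\nd(\hat{u}_1,\hat{u}_{j+1})\ge\lambda r_{\lvl(\hat{u}_1)}/6$) with \Cref{it:approx-weight} of \Cref{prop:lightness-properties} (which gives $\nd(u_1,u_{j+1})\ge\nd(\hat{u}_1,\hat{u}_{j+1})/(1+\eps)$) yields $\nd(u_1,u_{j+1})\ge\lambda^4 r_{i_1}/(6(1+\eps))$.

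Combining the two estimates, $\nd(u_1,u_j)\le\frac{6(1+\eps)(\lambda+4)}{\lambda^4}\,\nd(u_1,u_{j+1})$, and since $\lambda=5^{20}(1+\eps^{-1})$ this coefficient is vanishingly small, so by the triangle inequality $\nd(u_j,u_{j+1})$ and $\nd(u_1,u_{j+1})$ agree up to a $(1\pm 10^{-9})$ factor. Finally, \Cref{it:approx-weight} of \Cref{prop:lightness-properties} (together with the matching lower bound $\nd(\hat{x},\hat{y})\ge\nd(x,y)/(1+15/\lambda)$, which is extracted from its proof via \Cref{it:original-weight}) shows $\nd(\hat{u}_1,\hat{u}_{j+1})=(1\pm O(\eps))\nd(u_1,u_{j+1})$ and $\nd(\tilde{u}_j,\tilde{u}_{j+1})=(1\pm O(\eps))\nd(u_j,u_{j+1})$; chaining these three approximations gives $\nd(\hat{u}_1,\hat{u}_{j+1})/\nd(\tilde{u}_j,\tilde{u}_{j+1})\in[1/2,2]$ with enormous slack.

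The only genuinely load-bearing step is the lower bound on $\nd(u_1,u_{j+1})$, where the case hypothesis $\lvl(\hat{u}_1)>i_1+3\log\lambda$ is essential to manufacture a polynomial-in-$\lambda$ gap between $\nd(u_1,u_{j+1})$ and $r_{i_1}$; once that gap is in hand, every remaining error term is controlled trivially by the huge value of $\lambda$, and no new geometric idea is required.
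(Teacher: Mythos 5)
Your proof is correct and follows essentially the same approach as the paper: both exploit the case hypothesis $\lvl(\hat{u}_1) > i_1 + 3\log\lambda$ to force $\nd(u_1,u_{j+1}) = \Omega(\lambda^3)\,r_{i_1}$ while $\nd(u_1,u_j) = O(\lambda)\,r_{i_1}$ by maximality of $j$, so that $\nd(u_1,u_j)$ is negligible and $\nd(u_j,u_{j+1}) \approx \nd(u_1,u_{j+1})$, after which the cross-edge approximation bounds close the argument. The only cosmetic difference is that you invoke \Cref{it:original-weight} and \Cref{it:approx-weight} of \Cref{prop:lightness-properties} (plus a lower bound extracted from the latter's proof) where the paper reaches directly for \Cref{it:good-approx-weight} of \Cref{prop:connectivity}, which packages both directions of the estimate.
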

	  
	\begin{proof}
		We first prove that $\nd(\hat{u}_1, \hat{u}_{j + 1}) \leq 2 {\nd(\tilde{u}_j, \tilde{u}_{j + 1})}$. Since $\lvl(\hat{u}_1) > i_1 + 3\log{\lambda}$, $\nd(\hat{u}_1, \hat{u}_{j + 1}) \geq \lambda^3r_{i_1}$. By triangle inequality, $\nd(u_1, u_j) \leq (\lambda + 32)r_{i_1} \leq \frac{(\lambda + 32)\nd(\hat{u}_1, \hat{u}_{j + 1})}{\lambda^3} \leq \frac{\nd(\hat{u}_1, \hat{u}_{j + 1})}{\lambda}$ because $\lambda ^ 2 \geq 2\lambda \geq \lambda + 32$. By \Cref{it:good-approx-weight} of \Cref{prop:connectivity}, $\nd(\hat{u}_1, \hat{u}_{j+ 1}) \leq \frac{\nd(u_{1}, u_{j + 1})}{1 - 5^{-3}\eps}$. Using the triangle inequality,
		
		\begin{equation}
			\nd(u_{1}, u_{j}) \leq (\lambda + 32)r_{i_1} \leq \frac{\nd(\hat{u}_1, \hat{u}_{j + 1})}{\lambda} \leq \frac{\nd(u_1, u_{j + 1})}{\lambda \cdot (1 - 5^{-3}\eps)} \leq \frac{\nd(u_1, u_j) + \nd(u_j, u_{j + 1})}{\lambda \cdot (1 - 5^{-3}\eps)} \qquad,
		\end{equation}
		
		implying that $\nd(u_1, u_j) \leq \frac{\nd(u_j, u_{j + 1})}{\lambda \cdot (1 - 5^{-3}\eps) - 1} \leq \eps \nd(u_j, u_{j + 1})$. By triangle inequality, $\nd(u_1, u_{j + 1}) \leq \nd(u_{j}, u_{j + 1}) + \nd(u_1, u_{j}) \leq (1 + \eps)\nd(u_{j}, u_{j + 1})$. Then, we have:
		
		\begin{equation}
			\begin{split}
				\nd(\hat{u}_1, \hat{u}_{j + 1}) &\leq (1 + 5^{-3}\eps)\nd(u_1, u_{j + 1}) \qquad\text{(by It. \ref{it:good-approx-weight} of Prop. \ref{prop:connectivity})}\\
				&\leq (1 + 5^{-3}\eps)(1 + \eps)\nd(u_j, u_{j + 1}) \leq (1 + \eps)^2\nd(u_j, u_{j + 1})\\
				& \leq  \frac{(1 + \eps)^2}{1 - 5^{-3}\eps}\nd(\tilde{u}_j, \tilde{u}_{j + 1})  \qquad \text{(by It. \ref{it:good-approx-weight} of Prop. \ref{prop:connectivity})}\\
				&\leq 2\nd(\tilde{u}_j, \tilde{u}_{j + 1}) \qquad,
			\end{split}
		\end{equation} 
	
		as claimed. Similarly, $\nd(\hat{u}_1, \hat{u}_{j + 1}) \geq 1/2 \cdot {\nd(\tilde{u}_j, \tilde{u}_{j + 1})}$. 
	\end{proof}	 
	
	Hence, from \Cref{it:level-weight-relation} of \Cref{prop:connectivity}, we have $\hat{i} \leq \log{\frac{\nd(\hat{u}_1, \hat{u}_{j + 1})}{\lambda}} + 2$ and $\tilde{i} \geq \log{\frac{\nd(\tilde{u}_j, \tilde{u}_{j + 1})}{\lambda}}$. Therefore,
	\begin{equation}
		\hat{i} - \tilde{i} \leq  \log{\frac{\nd(\hat{u}_1, \hat{u}_{j + 1})}{\lambda}} + 2 - \log{\frac{\nd(\tilde{u}_j, \tilde{u}_{j + 1})}{\lambda}} =  \log{\frac{\nd(\hat{u}_1, \hat{u}_{j + 1})}{\nd(\tilde{u}_j, \tilde{u}_{j + 1})}} + 2 < 3 \qquad,
	\end{equation}
	implying that  $\hat{i} \leq \tilde{i} + 2$ since both $\hat{i}$ and $\tilde{i}$ are integers. Similarly, $\hat{i} \geq \tilde{i} - 2$.
	
	Let $(w_1, w_{j + 1})$ be the $2$-cross edge of $(u_1, u_{j + 1})$, i.e., the ancestor at level $\hat{i} + 2$ of $(\hat{u}_1, \hat{u}_{j + 1})$. $(w_1, w_{j + 1})$ is a $P$-jump by definition. We complete our proof by showing that $(w_1, w_{j + 1}) \in E^*$. By \cref{line:begin-aug-original} -- \ref{line:end-aug-original}, $\Aug(\tilde{u}_{j + 1}, 0, 5\log{\lambda}) \subseteq E^*$. Observe that $\lvl(w_{j + 1}) = \hat{i} + 2 \in [\tilde{i}, \tilde{i} + 4]$, implying that $w_{j + 1}$ is an ancester of $\tilde{u}_{j + 1}$ at a level lower than or equal to $\tilde{i} + 4$. Hence, $(w_1, w_{j + 1}) \in \Aug(\tilde{u}_{j + 1}, 0, 5\log{\lambda})$. Furthermore, by \Cref{clm:augmented}, we also obtain $\Pa^2(\Aug(w_1, 0, \log{\lambda}))$ and $\Pa^2(\Aug(w_{j + 1}, 0, \log{\lambda}))$ are in $E^*$.
\end{proof}

To construct a $P$-detour, we keep finding a jump recursively on subpaths of $P$ until we reach a complete end. We later prove that the set of those jumps has a structure called \emph{$P$-stair-jump}.

An oriented cross edge $(x \rightarrow y)$ is a cross edge with direction from $x$ to $y$. We formally define a $P$-stair-jump.

\begin{definition}
	\label{def:detour}
	Given a path $P = \{u_1, u_2, \ldots u_l\}$ of $G$, an ordered set of oriented cross edges  $\overrightarrow{D} = \{(x_1 \rightarrow y_1), (x_2 \rightarrow y_2), \ldots (x_h \rightarrow y_h)\}$ is a \textit{$P$-stair-jump} if:
	
	\begin{itemize}
		\item For any $k$, $(x_k, y_k)$ is a good cross edge of $(u_{a_k}, u_{b_k})$, $a_k < b_k$, $a_1 = 1$ and $a_{k + 1} = b_{k}$ for any $k < |\overrightarrow{D}|$.
		\item $(x_k, y_k)$ is a $P_{a_k}$-jump with $P_j = \{(u_j, u_{j + 1}), (u_{j + 1}, u_{j + 2}), \ldots (u_{l - 1}, u_l)\}$ for every $j \in [1, l]$.
		\item $\lvl(x_1) < \lvl(x_2) < \ldots < \lvl(x_h)$.
	\end{itemize}
\end{definition}

See \Cref{fig:stair} for an illustration. The set $D$ containing undirected versions of cross edges in $\overrightarrow{D}$ is an \textit{undirected $P$-stair-jump}. The node $y_h$ is the \textit{tail} of $D$ and $\overrightarrow{D}$.

\begin{center}
	\begin{figure}[H]
		\includegraphics[width=0.8\textwidth]{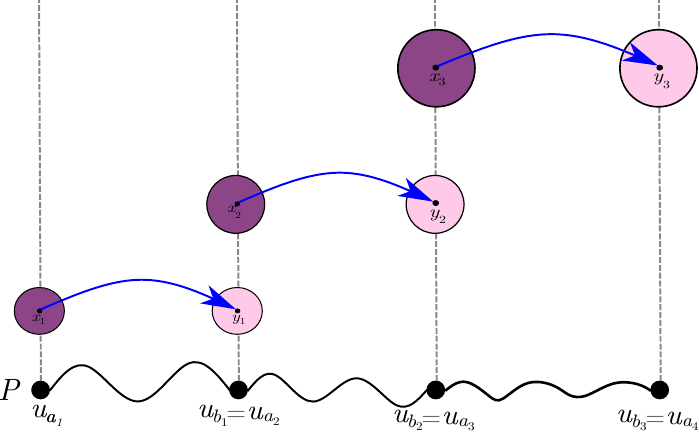}
		\caption{An example of a $P$-stair-jump with three cross edges. Note that $u_{a_1} = u$. Each pink node $y_k$ is a descendant of a purple node $x_{k + 1}$. When we construct a $P$-stair-jump later in the proof of \Cref{lm:detour-exist}, every pink node is incomplete and every purple node is complete.}
		\label{fig:stair}
	\end{figure}
\end{center}

We prove that the total weight of $\overrightarrow{D}$ is approximate the weight of the highest cross edge.

\begin{claim}
	\label{clm:stair-weight}
	Let $u$ and $v$ be two points in $X$ and $P = (u = u_1, u_2, \ldots u_l = v)$ be the shortest path from $u$ to $v$ in $G$. For any $P$-stair-jump $\overrightarrow{D} = \{(x_1 \rightarrow y_1), (x_2 \rightarrow y_2), \ldots (x_h \rightarrow y_h)\}$ with the last point $u'$, $\frac{\lambda}{250(1 + \eps)^2}r_{\lvl(x_h)} \leq \nd(u_1, u') \leq \frac{2\lambda}{1 - \eps}r_{\lvl(x_h)}$.
\end{claim}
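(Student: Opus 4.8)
The plan is to prove the lower and upper bounds on $\nd(u_1,u')$ separately (here $u'=u_{b_h}$ is the last point), in both cases reducing $\nd(u_1,u')$ to $r^\star:=r_{\lvl(x_h)}$ via three facts already at hand about each $(x_k,y_k)$, which — being a $P_{a_k}$-jump — is a $\kappa_k$-cross edge of $(u_{a_k},u_{b_k})$ with $\kappa_k\le 2$ (in particular a good cross edge): (i) a cross edge at level $\lvl(x_k)$ has weight $\nd(x_k,y_k)\le\lambda r_{\lvl(x_k)}$; (ii) by \Cref{it:good-approx-weight} of \Cref{prop:connectivity} (taking the two "$u',v'$" there to be $x_k,y_k$), $(1-5^{-3}\eps)\nd(u_{a_k},u_{b_k})\le\nd(x_k,y_k)\le(1+5^{-3}\eps)\nd(u_{a_k},u_{b_k})$; and (iii) by \Cref{it:good-cross-edge-long} of \Cref{prop:connectivity}, $\nd(x_k,y_k)\ge r_{\lvl(x_k)}\bigl(\tfrac{\lambda}{6\cdot 5^{\kappa_k}}-4\bigr)\ge r_{\lvl(x_k)}\bigl(\tfrac{\lambda}{150}-4\bigr)$. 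I will also use that the levels $\lvl(x_1)<\cdots<\lvl(x_h)$ are strictly increasing integers, so $r_{\lvl(x_k)}\le r^\star\cdot 5^{-(h-k)}$, and that the segments chain up ($a_1=1$, $a_{k+1}=b_k$).

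For the upper bound I would first note that $u_1=u_{a_1},\,u_{a_2}=u_{b_1},\,\ldots,\,u_{a_h}=u_{b_{h-1}},\,u'=u_{b_h}$ is a chain, so the triangle inequality gives $\nd(u_1,u')\le\sum_{k=1}^{h}\nd(u_{a_k},u_{b_k})$; then I would bound each term by $\nd(x_k,y_k)/(1-5^{-3}\eps)\le\lambda r_{\lvl(x_k)}/(1-5^{-3}\eps)$ using (ii) and (i), substitute $r_{\lvl(x_k)}\le r^\star 5^{-(h-k)}$, and sum the geometric series $\sum_{j\ge 0}5^{-j}=5/4$. This yields $\nd(u_1,u')\le\tfrac{5\lambda}{4(1-5^{-3}\eps)}\,r^\star\le\tfrac{2\lambda}{1-\eps}\,r^\star$, where the last step uses $1-5^{-3}\eps\ge 1-\eps$ and $5/4<2$.

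The lower bound is where I expect the main obstacle. The naive estimate $\nd(u_1,u')\ge\nd(u_{a_h},u_{b_h})-\nd(u_1,u_{a_h})$ is useless, because the prefix $\nd(u_1,u_{a_h})$ can be of order $\lambda r^\star$ — comparable to $\nd(u_{a_h},u_{b_h})$ but with a larger constant — so the difference need not be positive. The remedy is to exploit that $P$ is a \emph{shortest} path in $G$: the subpath of $P$ from $u_1$ to $u'=u_{b_h}$ passes through $u_{a_h}$ and has weight $\dist_G(u_1,u')$, hence $\dist_G(u_1,u')\ge\dist_G(u_{a_h},u_{b_h})\ge\nd(u_{a_h},u_{b_h})$; combined with $\dist_G(u_1,u')\le(1+\eps)\nd(u_1,u')$ (since $G$ is a $(1+\eps)$-spanner) this gives $\nd(u_1,u')\ge\nd(u_{a_h},u_{b_h})/(1+\eps)$, discarding the prefix error for free rather than charging it. Finally (ii) and (iii) give $\nd(u_{a_h},u_{b_h})\ge\nd(x_h,y_h)/(1+5^{-3}\eps)\ge r^\star\bigl(\tfrac{\lambda}{150}-4\bigr)/(1+5^{-3}\eps)$; since $\lambda=5^{20}(1+\eps^{-1})$ is large we have $\tfrac{\lambda}{150}-4\ge\tfrac{\lambda}{250}$, and since $(1+5^{-3}\eps)(1+\eps)\le(1+\eps)^2$ we conclude $\nd(u_1,u')\ge\tfrac{\lambda}{250(1+\eps)^2}\,r^\star$, as claimed.
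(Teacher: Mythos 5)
Your proposal is correct and follows essentially the same route as the paper's proof: the upper bound via the triangle inequality over the chain $u_{a_1},\ldots,u_{b_h}$, the cross-edge length bound $\nd(x_k,y_k)\le\lambda r_{\lvl(x_k)}$, the distortion bound from \Cref{it:good-approx-weight} of \Cref{prop:connectivity}, and a geometric sum over the strictly increasing levels; and the lower bound via the $(1+\eps)$-spanner inequality on $\dist_G(u_1,u')$, shortest-path monotonicity, and the cross-edge lower bound from \Cref{it:good-cross-edge-long}. The only cosmetic difference is that you reduce directly to the single segment $(u_{a_h},u_{b_h})$ using $\dist_G(u_1,u')\ge\dist_G(u_{a_h},u_{b_h})$, while the paper sums $\nd(x_k,y_k)$ over all $k$ and then discards all terms but the last — the same underlying argument, slightly reorganized.
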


\begin{proof}
	By \Cref{def:detour}, we have $\lvl(x_1) < \lvl(x_2) < \ldots \lvl(x_h)$. By \Cref{def:detour}, there exists $\{a_i, b_i\}_{i \in [1, h]}$ such that $a_1 = 1, a_{k + 1} = b_k$ for every $k \in [1, h - 1]$ and $(x_k, y_k)$ is a good cross edge of $(u_{a_k}, u_{b_k})$ for every $k \in [1, h]$. Note that $u' = u_{b_h}$. For every $k \in [1, h]$, since $(x_k, y_k)$ is a cross edge, $\nd(x_k, y_k) \leq \lambda r_{\lvl(x_t)}$. Hence, 
	\begin{equation}
		\sum_{k = 1}^h\nd(x_k, y_k) \leq \lambda \cdot \sum_{k = 1}^h r_{\lvl(x_k)} \leq 2\lambda \cdot r_{\lvl(x_h)} 	
	\end{equation}
	by a geometric sum. By \Cref{it:good-approx-weight} of \Cref{prop:connectivity}, $1 - 5^{-3}\eps \leq \frac{\nd(x_k, y_k)}{\nd(u_{a_k}, u_{b_k})} \leq 1 + 5^{-3}\eps$ for any $k$. Then, using the triangle inequality, we obtain:
	\begin{equation}
		\nd(u_1, u_{b_h}) \leq \sum_{k = 1}^{h}\nd(u_{a_k}, u_{b_k}) \leq \frac{1}{1 - 5^{-3}\eps}\sum_{k = 1}^h\nd(x_k, y_k) \leq \frac{2\lambda r_{\lvl(x_h) }}{1 - 5^{-3}\eps} \leq \frac{2\lambda r_{\lvl(x_h)}}{1 - \eps}.
	\end{equation}
	 For the lower bound, since $G$ is an $(1 + \eps)$-spanner of $X$, we have:
	\begin{equation}
		\begin{split}
			\nd(u_1, u_{b_h}) &\geq \frac{\dist_G(u_1, u_{b_h})}{1 + \eps} = \frac{1}{1 + \eps} \sum_{k = 1}^{h}\dist_G(u_{a_k}, u_{b_k})\\
			 &\geq \frac{1}{1 + \eps}\sum_{k = 1}^h\nd(u_{a_k}, u_{b_k}) \geq \frac{1}{(1 + \eps)(1 + 5^{-3}\eps)}\sum_{k = 1}^h\nd(x_k, y_k) \qquad\text{(by It. \ref{it:good-approx-weight} of Prop. \ref{prop:connectivity})}.
		\end{split}
	\end{equation}
	Since $(x_k, y_k)$ is a $\kappa$-cross edge of $(u_{a_k}, u_{b_k})$ with $0 \leq \kappa \leq 2$, $\nd(x_k, y_k) \geq \left(\frac{\lambda}{6\cdot 5^{2}} - 4\right)r_{\lvl(x_k)} \geq \frac{\lambda}{250} \cdot r_{\lvl(x_k)}$ by \Cref{it:good-cross-edge-long} of \Cref{prop:connectivity}. Thus, we have:
	\begin{equation}
		\nd(u_1, u_{b_h}) \geq \frac{1}{(1 + \eps)(1 + 5^{-3}\eps)}\sum_{k = 1}^h \frac{\lambda}{250} \cdot r_{\lvl(x_k)} \geq \frac{1}{(1 + \eps)^2} \frac{\lambda}{250} \cdot r_{\lvl(x_k)} \qquad,
	\end{equation}
	as claimed.
\end{proof}

In the next lemma, we show that the highest cross edge of a $P$-stair-jump $\overrightarrow{D}$ is approximately equal to the total distance from $u$ to the last point of $\overrightarrow{D}$, meaning that there is a $P$-detour whose level close to the highest cross edge in $\overrightarrow{D}$.

\begin{lemma}
	\label{lm:shortcut-crosspath}
	Let $u$ and $v$ be two points in $X$ and $P = (u = u_1, u_2, \ldots u_l = v)$ be the shortest path from $u$ to $v$ in $G$. For any  $P$-stair-jump $\overrightarrow{D} = \{(x_1 \rightarrow y_1), (x_2 \rightarrow y_2), \ldots (x_h \rightarrow y_h)\}$ with the last point $u'$, if $y_h$ is complete, there exists a good cross edge $(x, y)$ of $(u_1, u')$ in $\Pa^2(\Aug(x_h, 0, \lambda))$ such that $x$ and $y$ are complete.
\end{lemma}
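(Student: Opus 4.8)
The plan is to exhibit the required good cross edge explicitly, taking it to be a suitable ancestor of the original cross edge of $(u_1,u')$. Write $u' = u_{b_h}$, let $(\hat u_1,\hat u')$ be the original cross edge of $(u_1,u')$, and set $\hat\imath := \lvl(\hat u_1)$. The first step is to locate $\hat\imath$ relative to $\lvl(x_h)$. Since $\overrightarrow{D}$ is a $P$-stair-jump with last point $u'$, \Cref{clm:stair-weight} gives
\[
\frac{\lambda}{250(1+\eps)^2}\, r_{\lvl(x_h)} \;\le\; \nd(u_1,u') \;\le\; \frac{2\lambda}{1-\eps}\, r_{\lvl(x_h)}.
\]
Combining this with the two-sided bound $\frac{\nd(u_1,u')}{1+15/\lambda} \le \nd(\hat u_1,\hat u') \le (1+\eps)\nd(u_1,u')$ --- the lower inequality from the triangle inequality together with \Cref{clm:leaf-dist} and \Cref{it:original-weight} of \Cref{prop:lightness-properties}, the upper being \Cref{it:approx-weight} of that property --- and then with \Cref{it:level-weight-relation} of \Cref{prop:connectivity}, a short computation (using $\eps\le\tfrac1{20}$, $\lambda=5^{20}(1+\eps^{-1})$ and $\log=\log_5$) yields $\lvl(x_h)-3 \le \hat\imath \le \lvl(x_h)+2$.

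Next I would set $L := \lvl(x_h)+4$ and let $(x,y)$ be the ancestor of $(\hat u_1,\hat u')$ at level $L$, i.e.\ the $\kappa$-cross edge of $(u_1,u')$ with $\kappa=L-\hat\imath$; by the bounds above $\kappa\in[2,7]\subseteq[0,10]$, and since ancestors of cross edges are cross edges (\Cref{it:cross-edge-ancestor} of \Cref{prop:connectivity}), $(x,y)$ is a good cross edge of $(u_1,u')$. I would then verify both endpoints are complete: $x$ is the ancestor of $(u_1,0)$ at level $L$, hence an ancestor of $x_1$; and $x_1$ lies at level $\lvl(x_1)\le\lvl(x_h)<L$ and is a complete ancestor of $(u_1,0)$ because $(x_1,y_1)$ is a $P_1$-jump, so $x$ is complete by repeated use of \Cref{clm:parent-complete}. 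Symmetrically, $y$ is the ancestor of $(u',0)=(u_{b_h},0)$ at level $L>\lvl(y_h)$, and $y_h$ is complete by hypothesis, so $y$ is complete.

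The last --- and hardest --- step is to show $(x,y)\in\Pa^2(\Aug(x_h,0,\log\lambda))$. Because $\kappa\ge2$, we have $L-2\ge\hat\imath$, so I let $(p,q)$ be the ancestor of $(\hat u_1,\hat u')$ at level $L-2=\lvl(x_h)+2$; this is a cross edge, and $p,q$ are the level-$(L-2)$ ancestors of $(u_1,0)$ and $(u',0)$, so $(p,q)$ is a grandchild of $(x,y)$ in $T$, whence $(x,y)\in\Pa^2(\{(p,q)\})$. Writing $\hat x_h$ for the ancestor of $x_h$ at level $L-2$, it then suffices to prove $p,q\in NC[\hat x_h]$: that gives $(p,q)\in\Cr(NC[\hat x_h])=\Aug_{L-2}(x_h)\subseteq\Aug(x_h,0,\log\lambda)$ (as $0\le L-2-\lvl(x_h)=2\le\log\lambda$), hence $(x,y)\in\Pa^2(\Aug(x_h,0,\log\lambda))$. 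For $\nd(\hat x_h,q)$ I chain $\nd(\hat x_h,x_h)\le\tfrac54 r_{L-2}$, $\nd(x_h,y_h)\le\lambda r_{\lvl(x_h)}=\tfrac{\lambda}{25}r_{L-2}$, $\nd(y_h,u')\le\tfrac54 r_{\lvl(x_h)}$ and $\nd(u',q)\le\tfrac54 r_{L-2}$ (via \Cref{clm:leaf-dist} and the definition of a cross edge), getting $\nd(\hat x_h,q)\le(\tfrac{51}{20}+\tfrac{\lambda}{25})r_{L-2}\le\lambda r_{L-2}$ since $\lambda$ is enormous. For $\nd(\hat x_h,p)$ the only delicate term is $\nd(u_1,u_{a_h})=\nd(u_1,u_{b_{h-1}})$ (which is $0$ when $h=1$); for $h\ge2$ I bound it by applying \Cref{clm:stair-weight} to the prefix stair-jump $\{(x_1\to y_1),\dots,(x_{h-1}\to y_{h-1})\}$, whose top level $\lvl(x_{h-1})$ is at most $\lvl(x_h)-1$, so $\nd(u_1,u_{b_{h-1}})\le\tfrac{2\lambda}{1-\eps}r_{\lvl(x_{h-1})}\le\tfrac{2\lambda}{5(1-\eps)}r_{\lvl(x_h)}$ --- a tiny fraction of $r_{L-2}$ --- and together with $\nd(\hat x_h,x_h)$, $\nd(x_h,u_{a_h})\le\tfrac54 r_{\lvl(x_h)}$ and $\nd(u_1,p)\le\tfrac54 r_{L-2}$ this again yields $\nd(\hat x_h,p)\le\lambda r_{L-2}$. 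The main obstacle is precisely this last coordination: one must choose $L$ so that $(x,y)$ stays a good cross edge of $(u_1,u')$ while its grandchild falls in the window $[\lvl(x_h),\lvl(x_h)+\log\lambda]$ and has both endpoints inside $NC[\hat x_h]$ --- the latter resting entirely on the geometric decay of the radii $r_{\lvl(x_k)}$ along the stair-jump, which is what keeps the prefix's contribution to only an $O(1/\lambda)$ fraction of $r_{\lvl(x_h)}$.
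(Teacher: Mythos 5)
Your proposal is correct and follows essentially the same strategy as the paper: start from the original cross edge $(\hat u_1,\hat u')$, use the stair-jump weight estimate (via \Cref{clm:stair-weight} and \Cref{it:level-weight-relation} of \Cref{prop:connectivity}) to pin $\lvl(\hat u_1)$ within a constant window around $\lvl(x_h)$, pick an ancestor of $(\hat u_1,\hat u')$ at a suitable level as the desired good cross edge, show both its endpoints are complete via \Cref{clm:parent-complete} (they dominate $x_1$ and $y_h$), and show its grandchild lies in $\Cr(NC[\cdot])$ of a low ancestor of $x_h$ so that it belongs to $\Pa^2(\Aug(x_h,0,\log\lambda))$. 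The only cosmetic differences from the paper's proof: you fix the target level as $L=\lvl(x_h)+4$ and the grandchild level as $\lvl(x_h)+2$ (the paper uses $\lvl(\hat u_1)+7$ and $\lvl(\hat u_1)+5$, which by \Cref{clm:lvl-highest} land in essentially the same band), and you verify $p,q\in NC[\hat x_h]$ by a direct triangle-inequality chain (for $q$ through $x_h,y_h,u'$, and for $p$ through $x_h,u_{a_h},u_1$ using \Cref{clm:stair-weight} on the prefix stair-jump), whereas the paper invokes \Cref{it:cross-edge-ancestor} of \Cref{prop:connectivity} on ancestors of $(x_h,y_h)$ and of the original cross edge of $(u_1,u_{a_h})$; both reduce to the same geometric decay of the radii along the stair-jump.
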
 

\begin{proof}
	Let $\{(a_k, b_k)\}_{k \in [1, h]}$ be the notation as in \Cref{def:detour}. We claim that the original cross edge of $(u, u_{b_h})$ locates at a level not too far from $\lvl(x_h)$.
	
	\begin{claim}
		\label{clm:lvl-highest}
		For every integer $k \in [1, h]$, let $i_k$ be the level of the original cross edge of $(u, u_{b_k})$. Then, $\lvl(x_k) - 3 \leq i_k \leq \lvl(x_k) + 3$.
	\end{claim}
	
	\begin{proof}
		Let $e = (u, u_{b_k})$ and $(\check{u}, \check{u}_{b_k})$ be the original cross edge of $e$. From \Cref{clm:stair-weight}, $\frac{\lambda r_{\lvl(x_k)} }{250(1 + \eps)^2} \leq \nd(u_1, u_{b_k}) \leq \frac{2\lambda r_{\lvl(x_k)}}{1 - \eps}$. Combining with \Cref{it:good-approx-weight} of \Cref{prop:connectivity}, we obtain:
		\begin{equation}
			\nd(\check{u}, \check{u}_{b_k}) \leq (1 + 5^{-3}\eps)\nd(u_1, u_{b_k}) \leq \frac{2\lambda r_{\lvl(x_k)} \cdot (1 + 5^{-3}\eps)}{1 - \eps} \qquad .
		\end{equation}
		Hence,  by \Cref{it:level-weight-relation} of \Cref{prop:connectivity}, we have
		\begin{equation}
			\lvl(\check{u}) \leq \log{\frac{\nd(\check{u}, \check{u}_{b_k})}{\lambda}} + 2 \leq \log{\frac{2\lambda r_{\lvl(x_k)} \cdot (1 + 5^{-3}\eps)}{(1 - \eps)\lambda}} + 2 \leq \lvl(x_k) + 3  ~,
		\end{equation}
		implying that $i_k \leq \lvl(x_k)$ since $i_k = \lvl(\check{u})$. On the other hand:
		
		\begin{equation}
			\nd(\check{u}, \check{u}_{b_k}) \geq (1 - 5^{-3}\eps)\nd(u_1, u_{b_k}) \geq (1 - 5^{-3}\eps) \frac{\lambda r_{\lvl(x_k)} }{250(1 + \eps)^2} \cdot  \qquad \text{(by \Cref{clm:stair-weight})} .
		\end{equation}
		Thus, by \Cref{it:level-weight-relation} of \Cref{prop:connectivity}, we have:
		\begin{equation}
			\begin{split}
				\lvl(\check{u}) &\geq \log \frac{\nd(\check{u}, \check{u}_{b_k})}{\lambda} \geq \log \left(\frac{(1 - 5^{-3}\eps) r_{\lvl(x_k)} }{250(1 + \eps)^2}\right) \\
				&\geq \lvl(x_k) + \log \left(\frac{1 - 5^{-3}\eps}{250(1 + \eps)^2}\right) \geq \lvl(x_k) - 3~,
			\end{split}	
		\end{equation}
		as $\eps \leq 1/20$. Therefore, we get $\lvl(\check{u}) \geq \lvl(x_k) - 3$, or $i_k \geq \lvl(x_k) - 3$ as claimed.
	\end{proof}

	Let $(\hat{u}, \hat{u}_{b_h})$ be the original cross edge of $(u, u_{b_h})$.  By \Cref{clm:lvl-highest}, $\lvl(x_h) - 3 \leq \lvl(\hat{u}) \leq  \lvl(x_h) + 3$. Let $(z_5, t_5)$ be the $5$-cross edge of $(u, u_{b_h})$, we prove that $(z_5, t_5)$ is in $\Aug(x_h, 0, \lambda)$, implying that the $7$-cross edge of $(u, u_{b_h})$ is in $\Pa^2(\Aug(x_h, 0, \log \lambda))$. Let $x_h'$ be the ancestor of $x_h$ at level $\lvl(z_5)$. $x_h'$ exists since
	\begin{equation}
		\label{eq:lvx5}
		\lvl(x_h') = \lvl(z_5) = \lvl(\hat{u}) + 5 \geq \lvl(x_h) + 2 > \lvl(x_h).
	\end{equation}
	 The set of cross edges between nodes in $NC[x_h']$ is a subset of $\Aug(x_h, 0, \log \lambda)$ because $\lvl(x_h') = \lvl(\hat{u}) + 5 \leq \lvl(x_h) + 8 \leq \lvl(x_h) + \log \lambda$. Since $(x_h, y_h)$ is a $\kappa$-cross edge of $(u_{a_h}, u_{b_h})$ with $0 \leq \kappa \leq 2$ and $(x_h', t_5)$ is an ancestor of $(x_h, y_h)$, $t_5 \in NC[x_h']$ by \Cref{it:cross-edge-ancestor} of \Cref{prop:connectivity}. We claim that $z_5$ is also in $NC[x_h']$. Let $(\tilde{u}, \tilde{u}_{b_{h - 1}})$ be the originial cross edge of $(u, u_{b_{h - 1}})$. By \Cref{clm:lvl-highest}, $\lvl(\tilde{u}) \leq \lvl(x_{h - 1}) + 3 \leq \lvl(x_h) + 2$, which is at most $\lvl(x_h')$ by \Cref{eq:lvx5}. Hence, $(z_5, x_h')$ is an ancestor of $(\tilde{u}, \tilde{u}_{b_{h - 1}})$ and therefore, $z_5 \in NC[x_h']$ by \Cref{it:cross-edge-ancestor} of \Cref{prop:connectivity}. 
	 
	 We finish our proof by showing that both ends of the $7$-cross edge of $(u, u_{b_h})$ are complete. Since the $7$-cross edge of $(u, u_{b_h})$ is an ancestor of $(z_5, t_5)$, it is enough to show that both $z_5$ and $t_5$ are ancestors of some complete ancestors of $u$ and $u_{b_h}$ by \Cref{clm:parent-complete}. Since $\lvl(z_5) > \lvl(x_h) \geq \lvl(x_1)$ by \Cref{eq:lvx5} and $x_1$ is a complete ancestor of $(u, 0)$, $z_5$ is an complete ancestor of $(u, 0)$. Similarly, $\lvl(t_5) = \lvl(z_5) \geq \lvl(y_h)$. Hence, $t_5$ is an ancestor of a complete ancestor $y_h$ of $(u_{b_h})$.	
\end{proof}

We now prove for any shortest path from $u$ to $v$ in $G$, there is a $P$-detour set in $E^*$. For the case that two points $u$ and $v$ are close, the next claim shows that if the level of the original cross edge of $(u, v)$ is less than the level of the lowest complete ancestor of any point in the shortest path between $u$ and $v$, there exists a good cross edge of $(u, v)$ in $E^*$ (which is also a $P$-detour).

\begin{claim}
	\label{clm:mid-small}
	Let $u$ and $v$ be two points in $X$, $i$ be the level of the original cross edge of $(u, v)$ and $P = (u = u_1, u_2, \ldots u_l = v)$ be a shortest path from $u$ to $v$ in $G$. For any point $u_j \in V(P)$ ($1 \leq j \leq l$), if $i \leq \lvl(\LCA(u_j))$, there is a good cross edge of $e$ in $E^*$.
\end{claim}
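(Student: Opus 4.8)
The plan is to exhibit, explicitly, a single node $z^{*}$ high in the net-tree $T$ whose processing in \cref{line:begin-add-incomplete}--\ref{line:end-add-incomplete} of \Cref{alg:VFT-spanner} inserts a good cross edge of $(u,v)$ into $E^{*}$. Write $(\hat u,\hat v)$ for the original cross edge of $(u,v)$, so $\lvl(\hat u)=\lvl(\hat v)=i$; fix a small absolute constant $\kappa$ (for concreteness $\kappa=2$), set $j'=i+\kappa$, and let $(x,y)$ be the $\kappa$-cross edge of $(u,v)$, i.e.\ the ancestor of $(\hat u,\hat v)$ at level $j'$ (so $x,y$ are the ancestors of $u,v$ at level $j'$). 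Since $\kappa\le 10$, $(x,y)$ is by definition a good cross edge of $(u,v)$, so it suffices to show $(x,y)\in E^{*}$. Take $z^{*}$ to be the ancestor of $u_{j}$ at level $j'$; this is well-defined because $i\le \zeta-1-\kappa$, which follows since \Cref{it:original-weight} of \Cref{prop:lightness-properties} gives $\lambda r_i/6\le \nd(\hat u,\hat v)\le \Delta\le r_{\zeta}$, forcing $\zeta-i\ge \log(\lambda/6)\ge 18$.

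Next I would verify the two geometric facts that make \cref{line:add-NCx} add $(x,y)$ when $z^{*}$ is processed. First, $x,y\in NC[z^{*}]$: by \Cref{clm:leaf-dist}, $\nd(z^{*},u_{j}),\nd(x,u),\nd(y,v)\le 2r_{j'}$; since $u_{j}$ lies on the shortest $u$--$v$ path in $G$, $\nd(u,u_{j}),\nd(v,u_{j})\le \dist_G(u,v)\le (1+\eps)\nd(u,v)$; and since $(\hat u,\hat v)$ is a cross edge, $\nd(u,v)\le \nd(\hat u,\hat v)+\nd(u,\hat u)+\nd(v,\hat v)\le (\lambda+\tfrac52)r_i$. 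Chaining these with the triangle inequality yields $\nd(z^{*},x),\nd(z^{*},y)\le 4r_{j'}+(1+\eps)(\lambda+\tfrac52)r_i$, which is at most $\lambda r_{j'}$ since $r_{j'}=5^{\kappa}r_i$, $\eps\le\tfrac1{20}$, and $\lambda=5^{20}(1+\eps^{-1})$ leave a large multiplicative slack. Second, $(x,y)$ has the right weight: being an ancestor of the cross edge $(\hat u,\hat v)$, it is itself a cross edge (\Cref{it:cross-edge-ancestor} of \Cref{prop:connectivity}), so $\nd(x,y)\le \lambda r_{j'}$, while \Cref{it:good-cross-edge-long} of \Cref{prop:connectivity} (applicable as $\kappa\le 10$) gives $\nd(x,y)\ge 64r_{j'}$. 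Hence $(x,y)\in \Cr(NC[z^{*}])$ and its weight lies in the window $[64r_{j'},\lambda r_{j'}]$ of \cref{line:add-NCx}.

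It then remains to show the guard in \cref{line:check-Dx} fires for $z^{*}$, i.e.\ that $D_{z^{*}}$ contains an incomplete node. Let $m=\lvl(\LCA(u_{j}))$, so $m\ge i$ by hypothesis, and recall $\LCA(u_{j})$ is the parent of the almost complete ancestor of $u_{j}$; since a descendant of an incomplete node is incomplete (\Cref{clm:parent-complete}), every ancestor of $u_{j}$ at a level in $\{0,\dots,m-1\}$ is incomplete. If $m\le j'$, then the almost complete ancestor of $u_{j}$, at level $m-1\le j'-1$, is a proper descendant of $z^{*}$ within $5\log\lambda$ levels of it (using $m-1\ge i-1\ge j'-5\log\lambda$), and it is incomplete — i.e.\ small with at most $f$ leaves — so it witnesses \cref{line:check-Dx}. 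If instead $m>j'$, then the ancestor of $u_{j}$ at level $j'-1$ (a child of $z^{*}$, hence in $D_{z^{*}}$) is still incomplete and again witnesses \cref{line:check-Dx}. In either case $(x,y)$ is added to $E^{*}$, proving the claim.

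The main obstacle I anticipate is the case analysis of the last paragraph: the quantity $m$ may be close to $j'$ or far above it, and these two regimes genuinely require different incomplete witnesses inside $D_{z^{*}}$ (the almost complete ancestor of $u_{j}$ in one case, a node strictly between $u_{j}$ and $z^{*}$ in the other), while in both one must simultaneously control that the witness is a descendant of $z^{*}$ and within $5\log\lambda$ levels of it. The distance estimate of the second paragraph is, by contrast, routine once one remembers to bound $\nd(u,v)$ from above using the cross-edge bound $\nd(\hat u,\hat v)\le \lambda r_i$ rather than the more familiar lower bound on it.
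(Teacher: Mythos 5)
Your proof is correct. Structurally it mirrors the paper's argument: pick the ancestor of $u_j$ at a level slightly above $i$ as the node whose processing inserts the good cross edge, show that the corresponding ancestors of $u$ and $v$ lie in its cross-neighborhood, verify the weight window $[64r_{j'},\lambda r_{j'}]$, and finally show the incompleteness guard fires (with the same two-case split on whether $\lvl(\LCA(u_j))$ is at or above the chosen level). The meaningful difference is in how cross-neighbor membership is established. The paper first proves a level bound on the \emph{original cross edges of the subsegments} $(u,u_j)$ and $(u_j,v)$ — showing $i_l,i_r\le i+2$ via the level--weight relation of \Cref{it:level-weight-relation} of \Cref{prop:connectivity} — and then lifts to level $i+3$ with the ancestor-of-a-cross-edge property (\Cref{it:cross-edge-ancestor}). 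You bypass those intermediate objects entirely and chain triangle inequalities through $z^*\to u_j\to u\to x$ directly, controlling $\nd(u,u_j)$ by the $(1+\eps)$-spanner guarantee of $G$ and $\nd(u,v)$ by the cross-edge diameter $\nd(\hat u,\hat v)\le\lambda r_i$. This is a more elementary computation and slightly shorter; the paper's route has the advantage of reusing the already-established structural facts about original cross edges, which makes the level relationships more transparent. The choice $\kappa=2$ versus the paper's $\kappa=3$ is immaterial, and you also explicitly check the $[64r_{j'},\lambda r_{j'}]$ weight window, which the paper instead dispatches by a global remark preceding its proof of \Cref{lm:detour-exist}.
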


\begin{proof}
	Let $i_l, i_r$ be the level of the original cross edges of $(u, u_j)$ and $(u_j, v)$, respectively. We claim that $i_l, i_r \leq i + 2$. Let $(\hat{u}, \hat{v})$ and $(\tilde{u}, \tilde{u}_j)$ be $(u, v)$'s and $(u, u_j)$'s original cross edges, respectively. By \Cref{it:good-approx-weight} of \Cref{prop:connectivity}, we have:
	\begin{equation}
		\begin{split}
			\nd(\tilde{u}, \tilde{u}_j) &\leq (1 + 5^{-3}\eps)\nd(u, u_j) \leq (1 + 5^{-3}\eps)\dist_G(u, u_j)\\ 
			&\leq (1 + 5^{-3}\eps)\dist_G(u, v) \qquad\text{(since $P$ contains $u_j$ and $\dist_G(u, v) = \weight(P)$)}\\
			& \leq (1 + 5^{-3}\eps)(1 + \eps)\nd(u, v) \qquad \text{(since $G$ is a $(1 + \eps)$-spanner of $X$)}\\
			&\leq (1 + \eps)\frac{1 + 5^{-3}\eps}{1 - 5^{-3}\eps}\nd(\hat{u}, \hat{v}) \leq 2 \nd(\hat{u}, \hat{v}) \qquad\text{(by It. \ref{it:good-approx-weight} of Prop. \ref{prop:connectivity})}.
		\end{split}
	\end{equation}
	By \Cref{it:level-weight-relation} of \Cref{prop:connectivity}, $i_l \leq \log{\frac{\nd(\tilde{u}, \tilde{u}_j)}{\lambda}} + 2 \leq \log \frac{2\nd(\hat{u}, \hat{v})}{\lambda} + 2 < i + 3$, implying that $i_l \leq i + 2$ since both $i_l$ and $i$ are integers. Similarly, $i_r \leq i + 2$. Let $\check{u}, \check{u}_j$ and $\check{v}$ be the ancestor of $(u, 0), (u_j, 0)$ and $(v, 0)$ at level $i + 3$. Since $i_l, i_r \leq i + 3$, $(\check{u}, \check{u}_j)$ and $(\check{u}_j, \check{v})$ are ancestors of $(u, u_j)$'s and $(u_j, v)$'s orginal cross edges, respectively. Hence, $\check{u}, \check{v} \in NC[\check{u}_j]$ by \Cref{it:cross-edge-ancestor} of \Cref{prop:connectivity}. Furthermore, $(\check{u}, \check{v})$ is a good cross edge of $(u, v)$ by definition. If $\check{u}_j$ is incomplete, then all cross edges between nodes in $NC[\check{u}_j]$ is in $E^*$ by \cref{line:add-NCx} of \Cref{alg:VFT-spanner}, implying that $(\check{u}, \check{v}) \in E^*$. If $\check{u}_j$ is complete, then $\check{u}_j$ has an incomplete ancestor at level $\lvl(\LCA(u_j)) - 1$ by the definition of lowest complete ancestor. Since $\lvl(\LCA(u_j)) - 1 \geq i - 1 \geq \lvl(\check{u}_j) - 5\log{\lambda}$, by \cref{line:add-NCx}, all (long enough) cross edges in $NC[\check{u}_j]$ are also in $E^*$, implying that $(\check{u}, \check{v}) \in E^*$.
\end{proof}

We now ready to prove there is always a $P$-detour for every shortest path $P$ in $G$.  

\begin{proof}[Proof of \Cref{lm:detour-exist}]
	The idea is to find a $P$-stair-jump $\overrightarrow{D}$. If the tail $y$ of $\overrightarrow{D}$ is complete, there exists a $P$-detour in $E^*$ by \Cref{lm:shortcut-crosspath}. Otherwise, if $y$ is incomplete, we keep adding another jump to $\overrightarrow{D}$. If we cannot find any jump, then the subtree containing $y$, denoted by $T_y$, in the LNF is close to both $u$ and $v$, implying that we can find a detour directly from an ancestor of $(u, 0)$ to an ancestor of $(v, 0)$ in the set of augmented cross edges of nodes in $T_y$.
	
	Let $P = (u = u_1, u_2, \ldots u_l = v)$, $i_1$ be the level of $\LCA(u_1)$ and $v_1$ be the ancestor of $v$ at level $i_1$. We denote by $i$ the level of the original cross edge of $(u, v)$. We claim that there is either a good cross edge of $(u, v)$ or a $P$-stair-jump in $E^*$.
	
	Let $\overrightarrow{D}$ be a list of oriented cross edges. At the beginning, $\overrightarrow{D} = \emptyset$ and we consider $P_1$. If $\nd(\LCA(u_1), v_1) \leq \lambda r_{i_1}$, then the original cross edge of $(u, v)$ has level at most $i_1$. By \Cref{clm:mid-small}, there exists a good cross edge of $(u, v)$ in $E^*$. Otherwise, $\nd(\LCA(u_1), v_1) > \lambda r_{i_1}$, by \Cref{lm:detour-edge}, $E^*$ contains a good cross edge $(x_1, y_1)$ of $(u_1, u_{b_1})$ for some $b_1 > 1$. Let $a_1 = 1$. Furthermore, we have $\Pa^2(\Aug(x_1, 0, \log{\lambda}))$ and $\Pa^2(\Aug(y_1, 0 , \log{\lambda}))$ are also in $E^*$. If $y_1$ is complete, then we found a $P$-stair-jump with one element. Otherwise, we keep doing this process with the path $P_{b_1}$. Assuming that at current step, $\overrightarrow{D} = \{(x_1, y_1), (x_2, y_2), \ldots (x_h, y_h)\}$ with $h > 0$ such that:
	\begin{itemize}
		\item For every $k \in [1, h]$, $(x_k, y_k)$ is a good cross edge of $(u_{a_k}, u_{b_k})$ with $a_k < b_k$, $a_1 = 1$ and $a_{k + 1} = b_{k}$ if $k < h$.
		\item $y_k$ is incomplete for every $k \in [1, h]$.
		\item $\Pa^2(\Aug(x_k, 0, \log\lambda))$ and $\Pa^2(\Aug(y_k, 0, \log\lambda))$ are in $E^*$ for every $k \in [1, h]$.
	\end{itemize}
	
	See \Cref{fig:stair} for an illustration. For every $k \in [1, h]$, let $i_k = \lvl(x_k)$ ($= \lvl(y_k)$). Let $j = b_h$, $g_j = \lvl(\LCA(u_j))$ and $v_j$ be the ancestor of $(v, 0)$ at level $g_j$. If $\nd(\LCA(u_j), v_j) \leq \lambda r_{g_j}$, we claim that there is an good cross edge between $u$ and $v$ in $E^*$.
	
	\begin{claim}
		\label{clm:short-tail-cross}
		If $\nd(\LCA(u_{j}), v_j) \leq \lambda r_{g_j}$, then there exists a good cross edge $(x, y)$ of $(u, v)$ in $E^*$.
	\end{claim}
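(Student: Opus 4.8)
The plan is to split on the level $i := \lvl(\hat u)$ of the original cross edge $(\hat u,\hat v)$ of $(u,v)$, reducing the low case to \Cref{clm:mid-small} and building the good cross edge by hand in the high case. Set $t := \LCA(u_j)$, so $t$ is complete and sits at level $g_j$, and let $a$ be the child of $t$ on the tree path to $(u_j,0)$; thus $a$ is the almost complete ancestor of $u_j$ — it is incomplete (hence small, with at most $f$ leaves) and lies at level $g_j-1$. If $i \le g_j = \lvl(\LCA(u_j))$, then \Cref{clm:mid-small} applied to the point $u_j \in V(P)$ immediately yields a good cross edge of $(u,v)$ in $E^*$, and we are done. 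So assume henceforth $i \ge g_j + 1$.

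Next I would locate $i$ more precisely. Since $(u_j,0)$ is a leaf of $t$ and $(v,0)$ is a leaf of $v_j$, \Cref{clm:leaf-dist} together with the hypothesis gives $\nd(u_j,v) \le \nd(u_j,t) + \nd(t,v_j) + \nd(v_j,v) \le (\lambda+4)\,r_{g_j}$. Also, the incomplete tail $y_h$ of the stair-jump is an ancestor of $(u_j,0)$, hence lies on the path from $(u_j,0)$ up to $a$, so $\lvl(x_h) = \lvl(y_h) \le g_j - 1$; combining this with \Cref{clm:stair-weight} gives $\nd(u,u_j) \le \tfrac{2\lambda}{1-\eps}\,r_{\lvl(x_h)} \le \tfrac{2\lambda}{5(1-\eps)}\,r_{g_j}$. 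The triangle inequality then yields $\nd(u,v) < 2\lambda\,r_{g_j}$, whence, by \Cref{it:approx-weight} of \Cref{prop:lightness-properties} together with \Cref{it:level-weight-relation} of \Cref{prop:connectivity}, $i \le g_j + 2$; so $i \in \{g_j+1,\,g_j+2\}$.

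Now I would produce the good cross edge. Let $\check u, \check v, \check u_j$ be the ancestors of $(u,0), (v,0), (u_j,0)$ at level $i+3$. Since $\hat u$ is the ancestor of $(u,0)$ at level $i$, the pair $(\check u,\check v)$ is the $3$-cross edge of $(u,v)$, hence a good cross edge; in particular it is a genuine cross edge with $\nd(\check u,\check v) \ge 64\,r_{i+3}$ (by \Cref{it:cross-edge-ancestor} and \Cref{it:good-cross-edge-long} of \Cref{prop:connectivity}). Because $i+3 \ge g_j+4 > \lvl(a)$, the node $\check u_j$ is a strict ancestor of the incomplete node $a$, lying at most $i - g_j + 4 \le 6 \le 5\log\lambda$ levels above $a$; hence \cref{line:add-NCx} of \Cref{alg:VFT-spanner}, applied to $\check u_j$, adds to $E^*$ every cross edge of $\Cr(NC[\check u_j])$ whose length is at least $64\,r_{i+3}$. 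It therefore remains only to verify $\check u, \check v \in NC[\check u_j]$: by \Cref{clm:leaf-dist} and the triangle inequality, $\nd(\check u,\check u_j) \le \tfrac52 r_{i+3} + \nd(u,u_j)$ and $\nd(\check v,\check u_j) \le \tfrac52 r_{i+3} + \nd(u_j,v)$, and substituting $\nd(u,u_j) \le \tfrac{2\lambda}{5(1-\eps)} r_{g_j} \le \tfrac{2\lambda}{5^5(1-\eps)} r_{i+3}$ and $\nd(u_j,v) \le (\lambda+4) r_{g_j} \le \tfrac{\lambda+4}{5^4} r_{i+3}$ (using $i+3-g_j \ge 4$), both bounds fall below $\lambda\,r_{i+3}$ since $\lambda = 5^{20}(1+\eps^{-1})$. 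Thus $(\check u,\check v) \in \Cr(NC[\check u_j]) \subseteq E^*$ is a good cross edge of $(u,v)$, as claimed.

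The crux is the case $i > g_j$. There the good cross edge must be an ancestor of both $(u,0)$ and $(v,0)$, but the only structural leverage available is the incomplete node $a$ dangling from $\LCA(u_j)$, which pertains to $u_j$, not to $u$. Forcing the ancestor of $u$ at level $i+3$ to be a cross neighbor of $\check u_j$ requires $\nd(u,u_j)$ to be negligible at that scale, which is precisely where the stair-jump weight estimate (\Cref{clm:stair-weight}) enters, via $\lvl(x_h) \le g_j - 1$; and bounding $i \le g_j + 2$ (again via that estimate) is what keeps $\check u_j$ within $5\log\lambda$ levels of $a$ so that \cref{line:add-NCx} actually fires. The low case $i \le g_j$ is routine — a one-line appeal to \Cref{clm:mid-small}.
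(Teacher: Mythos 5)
Your proof is correct and follows essentially the same route as the paper's: split on whether $i = \lvl(\hat u) \le g_j$ (appeal to \Cref{clm:mid-small}) or $i > g_j$, and in the latter case show that the $3$-cross edge of $(u,v)$ lies in $\Cr(NC[\check u_j])$ for $\check u_j$ the ancestor of $(u_j,0)$ at level $i+3$, which \cref{line:add-NCx} of \Cref{alg:VFT-spanner} places in $E^*$ because the almost complete ancestor $a$ of $u_j$ is an incomplete descendant of $\check u_j$ within $5\log\lambda$ levels. The only difference is cosmetic: you verify the memberships $\check u, \check v \in NC[\check u_j]$ by direct triangle-inequality estimates (bounding $\nd(u,u_j)$ via \Cref{clm:stair-weight} and $\lvl(x_h)\le g_j-1$, and $\nd(u_j,v)$ via the hypothesis and \Cref{clm:leaf-dist}), whereas the paper deduces them from \Cref{it:cross-edge-ancestor} of \Cref{prop:connectivity} by identifying $(\hat u_3,\hat u_j)$ and $(\hat u_j,\hat v_3)$ as ancestors of lower-level cross edges.
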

	
	\begin{proof}
		By \Cref{clm:stair-weight}, $ \frac{\lambda}{250(1 + \eps)^2}r_{\lvl(x_h)} \leq \nd(u_1, u_j) \leq \frac{2\lambda r_{\lvl(x_h)}}{1 - \eps}$. Let $(\hat{u}, \hat{v})$ be the original cross edge of $(u, v)$ and $i$ is the level of $\hat{u}$. We have $\nd(\hat{u}, \hat{v}) \leq \lambda r_i$. We show that $i \leq g_j + 2$. Observe that:
		
		\begin{equation}
			\label{eq:hatuv1}
			\begin{split}
				\nd(\hat{u}, \hat{v}) &\leq (1 + 5^{-3}\eps)\nd(u, v) \leq (1 + 5^{-3}\eps)(\nd(u, u_j) + \nd(u_j, v))\qquad\text{(by Eq. \ref{it:good-approx-weight} of Prop. \ref{prop:connectivity})}\\
				&\leq (1 + 5^{-3}\eps)\left( \frac{2\lambda r_{i_h}}{1 - \eps} + \nd(u_j, v)\right)  \qquad\text{(by \Cref{clm:stair-weight})}
			\end{split}
		\end{equation}
	
		Since $\nd(\LCA(u_j), v_j) \leq \lambda r_{i_j}$, $(\LCA(u_j), v_j)$ is an ancestor of the original cross edge of $(u_j, v)$, denoted by $(\tilde{u}_j, \tilde{v})$. Let $\tilde{i} = \lvl(\tilde{v})$, then $\nd(\tilde{u}, \tilde{v}) \leq \lambda r_{\tilde{i}} \leq \lambda r_{g_j}$ by the minimality of the level of the original cross edge. Using \Cref{it:good-approx-weight} of \Cref{prop:connectivity}, we get:
		\begin{equation}
			\label{eq:hatuv2}
			\begin{split}
				\nd(u_j, v) &\leq \frac{\nd(\tilde{u}_j, \tilde{v})}{1 - 5^{-3}\eps} \leq \frac{\lambda r_{g_j}}{1 - 5^{-3}\eps} \leq \frac{\lambda r_{g_j}}{1 - \eps}\qquad.
			\end{split}
		\end{equation}
		Plugging in \Cref{eq:hatuv1}, we have: 
		\begin{equation}
			\label{eq:dhatuv}
			\nd(\hat{u}, \hat{v}) \leq \frac{\lambda(1 + 5^{-3}\eps)}{1 - \eps}(2r_{i_h} + r_{g_j})
		\end{equation}
		Since $y_h$ is incomplete, we have $i_h = \lvl(x_h) = \lvl(y_h) \leq g_j - 1$, implying that $2r_{i_h} \leq r_{g_j}$. Then, by \Cref{eq:dhatuv}, $\nd(\hat{u}, \hat{v}) \leq \frac{2(1 + 5^{-3}\eps)\lambda r_{g_j}}{1 - \eps}$. From \Cref{it:level-weight-relation} of \Cref{prop:connectivity}, 
		\begin{equation}
			i \leq \log \frac{\nd(\hat{u}, \hat{v})}{\lambda} + 2 \leq\log \frac{2(1 + 5^{-3}\eps)r_{g_j}}{1 - \eps} + 2\leq g_j + 3 \qquad\text{(as $\eps \leq 1/20$)}
		\end{equation}
	
		If $i \leq g_j$, by \Cref{clm:mid-small}, there is a good cross edge of $(u, v)$ in $E^*$. Otherwise, $i > g_j$. Let $(\hat{u}_3, \hat{v}_3)$ be the ancestor of $(\hat{u}, \hat{v})$ at level $i + 3$. Let $\hat{u}_j$ be the ancestor of $u_j$ at level $i + 3$. Since $i + 3 \leq g_j + 6 \leq g_j + 5\lambda$, $\Aug_{i + 3}(\LCA(u_j)) \subseteq \Aug(\LCA(u_j), 0, 5\log \lambda) \subseteq E^*$ by \cref{line:Dx-def} -- \ref{line:add-NCx} of \Cref{alg:VFT-spanner}. In other words, all cross edges between nodes in $NC[\hat{u}_j]$ are in $E^*$. From \Cref{clm:lvl-highest}, the level of the original cross edge of $(u, u_j)$ is at most $\lvl(x_h) + 3 = i_h + 3 \leq g_j + 2 \leq i + 2$, then $(\hat{u}_3, \hat{u}_j)$ is an ancestor of the original cross edge of $(u, u_j)$. By \Cref{it:cross-edge-ancestor} of \Cref{prop:connectivity}, $\hat{u}_3 \in NC[\hat{u}_j]$. Since $\nd(\LCA(u_j), v_j) \leq \lambda r_{g_j}$ and $(\hat{u}_j, \hat{v}_3)$ is an ancestor of $(\LCA(u_j), v_j)$, $\hat{v}_3 \in NC[\hat{u}_j]$. Since $(\hat{u}_3, \hat{v}_3)$ is a good cross edge of $(u, v)$ and $\hat{u}_3, \hat{v}_3 \in NC[\hat{u}_j]$, $(\hat{u}_3, \hat{v}_3) \in E^*$.
		
	\end{proof}
	
	 By \Cref{clm:short-tail-cross}, if $\nd(\LCA(u_j), v_j) \leq \lambda r_{i_j}$, there exists a good cross edge of $(u, v)$ in $E^*$, which gives us the lemma. Otherwise, by \Cref{lm:detour-edge}, there exists a $P_j$-jump $(x_{h + 1} , y_{h + 1})$ with $x_h$ is a complete ancestor of $(u_j, 0)$ and $y_h$ is an ancestor of $(u_{j'}, 0)$ for some $j' > j$. Set $a_{h + 1} = j$ and $b_{h + 1} = j'$. We append $(x_{h + 1} \rightarrow y_{h + 1})$ to $\overrightarrow{D}$. If $y_{h + 1}$ is incomplete, we continue the process recursively with $P_{j'}$. Otherwise $\overrightarrow{D}$ is a $P$-detour-path with complete tail. From \Cref{lm:shortcut-crosspath}, there exists a good cross edge $(x, y)$ of $(u, u_{j'})$ such that both $x$ and $y$ are complete. Hence, $(x, y)$ is a $P$-detour by definition.
\end{proof}
	\section{Fast Implementation}
\label{sec:fast}

In this section, we describe an implementation of \Cref{alg:VFT-spanner} in time $O(nf + n\log{n})$, which is asymptotically optimal. We consider the running time of each step in \Cref{alg:VFT-spanner}:

\begin{itemize}
	\item \Cref{line:net-tree}: Constructing a light spanner can be done in $O(n\log{n})$ time \cite{FS16}. A net tree $T$ is constructed in $O(n\log{n})$ \cite{HPM06}.
	\item \Cref{line:begin-aug-original}--\ref{line:end-aug-original}: The total number of cross edges in $T$ is $O_{\eps, d}(n)$. Hence, finding all the original cross edges and adding them to $E^*$ cost $O_{\eps, d}(n)$ time. For each original cross edge $(\hat{u}, \hat{v})$, we add (long enough) cross edges in $Aug(\hat{u}, 0, 5\log{\lambda})$ and $Aug(\hat{v}, 0, \log{\lambda})$ to $E^*$. Since each of these sets has a constant size and can be found in constant time, the for loop in \cref{line:begin-aug-original}--\ref{line:end-aug-original} can be implemented in $O(n)$ time.
	\item \Cref{line:level-iter}--\ref{line:H-update}: In this (nested) for loop, we have two for loops. We indeed do not perform $\zeta$ loops since there are many levels with no cross edge. Since the number of cross edges is at most $O(n)$, we actually only need to consider the running time for $O(n)$ loops.
	\begin{itemize}
		\item \Cref{line:begin-add-incomplete}--\ref{line:end-add-incomplete}: For each node $x$, finding whether $x$ has a small descendant $w$ such that $w$ has at most $f$ leaves (recall that such $w$ is called incomplete) needs constant time. Adding cross edges between nodes in $NC[x]$ to $E^*$ also requires only constant time since the size of $NC[x]$ is constant due to the packing bound (\Cref{lm:packing-bound}).
		\item \Cref{line:edge-process}--\ref{line:H-update}: Since $H$ has at most $O(nf)$ edges, adding egdes to $H$ in \cref{line:H-update} only need $O(nf)$ time in total ($O(f)$ on average). Let $U$ be the upper bound time complexity for \Cref{alg:Surrogate}, the amount of time needed to run this for loop is $2U + O(f)$. 
	\end{itemize}
\end{itemize}

In total, the time complexity of \Cref{alg:VFT-spanner} is $O(nf + n\log{n} + nU)$. The running time of \Cref{alg:Surrogate} is therefore crucial if we want to achieve optimal time. We then focus on how to implement \Cref{alg:Surrogate} such that its amortized cost is $O(f + \log{n})$, meaning the total running time of all calls to \Cref{alg:Surrogate} is $O(n(\log{n} + f))$.

We generalize the set $\ball(u, 4r_i)$ as the \emph{pool set} of $(u, i)$, denoted by $P(u, i)$, and $\ball(u, 16r_i)$ as the \emph{extended pool set} of $(u, i)$, denoted by $P^+(u, i)$. We re-write \Cref{alg:Surrogate} as \Cref{alg:Implement-Surrogate}.

\begin{algorithm}[!htp]
	\caption{Implement-SelectSurrogate}\label{alg:Implement-Surrogate}
	\KwIn{A node $x \in V(T)$}
	\KwOut{A surrogate set $S(x)$}
	\eIf{$x$ is small \label{line:imp-check-small}}{	
		$S(x)\leftarrow$ arbitrary $f+1$ leaves of $T(x)$ if $T(x)$ has at least $f+1$ leaves; otherwise $S(x)\leftarrow $ all leaves of $T(x)$\label{line:imp-add-leaf-S}\;}
	{
		$S' \leftarrow \{u \in X\cap P^+(x): \deg_H(u) \geq c_2 \cdot f \mbox{ and $u$ is not saturated}\}$\label{line:imp-def-Sprime}\;
		$S'' \leftarrow \{u \in X\cap P(x): \deg_H(u) <  c_2 \cdot f \}$\label{line:imp-def-Sdprime}\;
		\eIf{$|S'| \geq f+1$}{
			$S(x)\leftarrow$  arbitrary $f+1$ vertices in $S'$\;
		}{
			$S(x)\leftarrow $ $S'\cup \{\mbox{arbitrary $f+1 - |S'|$ vertices in $S''$}\}$\label{line:imp-general-surrogate}\;    
		}
	}
	\Return $S(x)$\;
\end{algorithm}

Given a node $x = (u, i)$, we choose the set $P(x)$ and $P^+(x)$ such that \Cref{alg:VFT-spanner} with the call to \Cref{alg:Implement-Surrogate} in \cref{line:Sxdef} and \cref{line:Sydef} still returns a bounded degree, bounded lightness $(1 + \eps)$-VFT spanner. We keep the same setting for the pool set as in \Cref{alg:Surrogate}, i.e., $P(u, i) = \ball(u, 4r_i)$. To maintain a set of clean and semi-saturated points, we use two lists attached to each node in the tree. For each node $(u, i)$, we store a list $\mathrm{Clean}(u, i)$ of at most $4f + 4$ clean points in $P(x)$. We maintain this set by following exactly the proof of \Cref{lm:complete-ball}: when a point in $\mathrm{Clean}(u, i)$ becomes semi-saturated, we look down to the path of $T$ from $(u, i)$ to one of its leaf with a high current degree, call $v$. We denote this path by $P = \{x_i, x_{i - 1}, \ldots x_0\}$ with $x_k$ is the ancestor of $(v, 0)$ at level $k$ ($x_i = (u, i)$). Then, we find the clean points near that branch to add to $\ball(u, 4r_i)$. 

A problem with this approach is that we may have to search at some level much lower than $i$ to find enough clean points. This problem is due to the fact that for many level $j$, there might be no level-$j$ edge incident to some point in $P(x_j)$. However, we do not have to search over all the levels until getting enough $4f + 4$ clean points. For each node $x$, we store a pointer pointing to its descendant $x'$ such that the subtree rooted $x'$ also contains $x$'s highest-degree leaf and there exists a level-$j$ edge incident to some point in $P(x)$. Hence, at the time a point in $P(x)$ turns from clean to semi-saturated, we only need to look down at most $4f + 4$ nodes to find a substitution in $P(x)$. Since each point only turns from clean to semi-saturated once, we only need $O_{\eps, d}(nf)$ time in total for updating from clean to semi-saturated.

The harder task is to maintain the list of semi-saturated points in $P^{+}(x)$. We do not have any information about how many semi-saturated points are in $P^{+}(x)$. Since semi-saturated points in $P^{+}(x)$ are prioritized over the clean points in $P(x)$, we need to make sure that all up to $f + 1$ semi-saturated points in $P^{+}(x)$ can be found in an efficient way. However, there is no information about where should we look for semi-saturated points in $P^{+}(x)$ if we keep setting $P^{+}(x) = \ball(u, 16r_i)$. Observe that it is easier if we can compute the set $P^+(x)$ recursively from the extended pool of some below nodes. Then, we want to modify $P^+(x)$ such that it still preserves the crucial properties needed to prove the maximum degree, bounded lightness, and fault-tolerant property but can be computed recursively. 

From \Cref{rm:ext-pool}, to ensure the correctness of \Cref{alg:VFT-spanner}, it is sufficient that $\ball(x, 12r_i) \subseteq P^+(x) \subseteq \ball(x, 16r_i)$. Then, we define $P^+(x)$ recursively as follow:

\begin{itemize}
	\item If $i \leq 3$, $P^+(x) = \ball(x, 16r_i)$
	\item If $i > 3$, let $D_{i - 2}(x)$ be the set of node $y$ at level $i - 2$ such that $\ball(y, 12r_{i - 2}) \cap \ball(x, 12r_i) \neq \emptyset$. Then, $P^+(x) = \bigcup_{y \in D_{i - 2}(x)}P^+(y)$.    
\end{itemize}

We prove $P^+(x)$ possesses the required property in \Cref{rm:ext-pool}. 

\begin{claim}
	\label{clm:extend-ball-in-ring}
	For every node $x$ at level $i$, $\ball(x, 12r_i) \subseteq P^+(x) \subseteq \ball(x, 16r_i)$.
\end{claim}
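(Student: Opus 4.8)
The plan is to prove Claim~\ref{clm:extend-ball-in-ring} by induction on the level $i$, following the recursive definition of $P^+(x)$. The base case $i \le 3$ is immediate since $P^+(x) = \ball(x, 16r_i)$ by definition, and trivially $\ball(x, 12r_i) \subseteq \ball(x, 16r_i) = P^+(x)$. For the inductive step, fix a node $x = (u, i)$ with $i > 3$ and assume the claim holds for every node at level $i - 2$. Recall $P^+(x) = \bigcup_{y \in D_{i-2}(x)} P^+(y)$, where $D_{i-2}(x)$ is the set of level-$(i-2)$ nodes $y$ with $\ball(y, 12r_{i-2}) \cap \ball(x, 12r_i) \neq \emptyset$. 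I will prove the two inclusions separately.

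\medskip

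\textbf{Upper inclusion $P^+(x) \subseteq \ball(x, 16r_i)$.} Take any $y \in D_{i-2}(x)$; by the inductive hypothesis $P^+(y) \subseteq \ball(y, 16r_{i-2})$, so it suffices to show $\ball(y, 16r_{i-2}) \subseteq \ball(x, 16r_i)$. Since $\ball(y, 12r_{i-2})$ meets $\ball(x, 12r_i)$, the triangle inequality gives $\nd(x, y) \le 12r_i + 12r_{i-2} = 12r_i + 12r_i/25$. Then for any $w \in \ball(y, 16r_{i-2})$ we have $\nd(x, w) \le \nd(x, y) + 16r_{i-2} \le 12r_i + 12r_i/25 + 16r_i/25 = 12r_i + 28r_i/25 < 16r_i$, using $r_{i-2} = r_i/25$. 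Hence $w \in \ball(x, 16r_i)$, and taking the union over $y \in D_{i-2}(x)$ yields $P^+(x) \subseteq \ball(x, 16r_i)$.

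\medskip

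\textbf{Lower inclusion $\ball(x, 12r_i) \subseteq P^+(x)$.} Let $w \in \ball(x, 12r_i)$; I must exhibit $y \in D_{i-2}(x)$ with $w \in P^+(y)$. The natural candidate is the ancestor $y$ of $(w, 0)$ at level $i - 2$. First, $w$ is a leaf of $y$, so by Claim~\ref{clm:leaf-dist}, $\nd(y, w) \le 5r_{i-2}/4 < 12r_{i-2}$, which shows $w \in \ball(y, 12r_{i-2}) \subseteq P^+(y)$ by the inductive hypothesis. Second, I must verify $y \in D_{i-2}(x)$, i.e.\ $\ball(y, 12r_{i-2}) \cap \ball(x, 12r_i) \neq \emptyset$: indeed $w$ itself lies in both balls, since $w \in \ball(x, 12r_i)$ by assumption and $w \in \ball(y, 12r_{i-2})$ as just shown. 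Therefore $w \in P^+(y) \subseteq P^+(x)$, completing the inductive step.

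\medskip

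\textbf{The main obstacle is purely bookkeeping: keeping the geometric-series constants consistent across the recursion.} The recursion descends two levels at a time, so the slack between the ``$12$'' and ``$16$'' radii must absorb, at each step, both the overlap term $12r_{i-2}$ from the nonempty-intersection condition and the radius $16r_{i-2}$ inherited from the inductive hypothesis; the computation above shows $12 + 12/25 + 16/25 = 12 + 28/25 < 16$, so there is room to spare and the constants $(12, 16)$ are a safe fixed point of the recursion — no telescoping blow-up occurs because $r_{i-2}/r_i = 1/25$ is small. The one subtlety to state carefully is that $D_{i-2}(x)$ can contain many nodes (not just the ancestor of $x$), but we only ever \emph{use} the ancestor of a given point $w$ in the lower inclusion, and the upper inclusion holds uniformly for every member of $D_{i-2}(x)$; so both directions go through cleanly.
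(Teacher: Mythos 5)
Your proof is correct and follows essentially the same approach as the paper's: induction on the level $i$ with base case $i \le 3$, the triangle inequality (with $r_{i-2}/r_i = 1/25$) for the upper inclusion, and the observation that the level-$(i-2)$ ancestor of any $w \in \ball(x, 12r_i)$ is within $5r_{i-2}/4 < 12r_{i-2}$ of $w$ (via Claim~\ref{clm:leaf-dist}) and hence lies in $D_{i-2}(x)$, giving the lower inclusion. The only cosmetic difference is that you pinpoint the ancestor of $w$ explicitly, whereas the paper phrases the lower inclusion as a covering argument over all level-$(i-2)$ nodes; the underlying idea is identical.
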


\begin{proof}
	If $i \leq 3$, $P^+(x) = \ball(x, 16r_i)$. \Cref{clm:extend-ball-in-ring} holds trivially. Assume that \Cref{clm:extend-ball-in-ring} holds for every node at any level lower than $i$. Let $C^+(x)$ be the set of nodes at level $i - 2$ such that for every $y \in C^+(x)$, $\ball(y, 12r_{i - 2})$ intersects $\ball(x, 12r_i)$. By \Cref{clm:leaf-dist}, for every point $w$ in $X$, the distance between $w$ and its ancestor at level $i - 2$ is at most $5r_{i - 2}/4$. Hence, the union of $\ball(y, 12r_{i - 2})$ over all node $y$ at level $i - 2$ covers the whole space $X$, implying that the union of $\ball(y, 12r_{i - 2}) \cap \ball(x, 12r_i)$ over all $y \in C^+(x)$ covers $ \ball(x, 12r_i)$. Thus, $P^+(x)$ contains $\ball(x, 12r_i)$.
	
	We complete our proof by showing $\nd(x, w) \leq 16r_i$ for every $w \in \bigcup_{y \in C^+(x)}P^+(y)$. For every $w \in P^+(x)$, let $y$ be the node in $C^+(x)$ that $w \in P^+(y)$. By our induction hypothesis, $w \in \ball(y, 16r_{i - 2})$. Since $\ball(y, 12r_{i - 2}) \cap \ball(x, 12r_i) \neq \emptyset$, $\nd(x, y) \leq 12r_{i - 2} + 12r_i$. By the triangle inequality,
	\begin{equation*}
		\nd(x, w) \leq \nd(x, y) + \nd(y, w) \leq 12r_{i - 2} + 12r_i + 16r_{i - 2} \leq r_i(12 + 12/25 + 16/25) \leq 16r_i \qquad,
	\end{equation*}
	implying that $w \in \ball(x, 16r_i)$. Hence, $P^+(x) \subseteq \ball(x, 16r_i)$.
\end{proof}

Now, we describe how to maintain $P^+(x)$. For each node $x$, we store a list $C^+(x)$ of nodes at level $i - 2$ such that for every $y \in C^+(x)$, $\ball(x, 12r_i) \cap \ball(y, 12r_{i - 2}) \neq \emptyset$. Each node in $C^+(x)$ is called an \emph{extended child} of $x$ and $x$ is an \emph{extended parent} of every node in $C^+(x)$ (a node might have multiple extended parents). Similarly, a node $x$ is called an \emph{extended ancestor} of $y$ if $x$ is the extended parent of either $y$ or of some extended ancestor of $y$. The node $y$ is then called an \emph{extended descendant} of $x$. 
Observe that $x$ does not have any extended descendant  at a level $j$ if $i - j$ is odd. Similarly, there is no extended ancestor of $x$ at level $k$ if $k - i$ is odd. 
We construct an extended graph $T^+$ whose nodes are $V(T)$ plus $O(n)$ additional nodes that we will define later.

For any two nodes $x$ and $y$ in $V(T)$, $(x, y) \in E(T^+)$ if and only if $x$ is an extended parent of $y$. The idea to compute $P^+(x)$ is to merge the lists of semi-saturated points from some extended descendants of $x$ plus the new semi-saturated points. However, storing all the semi-saturated points in $P^+(y)$ for all $y \in V(T)$ will increase the time needed to find semi-saturated points since the extended pool of all extended children might share many points in common. Moreover, merging two lists with multiplicity in sublinear time is impossible. Thus, we only store the list of semi-saturated points in the extended pool of \emph{leaf} nodes of $T^+$, meaning nodes with no extended child.
\\\\
\textbf{Information stored on each node: } For each node $x$, we store a list of $cf$ semi-saturated points in $P^+(x)$, denoted by $\mathrm{SSList}(x)$. Here, $c = \eps^{-O(d)}$ is the constant in \Cref{clm:bdd-new-ss}. If $x$ is a leaf, we store all semi-saturated points in $x$. The lists of semi-saturated points of each node must be computed upward. We set $\mathrm{SSList}(x)$ to be the union (up-to $cf$ points) of all $\mathrm{SSList}(y)$ such that $y \in C^+(x)$. During some iterations, there might be some points in $P^+(x)$ turning to semi-saturated and some turning to saturated. While the former ones only require a local modification on some nodes, the latter ones are more challenging to update. 
\\\\
\textbf{Update when a point becomes semi-saturated: } When some point $u$ turns into semi-saturated, $u$ might belong to some extended pools of nodes at a much lower level. However, it is unnecessarily costly to update every $P^+(y)$ such that $u \in P^+(y)$. In fact, a new semi-saturated point at level $i$ only affects the extended pool sets of nodes at level $i$ and $i + 1$ since for nodes at higher levels, we update their extended pools recursively. Hence, for each node $x$ at level $i$ in $V(T)$, we create an extended child $x'$ of $x$ and add $(x, x')$ to $T^+$. $P^+(x')$ contains points in $P^+(x)$ which turn into semi-saturated during iteration $i$ or $i - 1$. Observe that $x'$ is a leaf of $T^+$. We show that the total points in $P^+(x')$ is at most $\eps^{-O(d)}f$.

\begin{claim}
	\label{clm:bdd-new-ss}
	Let $x$ be a node at level $i$ in $V(T)$. Then, the number of points in $P^+(x)$ becoming semi-saturated during iteration $i$ or $i - 1$ is at most $cf$ where $c = \eps^{-O(d)}$. 
\end{claim}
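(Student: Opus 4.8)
The plan is to show that only $\eps^{-O(d)}f$ \emph{distinct} points of $\ball(x,16r_i)$ can have their degree changed at levels $i-1$ and $i$ at all; since $P^+(x)\subseteq\ball(x,16r_i)$ by \Cref{clm:extend-ball-in-ring}, and a point of $P^+(x)$ that turns semi-saturated during iteration $i$ or $i-1$ necessarily acquires a level-$i$ or level-$(i-1)$ edge in \cref{line:H-update} of \Cref{alg:VFT-spanner} (its degree only changes when an edge is added, and level-$j$ edges are added in iteration $j$), this immediately yields the claim with $c=\eps^{-O(d)}$.

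First I would localize which nodes can affect $\ball(x,16r_i)$. Every level-$j$ edge ($j\in\{i-1,i\}$) incident to a point $u\in\ball(x,16r_i)$ lies in $M(S(y),S(z))$ for some cross edge $(y,z)\in E^*$ at level $j$ with, say, $u\in S(y)$; since $S(y)\subseteq\ball(y,16r_j)$ by \Cref{def:sorrogate} and $r_j\le r_i$, the triangle inequality gives $\nd(x,y)\le\nd(x,u)+\nd(u,y)\le 16r_i+16r_j\le 32r_i$, i.e.\ $y\in\ball(x,32r_i)$. The level-$j$ nodes inside $\ball(x,32r_i)$ form an $r_j$-separated set, so by the packing bound (\Cref{lm:packing-bound}) there are at most $(4\cdot 32 r_i/r_j)^d\le 640^d$ of them (using $r_{i-1}=r_i/5$), for each of the two values of $j$.

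It remains to count, for each such node $y$, how many distinct points can receive a level-$j$ edge through $y$. Every such point is a surrogate chosen for $y$ in some call $\mathrm{SelectSurrogate}(y)$ triggered by a cross edge $(y,z)\in E^*$; by \Cref{obs:bound-node-deg} the node $y$ is incident to $O(1/\eps)^d$ cross edges in $E^*$, and each such call returns at most $f+1$ points (\Cref{def:sorrogate}), so the union of all surrogate sets ever assigned to $y$ has at most $O(1/\eps)^d\cdot(f+1)$ points. Summing over the (at most two) levels $j$ and the $O(1)^d$ relevant nodes $y$ per level, the number of distinct points of $\ball(x,16r_i)$ whose degree is touched at levels $i-1$ or $i$ — hence the number of points of $P^+(x)$ that become semi-saturated there — is at most $2\cdot 640^d\cdot O(1/\eps)^d\cdot(f+1)=\eps^{-O(d)}f=:cf$.

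\textbf{Main obstacle.} I do not expect a real obstacle here; this is a one-shot packing-plus-accounting estimate, quite unlike the delicate inductive argument behind \Cref{lm:complete-ball}. The one point that needs care is to count surrogate \emph{points} rather than $H$-edges: a bipartite connection with an incomplete endpoint is a complete bipartite graph and may carry $\Theta(f^2)$ edges, so bounding the number of edges incident to $\ball(x,16r_i)$ directly would be too lossy, whereas bounding the number of surrogate points — each node contributing $O(1/\eps)^d\cdot(f+1)$ of them across all its cross edges, via \Cref{obs:bound-node-deg} — gives the required linear-in-$f$ bound.
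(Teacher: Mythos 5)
Your proof is correct and follows essentially the same route as the paper's: localize the nodes $y$ that can touch $P^+(x)$ to $\ball(x,32r_i)$ via the triangle inequality, bound their number by packing, and bound the surrogates per node via \Cref{obs:bound-node-deg}. Your version is slightly cleaner in one respect — it cites \Cref{obs:bound-node-deg} for the $\eps^{-O(d)}(f+1)$ surrogates-per-node bound, which is the right reference, whereas the paper's proof cites \Cref{lm:bounded-surrogate} for that step (a fact about points, not nodes, so the citation is at best indirect), and also spells out the separation argument for the packing constant rather than leaving it implicit.
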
 

\begin{proof}
	We only consider the number of points in $P^+(x)$ turning into semi-saturated in iteration $i$, those in iteration $i - 1$ will be counted similarly. A point $u \in P^+(x)$ becomes semi-saturated during iteration $i$ only if it is a surrogate of some node $y$ at level $i$. Using the triangle inequality, we have:
	\begin{equation}
		\nd(x, y) \leq \nd(x, u) + \nd(u, y) \leq 16r_i + 16r_i = 32r_i.
	\end{equation}
	Hence, $y \in \ball(x, 32r_i)$. By packing bound (\Cref{lm:packing-bound}), there are at most $2^{O(d)}$ such $y$.
	Furthermore, from \Cref{lm:bounded-surrogate}, for every node $y$ at level $i$, there are at most $\eps^{-O(d)}f$ points which are used as surrogates of $y$. Hence, the total number of surrogates of $y$ becoming semi-saturated is $\eps^{-O(d)}$.
	Therefore, the total number of points in $P^+(x)$ turning into semi-saturated during iteration $i$ is at most the number of nodes in $\ball(x, 32r_i)$ times the maximum number of surrogates of a node turning into semi-saturated, which is $\eps^{-O(d)}$.
\end{proof}

\noindent\textbf{Update when a point becomes saturated: } A node $y$ is marked as DELETED if all semi-saturated points in all leaves of $y$ have become saturated. For each node $x$ at level $i$, we add a pointer pointing to a leaf of $x$, denoted by $\rho(x)$, such that $\rho(x)$ is not deleted. 
Initially, $\rho(x) = \rho(z)$ for some $z \in C^+(x)$ such that $\rho(z) \neq \mathrm{NULL}$.
When a point $u$ becomes saturated after iteration $i$, we update $\mathrm{SSList}(x)$ for every $x$ at level $i$ and $i - 1$ such that $\mathrm{SSList}(x)$ contains $u$. By packing bound (\Cref{lm:packing-bound}), there are a constant number of such nodes $x$. We then focus on updating only one node $x$ at level $i$ such that $\mathrm{SSList}(x)$ contains $u$. 

We use $\rho(x)$ to find an extended leaf of $x$ that still has more semi-saturated points. Let $z = \rho(x)$. If there exists a semi-saturated point in $\mathrm{SSList}(z)$, we update $\mathrm{SSList}(x)$ by adding that point. However, if there is no semi-saturated point in $\mathrm{SSList}(z)$, a recursive procedure named $\mathrm{NodeDeletion}(z)$ is invoked to delete $z$. The procedure is as follows:

For every node $x'$ at level $i$ or $i - 1$ such that $\rho(x') = z$, we update $\rho(x')$. First, we find an extended parent $z'$ of $z$ such that $x'$ is the ancestor of $z'$. Second, we update $\rho(z')$ by finding an extended child of $z'$ which is not deleted and set $\rho(z')$ to be that child. If all extended children of $z'$ are deleted, we mark $z'$ as deleted and recursively run $\mathrm{NodeDeletion}(z')$. Otherwise, we set $\rho(x') = \rho(z')$. We later argue that each call to procedure $\mathrm{NodeDeletion}(\cdot)$ runs in $O(\log{n})$.
\\\\
\textbf{Running time analysis of all updates: } First, we prove that the total number of nodes in $T^+$ is $|V(T)| + O_{d}(n)$.

\begin{claim}
	The total number of nodes in $T^+$ is $|V(T)| + O_{d}(n)$.
\end{claim}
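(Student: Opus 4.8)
The plan is to account for $V(T^+)$ by the two origins of its vertices. By construction, $V(T^+)$ consists of (i) the original net-tree nodes $V(T)$, and (ii) for each net-tree node $x$ at a level $i>3$, a single \emph{extended-child leaf} $x'$, whose extended pool $P^+(x')$ is precisely the set of points of $P^+(x)$ that became semi-saturated during iteration $i$ or $i-1$. No other vertices are ever inserted into $T^+$: the $\rho(\cdot)$ pointers and the recursive $\mathrm{NodeDeletion}(\cdot)$ procedure only re-point or re-mark existing nodes, they never create new ones. Hence
$|V(T^+)| = |V(T)| + \bigl|\{x' : \lvl(x)>3,\ P^+(x')\neq\emptyset\}\bigr|$, since we are free to materialize an extended-child leaf only when its pool first becomes nonempty. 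It therefore suffices to bound the number of \emph{nonempty} extended-child leaves by $O_d(n)$.

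First I would set up a charging scheme: to each nonempty leaf $x'$ assign an arbitrary witness $u_{x'}\in P^+(x')$, that is, a point that transitioned from clean to semi-saturated during iteration $\lvl(x)$ or $\lvl(x)-1$. Each point $u\in X$ makes this clean-to-semi-saturated transition at most once throughout \Cref{alg:VFT-spanner}: degrees are non-decreasing, and once $\deg_H(u)>c_2 f$ the point stays semi-saturated until it is marked saturated (which is permanent). So if $u$ makes this transition during iteration $j$, then $u$ can serve as $u_{x'}$ only for nodes $x$ with $\lvl(x)\in\{j,j+1\}$ and $u\in P^+(x)$. By \Cref{clm:extend-ball-in-ring}, $P^+(x)\subseteq\ball(x,16r_{\lvl(x)})$, so any such $x$ is a net point at level $\lvl(x)$ lying within distance $16r_{\lvl(x)}$ of $u$; since the net at level $\ell$ is $r_\ell$-separated, the packing bound (\Cref{lm:packing-bound}) yields at most $(4\cdot 16)^d=2^{6d}$ such $x$ at each of the two levels, hence at most $2\cdot 2^{6d}$ values of $x$ with $u_{x'}=u$ in total.

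Consequently the number of nonempty extended-child leaves is at most $2\cdot 2^{6d}\,n = O_d(n)$, and therefore $|V(T^+)| \le |V(T)| + 2^{6d+1}n = |V(T)| + O_d(n)$, as claimed.

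I expect the only genuine subtlety to be step (i): carefully confirming that the bookkeeping introduced for handling saturations (the $\rho$ pointers and the cascading $\mathrm{NodeDeletion}$ calls) really introduces no fresh vertices of $T^+$, so that categories (i) and (ii) are exhaustive; the counting itself is a one-line packing argument essentially dual to \Cref{clm:bdd-new-ss}. The one point worth stating explicitly is the lazy-creation convention — that empty $x'$ need never be instantiated — which is exactly what sharpens the trivial bound $|V(T^+)|\le 2|V(T)|$ into $|V(T)| + O_d(n)$.
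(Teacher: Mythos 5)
Your proof is correct and takes essentially the same approach as the paper: charge each nonempty artificial leaf to a point that became semi-saturated in the corresponding pair of iterations, observe that each point makes this transition at most once, and bound by packing how many level-$j$ or level-$(j{+}1)$ nodes can contain that point in their extended pool. The only difference is presentational — you charge each leaf to a single arbitrary witness whereas the paper bounds $|A|\le\sum_u|A_u|$ via a union bound — and you usefully make explicit the lazy-creation convention that the paper's inequality tacitly relies on.
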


\begin{proof}
	We prove that the size of the set containing extended leaves added to $T^+$, call $A$, is $O(n)$. Recall that for each node $x$ at level $i$, we add an extended leaf to $x$ at level $i - 1$ containing all points in $P^+(x)$ that become semi-saturated during iteration $i - 1$ or $i$. The leaf corresponding to $x$, call $x'$, is an artificial node that contains all points in $P^+(x)$ turning into semi-saturated during iteration $i$ or $i - 1$. We call $x'$ an \emph{artificial leaf}. Since each point $u \in X$ only turns into semi-saturated once, it is easy to verify that there are $2^{O(d)}$ artificial leaves $y$ such that $u \in P^+(y)$. Let $A_u$ be the set of those artificial leaves, we have:
	\begin{equation}
		\begin{split}
			|A| \leq \sum_{u \in X}|A_u| \leq \sum_{u \in X}2^{-O(d)} = n\cdot 2^{O(d)}.
		\end{split}
	\end{equation}
	Therefore, $|V(T^+)| = |V(T)| + |A| \leq  |V(T)|  + O(n)$.
\end{proof}

In our construction, we only focus on \emph{relevant nodes}. A relevant node is either a leaf or a node with at least $2$ extended children in $T^+$. Observe that a non-leaf node $x$ at level $i$ is relevant if there is a node $y$ at level $i - 2$ such that $\nd(x, y) \leq 12r_{i - 2} + 12r_i = O(r_{i - 2})$. The number of such pairs $(x, y)$ is bounded by $O(n)$ by using the similar technique as in bounding the number of cross edges (See Theorem $5.3$ \cite{CGMZ16}). Thus, the total number of nodes in $T^+$ is $O(n)$. The update procedure when a point becomes semi-saturated therefore needs $O(n)$ time in total. It remains to prove that $\mathrm{NodeDeletion}(\cdot)$ can be done in $O(nf + n\log{n})$ time in total. 

Observe that for each node $z$, $\mathrm{NodeDeletion}(z)$ only updates $z$'s extended parents and $z$'s extended ancestors at level $i$ and $i - 1$ with $i$ is the current iteration in \Cref{line:level-iter} of \Cref{alg:VFT-spanner}. The following observation of the packing bound implies that $\mathrm{NodeDeletion}(z)$ only updates a constant number of nodes. 

\begin{observation}
	For every node $x$ at level $i$ of $T^+$, we have:
	\begin{enumerate}
		\item \label{it:constant-child} The number of $x$'s extended children is $O(1)$.
		\item \label{it:constant-anc} For every level $k > i$, the number of extended ancestors of $x$ at level $k$ is $O(1)$. 
	\end{enumerate}
\end{observation}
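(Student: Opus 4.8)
The plan is to reduce both parts to the packing bound (\Cref{lm:packing-bound}), after establishing a ball containment for the relevant family of nodes. The only bookkeeping to keep straight is that $T^+$ has two kinds of nodes: the \emph{real} nodes of $V(T)$, which correspond to net points and are the only nodes that can ever be extended ancestors, and the artificial leaf nodes, which have no extended children; each real node at level $i$ carries exactly one artificial leaf as an extended child, and every other extended child of a level-$i$ node sits at level $i-2$.

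For \Cref{it:constant-child} (extended children), first I would observe that an extended child of a real node $x$ at level $i$ is either the unique artificial leaf attached to $x$, or a node $y\in V(T)$ at level $i-2$ with $\ball(y,12r_{i-2})\cap\ball(x,12r_i)\neq\emptyset$. For the latter, the triangle inequality together with $r_{i-2}=r_i/25$ gives $\nd(x,y)\le 12r_{i-2}+12r_i< 13r_i$. The set of such $y$ is a subset of the net $N_{i-2}$, hence $r_{i-2}$-separated, and it lies inside $\ball(x,13r_i)=\ball(x,325\,r_{i-2})$; the packing bound caps its size by $(4\cdot 325)^d=2^{O(d)}$. Adding the single artificial leaf yields the $O(1)$ bound, with the constant depending only on $d$.

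For \Cref{it:constant-anc} (extended ancestors at a level $k>i$), I would unwind the extended-parent relation into a chain $x=z_0,z_1,\dots,z_m=z$, where $z_t$ is the extended ancestor of $x$ at level $i+2t$ and $z_{t+1}$ is an extended parent of $z_t$; in particular such an ancestor exists only when $k-i$ is even, with $m=(k-i)/2$. Each consecutive pair satisfies $\ball(z_t,12r_{i+2t})\cap\ball(z_{t+1},12r_{i+2(t+1)})\neq\emptyset$, so $\nd(z_t,z_{t+1})\le 12r_{i+2t}+12r_{i+2(t+1)}<13\,r_{i+2(t+1)}$. Telescoping via the triangle inequality and summing the geometric series (radii grow by a factor $25$ every two levels) gives $\nd(x,z)<13\cdot\tfrac{25}{24}\,r_k<14\,r_k$, so every extended ancestor of $x$ at level $k$ lies in $\ball(x,14r_k)$. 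Since these ancestors are real nodes, they form an $r_k$-separated subset of $N_k$, and the packing bound again bounds their number by $(4\cdot 14)^d=2^{O(d)}=O(1)$.

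I do not anticipate a genuine obstacle here: once the two ball containments above are in place, both parts are immediate from the packing bound. The step requiring the most care is the bookkeeping with artificial nodes — verifying that each artificial leaf is indeed a leaf of $T^+$ (so it adds at most one to the extended-child count of its unique real parent and never occurs as an extended ancestor), and using the level-parity remark so that the chain in \Cref{it:constant-anc} is built with steps of exactly two levels and has length $m=(k-i)/2$.
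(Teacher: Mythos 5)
Your proof is correct and is exactly the argument the paper has in mind: the paper states this as an observation that follows from the packing bound and gives no explicit proof, so your write-up simply fills in the omitted details. Both ball containments check out --- $\ball(x,12r_i)\cap\ball(y,12r_{i-2})\neq\emptyset$ with $r_{i-2}=r_i/25$ gives $\nd(x,y)\le 12(1+\tfrac{1}{25})r_i<13r_i$, and the telescoping chain (with real intermediate nodes, since artificial nodes are leaves and never appear strictly above $x$) gives $\nd(x,z)<13\cdot\tfrac{25}{24}\,r_k<14r_k$ --- and the level-parity bookkeeping and the accounting for artificial leaves are handled correctly.
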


We complete our implementation by showing that for each node, all of its extended ancestors can be found in $O(\log{n})$ time.

For each node $x$ at level $i$, we store the extended ancestors of $x$ in $T^+$ at level $i + 1, i + 2, i + 4, \ldots i + 2^h$ for $h \in [0, \log n]$. Since $x$ has a constant number of ancestors at a certain level, we only need time (and space) $O(\log n)$ for each node $x$. Therefore, the total time complexity for this step is $O(n \log n)$. For each node $x$, computing the ancestor of $x$ at level $k > i$ is done in $O(\log n)$ time.

In total, for each node in $T^+$, we will delete it once, and each deletion procedure takes time $O(\log n)$. The construction of $T^+$ requires time $O(|V(T^+)|f)$ since each node $x$ in $T^+$ has at most $cf$ points in its surrogates list $\mathrm{SSList}(x)$. Therefore, the total running time of computing all $P^+(\cdot)$ is then $O(n\log n + nf)$.
	
	\paragraph{Acknowledgement.~} {Hung Le is supported by the NSF CAREER Award No.\ CCF-2237288 and an NSF Grant No.\ CCF-2121952. 
	Shay Solomon is funded by the European Union (ERC, DynOpt, 101043159).
	Views and opinions expressed are however those of the author(s) only and do not necessarily reflect those of the European Union or the European Research Council.
	Neither the European Union nor the granting authority can be held responsible for them.
	Shay Solomon is also supported by the Israel Science Foundation (ISF) grant No.1991/1.
	Shay Solomon is also supported by a grant from the United States-Israel Binational Science Foundation (BSF), Jerusalem, Israel, and the United States National Science Foundation (NSF).
	Cuong Than is supported by the NSF CAREER Award No.\ CCF-2237288 and an NSF Grant No.\ CCF-2121952.}

	\bibliographystyle{alphaurlinit}
	\bibliography{RamseyTreewidthBib, RPTALGbib}

\newcommand{\etalchar}[1]{$^{#1}$}
\begin{thebibliography}{ADM{\etalchar{+}}95}

\bibitem[ADD{\etalchar{+}}93]{ADDJS93}
I.~Alth\"{o}fer, G.~Das, D.~Dobkin, D.~Joseph, and J.~Soares.
\newblock On sparse spanners of weighted graphs.
\newblock {\em Discrete Computational Geometry}, 9(1):81--100,
\newblock 1993, \href {http://dx.doi.org/10.1007/BF02189308}
  {\path{doi:10.1007/BF02189308}}.

\bibitem[ADM{\etalchar{+}}95]{ADMSS95}
S.~Arya, G.~Das, D.~M. Mount, J.~S. Salowe, and M.~Smid.
\newblock Euclidean spanners: Short, thin, and lanky.
\newblock In {\em Proceedings of the Twenty-seventh Annual ACM Symposium on
  Theory of Computing}, STOC '95, pages 489--498,
\newblock 1995.

\bibitem[AS94]{AS94}
S.~Arya and M.~H.~M. Smid.
\newblock Efficient construction of a bounded degree spanner with low weight.
\newblock In {\em Proc. of 2nd ESA}, pages 48--59,
\newblock 1994.

\bibitem[AWY05]{AWY05}
P.~K. Agarwal, Y.~Wang, and P.~Yin.
\newblock Lower bound for sparse {E}uclidean spanners.
\newblock In {\em Proc. of 16th SODA}, pages 670--671,
\newblock 2005.

\bibitem[BCF{\etalchar{+}}10]{BCFMS10}
P.~Bose, P.~Carmi, M.~Farshi, A.~Maheshwari, and M.~H.~M. Smid.
\newblock Computing the greedy spanner in near-quadratic time.
\newblock {\em Algorithmica}, 58(3):711--729,
\newblock 2010, \href {http://dx.doi.org/10.1007/s00453-009-9293-4}
  {\path{doi:10.1007/s00453-009-9293-4}}.

\bibitem[BLW19]{BLW19}
G.~Borradaile, H.~Le, and C.~Wulff{-}Nilsen.
\newblock Greedy spanners are optimal in doubling metrics.
\newblock In {\em Proceedings of the 30th Annual ACM-SIAM Symposium on Discrete
  Algorithms}, SODA `19, pages 2371--2379,
\newblock 2019, \href {http://dx.doi.org/10.1137/1.9781611975482.145}
  {\path{doi:10.1137/1.9781611975482.145}}.

\bibitem[CDS01]{CDS01}
D.~Z. Chen, G.~Das, and M.~H.~M. Smid.
\newblock Lower bounds for computing geometric spanners and approximate
  shortest paths.
\newblock {\em Discrete Applied Mathematics}, 110(2-3):151--167,
\newblock 2001.

\bibitem[CG06]{CG06}
H.~T.-H. Chan and A.~Gupta.
\newblock Small hop-diameter sparse spanners for doubling metrics.
\newblock In {\em Proc. of 17th SODA}, pages 70--78,
\newblock 2006.

\bibitem[CGMZ05]{CGMZ05}
T.-H.~H. Chan, A.~Gupta, B.~M. Maggs, and S.~Zhou.
\newblock On hierarchical routing in doubling metrics.
\newblock In {\em Proceedings of the 16th Annual ~{ACM}-{SIAM} Symposium on
  Discrete Algorithms}, SODA'05,
\newblock 2005.

\bibitem[CGMZ16]{CGMZ16}
T.~H. Chan, A.~Gupta, B.~M. Maggs, and S.~Zhou.
\newblock On hierarchical routing in doubling metrics.
\newblock {\em {ACM} Trans. Algorithms}, 12(4):55:1--55:22, 2016.
\newblock
\newblock {Preliminary version appeared in SODA 2005.}

\bibitem[Che86]{Chew86}
L.~P. Chew.
\newblock There is a planar graph almost as good as the complete graph.
\newblock In {\em Proceedings of the Second Annual Symposium on Computational
  Geometry}, SCG `86, pages 169--177,
\newblock 1986.

\bibitem[Cla87]{Clark87}
K.~L. Clarkson.
\newblock Approximation algorithms for shortest path motion planning.
\newblock In {\em Proc. of 19th STOC}, pages 56--65,
\newblock 1987.

\bibitem[CLN12]{CLN12a}
T.-H.~H. Chan, M.~Li, and L.~Ning.
\newblock Sparse fault-tolerant spanners for doubling metrics with bounded
  hop-diameter or degree.
\newblock In {\em ICALP (1)}, pages 182--193, 2012.
\newblock
\newblock Full version in Algorithmica'15.

\bibitem[CLNS13]{CLNS13}
T.~H. Chan, M.~Li, L.~Ning, and S.~Solomon.
\newblock New doubling spanners: Better and simpler.
\newblock In F.~V. Fomin, R.~Freivalds, M.~Z. Kwiatkowska, and D.~Peleg,
  editors, {\em Proc. 40th ICALP}, volume 7965 of {\em Lecture Notes in
  Computer Science}, pages 315--327. Springer, 2013.
\newblock
\newblock Full version in SICOMP'15, \href
  {http://dx.doi.org/10.1007/978-3-642-39206-1\_27}
  {\path{doi:10.1007/978-3-642-39206-1\_27}}.

\bibitem[CZ03]{CZ03}
A.~Czumaj and H.~Zhao.
\newblock Fault-tolerant geometric spanners.
\newblock In {\em Proceedings of the 19th Annual Symposium on Computational
  Geometry}, SoCG'03, pages 1--10,
\newblock 2003.

\bibitem[DHN93]{DHN93}
G.~Das, P.~Heffernan, and G.~Narasimhan.
\newblock Optimally sparse spanners in 3-dimensional euclidean space.
\newblock In {\em Proceedings of the 9th Annual Symposium on Computational
  Geometry}, SCG '93, pages 53--62,
\newblock 1993.

\bibitem[DN94]{DN94}
G.~Das and G.~Narasimhan.
\newblock A fast algorithm for constructing sparse {E}uclidean spanners.
\newblock In {\em Proc. of 10th SOCG}, pages 132--139,
\newblock 1994.

\bibitem[ES15]{ES15}
M.~Elkin and S.~Solomon.
\newblock Optimal euclidean spanners: Really short, thin, and lanky.
\newblock {\em J. {ACM}}, 62(5):35:1--35:45,
\newblock 2015.

\bibitem[FS16]{FS16}
A.~Filtser and S.~Solomon.
\newblock The greedy spanner is existentially optimal.
\newblock In {\em Proceedings of the 2016 ACM Symposium on Principles of
  Distributed Computing}, PODC '16, pages 9--17, 2016.
\newblock
\newblock to appear in SICOMP 2020, \href
  {http://dx.doi.org/10.1145/2933057.2933114}
  {\path{doi:10.1145/2933057.2933114}}.

\bibitem[GGN04]{GGN04}
J.~Gao, L.~J. Guibas, and A.~Nguyen.
\newblock Deformable spanners and applications.
\newblock In {\em Proc. of 20th SoCG}, pages 190--199,
\newblock 2004.

\bibitem[GLN02]{GLN02}
J.~Gudmundsson, C.~Levcopoulos, and G.~Narasimhan.
\newblock Fast greedy algorithms for constructing sparse geometric spanners.
\newblock {\em SIAM J. Comput.}, 31(5):1479--1500,
\newblock 2002.

\bibitem[GR08a]{GR081}
L.~Gottlieb and L.~Roditty.
\newblock Improved algorithms for fully dynamic geometric spanners and
  geometric routing.
\newblock In {\em Proc. of 19th SODA}, pages 591--600,
\newblock 2008.

\bibitem[GR08b]{GR082}
L.~Gottlieb and L.~Roditty.
\newblock An optimal dynamic spanner for doubling metric spaces.
\newblock In {\em Proc. of 16th ESA}, pages 478--489, 2008.
\newblock
\newblock Another version of this paper is available via
  \url{http://cs.nyu.edu/~adi/spanner2.pdf}.

\bibitem[HPM06]{HPM06}
S.~Har-Peled and M.~Mendel.
\newblock Fast construction of nets in low-dimensional metrics and their
  applications.
\newblock {\em SIAM Journal on Computing}, 35(5):1148–1184,
\newblock 2006, \href {http://dx.doi.org/10.1137/s0097539704446281}
  {\path{doi:10.1137/s0097539704446281}}.

\bibitem[KG92]{KG92}
J.~M. Keil and C.~A. Gutwin.
\newblock Classes of graphs which approximate the complete {E}uclidean graph.
\newblock {\em Discrete and Computational Geometry}, 7(1):13--28,
\newblock 1992.

\bibitem[KLMS22]{KLMS22}
O.~Kahalon, H.~Le, L.~Milenkovic, and S.~Solomon.
\newblock Can't see the forest for the trees: Navigating metric spaces by
  bounded hop-diameter spanners.
\newblock In A.~Milani and P.~Woelfel, editors, {\em Proc. PODC}, pages
  151--162. {ACM},
\newblock 2022, \href {http://dx.doi.org/10.1145/3519270.3538414}
  {\path{doi:10.1145/3519270.3538414}}.

\bibitem[LNS98]{LNS98}
C.~Levcopoulos, G.~Narasimhan, and M.~Smid.
\newblock Efficient algorithms for constructing fault-tolerant geometric
  spanners.
\newblock In {\em Proceedings of the 30th Annual ACM Symposium on Theory of
  Computing}, STOC'98, pages 186--195,
\newblock 1998.

\bibitem[LT22]{LT22}
H.~Le and C.~Than.
\newblock Greedy spanners in euclidean spaces admit sublinear separators.
\newblock In J.~S. Naor and N.~Buchbinder, editors, {\em Proc. SODA}, pages
  3287--3310. {SIAM},
\newblock 2022, \href {http://dx.doi.org/10.1137/1.9781611977073.130}
  {\path{doi:10.1137/1.9781611977073.130}}.

\bibitem[Luk99]{Luk99}
T.~Lukovszki.
\newblock New results on geometric spanners and their applications.
\newblock {\em Ph.D.\ thesis, Department of Computer-Science, University of
  Paderborn, Paderborn, Germany},
\newblock 1999.

\bibitem[NS07]{NS07}
G.~Narasimhan and M.~Smid.
\newblock {\em Geometric Spanner Networks}.
\newblock Cambridge University Press,
\newblock 2007.

\bibitem[Smi09]{Smid09}
M.~Smid.
\newblock The weak gap property in metric spaces of bounded doubling dimension.
\newblock In S.~Albers, H.~Alt, and S.~N\"{a}her, editors, {\em Efficient
  Algorithms}, pages 275--289. Springer-Verlag,
\newblock 2009.

\bibitem[Sol14]{Sol14}
S.~Solomon.
\newblock From hierarchical partitions to hierarchical covers: optimal
  fault-tolerant spanners for doubling metrics.
\newblock In {\em Proceedings of the 46th Annual ACM Symposium on Theory of
  Computing}, STOC'14, pages 363--372,
\newblock 2014.

\end{thebibliography}

\end{document}